\theoremstyle{plain}
\newtheorem{theorem}{Theorem}[section]
\newtheorem{thm}{Theorem}
\newtheorem{lem}[theorem]{Lemma}
\theoremstyle{remark}
\newtheorem{remark}{Remark}
\newtheorem{assu}{Assumption}
\begin{document}

\begin{frontmatter}
\title{Learn-As-you-GO (LAGO) Trials: Optimizing Treatments and Preventing Trial Failure Through Ongoing Learning}
\runtitle{Learn-As-you-GO (LAGO) Trials}

\begin{aug}
\author[A]{\fnms{Ante}~\snm{Bing}\ead[label=e1]{abing@bu.edu}\orcid{0009-0001-4230-1675}},
\author[B]{\fnms{Donna}~\snm{Spiegelman}\ead[label=e2]{donna.spiegelman@yale.edu}\orcid{0000-0003-4006-4650}},
\author[C]{\fnms{Daniel}~\snm{Nevo}\ead[label=e3]{danielnevo@gmail.com}
\orcid{0000-0002-9770-827X}}
\and
\author[A]{\fnms{Judith}~\snm{J. Lok}\ead[label=e4]{jjlok@bu.edu}\orcid{0000-0001-7526-2572}}
\runauthor{Bing, Spiegelman, Nevo and Lok}

\address[A]{Department of Mathematics and Statistics,
Boston University\printead[presep={,\ }]{e1,e4}}
\address[B]{Department of Biostatistics,
Yale University\printead[presep={,\ }]{e2}}
\address[C]{Department of Statistics and Operations Research,
Tel Aviv University\printead[presep={,\ }]{e3}}
\end{aug}

\begin{abstract}
It is well known that changing the intervention package while a trial is ongoing does not lead to valid inference using standard statistical methods. However, it is often necessary to adapt, tailor, or tweak a complex intervention package in public health implementation trials, especially when the intervention package does not have the desired effect. This article presents conditions under which the resulting analyses remain valid even when the intervention package is adapted while a trial is ongoing. Our results on such Learn-As-you-GO (LAGO) studies extend the theory of LAGO for binary outcomes following a logistic regression model \citep{nevo2018analysis} to LAGO for continuous outcomes under flexible conditional mean model.
We derive point and interval estimators of the intervention effects and ensure the validity of hypothesis tests for an overall intervention effect. We develop a confidence set for the optimal intervention package, which achieves a pre-specified mean outcome while minimizing cost, and confidence bands for the mean outcome under all intervention package compositions. This work will be useful for the design and analysis of large-scale intervention trials where the intervention package is adapted, tailored, or tweaked while the trial is ongoing. 
\end{abstract}

\begin{keyword}[class=MSC]
\kwd[Primary ]{62K99} 
\kwd{62L99} 
\kwd[; secondary ]{62F05} 
\kwd{62F10} 
\kwd{62F12} 
\kwd{62J12} 
\end{keyword}

\begin{keyword}
\kwd{Adaptive trial designs}
\kwd{dependent sample}
\kwd{large-scale intervention trials}
\kwd{public health}
\kwd{implementation trials}
\end{keyword}

\end{frontmatter}

\section{Introduction}

Traditionally regulators have not allowed adaptive trial designs to be used for evidence in the drug approval process, but currently, the Food and Drug Administration (FDA) has put forward guidelines for adaptive clinical trials \citep{FDAdevices, FDAadaptiveNew} and encourages trialists interested in adaptive trials to contact the FDA in the planning stage \citep{FDAadaptiveNew}. While current adaptive clinical trial designs allow for changing randomization probabilities and dropping treatment arms \citep{FDAadaptiveNew}, they do not allow for changes to an intervention package composition based on outcomes collected during the earlier stages of a trial. This limitation may have contributed to cases where large-scale intervention trials have ``failed'' \citep{semrau2017outcomes, fogel2018factors, stensland2014adult}. 

Learn-As-you-GO (LAGO) studies consist of $K>1$ stages. As data are being collected, the later-stage intervention package is systematically ``learned'' based on results from previous stages. \cite{nevo2018analysis} provided a methodology for LAGO studies with binary outcomes. They assumed that a logistic regression model holds for the probability of success given the intervention package components, for which they prove consistency and asymptotic normality. 
In LAGO trials, learning assumes that as the number of patients in prior stages gets large, the recommended intervention in the next stage converges in probability to a fixed intervention. 
They also demonstrated the validity of hypothesis tests for the overall intervention effect. 
Moreover, they proved that the optimal intervention, $\boldsymbol{x}^{opt}$, which attains a pre-specified outcome probability while minimizing cost, can be ``learned'', in the sense that we can find $\hat{\boldsymbol{x}}^{opt}$ with $\hat{\boldsymbol{x}}^{opt}  \stackrel{P}{\rightarrow} \boldsymbol{x}^{opt}$ as the number of patients in each stage gets large.

An adaptive clinical design popular in implementation science \citep{collins2011multiphase} that can be compared to LAGO is the multiphase optimization strategy (MOST) \citep{collins2007multiphase,collins2014optimization}. Although both LAGO and MOST aim to identify the optimal intervention package and evaluate its impact, the designs are different. 
MOST consists of three phases: preparation, optimization, and evaluation. In the optimization phase, the optimal intervention package is determined in a short-term factorial design usually of intermediate outcomes, and in the evaluation phase, the effect of the previously identified intervention package is independently assessed through a randomized controlled trial (RCT), the analysis of which does not use the outcomes from the optimization phase or any occurrences subsequently.
In contrast, in LAGO trials, the composition of the intervention package is updated at the end of each stage and is dependent on the outcomes from previous stages. LAGO enables researchers to systematically modify the intervention package while the trial is ongoing while preserving Type I error rate and pre-specified power. 
Since all outcomes are used in the final analysis, as further discussed in Section \ref{simulation section}, LAGO can achieve higher power compared to MOST when the same strategy is used to identify the optimal intervention package. 

LAGO has been identified as a promising approach in the field of implementation science \citep{beidas2022promises}. According to these authors, LAGO can improve the alignment between implementation strategies and partner needs and contexts. Additionally, the user-centered design and approach of LAGO enables the optimization of implementation strategies and calibration of implementation support based on demonstrated need.

This article extends LAGO to continuous outcomes. To achieve this, we adopt a Generalized Linear Model (GLM) framework. 
The GLM framework in this context refers to any semi-parametric model for the conditional mean of a random outcome on other covaraites that is linear in a link function, without any additional restrictions on the outcome distribution.
When analyzing a LAGO study, we cannot condition on the later-stage interventions since that would imply conditioning on a function of the outcomes from earlier stages. In addition, variations in the intervention components is needed for LAGO to identify the treatment effect parameters.
Extending LAGO to continuous outcomes is not straightforward, as \cite{nevo2018analysis} use a novel coupling argument to prove the asymptotic properties of their estimators, and such coupling approach cannot be generalized to continuous outcomes.
To overcome this challenge, we assume throughout this article that the errors in the GLM \citep{mccullagh2019generalized} are independent of the composition of the intervention package. Section \ref{GLM} outlines the comprehensive steps implemented to surmount the difficulty in generalizing LAGO to continuous outcomes. Details are provided in Appendix \ref{general_link_appendix}. 

The aims of a LAGO study with continuous outcomes are  
1$)$ to estimate the impact of the individual intervention package components on the outcome mean, 
2$)$ to test for an overall intervention effect when the intervention package has been ``learned,'' and
3$)$ to estimate which intervention package, $\boldsymbol{x}^{opt}$, will minimize the cost while yielding a pre-specified mean outcome. We solve these aims for various GLMs with independent, identically distributed errors. 

Section \ref{application} applies LAGO to the BetterBirth study \citep{hirschhorn2015learning,semrau2017outcomes}. The BetterBirth study was a costly failed trial, led by Harvard researcher Atul Gawande, which aimed to improve maternal and child health around the time of birth. By mimicking applying LAGO, we identify the optimal intervention package that can increase the percentage of essential birth practices (EBPs) performed to a pre-specified target goal while minimizing cost. 
Our analysis differs from that in \citet{nevo2018analysis} because they focused on a single binary outcome, oxytocin administered immediately after delivery, as opposed to the continuous outcome of the percentage of EBPs performed. Furthermore, since we consider the percentage of EBPs as a continuous outcome, this represents a unique application of our LAGO method that could not have been accomplished using the existing LAGO method. 

This article is organized as follows.
Section~\ref{setting} details the setting and notation. 
Section \ref{est beta} describes the estimating equations for the proposed estimator for a LAGO study with a GLM.
Section~\ref{GLM} outlines the proofs of the asymptotic properties of the proposed estimator. Details are presented in Appendix \ref{general_link_appendix}.
Section \ref{testing section} describes how to test for an overall intervention effect in a LAGO study.
Section \ref{confidence} describes the confidence sets and confidence bands for the optimal intervention. 
Section \ref{simulation section} describes simulations. 
Section \ref{application} describes the LAGO analysis of the BetterBirth study.
Section~\ref{Discussion} discusses our findings and future research topics.

\section{Setting, notation, and assumptions}\label{setting}

Let $Y_{ij}$ be the continuous outcome for patient $i$ in center $j$. The optimal intervention is defined as the intervention that results in the mean outcome reaching a pre-specified goal $\theta$, while minimizing cost. 
Suppose the multi-component intervention package $\boldsymbol{x}$ has $p$ components. Let $C(\boldsymbol{x})$ be a known cost function, in dollars, of the intervention package $\boldsymbol{x}$. 
In healthcare settings, a linear cost function is often reasonable to assume. $C(\boldsymbol{x})$ will then include a fixed cost plus the sum of the product between dosage and the unit cost for each of the intervention components. 
We also consider a cubic cost function, which allows for an initial economy of scale followed by increased costs when the component levels exceed a threshold (see Appendix \ref{additional cost} for details). This type of cubic cost function is particularly applicable in healthcare settings, where the marginal cost of a product may rise prohibitively as local supplies are depleted. 
In addition, in the BetterBirth study, one of the intervention components is the duration of the on-site intervention launch, measured in days (see Section \ref{application} for more details on the BetterBirth study). 
Here, a cubic cost function is suitable because extending the launch beyond three days would disrupt care at the participating healthcare centers, making it infeasible. Typically, less feasible doses are associated with higher costs.  

Let $\boldsymbol{z}_{j}$ be fixed center-specific characteristics, such as hospital district or hospital birth volume, that may be related to the outcome of interest.
\begin{assu}\label{glm_model_assumption}
We assume that the expected outcome of an individual $i$ in center $j$ with characteristic $\boldsymbol{z}_j$ under recommended intervention package $\boldsymbol{X}_j = \boldsymbol{x}_j$ and actual intervention package $\boldsymbol{A}_j=\boldsymbol{a}_j$, $E\left(Y_{ij} | \boldsymbol{A}_{j}=\boldsymbol{a}_j, \boldsymbol{X}_{j}=\boldsymbol{x}_j, \boldsymbol{z}_{j}; \boldsymbol{\beta} \right)$ only depends on the recommended intervention $\boldsymbol{x}_j$ through the actual intervention $\boldsymbol{a}_j$ and follows a GLM
\begin{equation}\label{contmodel}
g\left(E\left(Y_{ij} | \boldsymbol{A}_{j}=\boldsymbol{a}_j, \boldsymbol{X}_{j}=\boldsymbol{x}_j, \boldsymbol{z}_{j}; \boldsymbol{\beta} \right)\right) = \beta_{0}+\boldsymbol{\beta}_{1}^{T} \boldsymbol{a}_{j}+\boldsymbol{\beta}_{2}^{T} \boldsymbol{z}_{j},
\end{equation}
where $g()$ is a twice continuously differentiable link function. $\boldsymbol{\beta}^T=\left({\beta}_{0}, \boldsymbol{\beta}_{1}^T,\boldsymbol{\beta}_{2}^T\right)$ are unknown parameters to be estimated from the data. Typically, our main interest is in the vector $\boldsymbol{\beta}_{1}$, and the null hypothesis of no intervention effect, $H_{0} : \boldsymbol{\beta}_{1}=\boldsymbol{0}$. 
\end{assu}

The optimal intervention package for center $j$ with baseline covariates $\boldsymbol{z}_{j}$ is the solution to the center-specific optimization problem
\begin{equation}\label{aim}
\operatorname{Min}_{\boldsymbol{x}_{j}} C\left(\boldsymbol{x}_{j}\right)
\;\; \text {  subject to } \;\; E\left(Y_{ij} | \boldsymbol{x}_{j}, \boldsymbol{z}_{j} \; ; \boldsymbol{\beta} \right) \geq \theta ,
\end{equation}
with each component $p$ of $\boldsymbol{x}_{j}$ in a pre-determined interval $\left[L_{p}, U_{p}\right]$, where $p=1,...,P$.
Alternatively, instead of an absolute goal, the optimal intervention package could e.g. aim to increase the mean outcome for each center $j$ by a pre-specified goal $\Delta\theta$. 
In what follows, we focus on case (\ref{aim}). In many settings, the intervention package implemented will not be center-specific, in which case all $\boldsymbol{z}$ are ignored (or set to 0). 
We assume that there is a unique solution to equation (\ref{aim}). If we consider a linear cost function, and the intervention consists of two components with marginal costs $c_1$ and $c_2$, then if $\beta_{11}/c_1 \neq \beta_{12}/c_2$, the solution to equation (\ref{aim}) is unique.
Part of the aims of a LAGO study is to identify the solution, $\boldsymbol{x}^{opt}_{j}$, that solves equation (\ref{aim}). 

For simplicity, we present the theory for LAGO studies with $K=2$ stages. Appendix \ref{K>2} extends the theory to $K > 2$ stages. 

In each stage $k$, $k=1$ or $2$, 
$n_j^{(k)}$ patients in center $j$, $j=1,\ldots,J^{(k)}$ are enrolled, with $J^{(k)}$ fixed. In a randomized controlled LAGO trial (c-LAGO), centers may be randomized to either the intervention or control arm. In a quasi-experimental LAGO study, pre-study outcome data might be collected (before-after or ba-LAGO). 
Let $n^{(k)}=\sum_{j=1}^{J^{(k)}} n_{j}^{(k)}$ be the number of patients in stage $k$, $n=n^{(1)} + n^{(2)}$ the total number of patients across the 2 stages, and
$\alpha_{j k}=\lim _{n \rightarrow \infty} n_{j}^{(k)} / n \; > 0$ , which assumes that the ratio between the sample size for center $j$ in stage $k$ and the total sample size converges to a non-zero constant as $n$ goes to infinity. This assumption is a reasonable approximation if the number of patients in each center $j$ of stages 1 and 2 is large, as in most large scale implementation trials.
At stage $1$, the initial intervention package for each center, $\boldsymbol{x}_{j}^{(1)}$, is recommended by the study investigators' best guess, and/or based on pilot data before the study starts. In stage $2$, the
recommended intervention package $\boldsymbol{X}_{j}^{(2,n^{(1)})}$ for center $j$ is estimated from stage $1$ data and other accumulated knowledge, possibly including summary measures of
e.g. qualitative information from providers, patients and other stakeholders. 
The superscript $(2,n^{(1)})$ indicates that $\boldsymbol{X}_{j}^{(2,n^{(1)})}$ depends on the data from the $n^{(1)}$ patients in stage 1. 
Often, $\boldsymbol{X}_{j}^{(2,n^{(1)})}$ will solve equation (\ref{aim}) with the estimator $\hat{\boldsymbol{\beta}}^{(1)}$ for $\boldsymbol{\beta}$ based on the stage 1 data (see Remark \ref{remark}).
Let $\boldsymbol{A}_{j}^{(2,n^{(1)})}$ be the actual intervention package implemented in center $j$ in stage $2$. 
Under perfect adherence, $\boldsymbol{A}_{j}^{(2,n^{(1)})}=\boldsymbol{X}_{j}^{(2,n^{(1)})}$. 
In practice, both stage 1 and stage 2 centers may not adhere completely to all the intervention components (unplanned variation or uv-LAGO). 
We assume that to the extent it is not random, adherence depends on the recommendation $\boldsymbol{X}_{j}^{(2,n^{(1)})}$ and on the stage $2$ center characteristics $\boldsymbol{z}_{j}^{(2)}$, but not further on other 
predictors of the outcome.

\begin{remark}\label{remark}
One approach to determining recommended interventions for stage 2. 
\\
One way to determine the recommended interventions $\boldsymbol{X}_{j}^{(2,n^{(1)})}$ for stage 2, denoted by 
$\hat{\boldsymbol{x}}^{opt,(2,n^{(1)})}_{j}$, is by using a function $f$.
This function takes stage 1 outcomes, the pre-specified goal $\theta$, and stage 2 center-specific characteristics $\boldsymbol{z}_j^{(2)}$ as input, solving the optimization problem of equation (\ref{aim}) using the stage 1-based estimate $\hat{\boldsymbol{\beta}}^{(1)}$ in place of $\boldsymbol{\beta}$, 
and returns center-specific recommended interventions as output. Therefore, $\hat{\boldsymbol{x}}^{opt,(2,n^{(1)})}_{j} = f(\hat{\boldsymbol{\beta}}^{(1)} ; \boldsymbol{z}_j^{(2)})$ and $f$ solves equation (\ref{aim}). 

The optimization algorithm used to solve Equation (\ref{aim}) varies depended upon form of the cost function, $C(\boldsymbol{x})$. 
For linear cost functions, we first rank the intervention components based on their cost efficiency, $\hat{\beta}_{1p} / c_p$, where $c_p$ is the unit cost for the $pth$ intervention component.
Suppose that the intervention consists of two components, $p \in \{1,2\}$.
We increase the more cost-efficient component, while keeping the other component at its minimum, to check whether the objective $\theta$ as described in Equation (\ref{aim}) can be achieved.
If the objective cannot be met under this setting, we set the more cost-efficient component to its maximum value and increase the less cost-efficient component until the objective $\theta$ as described in Equation (\ref{aim}) is achieved. 
If the target value $\tilde{\theta}$ cannot be achieved, a function $g(\hat{\boldsymbol{\beta}})$ is employed to determine the recommended intervention. This function ensures that the recommended intervention is continuous with respect to each $\hat{\beta}_{1p}$. \citet{nevo2018analysis} described this first, and further details can be found in Section 5.1 of their Supplementary Material.

In the case of cubic cost functions, we construct a grid with small increments (e.g., 0.01) for each component. We then search for the combination of the two components that satisfies the target criterion $\theta$ as outlined in Equation (\ref{aim}), while minimizing the cubic cost function.
\end{remark}

For clarity of the expositions, below we will work with 
$\hat{\boldsymbol{x}}^{opt,(2,n^{(1)})}_{j}$.
Let $Y_{ij}^{(1)}$ be the outcome of participant $i$ in center $j$ of stage 1. Let $\boldsymbol{a}_j^{(1)} = h_j^{(1)} \left( \boldsymbol{x}_j^{(1)} \right) $ be the actual intervention package for center $j$ of stage 1, where $h_j^{(1)}$ is a continuous, deterministic function for each center $j$ in stage 1. 
Let $\boldsymbol{Y}_j^{(1)} = \bigl(Y_{1 j}^{(1)}, \ldots, Y_{n_{j}^{(1)} j}^{(1)} \bigl)$ be the outcomes of patients $1, \ldots, n_j^{(1)}$ in center $j$ of stage 1. Let
$\overline{\boldsymbol{a}}^{(1)}=\left(\boldsymbol{a}_{1}^{(1)}, \ldots, \boldsymbol{a}_{J^{(1)}}^{(1)}\right)$,  $\overline{\boldsymbol{z}}^{(1)}=\left(\boldsymbol{z}_{1}^{(1)}, \ldots, \boldsymbol{z}_{J^{(1)}}^{(1)}\right)$, and $\overline{\boldsymbol{Y}}^{(1)}=\left(\boldsymbol{Y}_{1}^{(1)}, \ldots, \boldsymbol{Y}_{J^{(1)}}^{(1)}\right)$ be the actual interventions, center-specific characteristics and outcomes for each center $1, \ldots, J^{(1)}$ of stage 1, respectively.
Additionally, let $\overline{\hat{\boldsymbol{x}}}^{o p t,\left(2, n^{(1)}\right)}=\Bigl(\hat{\boldsymbol{x}}_{1}^{o p t,\left(2, n^{(1)}\right)}, \ldots, \hat{\boldsymbol{x}}_{J^{(2)}}^{o p t,\left(2, n^{(1)}\right)}\Bigl)$ be the recommended interventions for center $1, \ldots, J^{(2)}$ of stage 2.

Let $Y_{ij}^{(2,n^{(1)})}$, $i=1,\cdots,n_j^{(2)}$ be the outcome of participant $i$ in center $j$ of stage 2.
Let
$\boldsymbol{Y}_j^{(2,n^{(1)})} = \Bigl(Y_{1 j}^{(2, n^{(1)})}, \ldots, Y_{n_{j}^{(2)} j}^{(2, n^{(1)})} \Bigl)$ be the outcomes of patients $1, \ldots, n_j^{(2)}$ in center $j$ of stage 2.
Let
$\boldsymbol{A}_j^{(2,n^{(1)})} = h_j^{(2)}\left( \hat{\boldsymbol{x}}^{opt,(2,n^{(1)})}_{j}  \right)$ be the actual intervention package for center $j$ of stage 2, where $h_j^{(2)}$ is a continuous, deterministic function for each center $j$ in stage 2. Let $\overline{\boldsymbol{A}}^{\left(2, n^{(1)}\right)}=\Bigl(\boldsymbol{A}_{1}^{\left(2, n^{(1)}\right)}, \ldots, \boldsymbol{A}_{J^{(2)}}^{\left(2, n^{(1)}\right)}\Bigl)$, $\overline{\boldsymbol{z}}^{(2)}=\left(\boldsymbol{z}_{1}^{(2)}, \ldots, \boldsymbol{z}_{J^{(2)}}^{(2)}\right)$, and $\overline{\boldsymbol{Y}}^{\left(2, n^{(1)}\right)}=\Bigl(\boldsymbol{Y}_{1}^{\left(2, n^{(1)}\right)}, \ldots, \boldsymbol{Y}_{J^{(2)}}^{\left(2, n^{(1)}\right)}\Bigl)$ be the actual interventions, center-specific characteristics and outcomes for each center $1, \dots, J^{(2)}$ of stage 2, respectively. 
\begin{assu} \label{main_assum_1}
Conditionally on $\overline{\hat{\boldsymbol{x}}}^{o p t,\left(2, n^{(1)}\right)}$, 
$\bigl(\overline{\boldsymbol{A}}^{\left(2, n^{(1)}\right)},\overline{\boldsymbol{Y}}^{\left(2, n^{(1)}\right)}\bigl)$ is independent of the stage 1 data $\bigl(\overline{\boldsymbol{a}}^{(1)},\overline{\boldsymbol{Y}}^{(1)} \bigl)$. 
That is, learning from data from earlier stages is only through the determination of the recommended intervention.
\end{assu}  

\begin{assu} For each center $j=1,\ldots,J^{(2)}$, the stage 2 recommended intervention  $\hat{\boldsymbol{x}}^{opt,(2,n^{(1)})}_{j}$ converges in probability to a center-specific limit $\boldsymbol{x}_{j}^{(2)}$.
\label{intervention_cvg_assumption}
\end{assu}

\begin{remark}
Assumption \ref{intervention_cvg_assumption} holds e.g. when the recommended intervention is a continuous function of either 
1) the maximum likelihood estimator (MLE) for $\boldsymbol{\beta}$; 
2) a solution to generalized estimating equation (GEE, \cite{liang1986longitudinal}); 
3) averages based on all stage~1 patients. Thus, Assumption \ref{intervention_cvg_assumption} holds if $\hat{\boldsymbol{x}}^{opt,(2,n^{(1)})}_{j}$ solves equation (\ref{aim}) and the solution to (\ref{aim}) is unique for the true $\boldsymbol{\beta}$. 
\end{remark}

Under Assumption \ref{intervention_cvg_assumption}, the definition of $h_j$ that maps recommended interventions to actual interventions in center $j$, and the Continuous Mapping Theorem, we conclude that $\boldsymbol{A}_j^{(2,n^{(1)})} = h_j^{(2)}\bigl( \hat{\boldsymbol{x}}^{opt,(2,n^{(1)})}_{j}  \bigl)$ converges in probability to $\boldsymbol{a}_j^{(2)} = h_j^{(2)}\bigl(\boldsymbol{x}^{(2)}_{j}  \bigl)$. 

\begin{assu}
The covariates $\boldsymbol{z}$, the outcomes $Y$, and the parameter space for $\boldsymbol{\beta}$ all take values in a compact space.
\label{glm_unif_bound_assu}
\end{assu}

\begin{assu}\label{ind_error}
Let $\epsilon_{ij}$ be the error term for individual $i$ in center $j$ of the GLM from Assumption \ref{glm_model_assumption}. 
That is, $Y_{ij} = g^{-1}\left( \beta_{0}+\boldsymbol{\beta}_{1}^{T} \boldsymbol{a}_{j}+\boldsymbol{\beta}_{2}^{T} \boldsymbol{z}_{j}\right) + \epsilon_{ij}$, with 
$E\left( \epsilon_{ij} | \boldsymbol{a}_j, \boldsymbol{z}_j \right) = 0$.
The distribution of the errors $\epsilon_{ij}$ is independent of the composition of the intervention package $\boldsymbol{a}_{j}$, although it may depend on $\boldsymbol{z}_{j}$. We denote $\sigma^2(\boldsymbol{z}_{j}) = VAR(\epsilon_{ij} | \boldsymbol{z}_j)$. Often, the $\epsilon_{ij}$ will be assumed to be iid, with distribution not dependent on the $\boldsymbol{z}_j$.
\end{assu}

\section{Estimating \texorpdfstring{$\boldsymbol{\beta}$}{Lg} and the optimal intervention package}\label{est beta}
This section describes the estimating equations for the proposed estimator $\hat{\boldsymbol{\beta}}$, and how to subsequently use $\hat{\boldsymbol{\beta}}$ to estimate the optimal intervention package.
The adaption of the intervention package based on prior outcomes causes dependence between stages, invalidating standard statistical theory. 
The usual (non-LAGO) estimator $\hat{\boldsymbol{\beta}}$ for the true parameter $\boldsymbol{\beta}^*$ is the
solution to the GEE under an independence working correlation structure \citep{liang1986longitudinal}
\begin{equation}\label{glm_EE}
\small{
\begin{aligned}
&0=\boldsymbol{U}^{(g)}(\boldsymbol{\beta})\\
&=\frac{1}{n} \left\{ \sum_{j=1}^{J^{(1)}} \sum_{i=1}^{n_{j}^{(1)}} \left(\frac{\partial}{\partial \boldsymbol{\beta}} g^{-1}(\boldsymbol{a}_j^{(1)},\boldsymbol{z}_j^{(1)}; \boldsymbol{\beta}) \right)\left(Y_{ij}^{(1)}-g^{-1}(\boldsymbol{a}_j^{(1)}, \boldsymbol{z}_j^{(1)};\boldsymbol{\beta})\right) \right.\\
&\qquad+\left.\sum_{j=1}^{J^{(2)}} \sum_{i=1}^{n_{j}^{(2)}} \left(\frac{\partial}{\partial \boldsymbol{\beta}} 
g^{-1}(\boldsymbol{A}_{j}^{\left(2, n^{(1)}\right)},\boldsymbol{z}_j^{(2)}; \boldsymbol{\beta})
\right)\left(Y_{ij}^{\left(2, n^{(1)}\right)}-g^{-1}(\boldsymbol{A}_{j}^{\left(2, n^{(1)}\right)},\boldsymbol{z}_j^{(2)}; \boldsymbol{\beta})\right) \right\}.
\end{aligned}
}
\end{equation}
We estimate $\boldsymbol{\beta}$ in a LAGO study by the solution to equation (\ref{glm_EE}). The superscript $(g)$ in $\boldsymbol{U}^{(g)}(\boldsymbol{\beta})$ reminds us that $\boldsymbol{U}^{(g)}(\boldsymbol{\beta})$ are estimating equations for LAGO GLM with a general link function. Asymptotic theory for $\hat{\boldsymbol{\beta}}$ is complicated by the fact that the stage 2 interventions $\overline{\boldsymbol{A}}^{\left(2, n^{(1)}\right)}$ depend on the stage 1 outcomes $\overline{\boldsymbol{Y}}^{(1)}$, so the two terms in (\ref{glm_EE}) are not independent. 

Section \ref{GLM} shows that despite these dependencies,
under the assumptions of Section \ref{setting}, the estimator $\hat{\boldsymbol{\beta}}$ solving equation (\ref{glm_EE}) is both consistent and asymptotically normal.

To identify the final recommended intervention package for any given center, we use the function $f$ from Remark \ref{remark}, 
along with the stage 1 and 2 outcome data, the pre-specified goal $\theta$, and the center-specific characteristics to solve the optimization problem given by (\ref{aim}), but with the final estimator $\hat{\boldsymbol{\beta}}$ in place of $\boldsymbol{\beta}$. 
The recommended intervention package for a center with any specified center characteristics $\boldsymbol{z}$ will be returned as output. 
That is, $\hat{\boldsymbol{x}}^{opt} = 
f(\hat{\boldsymbol{\beta}} ; \boldsymbol{z}  )$and $f$ solves equation (\ref{aim}).  

\section{Asymptotic Properties of \texorpdfstring{$\hat{\boldsymbol{\beta}}$}{Lg}}\label{GLM}
We describe the asymptotic properties of the final estimator $\hat{\boldsymbol{\beta}}$.  Similar to the previous sections, consider a two-stage LAGO design. Section \ref{GLM} outlines the proofs of consistency and asymptotic normality; Appendix \ref{general_link_appendix} provides the detailed proofs.

\begin{thm}
(Consistency). Under Assumptions \ref{glm_model_assumption} \textendash{}  \ref{ind_error}, \; $\hat{\boldsymbol{\beta}} \xrightarrow{P} \boldsymbol{\beta}^*$. 
\label{GLM_consistency_general}
\end{thm}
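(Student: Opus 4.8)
The plan is to treat $\hat{\boldsymbol{\beta}}$ as a $Z$-estimator defined through the estimating function $\boldsymbol{U}^{(g)}$ of (\ref{glm_EE}) and run the classical consistency argument: (i) show that $\boldsymbol{U}^{(g)}(\boldsymbol{\beta})$ converges in probability, uniformly over the compact parameter space of Assumption \ref{glm_unif_bound_assu}, to a non-random limit $\boldsymbol{u}(\boldsymbol{\beta})$; (ii) identify $\boldsymbol{\beta}^*$ as the unique, well-separated zero of $\boldsymbol{u}$; and (iii) invoke a standard estimating-equation consistency theorem to conclude $\hat{\boldsymbol{\beta}}\xrightarrow{P}\boldsymbol{\beta}^*$, since $\hat{\boldsymbol{\beta}}$ is by construction a zero of $\boldsymbol{U}^{(g)}$ lying in the compact parameter space.

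For step (i) I would first split $\boldsymbol{U}^{(g)}(\boldsymbol{\beta})$ into its stage~1 and stage~2 contributions and, within each stage, substitute $Y_{ij}=g^{-1}(\boldsymbol{a}_j,\boldsymbol{z}_j;\boldsymbol{\beta}^*)+\epsilon_{ij}$ using Assumptions \ref{glm_model_assumption} and \ref{ind_error}. Each stage's term then becomes a ``drift'' piece, built from $\frac{\partial}{\partial\boldsymbol{\beta}}g^{-1}$ times the difference $g^{-1}(\cdots;\boldsymbol{\beta}^*)-g^{-1}(\cdots;\boldsymbol{\beta})$, plus a ``noise'' piece $\frac1n\sum_j\sum_i\bigl(\frac{\partial}{\partial\boldsymbol{\beta}}g^{-1}\bigr)\epsilon_{ij}$. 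The stage~1 part is routine: the actual interventions $\boldsymbol{a}_j^{(1)}$ are fixed, the data are independent across centers and i.i.d.\ within centers, and $n_j^{(1)}/n\to\alpha_{j1}$, so the drift piece converges to $\sum_{j}\alpha_{j1}\bigl(\frac{\partial}{\partial\boldsymbol{\beta}}g^{-1}(\boldsymbol{a}_j^{(1)},\boldsymbol{z}_j^{(1)};\boldsymbol{\beta})\bigr)\bigl(g^{-1}(\boldsymbol{a}_j^{(1)},\boldsymbol{z}_j^{(1)};\boldsymbol{\beta}^*)-g^{-1}(\boldsymbol{a}_j^{(1)},\boldsymbol{z}_j^{(1)};\boldsymbol{\beta})\bigr)$, while the noise piece has mean $0$ and variance $O(1/n)$ because Assumption \ref{glm_unif_bound_assu} bounds $\frac{\partial}{\partial\boldsymbol{\beta}}g^{-1}$ and $\sigma^2(\boldsymbol{z})$ uniformly, hence is $o_P(1)$.

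The crux is the stage~2 term, where $\boldsymbol{A}_j^{(2,n^{(1)})}$ is random, correlated with the stage~1 data, and enters both the weight $\frac{\partial}{\partial\boldsymbol{\beta}}g^{-1}$ and the factor multiplying $\epsilon_{ij}^{(2)}$. Here I would condition on $\overline{\hat{\boldsymbol{x}}}^{opt,(2,n^{(1)})}$, equivalently on $\overline{\boldsymbol{A}}^{(2,n^{(1)})}$ since the $h_j^{(2)}$ are deterministic. By Assumption \ref{main_assum_1} the stage~2 outcomes are then independent of the stage~1 data, and by Assumption \ref{ind_error} each $\epsilon_{ij}^{(2)}$ has conditional mean $0$ and conditional variance $\sigma^2(\boldsymbol{z}_j^{(2)})$ not depending on the intervention; so the stage~2 noise piece has conditional mean $0$ and conditional variance $O(1/n)$ and is therefore $o_P(1)$. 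For the stage~2 drift piece, Assumption \ref{intervention_cvg_assumption} and the Continuous Mapping Theorem (as already noted in the text) give $\boldsymbol{A}_j^{(2,n^{(1)})}\xrightarrow{P}\boldsymbol{a}_j^{(2)}$; since $J^{(2)}$ is finite, $g^{-1}$ and its $\boldsymbol{\beta}$-gradient are continuous, and $n_j^{(2)}/n\to\alpha_{j2}$, the drift piece converges in probability to the analogous expression with $\boldsymbol{a}_j^{(2)}$ replacing $\boldsymbol{A}_j^{(2,n^{(1)})}$. All these convergences are first obtained pointwise in $\boldsymbol{\beta}$ and then upgraded to uniform convergence over the compact parameter space: twice-continuous differentiability of $g$ and compactness (Assumption \ref{glm_unif_bound_assu}) make $\boldsymbol{\beta}\mapsto\boldsymbol{U}^{(g)}(\boldsymbol{\beta})$ and its limit Lipschitz with a stochastically bounded Lipschitz constant, so a finite-net argument closes the gap. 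Collecting terms yields $\boldsymbol{U}^{(g)}(\boldsymbol{\beta})\xrightarrow{P}\boldsymbol{u}(\boldsymbol{\beta})$ uniformly, with $\boldsymbol{u}(\boldsymbol{\beta})=\sum_{k=1}^{2}\sum_{j=1}^{J^{(k)}}\alpha_{jk}\bigl(\frac{\partial}{\partial\boldsymbol{\beta}}g^{-1}(\boldsymbol{a}_j^{(k)},\boldsymbol{z}_j^{(k)};\boldsymbol{\beta})\bigr)\bigl(g^{-1}(\boldsymbol{a}_j^{(k)},\boldsymbol{z}_j^{(k)};\boldsymbol{\beta}^*)-g^{-1}(\boldsymbol{a}_j^{(k)},\boldsymbol{z}_j^{(k)};\boldsymbol{\beta})\bigr)$, which vanishes at $\boldsymbol{\beta}=\boldsymbol{\beta}^*$.

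For step (ii), writing $\frac{\partial}{\partial\boldsymbol{\beta}}g^{-1}$ as $(g^{-1})'(\eta_{jk})(1,\boldsymbol{a}_j^{(k)T},\boldsymbol{z}_j^{(k)T})^T$ with $\eta_{jk}$ the linear predictor exhibits $\boldsymbol{u}$ as the population quasi-score of a weighted least-squares problem; its only root is $\boldsymbol{\beta}^*$ provided $g$ is strictly monotone (so $(g^{-1})'$ has constant sign) and the design vectors $(1,\boldsymbol{a}_j^{(k)T},\boldsymbol{z}_j^{(k)T})$ span the full parameter space across centers with positive weights $\alpha_{jk}$ --- the ``variation in the intervention components'' identification requirement flagged in the Introduction, which the across-stage and across-center structure of a LAGO design is meant to supply. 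Uniqueness together with continuity of $\boldsymbol{u}$ on a compact set makes $\boldsymbol{\beta}^*$ a well-separated zero, so step (iii) follows from standard $Z$-estimator consistency. I expect the stage~2 term to be the main obstacle: disentangling the dependence between $\boldsymbol{A}_j^{(2,n^{(1)})}$ and the stage~1 outcomes. The resolution --- conditioning on the stage~2 interventions to annihilate the stage~2 error term in conditional mean, while Assumption \ref{intervention_cvg_assumption} licenses substituting the deterministic limit $\boldsymbol{a}_j^{(2)}$ into the drift term, with smoothness and compactness delivering uniformity in $\boldsymbol{\beta}$ --- is precisely where Assumption \ref{ind_error} (error/intervention independence) does the work that the coupling argument does in the binary-outcome case. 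A secondary point needing care is verifying uniqueness of the population root, not merely that $\boldsymbol{u}(\boldsymbol{\beta}^*)=0$.
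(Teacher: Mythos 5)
Your proposal is correct, and its skeleton is the same as the paper's: uniform convergence of $\boldsymbol{U}^{(g)}(\boldsymbol{\beta})$ to the limit $\boldsymbol{u}^{(g)}(\boldsymbol{\beta})$ over the compact parameter space, plus a well-separated unique zero at $\boldsymbol{\beta}^*$, plugged into a standard $Z$-estimator consistency theorem (the paper uses Theorem 5.9 of van der Vaart, which is exactly your steps (i)--(iii)); your treatment of identification also matches the paper's, which likewise defers uniqueness to ``enough variation'' and the GLM literature. Where you genuinely diverge is in how the troublesome stage-2 terms are handled. The paper decomposes $\boldsymbol{U}^{(g)}-\boldsymbol{u}^{(g)}$ into five pieces and, for the stage-2 noise piece, invokes the error-replacement device: by Assumption \ref{ind_error}, swapping $\epsilon_{ij}^{(2,n^{(1)})}$ for the errors $\epsilon_{ij}^{(2)}$ under the limiting intervention leaves the distribution unchanged, after which Donsker/Glivenko--Cantelli classes (Example 19.7 of van der Vaart) give the uniform law of large numbers, and the weight discrepancies are controlled by the Mean Value Theorem together with $\max_j\|\boldsymbol{A}_j^{(2,n^{(1)})}-\boldsymbol{a}_j^{(2)}\|\xrightarrow{P}0$. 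You instead condition on $\overline{\hat{\boldsymbol{x}}}^{opt,(2,n^{(1)})}$, use Assumptions \ref{main_assum_1} and \ref{ind_error} to get conditional mean zero and conditional variance $O(1/n)$ for the stage-2 noise, and upgrade pointwise convergence to uniform convergence by a stochastic-Lipschitz/finite-net argument; the drift is handled, as in the paper, by the convergence of the interventions and continuity on compacts. Your route is valid under the stated assumptions (boundedness from Assumption \ref{glm_unif_bound_assu} makes the Lipschitz constants controllable, and the conditioning argument is the same one the paper itself uses for $\boldsymbol{U}_{2\_1,n}^{(g)}$ in the normality proof) and is more elementary, since second-moment bounds suffice for consistency. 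What the paper's heavier machinery buys is reuse: the error-replacement plus empirical-process framework is what later identifies the asymptotic distribution of $\sqrt{n}\,\boldsymbol{U}^{(g)}(\boldsymbol{\beta}^*)$ with that of a fixed design, which a Chebyshev bound alone cannot deliver.
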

\noindent\cite{nevo2018analysis} provide a proof of consistency for $\hat{\boldsymbol{\beta}}$ under logistic regression.
Here, the general link function $g()$ does not need to be the canonical link function, and the outcomes $Y_{ij}$ are continuous. 

To prove consistency of $\hat{\boldsymbol{\beta}}$, we show that in spite of the fact that equation (\ref{glm_EE}) does not consist of i.i.d. terms, Theorem $5.9$ of \cite{van2000asymptotic} can be used. We show that the two conditions of this theorem are satisfied. For the first condition of Theorem $5.9$ of \cite{van2000asymptotic}, we show that
\begin{equation}\label{consistency_main_body}
\small{
    \sup_{\boldsymbol{\beta}}\|\boldsymbol{U}^{(g)}(\boldsymbol{\beta}) - \boldsymbol{u}^{(g)}(\boldsymbol{\beta})\| \xrightarrow{P} 0,
}
\end{equation}
where $\boldsymbol{u}^{(g)}(\boldsymbol{\beta})$ are the expected values of the estimating equations under the limiting design, in particular,
\begin{equation}
\small{
\begin{aligned}
\boldsymbol{u}^{(g)}(\boldsymbol{\beta}) 
&= \sum_{j=1}^{J^{(1)}} \alpha_{j 1} \left[ \left(\frac{\partial}{\partial \boldsymbol{\beta}} g^{-1}(\boldsymbol{a}_j^{(1)},\boldsymbol{z}_j^{(1)}; \boldsymbol{\beta}) \right) \left( g^{-1}(\boldsymbol{a}_j^{(1)},\boldsymbol{z}_j^{(1)}; \boldsymbol{\beta}^*) -g^{-1}(\boldsymbol{a}_j^{(1)},\boldsymbol{z}_j^{(1)}; \boldsymbol{\beta})  \right) \right]\\ 
&+  \sum_{j=1}^{J^{(2)}} \alpha_{j 2} \left[  \left(\frac{\partial}{\partial \boldsymbol{\beta}} g^{-1}(\boldsymbol{a}_j^{(2)},\boldsymbol{z}_j^{(2)}; \boldsymbol{\beta}) \right) \left( g^{-1}(\boldsymbol{a}_j^{(2)},\boldsymbol{z}_j^{(2)}; \boldsymbol{\beta}^*) -g^{-1}(\boldsymbol{a}_j^{(2)},\boldsymbol{z}_j^{(2)}; \boldsymbol{\beta}) \right) \right], 
\end{aligned}
}
\label{little_u_main_body}
\end{equation}
where (see Section \ref{setting}),
$\alpha_{j k}=\lim _{n \rightarrow \infty} n_{j}^{(k)} / n$ and $\boldsymbol{A}_{j}^{(2,n^{(1)})} \xrightarrow{P} \boldsymbol{a}_{j}^{(2)}$. 
(\ref{consistency_main_body}). 

To prove equation (\ref{consistency_main_body}), Appendix \ref{consistency_appendix_generallink} shows that $\boldsymbol{U}^{(g)}(\boldsymbol{\beta})-\boldsymbol{u}^{(g)}(\boldsymbol{\beta})$ can be decomposed into five distinct components. 
Let $\epsilon_{ij}^{(2)}$ be the error that patient $i$ in center $j$ would have experienced under the limiting intervention $\boldsymbol{a}_{j}^{(2)}$. 
By Assumption \ref{ind_error}, 
replacing the $\epsilon_{ij}^{(2,n^{(1)})}$ by the error terms $\epsilon_{ij}^{(2)}$  
does not change the distribution of the part in $\boldsymbol{U}^{(g)}(\boldsymbol{\beta})-\boldsymbol{u}^{(g)}(\boldsymbol{\beta})$ that includes $\epsilon_{ij}^{(2,n^{(1)})}$. Combining the error replacing approach with the concept of Donsker classes from empirical process theory, the supremum of $\boldsymbol{\beta}$ for each of the five components converges to 0 in probability. Then by triangle inequality, equation (\ref{consistency_main_body}) holds.

In order for LAGO to work properly, we need variations in the intervention components to identify the treatment effect parameters. The uniqueness of $\boldsymbol{\beta}^*$ as a maximizer or zero of $\boldsymbol{u}^{(g)}(\boldsymbol{\beta})$ (equation (\ref{little_u_main_body})) has been studied by various authors, see e.g. Chapter 2.2 of \cite{fahrmeir2013multivariate}. 
$\boldsymbol{u}^{(g)}(\boldsymbol{\beta})$ is the same as if the interventions were fixed before the study, so if there is enough variation in the intervention components,
the second condition in Theorem 5.9 of \cite{van2000asymptotic} is also satisfied and we conclude that $\hat{\boldsymbol{\beta}}$ is consistent. 

\begin{thm}\label{AN_thm}
(Asymptotic Normality). Under Assumptions \ref{glm_model_assumption}  \textendash{}  \ref{ind_error}, 
\begin{equation}\label{variance_main_body}
\small{
    \sqrt{n}\left(\hat{\boldsymbol{\beta}}-\boldsymbol{\beta}^{*}\right) \xrightarrow{D} N\left(0, J\left(\boldsymbol{\beta^{*}}\right)^{-1} V\left(\boldsymbol{\beta^{*}} \right) J\left(\boldsymbol{\beta^{*}}\right)^{-1}\right),
}
\end{equation}
where
\begin{equation*}
\small{
\begin{aligned}
J(\boldsymbol{\beta}^*) &= \sum_{j=1}^{J^{(1)}} \alpha_{j1} \left.\left( \frac{\partial}{\partial \boldsymbol{\beta}} \right|_{\boldsymbol{\beta}^*} g^{-1}(\boldsymbol{a}_j^{(1)},\boldsymbol{z}_j^{(1)}; \boldsymbol{\beta})\right)^{\otimes2}   + \sum_{j=1}^{J^{(2)}} \alpha_{j2} \left.\left( \frac{\partial}{\partial \boldsymbol{\beta}} \right|_{\boldsymbol{\beta}^*}  g^{-1}(\boldsymbol{a}_j^{(2)},\boldsymbol{z}_j^{(2)}; \boldsymbol{\beta})\right)^{\otimes2} ,
\end{aligned}
}
\end{equation*}
\begin{equation*}
\footnotesize{
\begin{aligned}
V\left(\boldsymbol{\beta}^{*}\right) = &\sum_{j=1}^{J^{(1)}} \alpha_{j 1}  \left(\left.\frac{\partial}{\partial \boldsymbol{\beta}}\right|_{\boldsymbol{\beta}^*} g^{-1}(\boldsymbol{a}_j^{(1)},\boldsymbol{z}_j^{(1)}; \boldsymbol{\beta}) \right)^{\otimes2} \sigma^2(\boldsymbol{z}_{j}^{(1)})\\
&\qquad\qquad\qquad\qquad\qquad\qquad\qquad\qquad+ \sum_{j=1}^{J^{(2)}} \alpha_{j 2} \left(\left.\frac{\partial}{\partial \boldsymbol{\beta}}\right|_{\boldsymbol{\beta}^*} g^{-1}(\boldsymbol{a}_j^{(2)},\boldsymbol{z}_j^{(2)}; \boldsymbol{\beta}) \right)^{\otimes2} \sigma^2(\boldsymbol{z}_{j}^{(2)}),
\end{aligned}
}
\end{equation*}
and ${}^{\otimes2}$ is the Kronecker product.
\end{thm}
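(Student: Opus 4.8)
The plan is to establish (\ref{variance_main_body}) by the standard Z-estimator route (in the spirit of \cite{van2000asymptotic}): a first-order expansion of the estimating equation $\boldsymbol{U}^{(g)}$ about $\boldsymbol{\beta}^*$, controlled by (i) a uniform law of large numbers for its derivative and (ii) a central limit theorem for the score $\sqrt{n}\,\boldsymbol{U}^{(g)}(\boldsymbol{\beta}^*)$, combined with Theorem~\ref{GLM_consistency_general} and invertibility of $J(\boldsymbol{\beta}^*)$. Write $\boldsymbol{D}_j^{(k)}(\boldsymbol{\beta})=\frac{\partial}{\partial\boldsymbol{\beta}}g^{-1}(\boldsymbol{a}_j^{(k)},\boldsymbol{z}_j^{(k)};\boldsymbol{\beta})$, with the random $\boldsymbol{A}_j^{(2,n^{(1)})}$ in place of $\boldsymbol{a}_j^{(2)}$ in the stage-2 quantities actually appearing in $\boldsymbol{U}^{(g)}$. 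Since $g$ is twice continuously differentiable and $\boldsymbol{z}$, $Y$ and the parameter space are compact (Assumption~\ref{glm_unif_bound_assu}), I would first show $\sup_{\boldsymbol{\beta}}\bigl\|\frac{\partial}{\partial\boldsymbol{\beta}}\boldsymbol{U}^{(g)}(\boldsymbol{\beta})-\dot{\boldsymbol{u}}^{(g)}(\boldsymbol{\beta})\bigr\|\xrightarrow{P}0$, where $\dot{\boldsymbol{u}}^{(g)}$ is the derivative of the deterministic, continuous limit map $\boldsymbol{u}^{(g)}$ of (\ref{little_u_main_body}); this reuses the error-replacement plus Donsker / Glivenko--Cantelli machinery of the consistency proof, since the extra second-derivative term $\frac{\partial^2}{\partial\boldsymbol{\beta}\partial\boldsymbol{\beta}^T}g^{-1}$ carries a mean-zero error factor that vanishes uniformly and leaves a smooth deterministic contribution. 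Differentiating (\ref{little_u_main_body}) and evaluating at $\boldsymbol{\beta}=\boldsymbol{\beta}^*$ kills the $\bigl(g^{-1}(\,\cdot\,;\boldsymbol{\beta}^*)-g^{-1}(\,\cdot\,;\boldsymbol{\beta})\bigr)$ factor and leaves only the outer-product terms, so $\dot{\boldsymbol{u}}^{(g)}(\boldsymbol{\beta}^*)=-J(\boldsymbol{\beta}^*)$.

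Next I would treat the score. At $\boldsymbol{\beta}^*$, $Y_{ij}-g^{-1}(\,\cdot\,;\boldsymbol{\beta}^*)=\epsilon_{ij}$, so $\sqrt{n}\,\boldsymbol{U}^{(g)}(\boldsymbol{\beta}^*)=\frac{1}{\sqrt n}\sum_{j=1}^{J^{(1)}}\sum_{i=1}^{n_j^{(1)}}\boldsymbol{D}_j^{(1)}(\boldsymbol{\beta}^*)\,\epsilon_{ij}^{(1)}+\frac{1}{\sqrt n}\sum_{j=1}^{J^{(2)}}\boldsymbol{D}_j^{(2,n^{(1)})}(\boldsymbol{\beta}^*)\sum_{i=1}^{n_j^{(2)}}\epsilon_{ij}^{(2,n^{(1)})}$, using that $\boldsymbol{D}_j^{(2,n^{(1)})}$ does not depend on $i$. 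The stage-1 term is a finite sum over centers of within-center-i.i.d., mean-zero contributions (with $n_j^{(1)}\to\infty$ since $n_j^{(1)}/n\to\alpha_{j1}>0$), and by the Lindeberg CLT, with moments controlled by Assumption~\ref{glm_unif_bound_assu}, it converges to $N\bigl(0,\sum_j\alpha_{j1}\boldsymbol{D}_j^{(1)}(\boldsymbol{\beta}^*)^{\otimes2}\sigma^2(\boldsymbol{z}_j^{(1)})\bigr)$. For the stage-2 term I would argue in two steps. First, by Assumptions~\ref{main_assum_1} and~\ref{ind_error} the conditional law of $\bigl(\tfrac{1}{\sqrt{n_j^{(2)}}}\sum_i\epsilon_{ij}^{(2,n^{(1)})}\bigr)_j$ given $\overline{\hat{\boldsymbol{x}}}^{opt,(2,n^{(1)})}$ does not depend on its value; hence this vector is independent of the stage-1 data and has the same law as the corresponding vector formed from the coupled errors $\epsilon_{ij}^{(2)}$ under the limiting interventions $\boldsymbol{a}_j^{(2)}$ --- the error-replacement device of the consistency proof. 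Since those replaced errors are genuinely within-center i.i.d.\ and independent across centers, the Lindeberg CLT yields joint convergence of the vector to independent $N(0,\sigma^2(\boldsymbol{z}_j^{(2)}))$ coordinates, so in particular the vector is $O_P(1)$. Second, Assumption~\ref{intervention_cvg_assumption} and the $C^2$ smoothness of $g^{-1}$ give $\boldsymbol{D}_j^{(2,n^{(1)})}(\boldsymbol{\beta}^*)\xrightarrow{P}\boldsymbol{D}_j^{(2)}(\boldsymbol{\beta}^*)$, so by Slutsky the stage-2 term equals $\sum_j\sqrt{\alpha_{j2}}\,\boldsymbol{D}_j^{(2)}(\boldsymbol{\beta}^*)\cdot\tfrac{1}{\sqrt{n_j^{(2)}}}\sum_i\epsilon_{ij}^{(2,n^{(1)})}+o_P(1)$, which converges to $N\bigl(0,\sum_j\alpha_{j2}\boldsymbol{D}_j^{(2)}(\boldsymbol{\beta}^*)^{\otimes2}\sigma^2(\boldsymbol{z}_j^{(2)})\bigr)$. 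Because the random driver of the stage-2 term after this replacement is independent of the stage-1 data, the stage-1 and stage-2 score terms are asymptotically independent, so $\sqrt{n}\,\boldsymbol{U}^{(g)}(\boldsymbol{\beta}^*)$ converges to the sum of the two independent normal limits, namely $N(0,V(\boldsymbol{\beta}^*))$ with $V(\boldsymbol{\beta}^*)$ as stated.

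Finally, I would assemble the expansion. Writing $0=\boldsymbol{U}^{(g)}(\hat{\boldsymbol{\beta}})=\boldsymbol{U}^{(g)}(\boldsymbol{\beta}^*)+\frac{\partial}{\partial\boldsymbol{\beta}}\boldsymbol{U}^{(g)}(\tilde{\boldsymbol{\beta}}_n)(\hat{\boldsymbol{\beta}}-\boldsymbol{\beta}^*)$ for $\tilde{\boldsymbol{\beta}}_n$ on the segment between $\hat{\boldsymbol{\beta}}$ and $\boldsymbol{\beta}^*$, Theorem~\ref{GLM_consistency_general} gives $\tilde{\boldsymbol{\beta}}_n\xrightarrow{P}\boldsymbol{\beta}^*$; together with the uniform LLN above and continuity of $\dot{\boldsymbol{u}}^{(g)}$ at $\boldsymbol{\beta}^*$ this yields $\frac{\partial}{\partial\boldsymbol{\beta}}\boldsymbol{U}^{(g)}(\tilde{\boldsymbol{\beta}}_n)\xrightarrow{P}-J(\boldsymbol{\beta}^*)$. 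The same ``enough variation in the intervention components'' identification condition used for Theorem~\ref{GLM_consistency_general} makes $J(\boldsymbol{\beta}^*)$ invertible, hence $\sqrt{n}(\hat{\boldsymbol{\beta}}-\boldsymbol{\beta}^*)=J(\boldsymbol{\beta}^*)^{-1}\sqrt{n}\,\boldsymbol{U}^{(g)}(\boldsymbol{\beta}^*)+o_P(1)$, and Slutsky with the score CLT gives the claimed limit, with the sandwich form $J(\boldsymbol{\beta}^*)^{-1}V(\boldsymbol{\beta}^*)J(\boldsymbol{\beta}^*)^{-1}$ following from symmetry of $J(\boldsymbol{\beta}^*)$. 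I expect the main obstacle to be the stage-2 score limit: before the two replacements the stage-2 summands are entangled with the stage-1 data both through the coefficients $\boldsymbol{D}_j^{(2,n^{(1)})}$ and through the errors $\epsilon_{ij}^{(2,n^{(1)})}$, and it is the combination of the error-replacement argument (Assumption~\ref{ind_error}) with the Slutsky replacement of $\boldsymbol{D}_j^{(2,n^{(1)})}$ that simultaneously pins down the stage-2 variance block at the \emph{limiting} interventions and delivers asymptotic independence from the stage-1 score.
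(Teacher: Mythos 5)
Your proposal is correct and follows essentially the same route as the paper's proof: a mean-value expansion of $\boldsymbol{U}^{(g)}$, convergence of the derivative to $-J(\boldsymbol{\beta}^*)$, and a score CLT obtained by isolating the stage-2 cross term and using the error-replacement device of Assumption \ref{ind_error} together with $\boldsymbol{A}_j^{(2,n^{(1)})}\xrightarrow{P}\boldsymbol{a}_j^{(2)}$ to reduce to a fixed two-stage design whose stage-1 and stage-2 scores are independent. The only differences are cosmetic: you dispatch the cross term $\boldsymbol{U}_{2\_1,n}^{(g)}$ by an $o_P(1)\cdot O_P(1)$ Slutsky argument where the paper bounds its conditional mean and second moment via Chebyshev, and you package the derivative step as a uniform law of large numbers for $\frac{\partial}{\partial\boldsymbol{\beta}}\boldsymbol{U}^{(g)}(\boldsymbol{\beta})$ rather than the paper's explicit $G_{6,n}^{(g)}$, $G_{7,n}^{(g)}$ decomposition.
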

The variance in equation (\ref{variance_main_body}) can be estimated by replacing $\boldsymbol{\beta}^*$, $\boldsymbol{a}_{j}^{(2)}$, $\alpha_{j1}$, and $\alpha_{j2}$ with $\hat{\boldsymbol{\beta}}$, $\boldsymbol{A}_{j}^{(2,n^{(1)})}$, ${n_j^{(1)}} / {n}$, and ${n_j^{(2)}} / {n}$, respectively. For estimating $\sigma^2(\boldsymbol{z}_j)$ and exact formulas of $\hat{J}(\hat{\boldsymbol{\beta}})$ and $\hat{V}(\hat{\boldsymbol{\beta}})$, see equations (\ref{estimator J}) and (\ref{estimator V}) in Appendix \ref{an_appendix_generallink}.

\cite{nevo2018analysis} provide a proof of asymptotic normality for $\hat{\boldsymbol{\beta}}$ under logistic regression; however, the novel coupling arguments they used for binary outcomes cannot be generalized to continuous outcomes. 
In LAGO trials, standard theory is not applicable as the stage 2 intervention package depends on stage 1 outcomes. Specifically, conditioning on the stage 2 intervention package means conditioning on functions of stage 1 outcomes. 

To address this issue, we assume that the errors in the GLM are independent of the intervention package composition (see Assumption \ref{ind_error}). 
The proof of Theorem \ref{AN_thm} employs a combination of strategies and concepts: building on the established Theorem \ref{GLM_consistency_general}, utilizing the Mean Value Theorem, employing the error-replacement strategy from the proof of Theorem \ref{GLM_consistency_general}, invoking Donsker classes from empirical process theory, and leveraging Lévy's Continuity Theorem for characteristic functions. 
Ultimately, Appendix \ref{an_appendix_generallink} demonstrates that $\sqrt{n} \bigl(\hat{\boldsymbol{\beta}}-\boldsymbol{\beta}^{*}\bigl)$ has the same asymptotic distribution as the estimator for a fixed two-stage design with interventions $\boldsymbol{a}_j^{(2)}$ decided on before the trial. 

By the Mean Value Theorem (similar to Theorem 5.21 of \cite{van2000asymptotic}),
\begin{equation*}
\small{
\sqrt{n}\left(\hat{\boldsymbol{\beta}}-\boldsymbol{\beta}^{*}\right) = -\sqrt{n}\left(\left.\frac{\partial}{\partial \boldsymbol{\beta}}\right|_{\tilde{\boldsymbol{\beta}}} \boldsymbol{U}^{(g)}\left({\boldsymbol{\beta}}\right)\right)^{-1} \boldsymbol{U}^{(g)}\left(\boldsymbol{\beta}^{*}\right),
}
\end{equation*}
where for each row of $\left.\frac{\partial}{\partial \boldsymbol{\beta}}\right|_{\Tilde{\boldsymbol{\beta}}} \boldsymbol{U}^{(g)}({\boldsymbol{\beta}})$, $\Tilde{\boldsymbol{\beta}}$ takes a possibly row-dependent value between $\hat{\boldsymbol{\beta}}$ and $\boldsymbol{\beta}^*$. By Theorem \ref{GLM_consistency_general}, the Continuous Mapping Theorem and further arguments, Appendix \ref{an_appendix_generallink} shows that $\bigl(-\left.\frac{\partial}{\partial \boldsymbol{\beta}}\right|_{\tilde{\boldsymbol{\beta}}} \boldsymbol{U}^{(g)}\left({\boldsymbol{\beta}}\right)\bigl)$ converges in probability to $J\left(\boldsymbol{\beta^{*}}\right)$. 
Next, we derive
\begin{equation}\label{regular_glm_AN_general_mainPaper}
\small{
\sqrt{n} \ \boldsymbol{U}^{(g)}(\boldsymbol{\beta}^*) = \boldsymbol{U}_{1,n}^{(g)} + \boldsymbol{U}_{2\_1,n}^{(g)} + \boldsymbol{U}_{2\_2,n}^{(g)},
}
\end{equation}
where 
\begin{equation*}
\small{
\begin{aligned}
\boldsymbol{U}_{1,n}^{(g)} &=\frac{1}{\sqrt{n}} \sum_{j=1}^{J^{(1)}} \sum_{i=1}^{n_{j}^{(1)}} \left(\left.\frac{\partial}{\partial \boldsymbol{\beta}}\right|_{\boldsymbol{\beta}^*} g^{-1}(\boldsymbol{a}_j^{(1)},\boldsymbol{z}_j^{(1)}; \boldsymbol{\beta}) \right) \epsilon_{ij}^{(1)}, \\ 
\boldsymbol{U}_{2\_2,n}^{(g)} &= \frac{1}{\sqrt{n}} \sum_{j=1}^{J^{(2)}} \sum_{i=1}^{n_{j}^{(2)}} \left(\left.\frac{\partial}{\partial \boldsymbol{\beta}}\right|_{\boldsymbol{\beta}^*} g^{-1}(\boldsymbol{a}_j^{(2)},\boldsymbol{z}_j^{(2)}; \boldsymbol{\beta}) \right) \epsilon_{ij}^{(2,n^{(1)})},
\end{aligned}
}
\end{equation*}
and the
explicit form of $\boldsymbol{U}_{2\_1,n}^{(g)}$ is given by equation (\ref{U_separations_equations}) in Appendix \ref{an_appendix_generallink}. 

Appendix \ref{an_appendix_generallink} shows that $\boldsymbol{U}_{2\_1,n}^{(g)} \xrightarrow{P} 0$. For $\boldsymbol{U}_{2\_2,n}^{(g)}$, similar to the proof of Theorem \ref{GLM_consistency_general} in Appendix \ref{consistency_appendix_generallink}, 
by Assumption \ref{ind_error}, replacing $\epsilon_{ij}^{(2,n^{(1)})}$ by the error terms $\epsilon_{ij}^{(2)}$ under the limiting interventions $\boldsymbol{a}_{j}^{(2)}$ does not change the distribution of $\boldsymbol{U}_{2\_2, n}^{(g)}$. 
We denote $ {\boldsymbol{U}_{2\_2,n}^{(g)\;*}}$ as the new term with error terms $\epsilon_{ij}^{(2)}$ and interventions $\boldsymbol{a}_{j}^{(2)}$. 
This implies that $\sqrt{n} \ \boldsymbol{U}^{(g)}(\boldsymbol{\beta}^*)$ has the same asymptotic distribution as $\boldsymbol{U}_{1,n}^{(g)} + {\boldsymbol{U}_{2\_2,n}^{(g)\; *}}$, which are the estimating equations for a fixed two-stage design with $\boldsymbol{a}_1^{(1)},\cdots, \boldsymbol{a}_{J^{(1)}}^{(1)}, \boldsymbol{a}_1^{(2)},\cdots, \boldsymbol{a}_{J^{(2)}}^{(2)}$ as interventions decided on before the trial. By the definition of $\boldsymbol{a}_{j}^{(2)}$, the two terms $\boldsymbol{U}_{1,n}^{(g)}$ and $ {\boldsymbol{U}_{2\_2,n}^{(g)\; *}}$ are independent. Thus, regular GEE theory applies here, and Theorem \ref{AN_thm} (Asymptotic Normality) follows. A detailed proof of Theorem \ref{AN_thm} can be found in Appendix \ref{an_appendix_generallink}. 

Appendix \ref{appendix_loglink} presents the proofs for Theorem \ref{GLM_consistency_general} and Theorem \ref{AN_thm} using a log link function as a concrete example. 

\section{Hypothesis testing}
\label{testing section}
In a LAGO study, one of the main objectives is to evaluate the null hypothesis of no overall intervention effect. This null hypothesis, represented by $H_0: \boldsymbol{\beta}_1=0$, can be tested using standard methods, such as the Chi-squared test. The validity of the Chi-squared test for this purpose is guaranteed by Theorem \ref{AN_thm}.

In a c-LAGO study, a group indicator $R$ can be used to identify the intervention group ($R=1$) and the control group ($R=0$). Let $\mu_0$ and $\mu_1$ represent the mean outcome values in the control and intervention groups, respectively. An alternative test for $H_0: \boldsymbol{\beta}_1=0$ is to test the implied
$\tilde{H}_0: \mu_0 = \mu_1$.
Under the null hypothesis, the stage 1 and stage 2 outcomes are independent because the stage 2 intervention has no effect on the stage 2 outcomes. Under the null hypothesis, the distribution of the outcomes is the same regardless of the intervention, and we can use any standard 1-degree-of-freedom test (such as the Z-test) to compare the distribution in the intervention and the control group. This leads to a test with the usual level of significance (either exact or asymptotic, depending on the test used).

One can also take center characteristics $\boldsymbol{z}$ into account when testing $H_0: \boldsymbol{\beta}_1=0$ in a c-LAGO study.
The test involves considering the null hypothesis that $\gamma=0$ in the model $g\left(E\left(Y | \gamma, \boldsymbol{z}; \boldsymbol{\beta} \right)\right) = \beta_{0}+\boldsymbol{\beta}_{2}^{T} \boldsymbol{z}+\gamma R$. Under the null hypothesis, as before, there is independence between the stages, and $\boldsymbol{\beta}_1=0$ implies that $\gamma=0$.

\section{Confidence sets and confidence bands}\label{confidence}
The confidence set for the optimal intervention package is a list of intervention package compositions that can be expected to include the optimal intervention in 95$\%$ of such studies. To construct this confidence set, we create a confidence interval for $\mu = E\left(Y | \boldsymbol{x}, \tilde{\boldsymbol{z}} ; \boldsymbol{\beta}^*\right)$ for a given value $\tilde{\boldsymbol{z}}$ and for each possible value of $\boldsymbol{x}$. 
We first calculate a 95$\%$ confidence interval for $g\left(\mu\right)$ as
$
CI_{g(\mu)} = \left(1 \; \boldsymbol{x}^{T} \; \tilde{\boldsymbol{z}}^{T}\right) \hat{\boldsymbol{\beta}} \pm 1.96 \; \sqrt{\sigma^2_{\mu}(\hat{\boldsymbol{\beta}} ; \boldsymbol{x}, \tilde{\boldsymbol{z}})} 
$ where 
$\sigma^2_{\mu}(\hat{\boldsymbol{\beta}} ; \boldsymbol{x}, \tilde{\boldsymbol{z}})$ can be calculated based on Theorem \ref{AN_thm}: $\sigma^2_{\mu}(\hat{\boldsymbol{\beta}} ; \boldsymbol{x}, \tilde{\boldsymbol{z}}) = \left(1 \; \boldsymbol{x}^{T} \; \tilde{\boldsymbol{z}}^{T}\right) n^{-1} \hat{J}\bigl(\hat{\boldsymbol{\beta}}\bigl)^{-1} \hat{V}\bigl(\hat{\boldsymbol{\beta}}\bigl) \hat{J}\bigl(\hat{\boldsymbol{\beta}}\bigl)^{-1} \left(1 \; \boldsymbol{x}^{T} \; \tilde{\boldsymbol{z}}^{T}\right)^T$. It follows that the 95$\%$ confidence interval for $E\left(Y | \boldsymbol{x}, \tilde{\boldsymbol{z}} ; \boldsymbol{\beta}^*\right)$ is $CI_{\mu} = g^{-1}\left(CI_{g(\mu)}\right)$. The confidence set can then be calculated as $CS\left( \boldsymbol{x}^{opt} \right) = \{ \boldsymbol{x}: CI_{\mu} \ni \theta \}$. 
That is, $CS\left( \boldsymbol{x}^{opt} \right)$ includes intervention packages $\boldsymbol{x}$ for which 
$\theta$ (equation (\ref{aim})) is inside the confidence interval for the mean outcome under intervention package $\boldsymbol{x}$. 
Because of Theorem \ref{AN_thm}, 
the confidence set $CS\left( \boldsymbol{x}^{opt} \right)$ contains $\boldsymbol{x}^{opt}$ with asymptotic probability at least 0.95. 

Next, we construct confidence bands for the mean outcome under all different intervention package compositions. These confidence bands have asymptotic 95$\%$ coverage simultaneously for all the intervention package compositions. To calculate the confidence bands for the outcome $E\left(Y | \boldsymbol{x}, \tilde{\boldsymbol{z}} ; \boldsymbol{\beta}^*\right)$, we first compute the 95$\%$ confidence bands for $\left(1 \; \boldsymbol{x}^{T} \; \tilde{\boldsymbol{z}}^{T}\right)\boldsymbol{\beta}^*$, similar to \cite{scheffe1999analysis} and \cite{nevo2018analysis} (Section 4 of their Supplementary Material),
\begin{equation}\label{CB}
C B_{g(\mu)} = \left(1 \; \boldsymbol{x}^{T} \; \tilde{\boldsymbol{z}}^{T}\right)\hat{\boldsymbol{\beta}} \pm \sqrt{\chi_{0.95, p+q+1}^{2} \sigma^2_{\mu}(\hat{\boldsymbol{\beta}} ; \boldsymbol{x}, \tilde{\boldsymbol{z}})},
\end{equation}
where $\chi_{0.95, p+q+1}^{2}$ is the 95th percentile of the $\chi_{ p+q+1}^{2}$ distribution. $p+q+1$ is the dimension of $\boldsymbol{\beta}$, where $p$ and $q$ are the dimensions of $\boldsymbol{a}$ and $\boldsymbol{z}$, respectively.
{\footnotesize $CB_{\mu} = g^{-1}\bigl(CI_{g(\mu)}\bigl)$}, and such confidence bands guarantee asymptotic simultaneous 95$\%$ coverage for the mean outcome under all intervention package compositions.

\section{Simulations}\label{simulation section}
Our methods were evaluated through simulation studies, which were organized into four parts each based on 2000 simulated datasets.
Simulation 1 considered two scenarios of two-stage c-LAGO designs.
Simulation 2 closely mimicked the BetterBirth study as if a c-LAGO design was used. All non-adaptive design parameters were taken directly from the BetterBirth study.
Simulation 3 compared the performance of c-LAGO, uv-LAGO, and factorial designs. 
Simulation 4 evaluated the performance of three designs: c-LAGO, factorial design, and MOST, employing the identity link function in the model for the continuous outcome. The primary focus of Simulation 4 was on the power of each study design.

In simulation 1 scenario 1, the data was generated with the same number of centers $J$ in both the intervention and the control arm. The number of centers in the intervention and the control arm was $J=6, 10, 20$, with $n_j^{(1)} = 50, 100$, and $n_j^{(2)} = 100, 200$. The intervention consisted of two components, $\boldsymbol{x} = (x_1, x_2)$, with the minimum and maximum values of $x_1$ and $x_2$ being $\left[L_{1}, U_{1}\right]=[0,2]$, and $\left[L_{2}, U_{2}\right]=[0,8]$, respectively.
The recommended interventions for stage 1 were set to be the middle of the range for the two intervention components, $\boldsymbol{x}^{(1)} = (1, 4)$.
The model for the continuous outcome was
$g\left(E\left(Y_{ij} | \boldsymbol{a}_{j}, \boldsymbol{z}_{j} \; ; \boldsymbol{\beta} \right )\right)=\boldsymbol{\beta}_{1}^{T} \boldsymbol{a}_{j}+\boldsymbol{\beta}_{2}^{T} \boldsymbol{z}_{j},$ where $g()$ is the logit link function.
The logit link function was chosen as this link function was used in the analysis of the BetterBirth study of Section \ref{application} to restrict the expected fractions to values between 0 and 1.
The true coefficient values for the intervention components $\boldsymbol{\beta}_1^*$ were set as $(\beta_{11}^*,\beta_{12}^*)$ based on the estimated parameter values and their confidence intervals in the final analysis of the BetterBirth study (Table \ref{estimated effects bb}):
$(0.1863,0.15)$, 
$(0.0438,0.17)$,
$(0.1,0.2133)$,
and $(0.1062,0.16)$.
The exact true coefficient values were chosen to facilitate the confidence set and confidence bands calculations. More specifically, $(\beta_{11}^*=0.1863,\beta_{12}^*=0.15)$ was chosen so that the true optimal intervention was $(1,8)$ within three decimal places.
Other $\beta_{11}^*$ and $\beta_{12}^*$ values were chosen for the same reason. A baseline center characteristic $Z \sim N(0,1)$ was also included with a true coefficient value of $\beta_z^* = -0.2$.
The aim was for a mean outcome $\theta=0.8$, and the optimization problem was solved as described in equation (\ref{aim}) to obtain the recommended interventions. For simplicity, no intercept was included in the models. 

In simulation 1 scenario 2, instead of having the same number of centers in both stages, fewer centers and a smaller per-center sample size were included in stage 1 than in stage 2. Specifically, the number of centers in stage 1 and stage 2 were $J_1$=6, $J_2$=12 and $J_1$=10, $J_2$=20, respectively. The per-center sample sizes were $n_{j}^{(1)}$ = 50 and $n_{j}^{(2)}$ = 200. Other design parameters were the same as in simulation 1 scenario 1.
We incorporate variations in stage 1 interventions which accounts for the possibility that centers may not adhere strictly to the interventions in stage 1. We do not incorporate variations from the recommended interventions in stage 2, as those are based on previous stage outcomes and center specific.

Table \ref{component_effects_table}
-- \ref{cp table} present selected results for simulation 1 scenario 1 and 2 using a linear cost function with per unit cost for the two intervention components: $C = (c_1=8, c_2=2)$. 
Complete results for simulation 1 with a linear cost function can be found in Appendix \ref{table1ctn} - \ref{table3ctn}.
Table \ref{component_effects_table} shows that for $J>6$, both  $\hat{{\beta}}_{11}$ and $\hat{{\beta}}_{12}$ had minimal relative bias. The relative bias of $\hat{{\beta}}_{12}$ was smaller than that of $\hat{{\beta}}_{11}$. The ratios between the mean of the estimated standard error and the empirical standard error ranged from 0.7 -- 0.9, however the empirical coverage rates of the 95$\%$ confidence intervals for both ${{\beta}}_{11}$ and ${{\beta}}_{12}$ were close to 95$\%$. 
Despite having thoroughly verified the simulation code's accuracy, we cannot explain why coverage remained satisfactory when the ratio between the mean of the estimated standard error and the empirical standard error was small. However, our analysis confirms that this phenomenon persisted.

Table \ref{bias_table} reports the bias and root mean squared errors for the estimated optimal intervention components for a center with baseline center characteristic ${z}$ equal to 0. 
The true optimal intervention components were calculated using the method described in Remark \ref{remark} for linear cost functions, with the true $\boldsymbol{\beta}$ values.
The bias and root mean squared error were small for both estimated optimal intervention components. The estimated optimal intervention components based on stage 1 data (shown under ``Stage 1") 
had higher bias and higher root mean squared error compared to the final estimated optimal intervention based on all data (shown under ``Stage 2/LAGO optimized").

Table \ref{cp table} displays information about the finite sample behavior of the estimated optimal interventions, confidence sets, and confidence bands based on stage 1 data and based on all data.
The confidence set and the confidence bands were computed by discretizing the two intervention components within their lower and upper limits with the incrementation size of $0.1$.
The steps outlined in Section \ref{confidence} were then applied.
The true mean under the final estimated intervention was close to 0.8 in most simulated datasets (see Table \ref{cp table} column MenOpt2).
The coverage rate for both the confidence set for the optimal intervention package and the simultaneous confidence bands for the intervention package components were very close to 95$\%$. The mean percentage of the size of the confidence set as a percentage of the total sample space ranged between 3$\%$ to 15$\%$ across different simulated datasets.

Table \ref{component_effects_table cubic cost} -- \ref{meanopt table cubic cost} present selected results for simulation 1 scenario 1 and 2 using the cubic cost function. The cubic cost function for the two intervention components $\boldsymbol{x} = (x_1, x_2)$ was defined as
$C(\boldsymbol{x}) = 0.05x_1^3-1.19x_1^2+10x_1+10 + 0.1x_2^3-0.7x_2^2+2x_2$. 
The cubic cost function adapted the linear cost function from simulation 1  to include an economy of scale at lower values of intervention component costs and by including prohibitive cost as the intervention components neared their upper limits.
Complete results can be found in Appendix \ref{additional cost}. 
Table \ref{component_effects_table cubic cost} indicated minimal relative bias for both $\hat{{\beta}}_{11}$ and $\hat{{\beta}}_{12}$. 
Similar to Table \ref{component_effects_table}, when the ratio between the mean of the estimated standard error and the empirical standard error was small, the empirical coverage rates remained satisfactory. 
Table \ref{bias_table cubic cost} displays relatively small bias and root mean squared error for the estimated optimal intervention components based on all data (shown under ``Stage 2/LAGO optimized"). 
The true optimal intervention components were calculated using the method described in Remark \ref{remark} for cubic cost functions, with the true $\boldsymbol{\beta}$ values.
The bias and root mean squared error were larger than those reported in Table \ref{bias_table}.
Table \ref{meanopt table cubic cost} demonstrates satisfactory finite sample behavior in terms of the estimated optimal intervention, confidence sets, and confidence bands. The reported values aligned closely with those in Table \ref{cp table}.

For simulation 2, we set all non-adaptive parameter values to those obtained from the BetterBirth study, including the stage 1 interventions, the number of centers, the per-center sample size, the intervention arm allocations, and the distribution of the center-specific covariate, birth volume, $z$. 
For the stage 2 and stage 3 interventions, we simulated a c-LAGO design. The results of simulation 2 can be found in Appendix \ref{mimicbbstudy}.

In simulation 3, we compared the performance of the c-LAGO, uv-LAGO and factorial designs. We set the number of centers $J=8$, and the per-center sample sizes $n_j^{(1)}=n_j^{(2)}=10$ in each of the $K=2$ stages. The intervention package consisted of two components, $\boldsymbol{x} = (x_1, x_2)$ with minimum and maximum values of $x_1$ and $x_2$ set to be the same as in simulation 1. 
The stage 1 outcomes were simulated using a factorial design with interventions $(0,0)$, $(1,0)$, $(0,4)$, and $(1,4)$, each with probability $1/4$. 
The stage 2 outcomes were simulated using either a c-LAGO or a uv-LAGO design. The factorial design had only one stage with the number of centers set to $J=16$.
The model for the continuous outcome was 
$E\left(Y_{ij} | \boldsymbol{a}_j,\boldsymbol{z}_j; \boldsymbol{\beta} \right) = \beta_{0}+\boldsymbol{\beta}_{1}^{T} \boldsymbol{a}_{j} + \boldsymbol{\beta}_{2}^{T} \boldsymbol{z}_{j},$
and the errors were normally distributed with a standard deviation of 1.5.
The true parameter values were set to $\beta_{0}^*=0.1$, $\beta_{11}^*=0.2$, $\beta_{12}^*=0.3$, and $\beta_{z}^*=-0.2$. The cost function for the two intervention components was linear, with $C = (c_1=8, c_2=2)$ per unit. The true optimal intervention for a center with $z=0$ was $x^{opt}=(0,3)$.

Under the null hypothesis of the c-LAGO design, with $(\beta_{11}^*,\beta_{12}^*) = (0, 0)$, the type-1 error of the component-wise test (P degree of freedom Chi-squared test) was 4.7\%, and 3.1\% for the two-sample means test. Under the null hypothesis of the uv-LAGO design,  
the type-1 error of the component-wise test was 4.8\%, and 3.6\% for the two-sample means test. 
Table \ref{Tab:factorial design} suggests that our approach was effective and led to minimal finite sample bias, correct nominal coverage, and correct type-1 error. 

In simulation 4, we evaluated the performance of the c-LAGO, factorial and MOST designs with a focus on the power of the study design. The model for the continuous outcome was a linear regression model with an identity link and normal errors, with 3 stages and 100 participants per stage. 
Our goal was to increase the expected outcome from 0.1 to 0.8 while minimizing cost, using a linear cost function with $c_1=1$ and $c_2=5$. 
The outcome model was $g(E(Y_{ij}|\boldsymbol{a}_{i};\boldsymbol{\beta}))=\beta_0 + \beta_{11}a_{i1} + \beta_{12}a_{i2}$ with coefficients of $(\beta_0,\beta_{11},\beta_{12})=(0.1,0.05,0.12)$ and the errors were normally distributed with a standard deviation of 1.75. 
For all designs, the stage 1 outcomes were generated according to a factorial design with interventions (0,0), (2,0), (0,5), (2,5) each with a probability of $1/4$. 
The factorial design continued with this intervention package composition throughout. 
For the c-LAGO design, the stage 2 and stage 3 recommended interventions were determined with the goal of reaching an expected mean outcome of 0.8 while minimizing cost and also achieving an estimated power of 0.9. This power constraint was the main focus of our paper under development and did not violate any of the assumptions outlined in Section \ref{setting}. 
In order to simplify and facilitate the comparison between c-LAGO, factorial and MOST designs, the recommended interventions in c-LAGO were not tailored to individual centers.
The MOST design followed the same factorial design for stage 1, then calculated the recommended intervention as in LAGO. In the randomized controlled trial stage, half of the participants of MOST were assigned to the control group and the other half to the treatment group. 
The final analysis only included the outcomes after stage 1.
The results in Table \ref{MOST table} indicate that c-LAGO had minimal finite sample bias, correct nominal coverage, and notably higher power compared to both the factorial design and MOST.

\begin{table}
\caption{Selected simulation study results for simulation 1 scenario 1 and 2 with a linear cost function}
\addtocounter{table}{-1}
    \hrule
    \subfloat[\scriptsize{ Simulation study results for individual package component effects with a linear cost function}\label{component_effects_table}]{
    \scriptsize{
    \centering
    \begin{tabular}{llllllllll}
    \tiny$\boldsymbol{\beta}^*=(\beta_{11}^*, \beta_{12}^*)$ & \tiny$n_j^{(1)}$ & \tiny$n_j^{(2)}$ & \tiny$J$ &  & \tiny$\hat{\boldsymbol{\beta}}_{11}$ &  &  & \tiny$\hat{\boldsymbol{\beta}}_{12}$ &  \\
     &  &  &  & $\%$RelBias & \begin{tabular}[c]{@{}l@{}}$\frac{SE}{EMP.SD}$ \\ ($\times 100$)\end{tabular} & CP95 & $\%$RelBias & \begin{tabular}[c]{@{}l@{}}$\frac{SE}{EMP.SD}$\\ ($\times 100$)\end{tabular} & CP95 \\
    \multicolumn{4}{l}{Scenario 1 ($J_1 = J_2 = J$)} &  &  &  &  &  &  \\
    (0.1863, 0.15) & 50 & 100 & 6 & 1.64 & 84.8 & 96.1 & -0.46 & 76.1 & 95.0 \\
     &  &  & 10 & 1.78 & 90.6 & 94.9 & -0.55 & 79.3 & 94.9 \\
     &  &  & 20 & 1.54 & 97.6 & 95.1 & -0.37 & 91.2 & 95.2 \\
     &  & 200 & 6 & 1.44 & 77.9 & 95.2 & -0.42 & 70.9 & 95.5 \\
     &  &  & 10 & 1.91 & 86.1 & 95.4 & -0.49 & 73.6 & 95.5 \\
     &  &  & 20 & 1.14 & 97.8 & 95.4 & -0.26 & 88.2 & 94.8 \\
    \multicolumn{4}{l}{Scenario 2a ($J_1 = 6, J_2 = 12$)} &  &  &  &  &  &  \\
    (0.1863, 0.15) & 50 & 200 &  & 0.04 & 85.9 & 95.0 & -0.09 & 75.0 & 94.8 \\
    (0.1, 0.2133) & 50 & 200 &  & 0.56 & 93.7 & 95.8 & 0.14 & 84.3 & 95.7
    \end{tabular}
    }
    }
    \hrule
    \subfloat[\scriptsize{Simulation study results for estimated optimal intervention with a linear cost function}\label{bias_table}]{
    \scriptsize{
    \centering
    \begin{tabular}{llllllllllll}
    \tiny$\boldsymbol{\beta}^*=(\tiny\beta_{11}^*,\tiny\beta_{12}^*)$ & \tiny$\boldsymbol{x}^{opt}$ & \tiny$n_j^{(1)}$ & \tiny$n_j^{(2)}$ &  & \multicolumn{3}{c}{Stage 1} &  & \multicolumn{3}{c}{Stage 2/LAGO optimized} \\
     &  &  &  &  & \begin{tabular}[c]{@{}l@{}}Bias of\\ $\hat{x}_{1}^{o p t}$\\ ($\times$100)\end{tabular} & \begin{tabular}[c]{@{}l@{}}Bias of\\ $\hat{x}_{2}^{o p t}$\\ ($\times$100)\end{tabular} & \begin{tabular}[c]{@{}l@{}}rMSE\\ ($\times$100)\end{tabular} &  & \begin{tabular}[c]{@{}l@{}}Bias of\\ $\hat{x}_{1}^{o p t}$\\ ($\times$100)\end{tabular} & \begin{tabular}[c]{@{}l@{}}Bias of\\ $\hat{x}_{2}^{o p t}$\\ ($\times$100)\end{tabular} & \begin{tabular}[c]{@{}l@{}}rMSE\\ ($\times$100)\end{tabular} \\
    \multicolumn{4}{l}{Scenario 1 ($J_1 = J_2 = 20$)} &  &  &  &  &  &  &  &  \\
    (0.1863, 0.15) & (1,8) & 50 & 100 &  & 15.2 & 96.9 & 115.0 &  & -0.27 & 0.01 & 35.3 \\
     &  &  & 500 &  &  &  &  &  & -0.70 & 0.00 & 28.2 \\
     &  & 100 & 100 &  & 14.0 & 51.6 & 94.1 &  & 0.11 & 0.01 & 32.7 \\
     &  &  & 500 &  &  &  &  &  & -0.25 & 0.00 & 24.5 \\
    \multicolumn{4}{l}{Scenario 2a ($J_1 = 6, J_2 = 12$)} &  &  &  &  &  &  &  &  \\
    (0.1863, 0.15) & (1,8) & 50 & 200 &  & 20.7 & 341.5 & 191.1 &  & -1.5 & 1.5 & 52.3 \\
    (0.1, 0.2133) & (0, 6.5) & 50 & 200 &  & -46.2 & 239.2 & 172.9 &  & -0.1 & 0.1 & 39.2
    \end{tabular}
    }
    }
    \hrule
    \subfloat[\scriptsize{Simulation study results for estimated optimal intervention, confidence set and confidence band with a linear cost function}\label{cp table}]{
    \scriptsize{
    \centering
    \begin{tabular}{lllllllll}
    \tiny$\boldsymbol{\beta}^*=(\tiny\beta_{11}^*,\tiny\beta_{12}^*)$ & \tiny$\boldsymbol{x}^{opt}$ & \tiny$n_j^{(1)}$ & \tiny$n_j^{(2)}$ & \begin{tabular}[c]{@{}l@{}}MeanOpt1\\ (Q2.5,Q97.5)\end{tabular} & \begin{tabular}[c]{@{}l@{}}MeanOpt2\\ (Q2.5,Q97.5)\end{tabular} & \begin{tabular}[c]{@{}l@{}}SetCP95\\ $\%$\end{tabular} & \begin{tabular}[c]{@{}l@{}}SetPerc\\ $\%$\end{tabular} & \begin{tabular}[c]{@{}l@{}}BandsCP95\\ $\%$\end{tabular} \\
    \multicolumn{4}{l}{Scenario 1 ($J_1 = J_2 = 20$)} &  &  &  &  &  \\
    (0.1863, 0.15) & (1,8) & 50 & 100 & (0.655, 0.818) & (0.789, 0.811) & 95.3 & 5.5 & 95.2 \\
     &  &  & 500 &  & (0.794, 0.808) & 95.1 & 3.7 & 95.7 \\
     &  & 100 & 100 & (0.708, 0.816) & (0.791, 0.808) & 94.4 & 4.4 & 95.1 \\
     &  &  & 500 &  & (0.795, 0.805) & 95.2 & 2.8 & 95.8 \\
    \multicolumn{4}{l}{Scenario 2a ($J_1 = 6, J_2 = 12$)} &  &  &  & \multicolumn{1}{r}{} &  \\
    (0.1863, 0.15) & (1,8) & 50 & 200 & (0.529, 0.826) & (0.773, 0.826) & 94.5 & 9.9 & 95.4 \\
    (0.1, 0.2133) & (0, 6.5) & 50 & 200 & (0.515, 0.857) & (0.783, 0.818) & 95.7 & 12.6 & 95.7
    \end{tabular}
    }
    }
\scriptsize{
    \raggedright{
    $n_j^{(1)}$: number of patients in center $j$ at stage 1, 
    $n_j^{(2)}$: number of patients in center $j$ at stage 2.\\
    $J$: number of centers for each stage. \\
    \%RelBias: percent relative bias $100(\hat{\beta}-\beta^\star)/\beta^\star$.\\
    SE: mean estimated standard error, 
    EMP.SD: empirical standard deviation.\\
    CP95: empirical coverage rate of 95\% confidence intervals.\\
    Bias of $\hat{x}_{1}^{o p t}$: bias of the first component of the estimated optimal intervention,  
    Bias of $\hat{x}_{2}^{o p t}$: bias of the second component of the estimated optimal intervention.\\ 
    rMSE: root of mean squared errors,$\Bigl\{\operatorname{mean}\bigl(\bigl\|\hat{\boldsymbol{x}}^{o p t}-\boldsymbol{x}^{o p t}\bigl\|^{2}\bigl)\Bigl\}^{1 / 2}$, mean is taken over simulation iterations.\\
    MeanOpt1: mean outcome under the stage 2 recommended intervention, calculated using true coefficient values;
    MeanOpt2: mean outcome under the final estimated optimal intervention based on all data, calculated using true coefficient values. \\
    $Q$2.5 and $Q$97.5: 2.5$\%$ and 97.5$\%$ quantiles.\\ 
    SetCP95$\%$: empirical coverage percentage of confidence set for the optimal intervention. 
    SetPerc$\%$: mean percentage of the size of the confidence set as a percent of the total sample space.
    BandsCP95$\%$: empirical coverage of 95$\%$ confidence band. 
    }
}
\stepcounter{table}
\end{table}
\vspace{-0.5cm}

\begin{table}
\caption{Selected simulation study results for simulation 1 scenario 1 and 2 with a cubic cost function}
\addtocounter{table}{-1}
    \hrule
    \subfloat[\scriptsize Simulation study results for individual package component effects with a cubic cost function]{
    \scriptsize{
    \centering
    \begin{tabular}{llllllllll}
    \tiny$\boldsymbol{\beta}^*=(\beta_{11}^*, \beta_{12}^*)$ & \tiny$n_j^{(1)}$ & \tiny$n_j^{(2)}$ & \tiny$J$ &  & \tiny$\hat{\boldsymbol{\beta}}_{11}$ &  &  & \tiny$\hat{\boldsymbol{\beta}}_{12}$ &  \\
     &  &  &  & $\%$RelBias & \begin{tabular}[c]{@{}l@{}}$\frac{SE}{EMP.SD}$\\ ($\times 100$)\end{tabular} & CP95 & $\%$RelBias & \begin{tabular}[c]{@{}l@{}}$\frac{SE}{EMP.SD}$\\ ($\times 100$)\end{tabular} & CP95 \\
    \multicolumn{4}{l}{Scenario 1 ($J_1 = J_2 = J$)} &  &  &  &  &  &  \\
    (0.1863, 0.15) & 50 & 100 & 6 & 2.31 & 81.9 & 95.8 & -0.79 & 78.1 & 95.5 \\
     &  &  & 10 & 4.59 & 95.8 & 95.4 & -1.63 & 88.3 & 95.0 \\
     &  &  & 20 & 4.35 & 101.3 & 95.7 & -1.49 & 94.1 & 96.0 \\
     &  & 200 & 6 & 1.36 & 83.9 & 95.5 & -0.52 & 78.8 & 95.5 \\
     &  &  & 10 & 2.68 & 94.2 & 95.9 & -0.89 & 85.6 & 96.0 \\
     &  &  & 20 & 3.90 & 96.3 & 96.4 & -1.29 & 88.1 & 96.1 \\
    \multicolumn{4}{l}{Scenario 2a ($J_1 = 6, J_2 = 12$)} &  &  &  &  &  &  \\
    (0.1863, 0.15) & 50 & 200 &  & 0.21 & 83.1 & 95.2 & -0.22 & 75.3 & 94.9 \\
    (0.1, 0.2133) & 50 & 200 &  & 2.54 & 88.1 & 95.6 & -0.31 & 74.5 & 95.8
    \end{tabular}
    \label{component_effects_table cubic cost}
    }
    }
    \hrule
    \subfloat[\scriptsize Simulation study results for estimated optimal intervention with a cubic cost function]{
    \centering
    \scriptsize{
    \centering
    \begin{tabular}{llllllllllll}
    \tiny$\boldsymbol{\beta}^*=(\beta_{11}^*,\beta_{12}^*)$ & \tiny$\boldsymbol{x}^{opt}$ & \tiny $n_j^{(1)}$ & \tiny$n_j^{(2)}$ &  & \multicolumn{3}{c}{Stage 1} &  & \multicolumn{3}{c}{Stage 2/LAGO optimized} \\
     &  &  &  &  & \begin{tabular}[c]{@{}l@{}}Bias of\\ $\hat{x}_{1}^{o p t}$\\ ($\times$100)\end{tabular} & \begin{tabular}[c]{@{}l@{}}Bias of\\ $\hat{x}_{2}^{o p t}$\\ ($\times$100)\end{tabular} & \begin{tabular}[c]{@{}l@{}}rMSE\\ ($\times$100)\end{tabular} &  & \begin{tabular}[c]{@{}l@{}}Bias of\\ $\hat{x}_{1}^{o p t}$\\ ($\times$100)\end{tabular} & \begin{tabular}[c]{@{}l@{}}Bias of\\ $\hat{x}_{2}^{o p t}$\\ ($\times$100)\end{tabular} & \begin{tabular}[c]{@{}l@{}}rMSE\\ ($\times$100)\end{tabular} \\
    \multicolumn{4}{l}{Scenario 1 ($J_1 = J_2 = 20$)} &  &  &  &  &  &  &  &  \\
    (0.1863, 0.15) & (1.5,7.4) & 50 & 100 &  & 40.0 & 81.9 & 118.8 &  & 4.9 & 4.8 & 65.9 \\
     &  &  & 500 &  &  &  &  &  & 4.4 & -0.8 & 56.2 \\
     &  & 100 & 100 &  & 32.7 & 45.8 & 103.4 &  & 2.6 & 5.9 & 63.9 \\
     &  &  & 500 &  &  &  &  &  & 3.8 & 0.0 & 56.2 \\
    \multicolumn{4}{l}{Scenario 2a ($J_1 = 6, J_2 = 12$)} &  &  &  &  &  &  &  &  \\
    (0.1863, 0.15) & (1.5,7.4) & 50 & 200 &  & 64.4 & 314.0 & 187.6 &  & 9.7 & 19.9 & 79.8 \\
    (0.1, 0.2133) & (0, 6.5) & 50 & 200 &  & -64.2 & 261.3 & 170.0 &  & -13.8 & 7.3 & 54.1
    \end{tabular}
    \label{bias_table cubic cost}
    }
    }
    \hrule
    \subfloat[\scriptsize Simulation study results for estimated optimal intervention with a cubic cost function]{
    \centering
    \scriptsize{
    \centering
    \begin{tabular}{lllllllll}
    \tiny$\boldsymbol{\beta}^*=(\beta_{11}^*,\beta_{12}^*)$ & \tiny$\boldsymbol{x}^{opt}$ & \tiny$n_j^{(1)}$ & \tiny$n_j^{(2)}$ & \begin{tabular}[c]{@{}l@{}}MeanOpt1\\ (Q2.5,Q97.5)\end{tabular} & \begin{tabular}[c]{@{}l@{}}MeanOpt2\\ (Q2.5,Q97.5)\end{tabular} & \begin{tabular}[c]{@{}l@{}}SetCP95\\ $\%$\end{tabular} & \begin{tabular}[c]{@{}l@{}}SetPerc\\ $\%$\end{tabular} & \begin{tabular}[c]{@{}l@{}}BandsCP95\\ $\%$\end{tabular} \\
    \multicolumn{4}{l}{Scenario 1 ($J_1 = J_2 = 20$)} &  &  &  &  &  \\
    (0.1863, 0.15) & (1.5,7.4) & 50 & 100 & (0.656, 0.812) & (0.769, 0.809) & 95.8 & 8.3 & 96.1 \\
     &  &  & 500 &  & (0.786, 0.809) & 95.3 & 5.3 & 95.6 \\
     &  & 100 & 100 & (0.708, 0.816) & (0.776, 0.807) & 95.9 & 7.0 & 96.3 \\
     &  &  & 500 &  & (0.789, 0.806) & 95.7 & 4.6 & 95.9 \\
    \multicolumn{4}{l}{Scenario 2a ($J_1 = 6, J_2 = 12$)} &  &  &  & \multicolumn{1}{r}{} &  \\
    (0.1863, 0.15) & (1.5,7.4) & 50 & 200 & (0.509, 0.814) & (0.735, 0.821) & 94.8 & 11.5 & 95.8 \\
    (0.1, 0.2133) & (0, 6.5) & 50 & 200 & (0.515, 0.823) & (0.775, 0.820) & 95.9 & 14.7 & 95.8
    \end{tabular}
    \label{meanopt table cubic cost}
    }
    }
\scriptsize	\\
\raggedright{
    $n_j^{(1)}$: number of patients in center $j$ at stage 1, 
    $n_j^{(2)}$: number of patients in center $j$ at stage 2.\\
    $J$: number of centers for each stage. \\
    \%RelBias: percent relative bias $100(\hat{\beta}-\beta^\star)/\beta^\star$.\\
    SE: mean estimated standard error, 
    EMP.SD: empirical standard deviation.\\
    CP95: empirical coverage rate of 95\% confidence intervals.\\
    Bias of $\hat{x}_{1}^{o p t}$: bias of the first component of the estimated optimal intervention,  
    Bias of $\hat{x}_{2}^{o p t}$: bias of the second component of the estimated optimal intervention.\\ 
    rMSE: root of mean squared errors,$\Bigl\{\operatorname{mean}\bigl(\bigl\|\hat{\boldsymbol{x}}^{o p t}-\boldsymbol{x}^{o p t}\bigl\|^{2}\bigl)\Bigl\}^{1 / 2}$, mean is taken over simulation iterations.\\
    MeanOpt1: mean outcome under the stage 2 recommended intervention, calculated using true coefficient values;
    MeanOpt2: mean outcome under the final estimated optimal intervention based on all data, calculated using true coefficient values. \\
    $Q$2.5 and $Q$97.5: 2.5$\%$ and 97.5$\%$ quantiles. \\
    SetCP95$\%$: empirical coverage percentage of confidence set for the optimal intervention. 
    SetPerc$\%$: mean percentage of the size of the confidence set as a percent of the total sample space.
    BandsCP95$\%$: empirical coverage of 95$\%$ confidence band. 
}
\stepcounter{table}
\end{table}

\begin{table}[ht]
\centering
\vspace{-0.4cm}
\caption{
Comparing relative bias of $\hat{\boldsymbol{\beta}}$, its mean estimated standard error over empirical standard deviation, coverage probability and power among c-LAGO, uv-LAGO, and factorial design with a linear cost function.
}
\scriptsize{
\begin{tabular}{llllllllll}
 &  & \multicolumn{2}{c}{c-LAGO} &  & \multicolumn{2}{c}{uv-LAGO} &  & \multicolumn{2}{c}{factorial} \\
 &  & $\%$RelBias & $\frac{SE}{EMP.SD}$ &  & $\%$RelBias & $\frac{SE}{EMP.SD}$ &  & $\%$RelBias & $\frac{SE}{EMP.SD}$ \\
Stage 1 & $\beta_0$ & -0.79 & 100.40 &  & -0.79 & 100.40 &  &  &  \\
 & $\beta_{11}$ & -4.55 & 97.99 &  & -4.55 & 97.99 &  &  &  \\
 & $\beta_{12}$ & 0.62 & 97.56 &  & 0.62 & 97.56 &  &  &  \\
Stage 2 & $\beta_0$ & -25.93 & 96.52 &  & -26.23 & 95.64 &  & -2.26 & 99.93 \\
 & $\beta_{11}$ & -6.68 & 97.57 &  & -7.87 & 98.10 &  & 1.82 & 100.02 \\
 & $\beta_{12}$ & 4.67 & 98.16 &  & 4.79 & 98.46 &  & 0.04 & 102.46 \\
 & \begin{tabular}[c]{@{}l@{}}CP95\\ ($\beta_0$,$\beta_{11}$,$\beta_{12}$)\end{tabular} & \multicolumn{2}{c}{(0.94, 0.94, 0.94)} &  & \multicolumn{2}{c}{(0.96, 0.98, 0.98)} &  & \multicolumn{2}{c}{(0.94, 0.95, 0.96)} \\
 & Power & \multicolumn{2}{c}{(0.99, 0.93)} &  & \multicolumn{2}{c}{(1.00, 0.93)} &  & \multicolumn{2}{c}{(0.99, 0.91)} \\
 & \multicolumn{9}{l}{(component-wise test, two-sample means test)}
\end{tabular}
}
\label{Tab:factorial design}\\
\scriptsize
\raggedright{
$\%$RelBias: percent relative bias $100(\hat{\beta}-\beta^\star)/\beta^\star$. \\
SE: mean estimated standard error, 
EMP.SD: empirical standard deviation. \\ 
CP95: empirical coverage rate of 95\% confidence intervals. \\
component-wise test: P degree of freedom Chi-squared test.
}
\end{table}

\begin{table}[ht]
\caption{
Comparing bias of $\hat{\boldsymbol{\beta}}$, its mean estimated standard error, coverage probability and power among c-LAGO, MOST and factorial designs with a linear cost function.
}
\label{MOST table}
\scriptsize{
\begin{tabular}{lllllllll}
\multicolumn{1}{c}{} &  & \multicolumn{3}{c}{c-LAGO} & \multicolumn{1}{c}{} & \multicolumn{3}{c}{factorial} \\
 &  & $\hat{\beta}_{0}$ & $\hat{\beta}_{11}$ & $\hat{\beta}_{12}$ &  & $\hat{\beta}_{0}$ & $\hat{\beta}_{11}$ & $\hat{\beta}_{12}$ \\
Stage 1 & Bias & -0.004 & -0.004 & 0.002 & Bias & -0.004 & -0.001 & 0.001 \\
 & SE & 0.302 & 0.174 & 0.070 & SE & 0.175 & 0.101 & 0.040 \\
Stage 1-2 & Bias & -0.011 & -0.027 & 0.005 & CP95 & 0.957 & 0.951 & 0.946 \\
 & SE & 0.185 & 0.114 & 0.051 & Power & 0.529 &  &  \\
Stage 1-3 & Bias & -0.012 & -0.035 & 0.005 &  &  &  &  \\
 & SE & 0.145 & 0.094 & 0.041 &  &  &  &  \\
 & CP95 & 0.943 & 0.939 & 0.936 &  &  &  &  \\
 & Power & 0.871 &  &  &  &  &  &  \\
\multicolumn{1}{c}{} & \multicolumn{1}{c}{} & \multicolumn{6}{c}{MOST} &  \\
\multicolumn{2}{l}{\begin{tabular}[c]{@{}l@{}}Optimization phase:\\ n=100\\ RCT: n=200\end{tabular}} & \multicolumn{2}{l}{\begin{tabular}[c]{@{}l@{}}Intervention\\ package effect:   \\ Median (IQR)\end{tabular}} & \begin{tabular}[c]{@{}l@{}}0.555 \\ (0.339, 0.730)\end{tabular} & \multicolumn{2}{l}{\begin{tabular}[c]{@{}l@{}}C.P. intervention \\ package effect:\end{tabular}} & 0.947 &  \\
\multicolumn{2}{l}{} &  &  &  &  & Power & 0.489 &  \\
\multicolumn{2}{l}{\begin{tabular}[c]{@{}l@{}}Optimization phase:\\ n=148*\\ RCT: n=152\end{tabular}} & \multicolumn{2}{l}{\begin{tabular}[c]{@{}l@{}}Intervention   \\ package effect:   \\ Median (IQR)\end{tabular}} & \begin{tabular}[c]{@{}l@{}}0.606\\ (0.437, 0.769)\end{tabular} & \multicolumn{2}{l}{\begin{tabular}[c]{@{}l@{}}C.P. intervention \\ package effect:\end{tabular}} & 0.949 &  \\
 &  &  &  &  &  & Power & 0.465 & 
\end{tabular}
}\\
\vspace*{0.15cm} 
\scriptsize{
\raggedright{
*: the sample size for the optimization phase must be divisible by 4 as it is a factorial design with 4 intervention groups. \\
SE: mean estimated standard error. \\
Intervention package effect: true mean under the intervention in the RCT.\\
C.P. intervention package effect: empirical coverage rate of 95\% confidence intervals for the treatment effect of the actual intervention in the RCT.\\
CP95: empirical coverage rate of 95\% confidence intervals. \\
Power: two sample means test.\\
MOST Optimization trial n = 100: 1952 out of 2000 trials moved on to an RCT (have at least 1 estimated coefficient greater than 0), 1092 out of 1952 RCTs include only intervention component 1,  860 out of 1952 RCTs include only intervention component 2, 0 out of 1952 RCTs include both intervention components. \\
MOST Optimization trial n = 148: 1984 out of 2000 trials moved on to an RCT (have at least 1 estimated coefficient greater than 0), 1165 out of 1952 RCTs include only intervention component 1, 819 out of 1952 RCTs include only intervention component 2, 0 out of 1952 RCTs include both intervention components. \\
}
}
\end{table}

\section{Illustrative example: The BetterBirth Study}\label{application}
We illustrated the LAGO design for continuous outcomes with the BetterBirth study \citep{hirschhorn2015learning, semrau2017outcomes}. The BetterBirth study aimed to improve maternal and neonatal health outcomes in Uttar Pradesh, India by implementing the World Health Organization's (WHO) Safe Childbirth Checklist (SCC).
This WHO checklist encouraged birth attendants to use essential birth practices (EBPs) known to prevent complications at various pause points during the delivery process.

The BetterBirth study consisted of three stages, the first two of which were pilot studies. Stage 3 was a cluster randomized trial. The intervention package used in stage 2 was modified based on feedback from stage 1, and adjusted again in stage 3. Stage 1 included 2 centers, stage 2 included 4 centers, and stage 3 included a control and an intervention arm with 15 centers each. Outcome data was collected before and after the intervention was implemented in the two pilot stages, and data was collected both before and after the implementation of the intervention package for 5 centers in the cluster randomized trial. 

The outcome of interest was the proportion of EBPs performed out of all possible birth practices measured during each stage, and we modeled this as a continuous outcome. Births with 0 measured EBPs were excluded from the study because 
whether EBPs were measured depends
on the availability of researchers to document them. The number of EBPs measured at each stage were 14, 19, and 18, respectively. The average proportion of EBPs performed out of all possible birth practices measured for the three stages were 0.33, 0.28, and 0.42, respectively. 

We included the approximate monthly birth volume as a baseline center characteristic $\boldsymbol{z}$. To avoid multicollinearity, we only considered two out of the four intervention components: the duration of the on-site intervention launch (in days) and the number of coaching visits after the initial intervention launch, truncated to 40 visits or less. 
Because the outcome of interest was the mean proportion of EBPs performed, we fit this LAGO design by a GLM with a logit link function to restrict the expected fractions to values between 0 and 1: 
\begin{equation*}
logit\left(E\left(Y_{ij} | \boldsymbol{A}_{j}=\boldsymbol{a}_j, \boldsymbol{X}_{j}=\boldsymbol{x}_j, \boldsymbol{z}_{j}; \boldsymbol{\beta} \right)\right) = \beta_{0}+\boldsymbol{\beta}_{1}^{T} \boldsymbol{a}_{j}+\boldsymbol{\beta}_{2}^{T} \boldsymbol{z}_{j}.
\end{equation*}
Table \ref{estimated effects bb} reports the estimated effects of the intervention package components after each stage, based on all data available at the end of that stage. The final analysis (last column) indicates that both the duration of the on-site intervention launch and the number of coaching visits had positive effects. The estimated effect of the number of coaching visits was also highly significant across the different stages, 
and both the two sample means test and the component-wise Chi-squared test had p-values less than 0.001. 

\begin{table}[ht]
\centering
\caption{The BetterBirth Study: package component effect estimates, 95$\%$ confidence intervals after each stage, and estimated optimal intervention package after each stage}
\scriptsize{    
\begin{tabular}{llll}
 & \begin{tabular}[c]{@{}l@{}}Stage 1\\ $n^{(1)}=113$\\ $\hat{\boldsymbol{\beta}}$ ($95\%$ CI)\end{tabular} & \begin{tabular}[c]{@{}l@{}}Stage 1-2\\ $n^{(1)}+n^{(2)}=2256$\\ $\hat{\boldsymbol{\beta}}$   ($95\%$ CI)\end{tabular} & \begin{tabular}[c]{@{}l@{}}Stage 1-3\\ $n^{(1)}+n^{(2)}+n^{(3)}$\\ $=7342$\\ $\hat{\boldsymbol{\beta}}$   ($95\%$ CI)\end{tabular} \\
 &  &  &  \\
Intercept & 2.72 (1.20, 4.25) & -0.61 (-0.69, -0.53) & -0.138 (-0.156, -0.120) \\
Launch Duration (days) & -0.09 (-0.33, 0.14) & -0.003 (-0.11, 0.10) & 0.17 (0.11, 0.22) \\
Coaching Visits (per 5 visits) & 0.90 (0.79, 1.01) & 0.32 (0.31, 0.34) & 0.172 (0.167, 0.176) \\
Birth Volume (monthly, per 100) & -3.39 (-4.59, -2.18) & -0.166 (-0.180,   -0.153) & -0.202 (-0.210,   -0.195) \\
\begin{tabular}[c]{@{}l@{}}$\hat{\boldsymbol{x}}^{opt}$\\ (using the linear cost function)\end{tabular} &  
(1, 16) & 
(1, 36) &  
(5, 31) \\
\begin{tabular}[c]{@{}l@{}}$\hat{\boldsymbol{x}}^{opt}$\\ (using the cubic cost function)\end{tabular} &  
(1, 15.66) &  
(1, 35.38) &  
(4.73, 31.84)
\end{tabular}
}\\
\vspace*{0.15cm} 
\scriptsize
\raggedright{
CI: based on sandwich estimator for VAR($\hat{\boldsymbol{\beta}}$) (see Theorem \ref{AN_thm}). \\
For (optimal) interventions, the first component is launch duration and the second component is number of coaching visits.\\
The optimal intervention reported is for a center with an average birth volume ($z=175$).
}
\label{estimated effects bb}
\end{table}

Next, we found the optimal intervention package that results in a mean of performed EBPs greater than 0.8 ($\theta = 0.8$) while minimizing cost. 
Let $x_1$ be the launch duration (in days) and $x_2$ be the number of coaching visits,
we used both the linear cost function: 
$C_1(\boldsymbol{x})=800x_1 + 170x_2$ \citep{nevo2018analysis}
and the cubic cost function: 
$C_2(\boldsymbol{x})=1700x_1-950x_1^2+220x_1^3 + 380x_2-24x_2^2+0.6x_2^3$
for the analysis. 
The cubic cost function revised the linear cost function to include an economy of scale at lower values of intervention component costs and by including prohibitive cost as the intervention components neared their upper limits.
Constraints were set such that $1 \leq x_1 \leq 5$ and $1 \leq x_2 \leq 40$. 
For a center with an average birth volume ($z = 175$), the estimated optimal intervention package under the linear cost function comprised a launch duration of 5 days and 31 coaching visits, with a total cost of $\$9270$. 
With the cubic cost function, the estimated optimal intervention package consisted of 3.97 days for launch duration and 35.50 coaching visits, at a total cost of $\$15629.04$. 
The closest integer values for the estimated optimal intervention that can lead to a mean of performed EBPs greater than 0.8 while minimizing cost are 4 days for launch duration and 36 coaching visits.

To determine the 95$\%$ confidence set $CS(\boldsymbol{x}^{opt})$ for the optimal intervention, we examined a grid of all possible values of the intervention components.
The launch duration was incremented in steps of 0.01 days, ranging from 1 to 5 days (i.e., 1, 1.01, 1.02, ..., 5), the number of coaching visits was also incremented by 0.01, varying from 1 to 40 (i.e., 1, 1.01, 1.02, ..., 40). 
The final 95$\%$ confidence set consisted of $1.01\times10^5$ ($6.48\%$) out of the total $1.56\times10^6$ intervention packages, and included various combinations of the intervention package components. 
Employing the linear cost function, the first, second, and third quartiles of the cost within the confidence set were $\$8930$, $\$9150$, and $\$9380$, respectively. Conversely, utilizing the cubic cost function, the corresponding quartiles of the cost within the confidence set were 
$\$15356$, $\$16449$, and $\$17481$. 
Under the linear cost function, with the estimated optimal intervention as $\hat{\boldsymbol{x}}^{opt} = (5, 31)$ (based on the confidence bands for the proportion of EBPs performed under all possible intervention package components), the 95\% confidence interval for the proportion of EBPs was (0.766, 0.834).
Under the cubic cost function, with the estimated optimal intervention as $\hat{\boldsymbol{x}}^{opt} = (3.97, 35.50)$, the 95\% confidence interval for the proportion of EBPs was (0.780, 0.819).

\section{Discussion}\label{Discussion}
The LAGO design allows for changes in the composition of the intervention package based on accumulating data from an ongoing trial at pre-specified stages. LAGO could help prevent failed trials by adapting and optimizing the intervention package composition while the trial is ongoing. LAGO is useful for implementation trials, pragmatic trials, and clinical trials of combination regimens. The methods described in this paper further increase the flexibility of the LAGO design, and we anticipate that LAGO will be widely adopted in intervention trials of combination interventions and combination implementation strategies.

The objectives of LAGO studies are to determine the optimal intervention package that achieves a pre-specified effect at minimal cost, test its efficacy on the outcome of interest, and evaluate its effects. 
Variation in the interventions is needed for LAGO to identify the treatment effect parameters, which is an inherent feature of uv-LAGO and frequently observed in large-scale public health studies. Furthermore, c-LAGO often recommends different interventions for different centers based on their center characteristics, thereby introducing variation.
We have proven that the LAGO design leads to consistent, asymptotically normal estimators when the outcomes are continuous. 
The simulation studies show that LAGO has good finite sample properties at reasonable sample sizes.

While LAGO designs in their current form do not encompass interim hypothesis testing, it is indeed feasible to include futility stops in LAGO designs. 
By defining a baseline acceptable power level, futility stops can be seamlessly integrated into a LAGO trial. 
Specifically, if we fail to identify any recommended intervention within the feasible bounds that is projected to yield an adequate power after stage $k$, the trial may be terminated early. 
The type I error does not increase from futility stops, since there is no strong conclusion when a trial stops for futility 
\citep{snapinn2006assessment}. 

As in \cite{nevo2018analysis}, we used fixed baseline center characteristics to account for any random center effects. 
Future research could focus on developing LAGO for studies with center effects. In large-scale intervention studies, cluster randomized trials are common, and the random effects model is the standard approach for modeling these studies \citep{bell2019fixed}. Another potential extension is allowing centers to participate in more than one stage of a LAGO study, which is likely to occur in real world studies.
In addition, future research could focus on developing LAGO for individual-level interventions.
Another important area for future research is to systematically determine the optimal values of $K$, $J_k$, $n_{jk}$, and the function $f$ of Remark \ref{remark}, that provides the recommended intervention in later stages. 

The LAGO design will be applied to the PULESA-Uganda trial (Strengthening the Blood Pressure Care and Treatment Cascade for Ugandans Living with HIV - ImpLEmentation Strategies to SAve Lives; NIH UG3HL154501). Chronic HIV is a risk factor for cardiovascular disease (CVD), and hypertension is the most important driver of CVD risk. Blood pressure measurement is a key step in the management of hypertension. 
PULESA will first explore current practice, routines, barriers, and facilitators of evidence-based blood pressure care in HIV clinical settings, and then use these findings to design an implementation strategy to improve HIV-hypertension care. PULESA will determine the effectiveness of the proposed intervention package and evaluate the implementation strategy's economic sustainability in Uganda's Kampala and Wakiso districts.

\begin{acks}[Acknowledgments]
The authors are extremely grateful to the participants of the BetterBirth Study.
The authors also thank Dr. Katherine Semrau for her valuable contributions in explaining the BetterBirth Study data.
\end{acks}

\begin{funding}
Judith J. Lok was supported by NSF Grant DMS 1854934.
The BetterBirth Study was supported by the Bill and Melinda Gates Foundation. 
The article contents are the sole responsibility of the authors and
do not necessarily represent the official views of the Bill and Melinda Gates Foundation 
or the NSF.
\end{funding}

\bibliographystyle{imsart-nameyear} 
\bibliography{bibliography}       

\newpage
\begin{appendix}
\section{ Overview of the Appendix}
The Appendix is organized as follows: Appendix \ref{general_link_appendix} presents the proofs of both Theorem \ref{GLM_consistency_general} and Theorem \ref{AN_thm} for continuous outcome LAGO GLM with independent errors and a general link function. Appendix \ref{appendix_loglink} describes the proofs of both Theorem \ref{GLM_consistency_general} and Theorem \ref{AN_thm} for LAGO GLM with independent errors and a log link function. Appendix \ref{K>2} extends the LAGO theory to include more than two stages. Appendix \ref{additional simulation} explains the motivation for the cubic cost function, and provides a complete set of simulation results for the ones presented in Section \ref{simulation section} of the main text.

\newpage
\section{Proofs of Theorem \ref{GLM_consistency_general} and Theorem \ref{AN_thm} with general link function.}\label{general_link_appendix}
\subsection{Proof of Theorem \ref{GLM_consistency_general} for LAGO GLM with general link function: consistency of \texorpdfstring{$\hat{\boldsymbol{\beta}}$}{Lg}} \label{consistency_appendix_generallink}
First, the estimating equations from (\ref{glm_EE}) of the main text are:
\begin{equation}
\begin{aligned}
&0=\boldsymbol{U}^{(g)}(\boldsymbol{\beta})\\
&=\frac{1}{n} \bigg[ \sum_{j=1}^{J^{(1)}} \sum_{i=1}^{n_{j}^{(1)}} \left(\frac{\partial}{\partial \boldsymbol{\beta}} E\left(Y_{ij}^{(1)} \mid \boldsymbol{a}_{j}^{(1)}, \boldsymbol{z}_{j}^{(1)} ; \boldsymbol{\beta}\right) \right)\left(Y_{ij}^{(1)}-E\left(Y_{ij}^{(1)} \mid \boldsymbol{a}_{j}^{(1)}, \boldsymbol{z}_{j}^{(1)}; \boldsymbol{\beta}\right)\right) \\
&\qquad+\sum_{j=1}^{J^{(2)}} \sum_{i=1}^{n_{j}^{(2)}} \left(\frac{\partial}{\partial \boldsymbol{\beta}} E\left(Y_{ij}^{\left(2, n^{(1)}\right)} \mid \boldsymbol{A}_{j}^{\left(2, n^{(1)}\right)}, \boldsymbol{z}_{j}^{(2)}; \boldsymbol{\beta}\right)\right)\\
&\qquad\qquad\qquad\qquad\qquad\qquad\left(Y_{ij}^{\left(2, n^{(1)}\right)}-E\left(Y_{ij}^{\left(2, n^{(1)}\right)} \mid \boldsymbol{A}_{j}^{\left(2, n^{(1)}\right)}, \boldsymbol{z}_{j}^{(2)}; \boldsymbol{\beta}\right)\right) \bigg]\\
&=\frac{1}{n} \bigg[ \sum_{j=1}^{J^{(1)}} \sum_{i=1}^{n_{j}^{(1)}} \left(\frac{\partial}{\partial \boldsymbol{\beta}} g^{-1}(\boldsymbol{a}_j^{(1)},\boldsymbol{z}_j^{(1)}; \boldsymbol{\beta}) \right)\left(Y_{ij}^{(1)}-g^{-1}(\boldsymbol{a}_j^{(1)}, \boldsymbol{z}_j^{(1)};\boldsymbol{\beta})\right) \\
&\qquad+\sum_{j=1}^{J^{(2)}} \sum_{i=1}^{n_{j}^{(2)}} \left(\frac{\partial}{\partial \boldsymbol{\beta}} 
g^{-1}(\boldsymbol{A}_{j}^{\left(2, n^{(1)}\right)},\boldsymbol{z}_j^{(2)}; \boldsymbol{\beta})
\right)\left(Y_{ij}^{\left(2, n^{(1)}\right)}-g^{-1}(\boldsymbol{A}_{j}^{\left(2, n^{(1)}\right)},\boldsymbol{z}_j^{(2)}; \boldsymbol{\beta})\right) \bigg].
\end{aligned} \label{general_EE_sm}
\end{equation}
Recall that $\boldsymbol{\beta}^*$ is the true value of $\boldsymbol{\beta}$ and let
\begin{equation}
\begin{aligned}
&\boldsymbol{u}^{(g)}(\boldsymbol{\beta}) \\
&= \sum_{j=1}^{J^{(1)}} \alpha_{j1} \left[ \left(\frac{\partial}{\partial \boldsymbol{\beta}} g^{-1}(\boldsymbol{a}_j^{(1)},\boldsymbol{z}_j^{(1)}; \boldsymbol{\beta}) \right) \left( g^{-1}(\boldsymbol{a}_j^{(1)},\boldsymbol{z}_j^{(1)}; \boldsymbol{\beta}^*) -g^{-1}(\boldsymbol{a}_j^{(1)},\boldsymbol{z}_j^{(1)}; \boldsymbol{\beta})  \right) \right]\\ 
&+ \sum_{j=1}^{J^{(2)}} \alpha_{j2}  \left[  \left(\frac{\partial}{\partial \boldsymbol{\beta}} g^{-1}(\boldsymbol{a}_j^{(2)},\boldsymbol{z}_j^{(2)}; \boldsymbol{\beta}) \right) \left( g^{-1}(\boldsymbol{a}_j^{(2)},\boldsymbol{z}_j^{(2)}; \boldsymbol{\beta}^*) -g^{-1}(\boldsymbol{a}_j^{(2)},\boldsymbol{z}_j^{(2)}; \boldsymbol{\beta}) \right) \right], 
\end{aligned}
\label{general_little_u}
\end{equation}
where 
\begin{equation*}
g^{-1}(\boldsymbol{a}_j^{(2)},\boldsymbol{z}_j^{(2)}; \boldsymbol{\beta})  
= E\left(Y_{ij}^{\left(2\right)} \mid \boldsymbol{a}_{j}^{\left(2\right)}, \boldsymbol{z}_{j}^{(2)}; \boldsymbol{\beta}\right).
\end{equation*}
To show consistency of the estimator $\hat{\boldsymbol{\beta}}$, we show that in spite of the fact that equation (\ref{general_EE_sm}) does not consist of i.i.d. terms, Theorem $5.9$ of \citet{van2000asymptotic} can be used. We show that its two conditions are satisfied. First, we show 
\begin{equation}
    \sup_{\boldsymbol{\beta}}\|\boldsymbol{U}^{(g)}(\boldsymbol{\beta}) - \boldsymbol{u}^{(g)}(\boldsymbol{\beta})\| \xrightarrow{P} 0.
\label{general_5.9_cond1}
\end{equation}
From equation (\ref{general_EE_sm}) and equation (\ref{general_little_u}), it follows that 
\begin{equation} \label{U-u separation}
\begin{aligned}
&\boldsymbol{U}^{(g)}(\boldsymbol{\beta}) - \boldsymbol{u}^{(g)}(\boldsymbol{\beta}) \\
& = \frac{1}{n} \sum_{j=1}^{J^{(1)}} \sum_{i=1}^{n_{j}^{(1)}} 
\left[ \left(\frac{\partial}{\partial \boldsymbol{\beta}} g^{-1}(\boldsymbol{a}_j^{(1)},\boldsymbol{z}_j^{(1)}; \boldsymbol{\beta}) \right)\left(Y_{ij}^{(1)}-g^{-1}(\boldsymbol{a}_j^{(1)}, \boldsymbol{z}_j^{(1)};\boldsymbol{\beta})\right) \right] \\
& + \frac{1}{n} \sum_{j=1}^{J^{(2)}} \sum_{i=1}^{n_{j}^{(2)}} \left[\left(\frac{\partial}{\partial \boldsymbol{\beta}} 
g^{-1}(\boldsymbol{A}_{j}^{\left(2, n^{(1)}\right)},\boldsymbol{z}_j^{(2)}; \boldsymbol{\beta})
\right)\left(Y_{ij}^{\left(2, n^{(1)}\right)}-g^{-1}(\boldsymbol{A}_{j}^{\left(2, n^{(1)}\right)},\boldsymbol{z}_j^{(2)}; \boldsymbol{\beta}) \right. \right.\\
&\qquad \qquad \qquad \qquad \qquad \qquad \qquad \qquad \qquad \qquad \left.\left.\pm \; g^{-1}(\boldsymbol{A}_{j}^{\left(2, n^{(1)}\right)},\boldsymbol{z}_j^{(2)}; \boldsymbol{\beta}^*)
\right) \right]\\
& - \sum_{j=1}^{J^{(1)}} \alpha_{j1} \left[ \left(\frac{\partial}{\partial \boldsymbol{\beta}} g^{-1}(\boldsymbol{a}_j^{(1)},\boldsymbol{z}_j^{(1)}; \boldsymbol{\beta}) \right) \left( g^{-1}(\boldsymbol{a}_j^{(1)},\boldsymbol{z}_j^{(1)}; \boldsymbol{\beta}^*) -g^{-1}(\boldsymbol{a}_j^{(1)},\boldsymbol{z}_j^{(1)}; \boldsymbol{\beta})  \right) \right] \\
& - \sum_{j=1}^{J^{(2)}} \alpha_{j2}  \left[  \left(\frac{\partial}{\partial \boldsymbol{\beta}} g^{-1}(\boldsymbol{a}_j^{(2)},\boldsymbol{z}_j^{(2)}; \boldsymbol{\beta}) \right) \left( g^{-1}(\boldsymbol{a}_j^{(2)},\boldsymbol{z}_j^{(2)}; \boldsymbol{\beta}^*) -g^{-1}(\boldsymbol{a}_j^{(2)},\boldsymbol{z}_j^{(2)}; \boldsymbol{\beta}) \right) \right] \\
& = G_{1,n}^{(g)} + G_{2,n}^{(g)} + G_{3,n}^{(g)} + G_{4,n}^{(g)} +  G_{5,n}^{(g)},\\ 
\end{aligned}
\end{equation}
where 
\begin{equation}
\begin{aligned}
G_{1,n}^{(g)} &= 
\frac{1}{n} \sum_{j=1}^{J^{(1)}} \sum_{i=1}^{n_{j}^{(1)}} \left[ \left( \frac{\partial}{\partial \boldsymbol{\beta}} g^{-1}(\boldsymbol{a}_j^{(1)},\boldsymbol{z}_j^{(1)}; \boldsymbol{\beta})\right)  \left( Y_{ij}^{(1)} - g^{-1}(\boldsymbol{a}_j^{(1)},\boldsymbol{z}_j^{(1)}; \boldsymbol{\beta}^*)  \right) \right],
\end{aligned}
\label{G1_term_general}
\end{equation}

\begin{equation}\begin{aligned}
G_{2,n}^{(g)} = \frac{1}{n} \sum_{j=1}^{J^{(2)}} \sum_{i=1}^{n_{j}^{(2)}} \left[ \left( \frac{\partial}{\partial \boldsymbol{\beta}} g^{-1}(\boldsymbol{A}_{j}^{\left(2, n^{(1)}\right)},\boldsymbol{z}_j^{(2)}; \boldsymbol{\beta}) \right) \left( Y_{ij}^{(2, n^{(1)})}-g^{-1}(\boldsymbol{A}_{j}^{\left(2, n^{(1)}\right)}, \boldsymbol{z}_{j}^{\left(2\right)}; \boldsymbol{\beta}^*) \right) \right] ,
\end{aligned}\label{G2_term_general}\end{equation}

\begin{equation}\begin{aligned}
G_{3,n}^{(g)} &= 
\frac{1}{n} \sum_{j=1}^{J^{(2)}} \sum_{i=1}^{n_{j}^{(2)}} \left[ \left( \frac{\partial}{\partial \boldsymbol{\beta}} g^{-1}(\boldsymbol{A}_{j}^{\left(2, n^{(1)}\right)},\boldsymbol{z}_j^{(2)}; \boldsymbol{\beta}) \right) \left( g^{-1}(\boldsymbol{A}_{j}^{\left(2, n^{(1)}\right)}, \boldsymbol{z}_{j}^{\left(2\right)}; \boldsymbol{\beta}^*)  \right) \right.\\
& \qquad \qquad \qquad \qquad \qquad \qquad \left. - \left( \frac{\partial}{\partial \boldsymbol{\beta}} g^{-1}(\boldsymbol{a}_j^{(2)},\boldsymbol{z}_j^{(2)}; \boldsymbol{\beta}) \right) \left( g^{-1}(\boldsymbol{a}_j^{(2)},\boldsymbol{z}_j^{(2)}; \boldsymbol{\beta}^*)  \right)  \right],
\end{aligned}
\label{G3_term_general}
\end{equation}

\begin{equation}\begin{aligned}
G_{4,n}^{(g)} &= 
\frac{1}{n} \sum_{j=1}^{J^{(2)}} \sum_{i=1}^{n_{j}^{(2)}} \left[ \left( \frac{\partial}{\partial \boldsymbol{\beta}} g^{-1}(\boldsymbol{a}_j^{(2)},\boldsymbol{z}_j^{(2)}; \boldsymbol{\beta}) \right) \left( g^{-1}(\boldsymbol{a}_j^{(2)},\boldsymbol{z}_j^{(2)}; \boldsymbol{\beta})  \right) \right.\\
& \qquad \qquad \qquad \qquad \qquad \left. - \left( \frac{\partial}{\partial \boldsymbol{\beta}} g^{-1}(\boldsymbol{A}_{j}^{\left(2, n^{(1)}\right)},\boldsymbol{z}_j^{(2)}; \boldsymbol{\beta}) \right) \left( g^{-1}(\boldsymbol{A}_{j}^{\left(2, n^{(1)}\right)},\boldsymbol{z}_j^{(2)}; \boldsymbol{\beta})  \right)  \right],
\end{aligned}
\label{G4_term_general}
\end{equation} 

\begin{equation}
\begin{aligned}
G_{5,n}^{(g)} = 
&\sum_{j=1}^{J^{(1)}} \left[ \left( \alpha_{j1}- \frac{n_j^{(1)}}{n} \right) \left( \frac{\partial}{\partial \boldsymbol{\beta}} g^{-1}(\boldsymbol{a}_j^{(1)},\boldsymbol{z}_j^{(1)}; \boldsymbol{\beta}) \right) \left( g^{-1}(\boldsymbol{a}_j^{(1)},\boldsymbol{z}_j^{(1)}; \boldsymbol{\beta})  \right) \right] \\
&+ \sum_{j=1}^{J^{(1)}} \left[ \left(\frac{n_j^{(1)}}{n}-\alpha_{j1} \right) \left( \frac{\partial}{\partial \boldsymbol{\beta}} g^{-1}(\boldsymbol{a}_j^{(1)},\boldsymbol{z}_j^{(1)}; \boldsymbol{\beta}) \right) \left( g^{-1}(\boldsymbol{a}_j^{(1)},\boldsymbol{z}_j^{(1)}; \boldsymbol{\beta}^*)  \right) \right] \\
&+  \sum_{j=1}^{J^{(2)}} \left[\left(\alpha_{j2}-\frac{n_j^{(2)}}{n} \right) \left( \frac{\partial}{\partial \boldsymbol{\beta}} g^{-1}(\boldsymbol{a}_j^{(2)},\boldsymbol{z}_j^{(2)}; \boldsymbol{\beta}) \right) \left( g^{-1}(\boldsymbol{a}_j^{(2)},\boldsymbol{z}_j^{(2)}; \boldsymbol{\beta})  \right)\right]\\
&+ \sum_{j=1}^{J^{(2)}} \left[  \left(\frac{n_j^{(2)}}{n}-\alpha_{j2} \right) \left( \frac{\partial}{\partial \boldsymbol{\beta}} g^{-1}(\boldsymbol{a}_j^{(2)},\boldsymbol{z}_j^{(2)}; \boldsymbol{\beta}) \right) \left( g^{-1}(\boldsymbol{a}_j^{(2)},\boldsymbol{z}_j^{(2)}; \boldsymbol{\beta}^*)  \right)\right].
\end{aligned}
\label{G5_term_general}
\end{equation}

The five terms $G_{1,n}^{(g)}$, $G_{2,n}^{(g)}$, $G_{3,n}^{(g)}$, $G_{4,n}^{(g)}$, and $G_{5,n}^{(g)}$ will be discussed separately. We show the supremum over $\boldsymbol{\beta}$ of each term converges to 0 in probability, then the triangle inequality implies equation (\ref{general_5.9_cond1}).

For $G_{1,n}^{(g)}$ from equation (\ref{G1_term_general}), 
we show that $\sup _{\boldsymbol{\beta}}\left\|G_{1,n}^{(g)}\right\| \stackrel{P}{\rightarrow} 0$ by using the concept of Donsker classes from empirical process theory.
Let $O^{(1)}_{ij} = (Y^{(1)}_{ij}, \boldsymbol{a}^{(1)}_{j}, \boldsymbol{z}^{(1)}_{j})$ be the observed data for patient $i$ from center $j$ in stage 1, and let
\begin{equation*}
    \Psi_{\boldsymbol{\beta}}(O^{(1)}_{ij}) = \left( \frac{\partial}{\partial \boldsymbol{\beta}} g^{-1}(\boldsymbol{a}_j^{(1)},\boldsymbol{z}_j^{(1)}; \boldsymbol{\beta})\right)  
    \left( Y_{ij}^{(1)} - g^{-1}(\boldsymbol{a}_j^{(1)},\boldsymbol{z}_j^{(1)}; \boldsymbol{\beta}^*)  \right).
\end{equation*}
We show that the class of functions $\mathcal{F} = \left( \Psi_{\boldsymbol{\beta}}: \boldsymbol{\beta} \in \boldsymbol{\mathcal{B}} \right)$ is Donsker by showing that 
\begin{equation}
\left\|\Psi_{\boldsymbol{\beta}_{1}}(O^{(1)}_{ij})-\Psi_{\boldsymbol{\beta}_{2}}(O^{(1)}_{ij})\right\| \leq m(O^{(1)}_{ij})\left\|\boldsymbol{\beta}_{1}-\boldsymbol{\beta}_{2}\right\| \quad \text { for every } \boldsymbol{\beta}_{1}, \boldsymbol{\beta}_{2},
\label{donsker_condition_general}
\end{equation}
for some measurable function $m$ with $E\left(m^{2}\right)<\infty$; that $\mathcal{F}$ is Donsker then follows from Example 19.7 in \cite{van2000asymptotic}. By applying the Mean Value Theorem to each row of $\Psi_{\boldsymbol{\beta}_{1}}(O^{(1)}_{ij})-\Psi_{\boldsymbol{\beta}_{2}}(O^{(1)}_{ij})$ separately, we find that
\begin{equation*}
\begin{aligned}
\Psi_{\boldsymbol{\beta}_{1}}(O^{(1)}_{ij})-\Psi_{\boldsymbol{\beta}_{2}}(O^{(1)}_{ij})& = 
\left(\left.\frac{\partial}{\partial \boldsymbol{\beta}}\right|_{\tilde{\boldsymbol{\beta}}} \Psi_{\boldsymbol{\beta}}(O^{(1)}_{ij})\right) \left(\boldsymbol{\beta}_{1} - \boldsymbol{\beta}_{2}\right),
\end{aligned}
\end{equation*}
where for each row of $\left.\frac{\partial}{\partial \boldsymbol{\beta}}\right|_{\tilde{\boldsymbol{\beta}}} \Psi_{\boldsymbol{\beta}}(O^{(1)}_{ij})$, $\tilde{\boldsymbol{\beta}}$ may take a different value between $\boldsymbol{\beta}_{1}$ and $\boldsymbol{\beta}_{2}$. 
By Assumption \ref{glm_model_assumption} and \ref{glm_unif_bound_assu} from the main text, $g()$ and $\Psi_{\boldsymbol{\beta}}(O^{(1)}_{ij})$ are continuously differentiable and both $\boldsymbol{\beta}$ and $O^{(1)}_{ij}$ take values in a compact space, each element in $\left.\frac{\partial}{\partial \boldsymbol{\beta}}\right|_{\tilde{\boldsymbol{\beta}}} \Psi_{\boldsymbol{\beta}}(O^{(1)}_{ij})$ is bounded. It follows that
\begin{equation*}
\begin{aligned}
\left\|\Psi_{\boldsymbol{\beta}_{1}}(O^{(1)}_{ij})-\Psi_{\boldsymbol{\beta}_{2}}(O^{(1)}_{ij})\right\| & = 
\left\| \left(\left.\frac{\partial}{\partial \boldsymbol{\beta}}\right|_{\tilde{\boldsymbol{\beta}}} \Psi_{\boldsymbol{\beta}}(O^{(1)}_{ij}) \right)
(\boldsymbol{\beta}_{1} - \boldsymbol{\beta}_{2})  \right\| \\
&\leq C_1^{(g)} \left\| \boldsymbol{\beta}_{1} - \boldsymbol{\beta}_{2} \right\|,
\end{aligned}
\end{equation*}
where $C_1^{(g)}$ is a constant. Equation (\ref{donsker_condition_general}) follows, so the class of functions \\
$\mathcal{F} = \left( \Psi_{\boldsymbol{\beta}}: \boldsymbol{\beta} \in \boldsymbol{\mathcal{B}} \right)$  is Donsker. By Theorem 19.4 (\cite{van2000asymptotic}), $\mathcal{F}$ is also Glivenko-Cantelli. Since for each $\boldsymbol{\beta}$, $E\left( \Psi_{\boldsymbol{\beta}}\left(O_{ij}^{(1)}\right)\right)=0$, then 
\begin{equation*}
\sup _{\boldsymbol{\beta}}\left\|\frac{1}{n_{j}^{(1)}} \sum_{i=1}^{n_{j}^{(1)}} \Psi_{\boldsymbol{\beta}}\left(O_{i j}^{(1)}\right)-E\left(\Psi_{\boldsymbol{\beta}}\left(O_{i j}^{(1)}\right)\right)\right\| \stackrel{P}{\rightarrow} 0.
\end{equation*}
Notice that the $O_{ij}^{(1)}$ are i.i.d. for each $j$ separately, ${n_{j}^{(1)}}/{n}<1$, and $j$ is finite, so
\begin{equation*}
    \sup _{\boldsymbol{\beta}}\left\|\frac{n_{j}^{(1)}}{n}  \frac{1}{n_{j}^{(1)}} \sum_{i=1}^{n_{j}^{(1)}} \Psi_{\boldsymbol{\beta}}\left(O_{i j}^{(1)}\right)\right\|
    \leq
    \sup _{\boldsymbol{\beta}}\left\|\frac{1}{n_{j}^{(1)}} \sum_{i=1}^{n_{j}^{(1)}} \Psi_{\boldsymbol{\beta}}\left(O_{i j}^{(1)}\right)\right\|.
\end{equation*}
We conclude that $\sup_{\boldsymbol{\beta}} \left\|G_{1,n}^{(g)}\right\| \xrightarrow{P} 0.$

For the term $G_{2,n}^{(g)}$ from equation (\ref{G2_term_general}), let $Y_{ij}^{(2)}$ be the (counterfactual) outcomes under $\boldsymbol{a}_{j}^{(2)}$ and let $\epsilon_{ij}^{(2)}$ be the corresponding errors that patient $i$ in center $j$ would have experienced under intervention $\boldsymbol{a}_{j}^{(2)}$. We derive
\begin{equation*}\begin{aligned}
G_{2,n}^{(g)} &= \frac{1}{n} \sum_{j=1}^{J^{(2)}} \sum_{i=1}^{n_{j}^{(2)}} \left[ \left( \frac{\partial}{\partial \boldsymbol{\beta}} g^{-1}(\boldsymbol{A}_{j}^{\left(2, n^{(1)}\right)},\boldsymbol{z}_j^{(2)}; \boldsymbol{\beta}) \right) \left( Y_{ij}^{(2, n^{(1)})}-g^{-1}(\boldsymbol{A}_{j}^{\left(2, n^{(1)}\right)}, \boldsymbol{z}_{j}^{\left(2\right)}; \boldsymbol{\beta}^*) \right) \right] \\
&= \frac{1}{n} \sum_{j=1}^{J^{(2)}} \sum_{i=1}^{n_{j}^{(2)}} \left[ \left( \frac{\partial}{\partial \boldsymbol{\beta}} g^{-1}(\boldsymbol{A}_{j}^{\left(2, n^{(1)}\right)},\boldsymbol{z}_j^{(2)}; \boldsymbol{\beta}) \right) \epsilon_{ij}^{(2, n^{(1)})} \right] \\
\end{aligned}\end{equation*}
By Assumption \ref{ind_error} from the main text, replacing $\epsilon_{ij}^{(2,n^{(1)})}$ by the new error terms $\epsilon_{ij}^{(2)}$ does not change the distribution of $G_{2,n}^{(g)}$, so it suffices to show that for
\begin{equation*}\begin{aligned}
\widetilde{G}_{2,n}^{(g)} = \frac{1}{n} \sum_{j=1}^{J^{(2)}} \sum_{i=1}^{n_{j}^{(2)}} \left[ \left( \frac{\partial}{\partial \boldsymbol{\beta}} g^{-1}(\boldsymbol{A}_{j}^{\left(2, n^{(1)}\right)},\boldsymbol{z}_j^{(2)}; \boldsymbol{\beta}) \right) \epsilon_{ij}^{(2)} \right],
\end{aligned}\end{equation*}
$\sup _{\boldsymbol{\beta}}\left\|\widetilde{G}_{2,n}^{(g)}\right\| \stackrel{P}{\rightarrow} 0$.
\begin{equation*}
\begin{aligned}
\widetilde{G}_{2,n}^{(g)}
&= \frac{1}{n} \sum_{j=1}^{J^{(2)}}  \sum_{i=1}^{n_{j}^{(2)}} \left[ \left(\frac{\partial}{\partial \boldsymbol{\beta}} g^{-1}(\boldsymbol{A}_{j}^{\left(2, n^{(1)}\right)},\boldsymbol{z}_j^{(2)}; \boldsymbol{\beta}) \right) \pm \left(\frac{\partial}{\partial \boldsymbol{\beta}} g^{-1}(\boldsymbol{a}_{j}^{(2)}, \boldsymbol{z}_{j}^{(2)}; \boldsymbol{\beta}) \right)\right] \epsilon_{ij}^{(2)}   \\
&= \widetilde{G}_{2\_1,n}^{(g)} + \widetilde{G}_{2\_2,n}^{(g)},
\end{aligned}
\end{equation*}
where
\begin{equation*}
    \widetilde{G}_{2\_1,n}^{(g)} = \frac{1}{n} \sum_{j=1}^{J^{(2)}}  \sum_{i=1}^{n_{j}^{(2)}} \left[\left(\frac{\partial}{\partial \boldsymbol{\beta}} g^{-1}(\boldsymbol{a}_{j}^{(2)}, \boldsymbol{z}_{j}^{(2)}; \boldsymbol{\beta}) \right) \epsilon_{ij}^{(2)}\right],
\end{equation*}
\begin{equation}
\widetilde{G}_{2\_2,n}^{(g)} = \frac{1}{n} \sum_{j=1}^{J^{(2)}}  \sum_{i=1}^{n_{j}^{(2)}} \left[ \left(\frac{\partial}{\partial \boldsymbol{\beta}} g^{-1}(\boldsymbol{A}_{j}^{\left(2, n^{(1)}\right)},\boldsymbol{z}_j^{(2)}; \boldsymbol{\beta}) \right) - \left(\frac{\partial}{\partial \boldsymbol{\beta}} g^{-1}(\boldsymbol{a}_{j}^{(2)}, \boldsymbol{z}_{j}^{(2)}; \boldsymbol{\beta}) \right)\right] \epsilon_{ij}^{(2)}.
\label{g2_2}
\end{equation}
We show that both $\sup _{\boldsymbol{\beta}}\left\|\widetilde{G}_{2\_1,n}^{(g)}\right\| \stackrel{P}{\rightarrow} 0$ and $\sup _{\boldsymbol{\beta}}\left\|\widetilde{G}_{2\_2,n}^{(g)}\right\| \stackrel{P}{\rightarrow} 0$. By the triangle inequality, then $\sup _{\boldsymbol{\beta}}\left\|\widetilde{G}_{2,n}^{(g)}\right\| \stackrel{P}{\rightarrow} 0$.

Let $O^{(2)}_{ij} = (Y^{(2)}_{ij}, \boldsymbol{a}^{(2)}_{j}, \boldsymbol{z}^{(2)}_{j})$ be the counterfactual data for patient $i$ from center $j$ in stage 2 under $\boldsymbol{a}^{(2)}_{j}$ and $\boldsymbol{z}^{(2)}_{j}$ and let
\begin{equation*}
    \Psi_{\boldsymbol{\beta}}^{(2)}(O^{(2)}_{ij}) = \left(\frac{\partial}{\partial \boldsymbol{\beta}} g^{-1}(\boldsymbol{a}_{j}^{(2)}, \boldsymbol{z}_{j}^{(2)}; \boldsymbol{\beta}) \right) \epsilon_{ij}^{(2)}.
\end{equation*}
Following the same argument as for $G_{1,n}^{(g)}$, for fixed value of $j$, the class of functions $\mathcal{F}_2 = \left( \Psi_{\boldsymbol{\beta}}^{(2)}: \boldsymbol{\beta} \in \boldsymbol{\mathcal{B}} \right)$ is a Donsker class and $\sup_{\boldsymbol{\beta}}\|\widetilde{G}_{2\_1,n}^{(g)}\| \xrightarrow{P} 0$.

For each of the $P$ components of $\widetilde{G}_{2\_2,n}^{(g)}$ from equation (\ref{g2_2}), by the Mean Value Theorem, we derive
\begin{equation*}
\begin{aligned}
\widetilde{G}_{2\_2, \;p}^{(g)} = \frac{1}{n} \sum_{j=1}^{J^{(2)}}  \sum_{i=1}^{n_{j}^{(2)}} \left\{ \left[ \left.\frac{\partial}{\partial a_p}\right|_{\Tilde{\boldsymbol{a}}_{jp}(\boldsymbol{\beta})} \left(\frac{\partial}{\partial \boldsymbol{\beta}_p} g^{-1}(\boldsymbol{a}, \boldsymbol{z}_j^{(2)}; \boldsymbol{\beta}) \right) \right] \left(\boldsymbol{A}_{jp}^{(2,n^{(1)})}-\boldsymbol{a}_{jp}^{(2)}\right) \epsilon_{ij}^{(2)} \right\}.
\end{aligned}
\end{equation*}

By Assumption \ref{glm_unif_bound_assu} from the main text and Lemma \ref{sup_glm_condition} from the Appendix, it follows that 
\begin{equation*}
\begin{aligned}
&\sup_{\boldsymbol{\beta}} \left\| \frac{1}{n} \sum_{j=1}^{J^{(2)}}  \sum_{i=1}^{n_{j}^{(2)}} \left[ \left(\frac{\partial}{\partial \boldsymbol{\beta}_p} g^{-1}(\boldsymbol{A}_{j}^{\left(2, n^{(1)}\right)},\boldsymbol{z}_j^{(2)}; \boldsymbol{\beta}) \right) - \left(\frac{\partial}{\partial \boldsymbol{\beta}_p} g^{-1}(\boldsymbol{a}_{j}^{(2)}, \boldsymbol{z}_{j}^{(2)}; \boldsymbol{\beta}) \right) \right] \epsilon_{ij}^{(2)} \right\|  \\
&\leq  \sup_{\boldsymbol{\beta}}\max_{j} {\left\|   \left.\frac{\partial}{\partial a_p}\right|_{\Tilde{\boldsymbol{a}}_{jp}(\boldsymbol{\beta})} \left(\frac{\partial}{\partial \boldsymbol{\beta}_p} g^{-1}(\boldsymbol{a}, \boldsymbol{z}_j^{(2)}; \boldsymbol{\beta}) \right) \right\|} \max_{j}\left\|    \left(\boldsymbol{A}_{jp}^{(2,n^{(1)})}-\boldsymbol{a}_{jp}^{(2)}\right)   \right\| \sup_{ij}{\left\|\epsilon_{ij}^{(2)}\right\|}  \\
&\xrightarrow{P} 0.
\end{aligned}
\end{equation*}
This argument can be applied to all $P$ components of $\widetilde{G}_{2\_2,n}^{(g)}$, so $\sup_{\boldsymbol{\beta}}\|\widetilde{G}_{2\_2,n}^{(g)}\| \xrightarrow{P} 0$.
Hence $\sup _{\boldsymbol{\beta}}\left\|G_{2,n}^{(g)}\right\| \stackrel{P}{\rightarrow} 0$.

Consider $G_{3,n}^{(g)}$ from equation (\ref{G3_term_general}). By the Mean Value Theorem, the supremum over $\boldsymbol{\beta}$ of each of the $P$ components of $G_{3,n}^{(g)}$ is 
\begin{equation*}\begin{aligned}
&\sup_{\boldsymbol{\beta}}\left\| \frac{1}{n} \sum_{j=1}^{J^{(2)}} \sum_{i=1}^{n_{j}^{(2)}} \left[ \left( \frac{\partial}{\partial \boldsymbol{\beta}_p} g^{-1}(\boldsymbol{A}_{j}^{\left(2, n^{(1)}\right)},\boldsymbol{z}_j^{(2)}; \boldsymbol{\beta}) \right) \left( g^{-1}(\boldsymbol{A}_{j}^{\left(2, n^{(1)}\right)}, \boldsymbol{z}_{j}^{\left(2\right)}; \boldsymbol{\beta}^*)  \right) \right.\right.\\
& \qquad \qquad \qquad \qquad \qquad \qquad \qquad - \left. \left. \left( \frac{\partial}{\partial \boldsymbol{\beta}_p} g^{-1}(\boldsymbol{a}_j^{(2)},\boldsymbol{z}_j^{(2)}; \boldsymbol{\beta}) \right) \left( g^{-1}(\boldsymbol{a}_j^{(2)},\boldsymbol{z}_j^{(2)}; \boldsymbol{\beta}^*)  \right)  \right] \rule{0cm}{0.8cm}\right\|\\
&= \sup_{\boldsymbol{\beta}} \left\| \frac{1}{n} \sum_{j=1}^{J^{(2)}} \sum_{i=1}^{n_{j}^{(2)}} \left\{  \left.\frac{\partial}{\partial a_p}\right|_{\Tilde{\boldsymbol{a}}_{jp}(\boldsymbol{\beta})} \left[\left(\frac{\partial}{\partial \boldsymbol{\beta}_p} g^{-1}(\boldsymbol{a}, \boldsymbol{z}_j^{(2)}; \boldsymbol{\beta}) \right)g^{-1}(\boldsymbol{a}, \boldsymbol{z}_j^{(2)}; \boldsymbol{\beta}^*) \right] \right. \right.\\
&\qquad\qquad\qquad\qquad\qquad\qquad\qquad\qquad\qquad\qquad\qquad\qquad
\Biggl.\Biggl.\left(\boldsymbol{A}_{jp}^{(2,n^{(1)})}-\boldsymbol{a}_{jp}^{(2)}\right) \Biggr\} \Biggr\|\\
&\leq \sup_{\boldsymbol{\beta}} \max_{j} {\left\| \left.\frac{\partial}{\partial a_p}\right|_{\Tilde{\boldsymbol{a}}_{jp}(\boldsymbol{\beta})} \left(\frac{\partial}{\partial \boldsymbol{\beta}_p} g^{-1}(\boldsymbol{a}, \boldsymbol{z}_j^{(2)}; \boldsymbol{\beta}) \right)g^{-1}(\boldsymbol{a}, \boldsymbol{z}_j^{(2)}; \boldsymbol{\beta}^*) \right\|} \\
&\qquad\qquad\qquad\qquad\qquad\qquad\qquad\qquad\qquad\qquad\qquad \max_{j}\biggr\|     \left(\boldsymbol{A}_{jp}^{(2,n^{(1)})}-\boldsymbol{a}_{jp}^{(2)}\right)   \biggr\|\\
&\xrightarrow{P} 0.
\end{aligned}
\end{equation*}

The convergence to 0 follows from Assumption \ref{glm_unif_bound_assu} of the main text and Lemma \ref{sup_glm_condition} of the Appendix, so $\sup_{\boldsymbol{\beta}} \left\| G_{3,n}^{(g)} \right\| \xrightarrow{P} 0$. By applying the same argument to $G_{4,n}^{(g)}$ (equation (\ref{G4_term_general})), we conclude that also $\sup _{\boldsymbol{\beta}}\left\|G_{4,n}^{(g)}\right\| \stackrel{P}{\rightarrow} 0$. 

Consider the first term of $G_{5,n}^{(g)}$ (equation (\ref{G5_term_general})), by Assumption \ref{glm_unif_bound_assu} from the main text and $\alpha_{j k}=\lim _{n \rightarrow \infty} n_{j}^{(k)} / n$, for the finite many $j$, it follows that
\begin{equation*}
\begin{aligned}
&\sup_{\boldsymbol{\beta}} \left\| \sum_{j=1}^{J^{(1)}} \left[ \left( \alpha_{j1}- \frac{n_j^{(1)}}{n} \right) \left( \frac{\partial}{\partial \boldsymbol{\beta}} g^{-1}(\boldsymbol{a}_j^{(1)},\boldsymbol{z}_j^{(1)}; \boldsymbol{\beta}) \right) \left( g^{-1}(\boldsymbol{a}_j^{(1)},\boldsymbol{z}_j^{(1)}; \boldsymbol{\beta})  \right) \right] \right\|\\
\leq& \sup_{\boldsymbol{\beta}} \max_{j} \left\|  \left( \alpha_{j1}- \frac{n_j^{(1)}}{n} \right) \left( \frac{\partial}{\partial \boldsymbol{\beta}} g^{-1}(\boldsymbol{a}_j^{(1)},\boldsymbol{z}_j^{(1)}; \boldsymbol{\beta}) \right) \left( g^{-1}(\boldsymbol{a}_j^{(1)},\boldsymbol{z}_j^{(1)}; \boldsymbol{\beta})  \right)  \right\| \\ 
\xrightarrow{P}& \;0.
\end{aligned}
\end{equation*}
By applying the same argument to the other terms of $G_{5,n}^{(g)}$ and the triangle inequality, $\sup _{\boldsymbol{\beta}}\left\|G_{5,n}^{(g)}\right\| \stackrel{P}{\rightarrow} 0$. 

Thus,
\begin{equation*}
\begin{aligned}
    &\sup_{\boldsymbol{\beta}}\left\|\boldsymbol{U}^{(g)}(\boldsymbol{\beta}) - \boldsymbol{U}^{(g)}(\boldsymbol{\beta})\right\| \\
    &\leq \sup_{\boldsymbol{\beta}}\left\|G_{1,n}^{(g)} \right\|+\sup_{\boldsymbol{\beta}}\left\|G_{2,n}^{(g)} \right\|+ \sup_{\boldsymbol{\beta}}\left\|G_{3,n}^{(g)} \right\| + \sup_{\boldsymbol{\beta}}\left\|G_{4,n}^{(g)} \right\| + 
    \sup_{\boldsymbol{\beta}}\left\|G_{5,n}^{(g)} \right\|\\
    &\xrightarrow{P} 0.
\end{aligned}
\end{equation*}

The second condition in Theorem 5.9 of \citet{van2000asymptotic} is 
\begin{equation*}
\inf _{\boldsymbol{\beta}:\left\|\boldsymbol{\beta}-\boldsymbol{\beta}^{\star}\right\|>0}\|\boldsymbol{u}^{(g)}(\boldsymbol{\beta})\|>0=\left\|\boldsymbol{u}^{(g)}\left(\boldsymbol{\beta}^{\star}\right)\right\|.
\end{equation*}
Equation (\ref{general_little_u}) implies that $\left\|\boldsymbol{u}^{(g)}\left(\boldsymbol{\beta}^{\star}\right)\right\| = 0$. Furthermore, $\boldsymbol{u}^{(g)}\left(\boldsymbol{\beta}\right)$ is the same as for a fixed two-stage design with $a_1^{(1)},\cdots, a_{J^{(1)}}^{(1)}, a_1^{(2)},\cdots, a_{J^{(2)}}^{(2)}$ as interventions decided on before the trial, 
so regular GEE theory applies here. In order for LAGO to work properly, we need variations in the intervention components to identify the treatment effect parameter. The uniqueness of $\boldsymbol{\beta}^*$ as a maximizer or zero has been studied by various authors, see e.g. Chapter 2.2 of \cite{fahrmeir2013multivariate}. Thus, provided there is enough variation in the intervention, the second condition in Theorem 5.9 of \citet{van2000asymptotic} is often also satisfied and we conclude that $\hat{\boldsymbol{\beta}}$ is consistent.

\subsection{Proof of Theorem \ref{AN_thm} for LAGO GLM with general link function: asymptotic normality of \texorpdfstring{$\hat{\boldsymbol{\beta}}$}{Lg}}  \label{an_appendix_generallink}
To prove asymptotic normality of the final estimator $\hat{\boldsymbol{\beta}}$, we first prove

\begin{lem}
Under Assumption \ref{intervention_cvg_assumption},
there exist $\boldsymbol{a}_{j}^{(2)}$, which is the probability limit of $\boldsymbol{A}_{j}^{(2,n^{(1)})}$ as $n^{(1)} \rightarrow \infty$ such that for stage $2$, $\max_{j = 1, \cdots , J^{(2)}} \left\| \boldsymbol{A}_{j}^{(2,n^{(1)})} - \boldsymbol{a}_{j}^{(2)} \right\| \xrightarrow{P} 0$.
\label{sup_glm_condition}
\end{lem}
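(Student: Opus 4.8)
The plan is to reduce the uniform statement (the maximum over the finitely many stage-2 centers) to the per-center convergence already available, and then upgrade it by a union bound. First, I would fix a center $j \in \{1,\dots,J^{(2)}\}$: Assumption \ref{intervention_cvg_assumption} gives $\hat{\boldsymbol{x}}^{opt,(2,n^{(1)})}_{j} \xrightarrow{P} \boldsymbol{x}_{j}^{(2)}$ as $n^{(1)} \to \infty$, where $\boldsymbol{x}_{j}^{(2)}$ is the fixed center-specific limit. Since $h_j^{(2)}$ is a deterministic continuous function, the Continuous Mapping Theorem yields $\boldsymbol{A}_{j}^{(2,n^{(1)})} = h_j^{(2)}\bigl(\hat{\boldsymbol{x}}^{opt,(2,n^{(1)})}_{j}\bigr) \xrightarrow{P} h_j^{(2)}\bigl(\boldsymbol{x}_{j}^{(2)}\bigr)$. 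I then define $\boldsymbol{a}_{j}^{(2)} := h_j^{(2)}\bigl(\boldsymbol{x}_{j}^{(2)}\bigr)$; this is the asserted probability limit of $\boldsymbol{A}_{j}^{(2,n^{(1)})}$, and it is nonrandom because $h_j^{(2)}$ is deterministic and $\boldsymbol{x}_{j}^{(2)}$ is a constant.

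Second, I would pass from per-center convergence to the stated uniform convergence using that $J^{(2)}$ is fixed. For any $\varepsilon > 0$,
\[
P\!\left( \max_{1 \le j \le J^{(2)}} \bigl\| \boldsymbol{A}_{j}^{(2,n^{(1)})} - \boldsymbol{a}_{j}^{(2)} \bigr\| > \varepsilon \right) \le \sum_{j=1}^{J^{(2)}} P\!\left( \bigl\| \boldsymbol{A}_{j}^{(2,n^{(1)})} - \boldsymbol{a}_{j}^{(2)} \bigr\| > \varepsilon \right),
\]
and each summand tends to $0$ by the first step, so the finite sum tends to $0$; hence $\max_{j} \| \boldsymbol{A}_{j}^{(2,n^{(1)})} - \boldsymbol{a}_{j}^{(2)} \| \xrightarrow{P} 0$. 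Since $\alpha_{jk} > 0$ forces $n^{(1)} \to \infty$ as $n \to \infty$, this is equivalently a statement about $n \to \infty$.

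There is no substantive obstacle here; the only point that genuinely matters is that the number of stage-2 centers $J^{(2)}$ is fixed (as assumed in Section \ref{setting}), which is exactly what makes the union bound collapse to a finite sum of vanishing probabilities. If one preferred to avoid the union bound entirely, one could instead invoke the Continuous Mapping Theorem a second time, applied to the joint vector $\bigl( \boldsymbol{A}_{1}^{(2,n^{(1)})} - \boldsymbol{a}_{1}^{(2)}, \dots, \boldsymbol{A}_{J^{(2)}}^{(2,n^{(1)})} - \boldsymbol{a}_{J^{(2)}}^{(2)} \bigr)$ — which converges in probability to the zero vector because each of its finitely many coordinates does — together with continuity of the map $(\boldsymbol{v}_1,\dots,\boldsymbol{v}_{J^{(2)}}) \mapsto \max_j \| \boldsymbol{v}_j \|$.
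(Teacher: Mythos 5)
Your proposal is correct and follows essentially the same route as the paper: Assumption \ref{intervention_cvg_assumption} plus the Continuous Mapping Theorem applied to the deterministic continuous $h_j^{(2)}$ gives the per-center limits $\boldsymbol{a}_j^{(2)}$, and the maximum over the finitely many stage-2 centers is then handled exactly as in the paper (your union bound and your second, CMT-on-the-max alternative are just two phrasings of the same finite-$J^{(2)}$ step the paper uses).
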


\begin{proof}[Proof]
By Assumption \ref{intervention_cvg_assumption} and the Continuous Mapping Theorem, 
for each $j$, $\boldsymbol{A}_{j}^{(2,n^{(1)})} \xrightarrow{P} \boldsymbol{a}_{j}^{(2)}$. Since the number of centers in stage $2$ is fixed, taking the maximum over all $j$ is a continuous operation. Lemma \ref{sup_glm_condition} follows by the Continuous Mapping Theorem.
\end{proof}

Under Assumptions \ref{glm_model_assumption} $-$ \ref{ind_error}, and Lemma \ref{sup_glm_condition}, we show that 
\begin{equation}\label{aympstotic_normality_general}
\footnotesize{
    \sqrt{n}\left(\hat{\boldsymbol{\beta}}-\boldsymbol{\beta}^{*}\right) \xrightarrow{D} N(0, J\left(\boldsymbol{\beta^{*}}\right)^{-1} V\left(\boldsymbol{\beta^{*}}\right) J\left(\boldsymbol{\beta^{*}}\right)^{-1}),
}
\end{equation}
where the explicit forms of $J\left(\boldsymbol{\beta^{*}}\right)$ and $V\left(\boldsymbol{\beta^{*}}\right)$ are
given in Theorem \ref{AN_thm}.
The corresponding estimators are
\begin{equation}\label{estimator J}
\footnotesize{
\begin{aligned}
\hat{J}(\hat{\boldsymbol{\beta}}) &= \sum_{j=1}^{J^{(1)}} \frac{n_j^{(1)}}{n} \left.\left( \frac{\partial}{\partial \boldsymbol{\beta}} \right|_{\hat{\boldsymbol{\beta}}} g^{-1}(\boldsymbol{a}_j^{(1)},\boldsymbol{z}_j^{(1)}; \boldsymbol{\beta})\right)^{\otimes2}   + \sum_{j=1}^{J^{(2)}} \frac{n_j^{(2)}}{n} \left.\left( \frac{\partial}{\partial \boldsymbol{\beta}} \right|_{\hat{\boldsymbol{\beta}}}  g^{-1}(\boldsymbol{A}_{j}^{\left(2, n^{(1)}\right)},\boldsymbol{z}_j^{(2)}; \boldsymbol{\beta})\right)^{\otimes2},
\end{aligned}
}
\end{equation}
\begin{equation}\label{estimator V}
\footnotesize{
\begin{aligned}
\hat{V}(\hat{\boldsymbol{\beta}}) &= \frac{1}{n} \sum_{j=1}^{J^{(1)}} \left\{  \left.\left( \frac{\partial}{\partial \boldsymbol{\beta}} \right|_{\hat{\boldsymbol{\beta}}} g^{-1}(\boldsymbol{a}_j^{(1)},\boldsymbol{z}_j^{(1)}; \boldsymbol{\beta})\right)^{\otimes2} \sum_{i=1}^{n_j^{(1)}}  \left( Y_{ij}^{(1)} - g^{-1}(\boldsymbol{a}_j^{(1)},\boldsymbol{z}_j^{(1)}; \hat{\boldsymbol{\beta}}) \right)^2 \right\}\\
&+ \frac{1}{n} \sum_{j=1}^{J^{(2)}} \left\{  \left.\left( \frac{\partial}{\partial \boldsymbol{\beta}} \right|_{\hat{\boldsymbol{\beta}}} g^{-1}(\boldsymbol{A}_j^{(2, n^{(1)})},\boldsymbol{z}_j^{(2)}; \boldsymbol{\beta})\right)^{\otimes2} \sum_{i=1}^{n_j^{(2)}}  \left( Y_{ij}^{(2, n^{(1)})} - g^{-1}(\boldsymbol{A}_j^{(2,n^{(1)})},\boldsymbol{z}_j^{(2)}; \hat{\boldsymbol{\beta}}) \right)^2 \right\}.
\end{aligned}
}
\end{equation}

Applying the Mean Value Theorem to each component of $\boldsymbol{U}^{(g)}(\boldsymbol{\beta})$ from equation (\ref{glm_EE}), 
\begin{equation*}
\footnotesize{
    0=\boldsymbol{U}^{(g)}(\hat{\boldsymbol{\beta}}) = \boldsymbol{U}^{(g)}(\boldsymbol{\beta}^*) + \left(\left.\frac{\partial}{\partial \boldsymbol{\beta}}\right|_{\Tilde{\boldsymbol{\beta}}} \boldsymbol{U}^{(g)}({\boldsymbol{\beta}})\right) (\hat{\boldsymbol{\beta}} - {\boldsymbol{\beta}^*})^T,
}
\end{equation*}
where for each row of $\left.\frac{\partial}{\partial \boldsymbol{\beta}}\right|_{\Tilde{\boldsymbol{\beta}}} \boldsymbol{U}^{(g)}({\boldsymbol{\beta}})$, $\Tilde{\boldsymbol{\beta}}$ takes a possibly row-dependent value on the line between $\hat{\boldsymbol{\beta}}$ and $\boldsymbol{\beta}^*$. It follows that 
\begin{equation}
\footnotesize{
\sqrt{n}\left(\hat{\boldsymbol{\beta}}-\boldsymbol{\beta}^{*}\right) = -\sqrt{n}\left(\left.\frac{\partial}{\partial \boldsymbol{\beta}}\right|_{\tilde{\boldsymbol{\beta}}} \boldsymbol{U}^{(g)}\left({\boldsymbol{\beta}}\right)\right)^{-1} \boldsymbol{U}^{(g)}\left(\boldsymbol{\beta}^{*}\right).
\label{ubeta*forderivation}
}
\end{equation}
We first show that $\left(-\left.\frac{\partial}{\partial \boldsymbol{\beta}}\right|_{\tilde{\boldsymbol{\beta}}} \boldsymbol{U}^{(g)}\left({\boldsymbol{\beta}}\right)\right)$ converges in probability to $J\left(\boldsymbol{\beta^{*}}\right)$. Then we show that 
$\sqrt{n} \; \boldsymbol{U}^{(g)}\left(\boldsymbol{\beta}^{*}\right)$ converges to a normal distribution with mean 0 and variance $V\left(\boldsymbol{\beta^{*}}\right)$. Equation (\ref{aympstotic_normality_general}) then follows from Slutsky's Theorem.

Because of equation (\ref{glm_EE}) of the main text,
\begin{equation}\label{ddbeta Ubeta}
\footnotesize{
\begin{aligned}
&\frac{\partial}{\partial \boldsymbol{\beta}} \boldsymbol{U}^{(g)}({\boldsymbol{\beta}})
= \frac{1}{n} \sum_{j=1}^{J^{(1)}}  \sum_{i=1}^{n_{j}^{(1)}}  \frac{\partial}{\partial \boldsymbol{\beta}} \left[ \left(\frac{\partial}{\partial \boldsymbol{\beta}} g^{-1}(\boldsymbol{a}_j^{(1)},\boldsymbol{z}_j^{(1)}; \boldsymbol{\beta})\right)\left(Y_{ij}^{(1)}-g^{-1}(\boldsymbol{a}_j^{(1)}, \boldsymbol{z}_j^{(1)} ; \boldsymbol{\beta})\right)  \right]\\
&\qquad\qquad+ \frac{1}{n} \sum_{j=1}^{J^{(2)}}  \sum_{i=1}^{n_{j}^{(2)}} \frac{\partial}{\partial \boldsymbol{\beta}} \left[ \left(\frac{\partial}{\partial \boldsymbol{\beta}} g^{-1}(\boldsymbol{A}_{j}^{\left(2, n^{(1)}\right)},\boldsymbol{z}_j^{(2)}; \boldsymbol{\beta}) \right)\left(Y_{ij}^{\left(2, n^{(1)}\right)}-g^{-1}(\boldsymbol{A}_{j}^{\left(2, n^{(1)}\right)},\boldsymbol{z}_j^{(2)}; \boldsymbol{\beta})\right) \right] \\
&= G_{6,n}^{(g)}(\boldsymbol{\beta}) + G_{7,n}^{(g)}(\boldsymbol{\beta}),
\end{aligned}
}
\end{equation}
where
\begin{equation*}
\footnotesize{
\begin{aligned}
G_{6,n}^{(g)}(\boldsymbol{\beta}) &= \frac{1}{n} \sum_{j=1}^{J^{(1)}}  \sum_{i=1}^{n_{j}^{(1)}} \left\{ \left(\frac{\partial^2}{\partial \boldsymbol{\beta}^2} g^{-1}(\boldsymbol{a}_j^{(1)},\boldsymbol{z}_j^{(1)}; \boldsymbol{\beta})\right) Y_{ij}^{(1)} \right.\\
&\left. - \left( \frac{\partial^2}{\partial \boldsymbol{\beta}^2} g^{-1}(\boldsymbol{a}_j^{(1)},\boldsymbol{z}_j^{(1)}; \boldsymbol{\beta})\right) g^{-1}(\boldsymbol{a}_j^{(1)},\boldsymbol{z}_j^{(1)}; \boldsymbol{\beta}) - \left( \frac{\partial}{\partial \boldsymbol{\beta}} g^{-1}(\boldsymbol{a}_j^{(1)},\boldsymbol{z}_j^{(1)}; \boldsymbol{\beta})\right)^{\otimes2} \right\}, \\
G_{7,n}^{(g)}(\boldsymbol{\beta}) &= \frac{1}{n} \sum_{j=1}^{J^{(1)}}  \sum_{i=1}^{n_{j}^{(2)}} \left\{ \left(\frac{\partial^2}{\partial \boldsymbol{\beta}^2} g^{-1}(\boldsymbol{A}_{j}^{\left(2, n^{(1)}\right)},\boldsymbol{z}_j^{(2)}; \boldsymbol{\beta}) \right) Y_{ij}^{(2,n^{(1)})} \right.\\
&\left.- \left( \frac{\partial^2}{\partial \boldsymbol{\beta}^2} g^{-1}(\boldsymbol{A}_{j}^{\left(2, n^{(1)}\right)},\boldsymbol{z}_j^{(2)}; \boldsymbol{\beta}) \right) g^{-1}(\boldsymbol{A}_{j}^{\left(2, n^{(1)}\right)},\boldsymbol{z}_j^{(2)}; \boldsymbol{\beta}) \right.\\
&\qquad\qquad\qquad\qquad\qquad\qquad\qquad - \left.\left( \frac{\partial}{\partial \boldsymbol{\beta}} g^{-1}(\boldsymbol{A}_{j}^{\left(2, n^{(1)}\right)},\boldsymbol{z}_j^{(2)}; \boldsymbol{\beta}) \right)^{\otimes2} \right\}.
\end{aligned}
}
\end{equation*}
Let 
\begin{equation*}
\footnotesize{
\begin{aligned}
G_{6,n}^{(g)*}(\boldsymbol{\beta}) &= \frac{1}{n} \sum_{j=1}^{J^{(1)}}  \sum_{i=1}^{n_{j}^{(1)}} \left\{ \left(\frac{\partial^2}{\partial \boldsymbol{\beta}^2} g^{-1}(\boldsymbol{a}_j^{(1)},\boldsymbol{z}_j^{(1)}; \boldsymbol{\beta})\right) g^{-1}(\boldsymbol{a}_j^{(1)},\boldsymbol{z}_j^{(1)}; \boldsymbol{\beta}^*) \right.\\
&\left.- \left( \frac{\partial^2}{\partial\boldsymbol{\beta}^2} g^{-1}(\boldsymbol{a}_j^{(1)},\boldsymbol{z}_j^{(1)}; \boldsymbol{\beta})\right) g^{-1}(\boldsymbol{a}_j^{(1)},\boldsymbol{z}_j^{(1)}; \boldsymbol{\beta}) 
- \left( \frac{\partial}{\partial \boldsymbol{\beta}} g^{-1}(\boldsymbol{a}_j^{(1)},\boldsymbol{z}_j^{(1)}; \boldsymbol{\beta})\right)^{\otimes2} \right\},
\end{aligned}
}
\end{equation*}
\begin{equation*}
\small{
\begin{aligned}
G_{7,n}^{(g)*}(\boldsymbol{\beta}) &= \frac{1}{n} \sum_{j=1}^{J^{(1)}}  \sum_{i=1}^{n_{j}^{(2)}} \left\{ \left(\frac{\partial^2}{\partial \boldsymbol{\beta}^2} g^{-1}(\boldsymbol{a}_j^{(2)},\boldsymbol{z}_j^{(2)}; \boldsymbol{\beta}) \right) g^{-1}(\boldsymbol{a}_j^{(2)},\boldsymbol{z}_j^{(2)}; \boldsymbol{\beta}^*) \right.\\
&\left.- \left( \frac{\partial^2}{\partial \boldsymbol{\beta}^2} g^{-1}(\boldsymbol{a}_j^{(2)},\boldsymbol{z}_j^{(2)}; \boldsymbol{\beta}) \right) g^{-1}(\boldsymbol{a}_j^{(2)},\boldsymbol{z}_j^{(2)}; \boldsymbol{\beta}) - \left( \frac{\partial}{\partial \boldsymbol{\beta}} g^{-1}(\boldsymbol{a}_j^{(2)},\boldsymbol{z}_j^{(2)}; \boldsymbol{\beta}) \right)^{\otimes2} \right\}.
\end{aligned}
}
\end{equation*}
We show that $\frac{\partial}{\partial \boldsymbol{\beta}} \boldsymbol{U}^{(g)}(\tilde{\boldsymbol{\beta}})-\left(G_{6, n}^{(g)*}\left(\boldsymbol{\beta}^{*}\right)+G_{7, n}^{(g)*}\left(\boldsymbol{\beta}^{*}\right)\right)$ converges in probability to 0. First,
\begin{equation*}
\footnotesize{
\left.\frac{\partial}{\partial \boldsymbol{\beta}}\right|_{\tilde{\boldsymbol{\beta}}} \boldsymbol{U}^{(g)}(\tilde{\boldsymbol{{\beta}}})-\left(G_{6, n}^{(g)*}\left(\boldsymbol{\beta}^{*}\right)+G_{7, n}^{(g)*}\left(\boldsymbol{\beta}^{*}\right)\right)=\left(G_{6,n}^{(g)}(\tilde{\boldsymbol{\beta}})-G_{6, n}^{(g)*}\left(\boldsymbol{\beta}^{*}\right)\right)+\left(G_{7,n}^{(g)}(\tilde{\boldsymbol{\beta}})-G_{7, n}^{(g)*}\left(\boldsymbol{\beta}^{*}\right)\right).
}
\end{equation*}
We show that both terms converge in probability to 0.
For the first term, using $Y_{ij}^{(1)} = g^{-1}(\boldsymbol{a}_{j}^{\left(1\right)}, \boldsymbol{z}_{j}^{\left(1\right)}; {\boldsymbol{\beta}^*}) + \epsilon_{ij}^{(1)} $,
\begin{equation}
\footnotesize{
\begin{aligned}
&G_{6,n}^{(g)}(\tilde{\boldsymbol{\beta}})-G_{6,n}^{(g)*}\left(\boldsymbol{\beta}^{*}\right) \\
=& \frac{1}{n} \sum_{j=1}^{J^{(1)}}  \sum_{i=1}^{n_{j}^{(1)}} \left\{ \left( \left.\frac{\partial^2}{\partial {\boldsymbol{\beta}}^2}\right|_{\tilde{\boldsymbol{\beta}}} g^{-1}(\boldsymbol{a}_{j}^{\left(1\right)}, \boldsymbol{z}_{j}^{\left(1\right)}; {\boldsymbol{\beta}}) - \left.\frac{\partial^2}{\partial \boldsymbol{\beta}^2}\right|_{{\boldsymbol{\beta}^*}} g^{-1}(\boldsymbol{a}_{j}^{\left(1\right)}, \boldsymbol{z}_{j}^{\left(1\right)}; \boldsymbol{\beta}) \right) g^{-1}(\boldsymbol{a}_{j}^{\left(1\right)}, \boldsymbol{z}_{j}^{\left(1\right)}; \boldsymbol{\beta}^*) \right.\\
& \left. + \left( \left.\frac{\partial^2}{\partial {\boldsymbol{\beta}^2}}\right|_{\tilde{\boldsymbol{\beta}}} g^{-1}(\boldsymbol{a}_{j}^{\left(1\right)}, \boldsymbol{z}_{j}^{\left(1\right)}; {\boldsymbol{\beta}}) \right) \;\epsilon_{ij}^{(1)} \right.
+ \left(\left.\frac{\partial^2}{\partial \boldsymbol{\beta}^2}\right|_{{\boldsymbol{\beta}^*}} g^{-1}(\boldsymbol{a}_{j}^{\left(1\right)}, \boldsymbol{z}_{j}^{\left(1\right)}; \boldsymbol{\beta})\right) g^{-1}(\boldsymbol{a}_{j}^{\left(1\right)}, \boldsymbol{z}_{j}^{\left(1\right)}; \boldsymbol{\beta}^*) \\
&-\left(\left.\frac{\partial^2}{\partial {\boldsymbol{\beta}^2}}\right|_{\tilde{\boldsymbol{\beta}}} g^{-1}(\boldsymbol{a}_{j}^{\left(1\right)}, \boldsymbol{z}_{j}^{\left(1\right)}; {\boldsymbol{\beta}})\right) g^{-1}(\boldsymbol{a}_{j}^{\left(1\right)}, \boldsymbol{z}_{j}^{\left(1\right)}; \tilde{\boldsymbol{\beta}}) \\
&+ \left.\left(\left.\frac{\partial}{\partial \boldsymbol{\beta}}\right|_{{\boldsymbol{\beta}^*}} g^{-1}(\boldsymbol{a}_{j}^{\left(1\right)}, \boldsymbol{z}_{j}^{\left(1\right)}; \boldsymbol{\beta})\right)^{\otimes2} - 
\left(\left.\frac{\partial}{\partial {\boldsymbol{\beta}}}\right|_{\tilde{\boldsymbol{\beta}}} g^{-1}(\boldsymbol{a}_{j}^{\left(1\right)}, \boldsymbol{z}_{j}^{\left(1\right)}; {\boldsymbol{\beta}})\right)^{\otimes2}\right\}.
\end{aligned}
}
\label{g5tilda-g5star}
\end{equation}
Notice that the $pq$-th entry of the first term of equation (\ref{g5tilda-g5star})
equals
\begin{equation}
\footnotesize{
\begin{aligned}
    &\sum_{j=1}^{J^{(1)}} \frac{n_{j}^{(1)}}{n} \left\{\left(\left.\frac{\partial}{\partial {\boldsymbol{\beta}}_{p}}\right|_{\tilde{\boldsymbol{\beta}}_{p}}
    \left.\frac{\partial}{\partial {\boldsymbol{\beta}}_{q}}\right|_{\tilde{\boldsymbol{\beta}}_{q}} g^{-1}(\boldsymbol{a}_{j}^{\left(1\right)}, \boldsymbol{z}_{j}^{\left(1\right)}; {\boldsymbol{\beta}})\right) 
    g^{-1}(\boldsymbol{a}_{j}^{\left(1\right)}, \boldsymbol{z}_{j}^{\left(1\right)}; \boldsymbol{\beta}^*)\right.\\
    &\qquad\qquad\qquad -\left.\left( \left.\frac{\partial}{\partial {\boldsymbol{\beta}}_{p}}\right|_{{\boldsymbol{\beta}}^*_{p}}
    \left.\frac{\partial}{\partial {\boldsymbol{\beta}}_{q}}\right|_{{\boldsymbol{\beta}}^*_{q}} g^{-1}(\boldsymbol{a}_{j}^{\left(1\right)}, \boldsymbol{z}_{j}^{\left(1\right)}; \boldsymbol{\beta}) \right) g^{-1}(\boldsymbol{a}_{j}^{\left(1\right)}, \boldsymbol{z}_{j}^{\left(1\right)}; \boldsymbol{\beta}^*)\right\}.
\end{aligned}
}
\label{first part g5-g5*}
\end{equation}
Since $\hat{\boldsymbol{\beta}}$ and therefore $\tilde{\boldsymbol{\beta}}$ is consistent, $\tilde{\boldsymbol{\beta}} \xrightarrow{P} \boldsymbol{\beta}^*$. By Assumption \ref{glm_unif_bound_assu} and the Continuous Mapping Theorem, equation (\ref{first part g5-g5*}) converges in probability to 0 for each entry. Since $J^{(1)}$ is fixed, 
$g()$ is in $C^2$,
and ${n_{j}^{(1)}}/{n} \leq 1$, we conclude that equation (\ref{g5tilda-g5star}) goes to 0 in probability. 
Next, following the same arguments as for ${G_{1,n}^{(g)}}$ in the proof of consistency (Appendix \ref{consistency_appendix_generallink}),
\begin{equation*}
\footnotesize{
    \frac{1}{n} \sum_{j=1}^{J^{(1)}}  \sum_{i=1}^{n_{j}^{(1)}} \left( \left.\frac{\partial^2}{\partial {\boldsymbol{\beta}}^2} \right|_{\tilde{\boldsymbol{\beta}}} g^{-1}(\boldsymbol{a}_{j}^{\left(1\right)}, \boldsymbol{z}_{j}^{\left(1\right)}; {\boldsymbol{\beta}}) \right) \;\epsilon_{ij}^{(1)} \xrightarrow{P} 0.
}
\end{equation*}
We conclude that indeed $G_{6,n}^{(g)}(\tilde{\boldsymbol{\beta}})-G_{6, n}^{(g)*}\left(\boldsymbol{\beta}^{*}\right) \stackrel{P}{\rightarrow} 0$.

For {$G_{7,n}^{(g)}(\tilde{\boldsymbol{\beta}})-G_{7, n}^{(g)*}\left(\boldsymbol{\beta}^{*}\right)$}, 
we first show {$\sup_{\boldsymbol{\beta}}\left\|  G_{7,n}^{(g)}({\boldsymbol{\beta}})-G_{7,n}^{(g)*}\left(\boldsymbol{\beta}\right) \right\| \xrightarrow{P} 0$}, after which we show that {$\left(G_{7,n}^{(g)}(\tilde{\boldsymbol{\beta}})-G_{7, n}^{(g)*}\left(\boldsymbol{\beta}^{*}\right)\right) \xrightarrow{P} 0.$}\\
Using that $Y_{ij}^{(2,n^{(1)})} = g^{-1}(\boldsymbol{A}_{j}^{\left(2,n^{(1)}\right)}, \boldsymbol{z}_{j}^{\left(2\right)}; {\boldsymbol{\beta}^*}) + \epsilon_{ij}^{(2,n^{(1)})} $, from equation (\ref{ddbeta Ubeta})
\begin{equation*}
\footnotesize{
\begin{aligned}
&G_{7,n}^{(g)}(\boldsymbol{\beta}) - G_{7,n}^{(g)^*}(\boldsymbol{\beta}) 
= G_{7\_1,n}^{(g)}(\boldsymbol{\beta}) + G_{7\_2,n}^{(g)}(\boldsymbol{\beta}) + G_{7\_3,n}^{(g)}(\boldsymbol{\beta}) + G_{7\_4,n}^{(g)}(\boldsymbol{\beta})
\end{aligned}
}
\end{equation*}
where
\begin{equation*}
\footnotesize{
G_{7\_1,n}^{(g)}(\boldsymbol{\beta}) 
= \frac{1}{n} \sum_{j=1}^{J^{(2)}}  \sum_{i=1}^{n_{j}^{(2)}} \left\{ \left(\frac{\partial^2}{\partial \boldsymbol{\beta}^2} g^{-1}(\boldsymbol{A}_{j}^{\left(2, n^{(1)}\right)},\boldsymbol{z}_j^{(2)}; \boldsymbol{\beta}) \right) \epsilon_{ij}^{(2,n^{(1)})} \right\},
}
\end{equation*}
\begin{equation*}
\footnotesize{
\begin{aligned}
G_{7\_2,n}^{(g)}(\boldsymbol{\beta}) &= \frac{1}{n} \sum_{j=1}^{J^{(2)}}  \sum_{i=1}^{n_{j}^{(2)}} \left\{ \left(\frac{\partial^2}{\partial \boldsymbol{\beta}^2} g^{-1}(\boldsymbol{A}_{j}^{\left(2, n^{(1)}\right)},\boldsymbol{z}_j^{(2)}; \boldsymbol{\beta}) \right) g^{-1}(\boldsymbol{A}_{j}^{\left(2, n^{(1)}\right)}, \boldsymbol{z}_{j}^{\left(2\right)}; \boldsymbol{\beta}^*) \right. \\
& \qquad \qquad \qquad \qquad \qquad  \left. - \left(\frac{\partial^2}{\partial \boldsymbol{\beta}^2} g^{-1}(\boldsymbol{a}_{j}^{(2)}, \boldsymbol{z}_{j}^{(2)}; \boldsymbol{\beta}) \right) g^{-1}(\boldsymbol{a}_{j}^{(2)}, \boldsymbol{z}_j^{(2)}; \boldsymbol{\beta}^*) \right\},\\
G_{7\_3,n}^{(g)}(\boldsymbol{\beta}) &= \frac{1}{n} \sum_{j=1}^{J^{(2)}}  \sum_{i=1}^{n_{j}^{(2)}} \left\{ 
\left(\frac{\partial^2}{\partial \boldsymbol{\beta}^2} g^{-1}(\boldsymbol{a}_{j}^{(2)}, \boldsymbol{z}_{j}^{(2)}; \boldsymbol{\beta}) \right) g^{-1}(\boldsymbol{a}_{j}^{(2)}, \boldsymbol{z}_{j}^{(2)}; \boldsymbol{\beta}) \right.\\ 
&\qquad \qquad \qquad \qquad \qquad\left. - \left(\frac{\partial^2}{\partial \boldsymbol{\beta}^2} g^{-1}(\boldsymbol{A}_{j}^{\left(2, n^{(1)}\right)},\boldsymbol{z}_j^{(2)}; \boldsymbol{\beta}) \right) g^{-1}(\boldsymbol{A}_{j}^{\left(2, n^{(1)}\right)},\boldsymbol{z}_j^{(2)}; \boldsymbol{\beta}) 
\right\},\\
G_{7\_4,n}^{(g)}(\boldsymbol{\beta}) &= \frac{1}{n} \sum_{j=1}^{J^{(2)}}  \sum_{i=1}^{n_{j}^{(2)}}
\left\{ 
\left(\frac{\partial}{\partial \boldsymbol{\beta}} g^{-1}(\boldsymbol{a}_{j}^{(2)}, \boldsymbol{z}_{j}^{(2)}; \boldsymbol{\beta}) \right)^{\otimes2}
- \left(\frac{\partial}{\partial \boldsymbol{\beta}} g^{-1}(\boldsymbol{A}_{j}^{\left(2, n^{(1)}\right)},\boldsymbol{z}_j^{(2)}; \boldsymbol{\beta}) \right)^{\otimes2}
\right\}.
\end{aligned}
}
\end{equation*}
Following the same arguments as for $G_{2,n}^{(g)}$, $G_{3,n}^{(g)}$, and $G_{4,n}^{(g)}$ in the proof of consistency (Appendix \ref{consistency_appendix_generallink}), respectively, we conclude that suprema over $\boldsymbol{\beta}$ of $G_{7\_1,n}^{(g)}(\boldsymbol{\beta})$, $G_{7\_2,n}^{(g)}(\boldsymbol{\beta})$, $G_{7\_3,n}^{(g)}(\boldsymbol{\beta})$ and $G_{7\_4,n}^{(g)}(\boldsymbol{\beta})$ converge to 0 in probability. Next, notice that
\begin{equation*}
\footnotesize{
\begin{aligned}
G_{7,n}^{(g)}(\tilde{\boldsymbol{\beta}}) - G_{7,n}^{(g)^*}(\boldsymbol{\beta}^*) 
&=\left(G_{7,n}^{(g)}(\tilde{\boldsymbol{\beta}})-G_{7,n}^{(g)^*}(\tilde{\boldsymbol{\beta}})\right)+\left(G_{7, n}^{(g)^*}(\tilde{\boldsymbol{\beta}})-G_{7, n}^{(g)^*}\left(\boldsymbol{\beta}^{*}\right)\right).
\end{aligned}
}
\end{equation*}
Since $\sup _{\boldsymbol{\beta}}\left\|G_{7,n}^{(g)}(\boldsymbol{\beta})-G_{7,n}^{(g)^*}(\boldsymbol{\beta})\right\| \stackrel{P}{\rightarrow} 0$, it follows that $G_{7,n}^{(g)}(\tilde{\boldsymbol{\beta}})-G_{7,n}^{(g)^*}(\tilde{\boldsymbol{\beta}}) \stackrel{P}{\rightarrow} 0 .$ Similar to equation (\ref{g5tilda-g5star}), we apply the Mean Value Theorem to conclude that $G_{7,n}^{(g)^*}(\tilde{\boldsymbol{\beta}})-G_{7, n}^{(g)^*}\left(\boldsymbol{\beta}^{*}\right) \stackrel{P}{\rightarrow} 0$. Combining, we conclude $G_{7,n}^{(g)}(\tilde{\boldsymbol{\beta}})-G_{7, n}^{(g)^*}\left(\boldsymbol{\beta}^{*}\right) \stackrel{P}{\rightarrow} 0$.

It follows that $\left.\frac{\partial}{\partial \boldsymbol{\beta}}\right|_{\tilde{\boldsymbol{\beta}}} \boldsymbol{U}^{(g)}({\boldsymbol{{\beta}}}) - \left( G_{6,n}^{(g)^*}(\boldsymbol{\beta}^*)+G_{7,n}^{(g)^*}(\boldsymbol{\beta}^*) \right) \xrightarrow{P} 0$. Thus, calculating the limit of $-\left.\frac{\partial}{\partial \boldsymbol{\beta}}\right|_{\tilde{\boldsymbol{\beta}}} \boldsymbol{U}^{(g)}({\boldsymbol{{\beta}}})$ is equivalent to calculating the limit of $- \left( G_{6,n}^{(g)^*}(\boldsymbol{\beta}^*)+G_{7,n}^{(g)^*}(\boldsymbol{\beta}^*) \right)$. We therefore show that $-\left.\frac{\partial}{\partial \boldsymbol{\beta}}\right|_{\tilde{\boldsymbol{\beta}}} \boldsymbol{U}^{(g)}({\boldsymbol{{\beta}}})$ converges in probability to $J(\boldsymbol{\beta}^*)$ from Theorem \ref{AN_thm}.

\begin{equation*}
\footnotesize{
\begin{aligned}
&G_{6,n}^{(g)^*}(\boldsymbol{\beta}^*) - \sum_{j=1}^{J^{(1)}} \alpha_{j1} \left\{-\left.\left( \frac{\partial}{\partial \boldsymbol{\beta}}\right|_{\boldsymbol{\beta}^*} g^{-1}(\boldsymbol{a}_j^{(1)},\boldsymbol{z}_j^{(1)}; \boldsymbol{\beta})\right)^{\otimes2}  \right\}\\
&= \frac{1}{n} \sum_{j=1}^{J^{(1)}}  \sum_{i=1}^{n_{j}^{(1)}} \left\{ 
- \left.\left( \frac{\partial}{\partial \boldsymbol{\beta}} \right|_{\boldsymbol{\beta}^*} g^{-1}(\boldsymbol{a}_j^{(1)},\boldsymbol{z}_j^{(1)}; \boldsymbol{\beta})\right)^{\otimes2} \right\} 
+ \sum_{j=1}^{J^{(1)}} \alpha_{j1} \left\{\left.\left( \frac{\partial}{\partial \boldsymbol{\beta}} \right|_{\boldsymbol{\beta}^*} g^{-1}(\boldsymbol{a}_j^{(1)},\boldsymbol{z}_j^{(1)}; \boldsymbol{\beta})\right)^{\otimes2}  \right\}\\
&=  \sum_{j=1}^{J^{(1)}} \frac{n_{j}^{(1)}}{n} \left\{ - \left.\left( \frac{\partial}{\partial \boldsymbol{\beta}}\right|_{\boldsymbol{\beta}^*} g^{-1}(\boldsymbol{a}_j^{(1)},\boldsymbol{z}_j^{(1)}; \boldsymbol{\beta})\right)^{\otimes2}\right\} - \sum_{j=1}^{J^{(1)}} \alpha_{j1} \left\{-\left.\left( \frac{\partial}{\partial \boldsymbol{\beta}}\right|_{\boldsymbol{\beta}^*}  g^{-1}(\boldsymbol{a}_j^{(1)},\boldsymbol{z}_j^{(1)}; \boldsymbol{\beta})\right)^{\otimes2} \right\} \\
&\xrightarrow{P} 0.
\end{aligned}
}
\end{equation*}
Similarly, 
\begin{equation*}
\footnotesize{
    G_{7,n}^{(g)^*}(\boldsymbol{\beta}^*) - \sum_{j=1}^{J^{(1)}} \alpha_{j2} \left\{-\left.\left( \frac{\partial}{\partial \boldsymbol{\beta}}\right|_{\boldsymbol{\beta}^*}  g^{-1}(\boldsymbol{a}_j^{(2)},\boldsymbol{z}_j^{(2)}; \boldsymbol{\beta})\right)^{\otimes2} \right\} \xrightarrow{P} 0.
}
\end{equation*}
We conclude that 
\begin{equation}\label{Jbeta proof done}
\footnotesize{
    \left.\frac{\partial}{\partial \boldsymbol{\beta}}\right|_{\tilde{\boldsymbol{\beta}}} \boldsymbol{U}^{(g)}({\boldsymbol{{\beta}}}) 
    \xrightarrow{P}
    J(\boldsymbol{\beta}^*).
}
\end{equation}

To show that $\sqrt{n} \:\: \boldsymbol{U}\left(\boldsymbol{\beta}^{*}\right)$ from equation (\ref{ubeta*forderivation}) converges in distribution to $N\left(0, V\left(\boldsymbol{\beta}^{*}\right)\right)$, we derive
\begin{equation}
\footnotesize{
\begin{aligned}
&\sqrt{n} \ \boldsymbol{U}^{(g)}(\boldsymbol{\beta}^*) \\
&= \frac{1}{\sqrt{n}}  \sum_{j=1}^{J^{(1)}}  \sum_{i=1}^{n_{j}^{(1)}}    \left(\left.\frac{\partial}{\partial \boldsymbol{\beta}}\right|_{\boldsymbol{\beta}^*} g^{-1}(\boldsymbol{a}_j^{(1)},\boldsymbol{z}_j^{(1)}; \boldsymbol{\beta}) \right)\left(Y_{ij}^{(1)}-g^{-1}(\boldsymbol{a}_j^{(1)},\boldsymbol{z}_j^{(1)}; \boldsymbol{\beta}^*)\right) \\
&\qquad+ \frac{1}{\sqrt{n}} \sum_{j=1}^{J^{(2)}}  \sum_{i=1}^{n_{j}^{(2)}}  \left(\left.\frac{\partial}{\partial \boldsymbol{\beta}}\right|_{\boldsymbol{\beta}^*} g^{-1}(\boldsymbol{A}_{j}^{\left(2, n^{(1)}\right)}, \boldsymbol{z}_{j}^{\left(2\right)}; \boldsymbol{\beta}) \right)\left(Y_{ij}^{\left(2, n^{(1)}\right)}-g^{-1}(\boldsymbol{A}_{j}^{\left(2, n^{(1)}\right)}, \boldsymbol{z}_{j}^{\left(2\right)}; \boldsymbol{\beta}^*)\right)  \\
&= \frac{1}{\sqrt{n}} \sum_{j=1}^{J^{(1)}}  \sum_{i=1}^{n_{j}^{(1)}}  \left(\left.\frac{\partial}{\partial \boldsymbol{\beta}}\right|_{\boldsymbol{\beta}^*} g^{-1}(\boldsymbol{a}_j^{(1)},\boldsymbol{z}_j^{(1)}; \boldsymbol{\beta}) \right) \epsilon_{ij}^{(1)} \\ 
& +\frac{1}{\sqrt{n}} \sum_{j=1}^{J^{(2)}}  \sum_{i=1}^{n_{j}^{(2)}} 
\left( 
\left.\frac{\partial}{\partial \boldsymbol{\beta}}\right|_{\boldsymbol{\beta}^*} g^{-1}(\boldsymbol{A}_j^{(2,n^{(1)})},\boldsymbol{z}_j^{(2)}; \boldsymbol{\beta})  
\pm  \left.\frac{\partial}{\partial \boldsymbol{\beta}}\right|_{\boldsymbol{\beta}^*} g^{-1}(\boldsymbol{a}_j^{(2)},\boldsymbol{z}_j^{(2)}; \boldsymbol{\beta}) 
\right)
\epsilon_{ij}^{(2,n^{(1)})}
\\
&= \sum_{j=1}^{J^{(1)}} \frac{\sqrt{n_j^{(1)}}}{\sqrt{n}} \frac{1}{\sqrt{n_j^{(1)}}} \sum_{i=1}^{n_{j}^{(1)}} \left(\left.\frac{\partial}{\partial \boldsymbol{\beta}}\right|_{\boldsymbol{\beta}^*} g^{-1}(\boldsymbol{a}_j^{(1)},\boldsymbol{z}_j^{(1)}; \boldsymbol{\beta}) \right) \epsilon_{ij}^{(1)} \\
&+ \frac{1}{\sqrt{n}} \sum_{j=1}^{J^{(2)}}  \sum_{i=1}^{n_{j}^{(2)}}  
\left(
\left.\frac{\partial}{\partial \boldsymbol{\beta}}\right|_{\boldsymbol{\beta}^*} g^{-1}(\boldsymbol{A}_j^{(2,n^{(1)})},\boldsymbol{z}_j^{(2)}; \boldsymbol{\beta})  
-  \left.\frac{\partial}{\partial \boldsymbol{\beta}}\right|_{\boldsymbol{\beta}^*} g^{-1}(\boldsymbol{a}_j^{(2)},\boldsymbol{z}_j^{(2)}; \boldsymbol{\beta})
\right)
\epsilon_{ij}^{(2,n^{(1)})}
\\ 
&\qquad+ \sum_{j=1}^{J^{(2)}} \frac{\sqrt{n_j^{(2)}}}{\sqrt{n}} \frac{1}{\sqrt{n_j^{(2)}}} \sum_{i=1}^{n_{j}^{(2)}} \left(\left.\frac{\partial}{\partial \boldsymbol{\beta}}\right|_{\boldsymbol{\beta}^*} g^{-1}(\boldsymbol{a}_j^{(2)},\boldsymbol{z}_j^{(2)}; \boldsymbol{\beta}) \right) \epsilon_{ij}^{(2,n^{(1)})}\\
&= \boldsymbol{U}_{1,n}^{(g)} + \boldsymbol{U}_{2\_1,n}^{(g)} + \boldsymbol{U}_{2\_2,n}^{(g)},
\end{aligned}
}
\label{regular_glm_AN_general}
\end{equation}
where
\begin{equation}
\footnotesize{
\begin{aligned}
\boldsymbol{U}_{1,n}^{(g)} &=\sum_{j=1}^{J^{(1)}} \frac{\sqrt{n_j^{(1)}}}{\sqrt{n}} \frac{1}{\sqrt{n_j^{(1)}}} \sum_{i=1}^{n_{j}^{(1)}} \left(\left.\frac{\partial}{\partial \boldsymbol{\beta}}\right|_{\boldsymbol{\beta}^*} g^{-1}(\boldsymbol{a}_j^{(1)},\boldsymbol{z}_j^{(1)}; \boldsymbol{\beta}) \right) \epsilon_{ij}^{(1)} ,\\
\boldsymbol{U}_{2\_1,n}^{(g)} &= \frac{1}{\sqrt{n}}\sum_{j=1}^{J^{(2)}} \sum_{i=1}^{n_{j}^{(2)}}  \left(\left.\frac{\partial}{\partial \boldsymbol{\beta}}\right|_{\boldsymbol{\beta}^*} g^{-1}(\boldsymbol{A}_j^{(2,n^{(1)})},\boldsymbol{z}_j^{(2)}; \boldsymbol{\beta}) -  \left.\frac{\partial}{\partial \boldsymbol{\beta}}\right|_{\boldsymbol{\beta}^*} g^{-1}(\boldsymbol{a}_j^{(2)},\boldsymbol{z}_j^{(2)}; \boldsymbol{\beta}) \right)\epsilon_{ij}^{(2,n^{(1)})},\\
\boldsymbol{U}_{2\_2,n}^{(g)} &= \sum_{j=1}^{J^{(2)}} \frac{\sqrt{n_j^{(2)}}}{\sqrt{n}} \frac{1}{\sqrt{n_j^{(2)}}} \sum_{i=1}^{n_{j}^{(2)}} \left(\left.\frac{\partial}{\partial \boldsymbol{\beta}}\right|_{\boldsymbol{\beta}^*} g^{-1}(\boldsymbol{a}_j^{(2)},\boldsymbol{z}_j^{(2)}; \boldsymbol{\beta}) \right) \epsilon_{ij}^{(2,n^{(1)})}. 
\end{aligned}
}
\label{U_separations_equations}
\end{equation}
In order to demonstrate that $\sqrt{n} \ \boldsymbol{U}^{(g)}(\boldsymbol{\beta}^*)$ converges in distribution to a normal distribution, we first show that 
$\boldsymbol{U}_{2\_1,n}^{(g)}$ converges in probability to 0.
Subsequently, we show that the joint distribution of $\boldsymbol{U}_{1,n}^{(g)}$ and $\boldsymbol{U}_{2\_2,n}^{(g)}$ converges in distribution to a normal distribution. 
This method works as under Assumption \ref{ind_error}, $\boldsymbol{U}_{1,n}^{(g)}$ and $\boldsymbol{U}_{2\_2,n}^{(g)}$ are independent.
Therefore, any dependence between stages only exists in $\boldsymbol{U}_{2\_1,n}^{(g)}$, which itself converges to 0 in probability. 

By Assumption \ref{ind_error}, for each fixed value of $j$, the central limit theorem for i.i.d. observations implies that 
$$
\footnotesize{
\frac{1}{\sqrt{n_{j}^{(1)}}}  \sum_{i=1}^{n_{j}^{(1)}}  \left(\left.\frac{\partial}{\partial \boldsymbol{\beta}}\right|_{\boldsymbol{\beta}^*} g^{-1}(\boldsymbol{a}_j^{(1)},\boldsymbol{z}_j^{(1)}; \boldsymbol{\beta}) \right) \epsilon_{ij}^{(1)}
}
$$
converges in distribution to a normal distribution with mean 0 and variance 
$$ 
\footnotesize{
\left(\left.\frac{\partial}{\partial \boldsymbol{\beta}}\right|_{\boldsymbol{\beta}^*} g^{-1}(\boldsymbol{a}_j^{(1)},\boldsymbol{z}_j^{(1)}; \boldsymbol{\beta}) \right)^{\otimes2} \sigma^2(\boldsymbol{z}_{j}^{(1)}).
}
$$ 
From e.g. Lévy's Continuity Theorem for characteristic functions \citep{eisenberg1983uniform} and Slutsky's Theorem, it follows that $\boldsymbol{U}_{1,n}^{(g)}$ converges to a normal distribution with mean 0 and variance 
$$
\footnotesize{
\sum_{j=1}^{J^{(1)}} \alpha_{j 1} \left(\left.\frac{\partial}{\partial \boldsymbol{\beta}}\right|_{\boldsymbol{\beta}^*} g^{-1}(\boldsymbol{a}_j^{(1)},\boldsymbol{z}_j^{(1)}; \boldsymbol{\beta}) \right)^{\otimes2} \sigma^2(\boldsymbol{z}_{j}^{(1)}).
}
$$
We show $\boldsymbol{U}_{2\_1,n}^{(g)} \xrightarrow{P} 0$ by showing that $E(\boldsymbol{U}_{2\_1,n}^{(g)}) = 0$ and $E({\boldsymbol{U}_{2\_1,n}^{(g)}}^{\otimes2}) \xrightarrow{} 0$, so that $\boldsymbol{U}_{2\_1,n}^{(g)} \xrightarrow{P} 0$ by Chebyshev's Inequality. We show  $E(\boldsymbol{U}_{2\_1,n}^{(g)}) = 0$ first. For each $j$, we have
\begin{equation*}
\footnotesize{
\begin{aligned}
&E \left\{ \frac{1}{\sqrt{n}}  \sum_{i=1}^{n_{j}^{(2)}} \left(\left.\frac{\partial}{\partial \boldsymbol{\beta}}\right|_{\boldsymbol{\beta}^*} g^{-1}(\boldsymbol{A}_j^{(2,n^{(1)})},\boldsymbol{z}_j^{(2)}; \boldsymbol{\beta}) - \left.\frac{\partial}{\partial \boldsymbol{\beta}}\right|_{\boldsymbol{\beta}^*} g^{-1}(\boldsymbol{a}_j^{(2)},\boldsymbol{z}_j^{(2)}; \boldsymbol{\beta}) \right) \epsilon_{ij}^{(2,n^{(1)})} \right\} = 0
\end{aligned}
}
\end{equation*}
by conditioning on $\boldsymbol{A}_{j}^{(2,n^{(1)})}$, so $E(\boldsymbol{U}_{2\_1,n}^{(g)}) = 0$. To see that $E({\boldsymbol{U}_{2\_1,n}^{(g)}}^{\otimes2}) \xrightarrow{} 0$, 
\begin{equation*}
\footnotesize{
\begin{aligned}
&E \left\{ \frac{1}{\sqrt{n_{j}^{(2)}}}  \sum_{i=1}^{n_{j}^{(2)}} \left(\left.\frac{\partial}{\partial \boldsymbol{\beta}}\right|_{\boldsymbol{\beta}^*} g^{-1}(\boldsymbol{A}_j^{(2,n^{(1)})},\boldsymbol{z}_j^{(2)}; \boldsymbol{\beta}) - \left.\frac{\partial}{\partial \boldsymbol{\beta}}\right|_{\boldsymbol{\beta}^*} g^{-1}(\boldsymbol{a}_j^{(2)},\boldsymbol{z}_j^{(2)}; \boldsymbol{\beta}) \right) \epsilon_{ij}^{(2,n^{(1)})} \right\}^{\otimes2} \\
&=\frac{1}{n_{j}^{(2)}}  \sum_{i=1}^{n_{j}^{(2)}} E \left\{ \left[ \left(\left.\frac{\partial}{\partial \boldsymbol{\beta}}\right|_{\boldsymbol{\beta}^*} g^{-1}(\boldsymbol{A}_j^{(2,n^{(1)})},\boldsymbol{z}_j^{(2)}; \boldsymbol{\beta}) - \left.\frac{\partial}{\partial \boldsymbol{\beta}}\right|_{\boldsymbol{\beta}^*} g^{-1}(\boldsymbol{a}_j^{(2)},\boldsymbol{z}_j^{(2)}; \boldsymbol{\beta}) \right) \epsilon_{ij}^{(2,n^{(1)})} \right]^{\otimes2} \right\}\\
&=\frac{1}{n_{j}^{(2)}}  \sum_{i=1}^{n_{j}^{(2)}} E \left\{ \left(\left.\frac{\partial}{\partial \boldsymbol{\beta}}\right|_{\boldsymbol{\beta}^*} g^{-1}(\boldsymbol{A}_j^{(2,n^{(1)})},\boldsymbol{z}_j^{(2)}; \boldsymbol{\beta})  - \left.\frac{\partial}{\partial \boldsymbol{\beta}}\right|_{\boldsymbol{\beta}^*} g^{-1}(\boldsymbol{a}_j^{(2)},\boldsymbol{z}_j^{(2)}; \boldsymbol{\beta}) \right)^{\otimes2} \left(\epsilon_{ij}^{(2,n^{(1)})}\right)^2 \right\}.
\end{aligned}
}
\end{equation*}
where the second line follows since all terms belonging to different $i$ are uncorrelated, which can be seen by conditioning on $\boldsymbol{A}_{j}^{\left(2, n^{(1)}\right)}$.\\ From Assumption \ref{ind_error}, we have $E\left(\left.\left(\epsilon_{i j}^{\left(2,n^{(1)}\right)}\right)^{2} \right| \boldsymbol{A}_{j}^{\left(2, n^{(1)}\right)}\right) = \sigma^2(\boldsymbol{z}_{j}^{(2)})$. 
Thus, by conditioning on $\boldsymbol{A}_{j}^{\left(2, n^{(1)}\right)}$, 
\begin{equation*}
\footnotesize{
\begin{aligned}
&E\left({\boldsymbol{U}_{2\_1,n}^{(g)}}^{\otimes2}\right)  \\
\qquad\qquad &=\sum_{j=1}^{J^{(2)}} \sigma^2(\boldsymbol{z}_{j}^{(2)})  \frac{n_{j}^{(2)}}{n} E \left\{ \left(\left.\frac{\partial}{\partial \boldsymbol{\beta}}\right|_{\boldsymbol{\beta}^*} g^{-1}(\boldsymbol{A}_j^{(2,n^{(1)})},\boldsymbol{z}_j^{(2)}; \boldsymbol{\beta}) - \left.\frac{\partial}{\partial \boldsymbol{\beta}}\right|_{\boldsymbol{\beta}^*} g^{-1}(\boldsymbol{a}_j^{(2)},\boldsymbol{z}_j^{(2)}; \boldsymbol{\beta}) \right)^{\otimes2} \right\}.
\end{aligned}
}
\end{equation*}
By Lemma \ref{sup_glm_condition}, the Continuous Mapping Theorem and Lebesgue's Dominated Convergence Theorem, 
$E({\boldsymbol{U}_{2\_1,n}^{(g)}}^{\otimes2}) \xrightarrow{} 0$. By Chebyshev's Inequality, $\boldsymbol{U}_{2\_1,n}^{(g)} \xrightarrow{P} 0 $.

For $\boldsymbol{U}_{2\_2,n}^{(g)}$, 
by Assumption \ref{ind_error} replacing $\epsilon_{ij}^{(2,n^{(1)})}$ by the error terms $\epsilon_{ij}^{(2)}$ under the limiting intervention $\boldsymbol{a}_{j}^{(2)}$ does not change the distribution of $\boldsymbol{U}_{2\_2, n}^{(g)}$, so it suffices to show that
 \begin{equation*}
 \footnotesize{
 {\boldsymbol{U}_{2\_2,n}^{(g)^*}} := \frac{1}{\sqrt{n}} \sum_{j=1}^{J^{(2)}}  \sum_{i=1}^{n_{j}^{(2)}} \left(\left.\frac{\partial}{\partial \boldsymbol{\beta}}\right|_{\boldsymbol{\beta}^*} g^{-1}(\boldsymbol{a}_j^{(2)},\boldsymbol{z}_j^{(2)}; \boldsymbol{\beta}) \right) \epsilon_{ij}^{(2)}
 }
 \end{equation*}
converges to a normal distribution. Following the same argument as for $\boldsymbol{U}_{1,n}^{(g)}$, ${\boldsymbol{U}_{2\_2,n}^{(g)\;*}}$ converges to a normal distribution with mean 0 and variance
$$
\footnotesize{
\sum_{j=1}^{J^{(2)}} \alpha_{j 2} \left(\left.\frac{\partial}{\partial \boldsymbol{\beta}}\right|_{\boldsymbol{\beta}^*} g^{-1}(\boldsymbol{a}_j^{(2)},\boldsymbol{z}_j^{(2)}; \boldsymbol{\beta}) \right)^{\otimes2} \sigma^2(\boldsymbol{z}_{j}^{(2)}).
}
$$

Hence,  $\sqrt{n} \ \boldsymbol{U}^{(g)}(\boldsymbol{\beta}^*)$ has the same asymptotic distribution as 
\begin{equation*}
\footnotesize{
\begin{aligned}
&\boldsymbol{U}_{1,n}^{(g)} + {\boldsymbol{U}_{2\_2,n}^{(g)\;*}} = \frac{1}{\sqrt{n}} \sum_{j=1}^{J^{(1)}}  \sum_{i=1}^{n_{j}^{(1)}}  \left(\left.\frac{\partial}{\partial \boldsymbol{\beta}}\right|_{\boldsymbol{\beta}^*} g^{-1}(\boldsymbol{a}_j^{(1)},\boldsymbol{z}_j^{(1)}; \boldsymbol{\beta}) \right) \epsilon_{ij}^{(1)}\\
&\qquad \qquad \qquad \qquad + \frac{1}{\sqrt{n}} \sum_{j=1}^{J^{(2)}}  \sum_{i=1}^{n_{j}^{(2)}} \left(\left.\frac{\partial}{\partial \boldsymbol{\beta}}\right|_{\boldsymbol{\beta}^*} g^{-1}(\boldsymbol{a}_j^{(2)},\boldsymbol{z}_j^{(2)}; \boldsymbol{\beta}) \right) \epsilon_{ij}^{(2)}.
\end{aligned}
}
\label{U1+U22 general link}
\end{equation*}
Since $\boldsymbol{a}_{j}^{(2)}$ are fixed, $\boldsymbol{U}_{1,n}^{(g)}$ and $ {\boldsymbol{U}_{2\_2,n}^{(g)}}$ are independent of one another. Thus, equation (\ref{regular_glm_AN_general}) has the same asymptotic distribution as a fixed two-stage design with $\boldsymbol{a}_1^{(1)},\cdots, \boldsymbol{a}_{J^{(1)}}^{(1)}, \\ \boldsymbol{a}_1^{(2)},\cdots, \boldsymbol{a}_{J^{(2)}}^{(2)}$ as interventions decided on before the trial. 
The limiting distribution of $\sqrt{n} \ \boldsymbol{U}^{(g)}(\boldsymbol{\beta}^*)$ is $N(0, V\left(\boldsymbol{\beta}^{*}\right))$ \citep{liang1986longitudinal}.
 
Combining this with equation (\ref{Jbeta proof done}) implies that Theorem \ref{AN_thm} holds.

\newpage
\section{Proofs of Theorem \ref{GLM_consistency_general} and Theorem \ref{AN_thm} with log link function.}\label{appendix_loglink}
In this section, we present specific illustrations of the proofs of Theorem \ref{GLM_consistency_general} and Theorem \ref{AN_thm} from the main text using the log link function for clarity. While these specific illustrations are not as general as the original proofs, they serve to demonstrate the concepts in a more concrete manner.

\subsection{Proof of Theorem 1 for LAGO GLM with independent errors and a log link function: consistency of \texorpdfstring{$\hat{\boldsymbol{\beta}}$}{Lg}}\label{consistency_appendix_loglink}
First, the estimating equations from (\ref{glm_EE}) of the main text are:
\begin{equation}\footnotesize{\begin{aligned}
0&=\boldsymbol{U}(\boldsymbol{\beta}) \\
&= \frac{1}{n}\left[ \sum_{j=1}^{J^{(1)}} \sum_{i=1}^{n_{j}^{(1)}} \left(\frac{\partial}{\partial \boldsymbol{\beta}} exp \left( \left( \begin{array}{c}
1 \\
\boldsymbol{a}_{j}^{(1)} \\
\boldsymbol{z}_{j}^{(1)}
\end{array} \right)^T \boldsymbol{\beta} \right)  \right) \left( Y_{ij}^{(1)} -exp\left( \left( \begin{array}{c}
1 \\
\boldsymbol{a}_{j}^{(1)} \\
\boldsymbol{z}_{j}^{(1)}
\end{array} \right)^T \boldsymbol{\beta}  \right)   \right) \right]\\
&+\frac{1}{n} \left[ \sum_{j=1}^{J^{(2)}} \sum_{i=1}^{n_{j}^{(2)}} \left(\frac{\partial}{\partial \boldsymbol{\beta}} exp \left( \left( \begin{array}{c}
1 \\
\boldsymbol{A}_{j}^{(2,n^{(1)})} \\
\boldsymbol{z}_{j}^{(2)}
\end{array} \right)^T \boldsymbol{\beta} \right)  \right) \left( Y_{ij}^{(2,n^{(1)})} -exp\left( \left( \begin{array}{c}
1 \\
\boldsymbol{A}_{j}^{(2,n^{(1)})} \\
\boldsymbol{z}_{j}^{(2)}
\end{array} \right)^T \boldsymbol{\beta}  \right)   \right) \right]\\
& = \frac{1}{n} \sum_{j=1}^{J^{(1)}} \sum_{i=1}^{n_{j}^{(1)}} \left[ exp \left( \left( \begin{array}{c}
1 \\
\boldsymbol{a}_{j}^{(1)} \\
\boldsymbol{z}_{j}^{(1)}
\end{array} \right)^T \boldsymbol{\beta}   \right) \left( \begin{array}{c}
1 \\
\boldsymbol{a}_{j}^{(1)} \\
\boldsymbol{z}_{j}^{(1)}
\end{array} \right) \left( Y_{ij}^{(1)} - exp\left( \left( \begin{array}{c}
1 \\
\boldsymbol{a}_{j}^{(1)} \\
\boldsymbol{z}_{j}^{(1)}
\end{array} \right)^T \boldsymbol{\beta}  \right)   \right) \right]\\
&+ \frac{1}{n} \sum_{j=1}^{J^{(2)}} \sum_{i=1}^{n_{j}^{(2)}} \left[ exp \left( \left( \begin{array}{c}
1 \\
\boldsymbol{A}_{j}^{(2,n^{(1)})} \\
\boldsymbol{z}_{j}^{(2)}
\end{array} \right)^T \boldsymbol{\beta}   \right)\left( \begin{array}{c}
1 \\
\boldsymbol{A}_{j}^{(2,n^{(1)})} \\
\boldsymbol{z}_{j}^{(2)}
\end{array} \right) \left(Y_{ij}^{(2,n^{(1)})} -exp\left( \left( \begin{array}{c}
1 \\
\boldsymbol{A}_{j}^{(2,n^{(1)})} \\
\boldsymbol{z}_{j}^{(2)}
\end{array} \right)^T \boldsymbol{\beta}  \right)   \right) \right]. 
\end{aligned}
}
\label{big_U_1}
\end{equation}

\noindent Recall that $\boldsymbol{\beta}^*$ is the true value of $\boldsymbol{\beta}$, and let
\begin{equation}
\footnotesize{
\begin{aligned}
\boldsymbol{u}(\boldsymbol{\beta}) &=  
\sum_{j=1}^{J^{(1)}} \alpha_{j1} \left[ \left(\frac{\partial}{\partial \boldsymbol{\beta}} exp \left( \left( \begin{array}{c}
1 \\
\boldsymbol{a}_{j}^{(1)} \\
\boldsymbol{z}_{j}^{(1)}
\end{array} \right)^T \boldsymbol{\beta} \right)  \right) \left( exp\left( \left( \begin{array}{c}
1 \\
\boldsymbol{a}_{j}^{(1)} \\
\boldsymbol{z}_{j}^{(1)}
\end{array} \right)^T \boldsymbol{\beta}^*  \right) -exp\left( \left( \begin{array}{c}
1 \\
\boldsymbol{a}_{j}^{(1)} \\
\boldsymbol{z}_{j}^{(1)}
\end{array} \right)^T \boldsymbol{\beta}  \right)   \right) \right]\\ 
&+ \sum_{j=1}^{J^{(2)}} \alpha_{j2} \left[  \left(\frac{\partial}{\partial \boldsymbol{\beta}} exp \left( \left( \begin{array}{c}
1 \\
\boldsymbol{a}_{j}^{(2)} \\
\boldsymbol{z}_{j}^{(2)}
\end{array} \right)^T \boldsymbol{\beta} \right)  \right) \left( exp\left( \left( \begin{array}{c}
1 \\
\boldsymbol{a}_{j}^{(2)} \\
\boldsymbol{z}_{j}^{(1)}
\end{array} \right)^T \boldsymbol{\beta}^*  \right) -exp\left( \left( \begin{array}{c}
1 \\
\boldsymbol{a}_{j}^{(2)} \\
\boldsymbol{z}_{j}^{(2)}
\end{array} \right)^T \boldsymbol{\beta}  \right)   \right) \right] . 
\end{aligned}
}
\label{little_u_1}
\end{equation}
To show consistency of estimator $\hat{\boldsymbol{\beta}}$, we show that in spite of the fact that (\ref{big_U_1}) does not consist of i.i.d. terms, Theorem $5.9$ of \citet{van2000asymptotic} can be used. We show that its two conditions are satisfied. First, we show 
\begin{equation}
    \sup_{\boldsymbol{\beta}}\|\boldsymbol{U}(\boldsymbol{\beta}) - \boldsymbol{u}(\boldsymbol{\beta})\| \xrightarrow{P} 0.
\label{first_condition}
\end{equation}
From equations (\ref{big_U_1}) and (\ref{little_u_1}), it follows that 
\begin{equation}\label{U-u separation loglink}
\footnotesize{
\begin{aligned}
\boldsymbol{U}(\boldsymbol{\beta}) &- \boldsymbol{u}(\boldsymbol{\beta}) = \frac{1}{n} \sum_{j=1}^{J^{(1)}} \sum_{i=1}^{n_{j}^{(1)}} \left[ exp \left( \left( \begin{array}{c}
1 \\
\boldsymbol{a}_{j}^{(1)} \\
\boldsymbol{z}_{j}^{(1)}
\end{array} \right)^T \boldsymbol{\beta}   \right) \left( \begin{array}{c}
1 \\
\boldsymbol{a}_{j}^{(1)} \\
\boldsymbol{z}_{j}^{(1)}
\end{array} \right) \left( Y_{ij}^{(1)} - exp\left( \left( \begin{array}{c}
1 \\
\boldsymbol{a}_{j}^{(1)} \\
\boldsymbol{z}_{j}^{(1)}
\end{array} \right)^T \boldsymbol{\beta}  \right)   \right) \right]\\
&+ \frac{1}{n} \sum_{j=1}^{J^{(2)}} \sum_{i=1}^{n_{j}^{(2)}} \left[ exp \left( \left( \begin{array}{c}
1 \\
\boldsymbol{A}_{j}^{(2,n^{(1)})} \\
\boldsymbol{z}_{j}^{(2)}
\end{array} \right)^T \boldsymbol{\beta}   \right)\left( \begin{array}{c}
1 \\
\boldsymbol{A}_{j}^{(2,n^{(1)})} \\
\boldsymbol{z}_{j}^{(2)}
\end{array} \right) \left(Y_{ij}^{(2,n^{(1)})} -exp\left( \left( \begin{array}{c}
1 \\
\boldsymbol{A}_{j}^{(2,n^{(1)})} \\
\boldsymbol{z}_{j}^{(2)}
\end{array} \right)^T \boldsymbol{\beta}  \right)  \right.\right.  \\
& \qquad \qquad \qquad \qquad \qquad \qquad \qquad \qquad \qquad \qquad \qquad \qquad \qquad 
\left.\left.\pm exp\left( \left( \begin{array}{c}
1 \\
\boldsymbol{A}_{j}^{(2,n^{(1)})} \\
\boldsymbol{z}_{j}^{(2)}
\end{array} \right)^T \boldsymbol{\beta}^*  \right)   \right) \right]\\
&- \sum_{j=1}^{J^{(1)}} \alpha_{j1} \left[ \left(\frac{\partial}{\partial \boldsymbol{\beta}} exp \left( \left( \begin{array}{c}
1 \\
\boldsymbol{a}_{j}^{(1)} \\
\boldsymbol{z}_{j}^{(1)}
\end{array} \right)^T \boldsymbol{\beta} \right)  \right) \left( exp\left( \left( \begin{array}{c}
1 \\
\boldsymbol{a}_{j}^{(1)} \\
\boldsymbol{z}_{j}^{(1)}
\end{array} \right)^T \boldsymbol{\beta}^*  \right) -exp\left( \left( \begin{array}{c}
1 \\
\boldsymbol{a}_{j}^{(1)} \\
\boldsymbol{z}_{j}^{(1)}
\end{array} \right)^T \boldsymbol{\beta}  \right)   \right) \right]\\ 
&- \sum_{j=1}^{J^{(2)}} \alpha_{j2} \left[  \left(\frac{\partial}{\partial \boldsymbol{\beta}} exp \left( \left( \begin{array}{c}
1 \\
\boldsymbol{a}_{j}^{(2)} \\
\boldsymbol{z}_{j}^{(2)}
\end{array} \right)^T \boldsymbol{\beta} \right)  \right) \left( exp\left( \left( \begin{array}{c}
1 \\
\boldsymbol{a}_{j}^{(2)} \\
\boldsymbol{z}_{j}^{(1)}
\end{array} \right)^T \boldsymbol{\beta}^*  \right) -exp\left( \left( \begin{array}{c}
1 \\
\boldsymbol{a}_{j}^{(2)} \\
\boldsymbol{z}_{j}^{(2)}
\end{array} \right)^T \boldsymbol{\beta}  \right)   \right) \right] \\
& = G_{1,n} + G_{2,n} + G_{3,n} + G_{4,n} + G_{5,n}, 
\end{aligned} 
}
\end{equation}
where 
\begin{equation}
\footnotesize{
\begin{aligned}
&G_{1,n} = \frac{1}{n} \sum_{j=1}^{J^{(1)}} \sum_{i=1}^{n_{j}^{(1)}} \left[ exp \left( \left( \begin{array}{c}
1 \\
\boldsymbol{a}_{j}^{(1)}\\
\boldsymbol{z}_{j}^{(1)} 
\end{array} \right)^T \boldsymbol{\beta}   \right) \left( \begin{array}{c}
1 \\
\boldsymbol{a}_{j}^{(1)}\\
\boldsymbol{z}_{j}^{(1)} 
\end{array} \right) \left( Y_{ij}^{(1)} - exp\left( \left( \begin{array}{c}
1 \\
\boldsymbol{a}_{j}^{(1)}\\
\boldsymbol{z}_{j}^{(1)} 
\end{array} \right)^T \boldsymbol{\beta}^*  \right)   \right) \right],
\end{aligned}
}
\label{G1_term}
\end{equation}

\begin{equation}
\footnotesize{
\begin{aligned}
G_{2,n} &= \frac{1}{n} \sum_{j=1}^{J^{(2)}} \sum_{i=1}^{n_{j}^{(2)}} \left[ exp \left( \left( \begin{array}{c}
1 \\
\boldsymbol{A}_{j}^{(2,n^{(1)})} \\
\boldsymbol{z}_{j}^{(2)}  
\end{array} \right)^T \boldsymbol{\beta}   \right)\left( \begin{array}{c}
1 \\
\boldsymbol{A}_{j}^{(2,n^{(1)})} \\
\boldsymbol{z}_{j}^{(2)}  
\end{array} \right) \right.\\
&\qquad\qquad\qquad\qquad\qquad\qquad\qquad\left.\left( Y_{ij}^{(2, n^{(1)})}-exp \left( \left( \begin{array}{c}
1 \\
\boldsymbol{A}_{j}^{(2,n^{(1)})} \\
\boldsymbol{z}_{j}^{(2)}  
\end{array} \right)^T \boldsymbol{\beta}^*  \right) \right) \right],
\end{aligned}
}
\label{G2_term}\end{equation}

\begin{equation}
\footnotesize{
\begin{aligned}
& G_{3,n} = \frac{1}{n} \sum_{j=1}^{J^{(2)}} \sum_{i=1}^{n_{j}^{(2)}} \left[ exp \left( \left( \begin{array}{c}
1 \\
\boldsymbol{A}_{j}^{(2,n^{(1)})} \\
\boldsymbol{z}_{j}^{(2)}  
\end{array} \right)^T \boldsymbol{\beta}   \right)\left( \begin{array}{c}
1 \\
\boldsymbol{A}_{j}^{(2,n^{(1)})} \\
\boldsymbol{z}_{j}^{(2)}  
\end{array} \right) exp \left( \left( \begin{array}{c}
1 \\
\boldsymbol{A}_{j}^{(2,n^{(1)})} \\
\boldsymbol{z}_{j}^{(2)}  
\end{array} \right)^T \boldsymbol{\beta}^*  \right) \right.\\
& \qquad \qquad \qquad \qquad \qquad  -\left.  exp \left( \left( \begin{array}{c}
1 \\
\boldsymbol{a}_{j}^{(2)} \\
\boldsymbol{z}_{j}^{(2)}  
\end{array} \right)^T \boldsymbol{\beta}   \right)\left( \begin{array}{c}
1 \\
\boldsymbol{a}_{j}^{(2)} \\
\boldsymbol{z}_{j}^{(2)}  
\end{array} \right) exp \left( \left( \begin{array}{c}
1 \\
\boldsymbol{a}_{j}^{(2)} \\
\boldsymbol{z}_{j}^{(2)}  
\end{array} \right)^T \boldsymbol{\beta}^*  \right) \right],
\end{aligned}
}
\label{G3_term}
\end{equation}

\begin{equation}
\footnotesize{
\begin{aligned}
&G_{4,n} = \frac{1}{n} \sum_{j=1}^{J^{(2)}} \sum_{i=1}^{n_{j}^{(2)}} \left[ exp \left( \left( \begin{array}{c}
1 \\
\boldsymbol{a}_{j}^{(2)} \\
\boldsymbol{z}_{j}^{(2)}  
\end{array} \right)^T \boldsymbol{\beta}   \right)\left( \begin{array}{c}
1 \\
\boldsymbol{a}_{j}^{(2)} \\
\boldsymbol{z}_{j}^{(2)}  
\end{array} \right) exp \left( \left( \begin{array}{c}
1 \\
\boldsymbol{a}_{j}^{(2)} \\
\boldsymbol{z}_{j}^{(2)}  
\end{array} \right)^T \boldsymbol{\beta}  \right) \right.\\
& \qquad \qquad \qquad \qquad \qquad \qquad \left. - exp \left( \left( \begin{array}{c}
1 \\
\boldsymbol{A}_{j}^{(2,n^{(1)})} \\
\boldsymbol{z}_{j}^{(2)}  
\end{array} \right)^T \boldsymbol{\beta}   \right)\left( \begin{array}{c}
1 \\
\boldsymbol{A}_{j}^{(2,n^{(1)})} \\
\boldsymbol{z}_{j}^{(2)}  
\end{array} \right) exp \left( \left( \begin{array}{c}
1 \\
\boldsymbol{A}_{j}^{(2,n^{(1)})} \\
\boldsymbol{z}_{j}^{(2)}  
\end{array} \right)^T \boldsymbol{\beta}  \right) \right],
\end{aligned}
}
\label{G4_term}
\end{equation}

\begin{equation}
\footnotesize{
\begin{aligned}
G_{5,n} &=
\sum_{j=1}^{J^{(1)}} \left[ \left(\alpha_{j1}-\frac{n_j^{(1)}}{n} \right)
exp \left( \left( \begin{array}{c}
1 \\
\boldsymbol{a}_{j}^{(1)}\\
\boldsymbol{z}_{j}^{(1)} 
\end{array} \right)^T \boldsymbol{\beta}   \right) \left( \begin{array}{c}
1 \\
\boldsymbol{a}_{j}^{(1)}\\
\boldsymbol{z}_{j}^{(1)} 
\end{array} \right) exp \left( \left( \begin{array}{c}
1 \\
\boldsymbol{a}_{j}^{(1)}\\
\boldsymbol{z}_{j}^{(1)} 
\end{array} \right)^T \boldsymbol{\beta}   \right)
\right] \\
&+ \sum_{j=1}^{J^{(1)}} \left[ \left(\frac{n_j^{(1)}}{n} - \alpha_{j1} \right)
exp \left( \left( \begin{array}{c}
1 \\
\boldsymbol{a}_{j}^{(1)}\\
\boldsymbol{z}_{j}^{(1)} 
\end{array} \right)^T \boldsymbol{\beta}   \right) \left( \begin{array}{c}
1 \\
\boldsymbol{a}_{j}^{(1)}\\
\boldsymbol{z}_{j}^{(1)} 
\end{array} \right)  exp \left( \left( \begin{array}{c}
1 \\
\boldsymbol{a}_{j}^{(1)}\\
\boldsymbol{z}_{j}^{(1)} 
\end{array} \right)^T \boldsymbol{\beta}^*   \right)
\right] \\
&+ \sum_{j=1}^{J^{(2)}} \left[ \left(\alpha_{j2}-\frac{n_j^{(2)}}{n} \right)
exp \left( \left( \begin{array}{c}
1 \\
\boldsymbol{a}_{j}^{(2)}\\
\boldsymbol{z}_{j}^{(2)} 
\end{array} \right)^T \boldsymbol{\beta}   \right) \left( \begin{array}{c}
1 \\
\boldsymbol{a}_{j}^{(2)}\\
\boldsymbol{z}_{j}^{(2)} 
\end{array} \right) exp \left( \left( \begin{array}{c}
1 \\
\boldsymbol{a}_{j}^{(2)}\\
\boldsymbol{z}_{j}^{(2)} 
\end{array} \right)^T \boldsymbol{\beta}   \right)
\right] \\
&+ \sum_{j=1}^{J^{(2)}} \left[ \left(\frac{n_j^{(2)}}{n} - \alpha_{j2} \right)
exp \left( \left( \begin{array}{c}
1 \\
\boldsymbol{a}_{j}^{(2)}\\
\boldsymbol{z}_{j}^{(2)} 
\end{array} \right)^T \boldsymbol{\beta}   \right) \left( \begin{array}{c}
1 \\
\boldsymbol{a}_{j}^{(2)}\\
\boldsymbol{z}_{j}^{(2)} 
\end{array} \right) exp \left( \left( \begin{array}{c}
1 \\
\boldsymbol{a}_{j}^{(2)}\\
\boldsymbol{z}_{j}^{(2)} 
\end{array} \right)^T \boldsymbol{\beta}^*   \right)
\right] .
\end{aligned} 
}
\label{G5_term}
\end{equation}
The five terms $G_{1,n}$, $G_{2,n}$, $G_{3,n}$, $G_{4,n}$, and $G_{5,n}$, will be discussed separately. We first show the supremum over $\boldsymbol{\beta}$ of each term converges to 0 in probability, then the triangle inequality implies equation (\ref{first_condition}).

For $G_{1,n}$ from equation (\ref{G1_term}), we show that $\sup_{\boldsymbol{\beta}}\|G_{1,n}\| \xrightarrow{P} 0$ by using the concept of Donsker classes from empirical process theory. Let $O^{(1)}_{ij} = (Y^{(1)}_{ij}, \boldsymbol{a}^{(1)}_{j}, \boldsymbol{z}^{(1)}_{j})$ be the observed data for patient $i$ from center $j$ in stage 1 and let
\begin{equation*}
\footnotesize{
    \Psi_{\boldsymbol{\beta}}(O^{(1)}_{ij}) = exp \left( \left( \begin{array}{c}
1 \\
\boldsymbol{a}_{j}^{(1)}\\
\boldsymbol{z}_{j}^{(1)} 
\end{array} \right)^T \boldsymbol{\beta}   \right) \left( \begin{array}{c}
1 \\
\boldsymbol{a}_{j}^{(1)}\\
\boldsymbol{z}_{j}^{(1)} 
\end{array} \right) \left( Y_{ij}^{(1)} - exp\left( \left( \begin{array}{c}
1 \\
\boldsymbol{a}_{j}^{(1)}\\
\boldsymbol{z}_{j}^{(1)} 
\end{array} \right)^T \boldsymbol{\beta}^*  \right)   \right).
}
\end{equation*}
We show that the class of functions $\mathcal{F} = \left( \Psi_{\boldsymbol{\beta}}: \boldsymbol{\beta} \in \boldsymbol{\mathcal{B}} \right)$ is Donsker
by showing that 
\begin{equation}
\left\|\Psi_{\boldsymbol{\beta}_{1}}(O^{(1)}_{ij})-\Psi_{\boldsymbol{\beta}_{2}}(O^{(1)}_{ij})\right\| \leq m(O^{(1)}_{ij})\left\|\boldsymbol{\beta}_{1}-\boldsymbol{\beta}_{2}\right\| \quad \text { for every } \boldsymbol{\beta}_{1}, \boldsymbol{\beta}_{2},
\label{donsker_condition}
\end{equation}
for some measurable function $m$ with $E\left(m^{2}\right)<\infty$; that $\mathcal{F}$ is Donsker then follows from Example 19.7 in \cite{van2000asymptotic}. By applying the mean value theorem to each row of $\Psi_{\boldsymbol{\beta}_{1}}(O^{(1)}_{ij})-\Psi_{\boldsymbol{\beta}_{2}}(O^{(1)}_{ij})$ separately, we find that
\begin{equation*}
\begin{aligned}
\Psi_{\boldsymbol{\beta}_{1}}(O^{(1)}_{ij})-\Psi_{\boldsymbol{\beta}_{2}}(O^{(1)}_{ij})& = \left(\left.\frac{\partial}{\partial \boldsymbol{\beta}}\right|_{\tilde{\boldsymbol{\beta}}} \Psi_{\boldsymbol{\beta}}(O^{(1)}_{ij})\right) (\boldsymbol{\beta}_{1} - \boldsymbol{\beta}_{2}),
\end{aligned}
\end{equation*}

\noindent where for each row of $\left.\frac{\partial}{\partial \boldsymbol{\beta}}\right|_{\tilde{\boldsymbol{\beta}}} \Psi_{\boldsymbol{\beta}}(O^{(1)}_{ij})$, $\tilde{\boldsymbol{\beta}}$ may take a different value between $\boldsymbol{\beta}_{1}$ and $\boldsymbol{\beta}_{2}$. 
By Assumption \ref{glm_model_assumption} and \ref{glm_unif_bound_assu} from the main text, $g()$ and $\Psi_{\boldsymbol{\beta}}(O^{(1)}_{ij})$ are continuously differentiable and both $\boldsymbol{\beta}$ and $O^{(1)}_{ij}$ take values in a compact space, each element in $\left.\frac{\partial}{\partial \boldsymbol{\beta}}\right|_{\tilde{\boldsymbol{\beta}}} \Psi_{\boldsymbol{\beta}}(O^{(1)}_{ij})$ is bounded. It follows that 
\begin{equation*}
\begin{aligned}
\left\|\Psi_{\boldsymbol{\beta}_{1}}(O^{(1)}_{ij})-\Psi_{\boldsymbol{\beta}_{2}}(O^{(1)}_{ij})\right\| & = \left\| \left(\left.\frac{\partial}{\partial \boldsymbol{\beta}}\right|_{\tilde{\boldsymbol{\beta}}} \Psi_{\boldsymbol{\beta}}(O^{(1)}_{ij}) \right) (\boldsymbol{\beta}_{1} - \boldsymbol{\beta}_{2})  \right\| \\
&\leq C_1 \left\| \boldsymbol{\beta}_{1} - \boldsymbol{\beta}_{2} \right\|
\end{aligned}
\end{equation*}
where $C_1$ is a constant. Equation (\ref{donsker_condition}) follows, so the class of functions $\mathcal{F} = \left( \Psi_{\boldsymbol{\beta}}: \boldsymbol{\beta} \in \boldsymbol{\mathcal{B}} \right)$  is Donsker. By Theorem 19.4 in \cite{van2000asymptotic}, $\mathcal{F}$ is also Glivenko-Cantelli. Since for each $\boldsymbol{\beta}$, $E\left( \Psi_{\boldsymbol{\beta}}\left(O_{ij}^{(1)}\right)\right)=0$, then
\begin{equation*}
\sup_{\boldsymbol{\beta}} \left\| \frac{1}{n_{j}^{(1)}} \sum_{i=1}^{n_{j}^{(1)}}  \Psi_{\boldsymbol{\beta}}\left(O_{ij}^{(1)}\right) -  E\left( \Psi_{\boldsymbol{\beta}}\left(O_{ij}^{(1)}\right)\right) \right\| \xrightarrow{P} 0. 
\end{equation*}
Notice that the $O_{ij}^{(1)}$ are iid for each $j$ separately, $\frac{n_j^{(1)}}{n} < 1$, and $j$ is finite, so
\begin{equation*}
    \sup _{\boldsymbol{\beta}}\left\|\frac{n_{j}^{(1)}}{n}  \frac{1}{n_{j}^{(1)}} \sum_{i=1}^{n_{j}^{(1)}} \Psi_{\boldsymbol{\beta}}\left(O_{i j}^{(1)}\right)\right\|
    \leq
    \sup _{\boldsymbol{\beta}}\left\|\frac{1}{n_{j}^{(1)}} \sum_{i=1}^{n_{j}^{(1)}} \Psi_{\boldsymbol{\beta}}\left(O_{i j}^{(1)}\right)\right\|.
\end{equation*}
We conclude that $\sup_{\boldsymbol{\beta}} \left\|G_{1,n}^{(g)}\right\| \xrightarrow{P} 0.$

For the term $G_{2,n}$ from equation (\ref{G2_term}), let $Y_{ij}^{(2)}$ be the (counterfactual) outcomes under $\boldsymbol{a}_{j}^{(2)}$. 
and let $\epsilon_{ij}^{(2)}$ be the corresponding errors that patient $i$ in center $j$ would have experienced under intervention $\boldsymbol{a}_{j}^{(2)}$. 
We derive
\begin{equation*}
\footnotesize{
\begin{aligned}
G_{2,n} 
&= \frac{1}{n} \sum_{j=1}^{J^{(2)}}  \sum_{i=1}^{n_{j}^{(2)}} \left[ exp \left( \left( \begin{array}{c}
1 \\
\boldsymbol{A}_{j}^{(2,n^{(1)})} \\
\boldsymbol{z}_{j}^{(2)}  
\end{array} \right)^T \boldsymbol{\beta}   \right)\left( \begin{array}{c}
1 \\
\boldsymbol{A}_{j}^{(2,n^{(1)})} \\
\boldsymbol{z}_{j}^{(2)}  
\end{array} \right)\right.\\
&\qquad\qquad\qquad\qquad\qquad\qquad\qquad\qquad\qquad\left.\left( Y_{ij}^{(2, n^{(1)})}-exp \left( \left( \begin{array}{c}
1 \\
\boldsymbol{A}_{j}^{(2,n^{(1)})} \\
\boldsymbol{z}_{j}^{(2)}  
\end{array} \right)^T \boldsymbol{\beta}^*  \right) \right) \right]\\ 
&= \frac{1}{n} \sum_{j=1}^{J^{(2)}}  \sum_{i=1}^{n_{j}^{(2)}} \left[ exp \left( \left( \begin{array}{c}
1 \\
\boldsymbol{A}_{j}^{(2,n^{(1)})} \\
\boldsymbol{z}_{j}^{(2)}  
\end{array} \right)^T \boldsymbol{\beta}   \right)\left( \begin{array}{c}
1 \\
\boldsymbol{A}_{j}^{(2,n^{(1)})} \\
\boldsymbol{z}_{j}^{(2)}  
\end{array} \right) \epsilon_{ij}^{(2, n^{(1)})} \right] \\
\end{aligned}
}
\end{equation*}
By Assumption \ref{ind_error} from the main text, replacing $\epsilon_{ij}^{(2, n^{(1)})} $ by the new error terms $\epsilon_{ij}^{(2)} $ does not change the distribution of $G_{2,n}$, so it suffices to show that for 
\begin{equation*}
\footnotesize{
\begin{aligned}
\widetilde{G}_{2,n} &= \frac{1}{n} \sum_{j=1}^{J^{(2)}}  \sum_{i=1}^{n_{j}^{(2)}} \left[ exp \left( \left( \begin{array}{c}
1 \\
\boldsymbol{A}_{j}^{(2,n_1)} \\
\boldsymbol{z}_{j}^{(2)}  
\end{array} \right)^T \boldsymbol{\beta}   \right)\left( \begin{array}{c}
1 \\
\boldsymbol{A}_{j}^{(2,n_1)} \\
\boldsymbol{z}_{j}^{(2)}  
\end{array} \right) \epsilon_{ij}^{(2)} \right],
\end{aligned} 
}
\end{equation*}
$\sup_{\boldsymbol{\beta}}\|\widetilde{G}_{2,n}\| \xrightarrow{P} 0$. 
\begin{equation*}
\footnotesize{
\begin{aligned}
\widetilde{G}_{2,n} &= \frac{1}{n} \sum_{j=1}^{J^{(2)}}  \sum_{i=1}^{n_{j}^{(2)}} \left[ exp \left( \left( \begin{array}{c}
1 \\
\boldsymbol{A}_{j}^{(2,n_1)} \\
\boldsymbol{z}_{j}^{(2)}  
\end{array} \right)^T \boldsymbol{\beta}   \right)\left( \begin{array}{c}
1 \\
\boldsymbol{A}_{j}^{(2,n_1)} \\
\boldsymbol{z}_{j}^{(2)}  
\end{array} \right) \epsilon_{ij}^{(2)} \pm exp \left( \left( \begin{array}{c}
1 \\
\boldsymbol{a}_{j}^{(2)} \\
\boldsymbol{z}_{j}^{(2)}  
\end{array} \right)^T \boldsymbol{\beta}   \right)\left( \begin{array}{c}
1 \\
\boldsymbol{a}_{j}^{(2)} \\
\boldsymbol{z}_{j}^{(2)}  
\end{array} \right) \epsilon_{ij}^{(2)} \right] \\
&= \widetilde{G}_{2\_1,n} + \widetilde{G}_{2\_2,n},
\end{aligned}
}
\end{equation*}
where
\begin{equation*}
\footnotesize{
    \widetilde{G}_{2\_1,n} = \frac{1}{n} \sum_{j=1}^{J^{(2)}}  \sum_{i=1}^{n_{j}^{(2)}} \left[ exp \left( \left( \begin{array}{c}
1 \\
\boldsymbol{a}_{j}^{(2)} \\
\boldsymbol{z}_{j}^{(2)}  
\end{array} \right)^T \boldsymbol{\beta}   \right)\left( \begin{array}{c}
1 \\
\boldsymbol{a}_{j}^{(2)} \\
\boldsymbol{z}_{j}^{(2)}  
\end{array} \right) \epsilon_{ij}^{(2)} \right],
}
\end{equation*}
\begin{equation}
\footnotesize{
\begin{aligned}
&\widetilde{G}_{2\_2,n} = \frac{1}{n} \sum_{j=1}^{J^{(2)}}  \sum_{i=1}^{n_{j}^{(2)}} \left[  \left( exp \left( \left( \begin{array}{c} 
1 \\
\boldsymbol{A}_{j}^{(2,n^{(1)})} \\
\boldsymbol{z}_{j}^{(2)}  
\end{array} \right)^T \boldsymbol{\beta}   \right) \left( \begin{array}{c}
1 \\
\boldsymbol{A}_{j}^{(2,n^{(1)})} \\
\boldsymbol{z}_{j}^{(2)}  
\end{array} \right)\right.\right.\\
&\qquad\qquad\qquad\qquad\qquad\qquad\qquad\qquad\qquad\qquad\left.\left.- exp \left( \left( \begin{array}{c}
1 \\
\boldsymbol{a}_{j}^{(2)} \\
\boldsymbol{z}_{j}^{(2)}  
\end{array} \right)^T \boldsymbol{\beta} \right) \left( \begin{array}{c}
1 \\
\boldsymbol{a}_{j}^{(2)} \\
\boldsymbol{z}_{j}^{(2)}  
\end{array} \right) \right) \epsilon_{ij}^{(2)}  \right].
\label{g22starloglink}
\end{aligned}
}
\end{equation}
We show that both $\sup_{\boldsymbol{\beta}}\|\widetilde{G}_{2\_1,n}\| \xrightarrow{P} 0$ and $\sup_{\boldsymbol{\beta}}\|\widetilde{G}_{2\_2,n}\| \xrightarrow{P} 0$. By the triangle inequality, then $\sup_{\boldsymbol{\beta}}\|\widetilde{G}_{2,n}\| \xrightarrow{P} 0$. 

Let $O^{(2)}_{ij} = (Y^{(2)}_{ij}, \boldsymbol{a}^{(2)}_{j}, \boldsymbol{z}^{(2)}_{j})$ be the counterfactual data for patient $i$ from center $j$ in stage 2 under $\boldsymbol{a}^{(2)}_{j}$ and $\boldsymbol{z}^{(2)}_{j}$ and let
\begin{equation*}
\footnotesize{
    \Psi_{\boldsymbol{\beta}}^{(2)}(O^{(2)}_{ij}) = \left( exp \left( \left( \begin{array}{c}
1 \\
\boldsymbol{a}_{j}^{(2)} \\
\boldsymbol{z}_{j}^{(2)}  
\end{array} \right)^T \boldsymbol{\beta}   \right) \left( \begin{array}{c}
1 \\
\boldsymbol{a}_{j}^{(2)} \\
\boldsymbol{z}_{j}^{(2)}  
\end{array} \right) \right) \epsilon_{ij}^{(2)}.
}
\end{equation*}
Following the same argument as for $G_{1,n}$, the class of functions $\mathcal{F}_2 = \left( \Psi_{\boldsymbol{\beta}}^{(2)}: \boldsymbol{\beta} \in \boldsymbol{\mathcal{B}} \right)$ is a Donsker class and $\sup_{\boldsymbol{\beta}}\|{G}_{2\_1,n}\| \xrightarrow{P} 0$.

For the term ${G}_{2\_2,n}$ from equation (\ref{g22starloglink}), consider the most complicated entry with $\boldsymbol{A}_{j}^{(2,n^{(1)})}$. By the Mean Value Theorem, we derive
\begin{equation*}
\begin{aligned}
&\frac{1}{n} \sum_{j=1}^{J^{(2)}}  \sum_{i=1}^{n_{j}^{(2)}} \left\{ \left[ e^{(\boldsymbol{\beta}_0+\boldsymbol{\beta}_1 \boldsymbol{A}_{j}^{(2,n^{(1)})}+\boldsymbol{\beta}_2 \boldsymbol{z}_{j}^{(2)}  )} \boldsymbol{A}_{j}^{(2,n^{(1)})} - e^{(\boldsymbol{\beta}_0+\boldsymbol{\beta}_1 \boldsymbol{a}_{j}^{(2)}+\boldsymbol{\beta}_2 \boldsymbol{z}_{j}^{(2)}  )} \boldsymbol{a}_{j}^{(2)}\right]  \epsilon_{ij}^{(2)} \right\}\\
&=\frac{1}{n} \sum_{j=1}^{J^{(2)}}  \sum_{i=1}^{n_{j}^{(2)}} \left\{ \left[ \left.\frac{\partial}{\partial a}\right|_{\boldsymbol{\Tilde{a}}_j (\boldsymbol{\beta}) } e^{(\boldsymbol{\beta}_0+\boldsymbol{\beta}_1 a+\boldsymbol{\beta}_2 \boldsymbol{z}_{j}^{(2)}  )} \: a \:  \right] \left(\boldsymbol{A}_{j}^{(2,n^{(1)})}-\boldsymbol{a}_j^{(2)}\right) \epsilon_{ij}^{(2)} \right\}\\ 
&= \frac{1}{n} \sum_{j=1}^{J^{(2)}}  \sum_{i=1}^{n_{j}^{(2)}} \left\{ \left(\left( 1+\boldsymbol{\beta}_1 \boldsymbol{\Tilde{a}}_j (\boldsymbol{\beta}) \right) e^{(\boldsymbol{\beta}_0+\boldsymbol{\beta}_1 \boldsymbol{\Tilde{a}}_j (\boldsymbol{\beta})+\boldsymbol{\beta}_2 \boldsymbol{z}_{j}^{(2)}  )}    \right) \left(\boldsymbol{A}_{j}^{(2,n^{(1)})}-\boldsymbol{a}_j^{(2)}\right)   \epsilon_{ij}^{(2)}  \right\},
\end{aligned}
\end{equation*}
By Assumption \ref{glm_unif_bound_assu} from the main text and Lemma \ref{sup_glm_condition} from the Appendix, it follows that 
\begin{equation*}
\begin{aligned}
    &\sup_{\boldsymbol{\beta}}\left\| \frac{1}{n} \sum_{j=1}^{J^{(2)}}  \sum_{i=1}^{n_{j}^{(2)}} \left[ e^{(\boldsymbol{\beta}_0+\boldsymbol{\beta}_1 \boldsymbol{A}_{j}^{(2,n^{(1)})}+\boldsymbol{\beta}_2 \boldsymbol{z}_{j}^{(2)}  )} {\boldsymbol{A}_{j}^{(2,n^{(1)})}} - e^{(\boldsymbol{\beta}_0+\boldsymbol{\beta}_1 \boldsymbol{a}_{j}^{(2)}+\boldsymbol{\beta}_2 \boldsymbol{z}_{j}^{(2)}  )} \boldsymbol{a}_{j}^{(2)}\right]  \epsilon_{ij}^{(2)} \right\|  \\
    &\leq  \sup_{\boldsymbol{\beta}}\left\|  \left( 1+\boldsymbol{\beta}_1 \boldsymbol{\Tilde{a}}_j (\boldsymbol{\beta}) \right) e^{(\boldsymbol{\beta}_0+\boldsymbol{\beta}_1 \boldsymbol{\Tilde{a}}_j (\boldsymbol{\beta})+\boldsymbol{\beta}_2 \boldsymbol{z}_{j}^{(2)}  )}    \right\| \max_j {\left\|  \boldsymbol{A}_{j}^{(2,n^{(1)})}-\boldsymbol{a}_j^{(2)}\right\|} \sup_{ij} \left\| \epsilon_{ij}^{(2)} \right\| \\
    &\xrightarrow{P} 0.
\end{aligned}
\end{equation*}
This argument can be applied to other entries of $\widetilde{G}_{2\_2,n}$, so $\sup_{\boldsymbol{\beta}}\|\widetilde{G}_{2\_2,n}\| \xrightarrow{P} 0$. Hence, $\sup _{\boldsymbol{\beta}}\left\|G_{2,n}\right\| \stackrel{P}{\rightarrow} 0.$ 

Consider $G_{3,n}$ from equation (\ref{G3_term}). By the Mean Value Theorem, the supremum over $\boldsymbol{\beta}$ of the first entry is  
\begin{equation*}
\begin{aligned}
    & \sup_{\boldsymbol{\beta}}\Biggl\| \frac{1}{n} \sum_{j=1}^{J^{(2)}}  \sum_{i=1}^{n_{j}^{(2)}} \left[ e^{ \left( \boldsymbol{\beta}_0 + \boldsymbol{\beta}_1 \boldsymbol{A}_{j}^{(2,n^{(1)})} + \boldsymbol{\beta}_2 \boldsymbol{z}_j^{(2)} \right)} e^{ \left( \boldsymbol{\beta}_0^* + \boldsymbol{\beta}_1^* \boldsymbol{A}_{j}^{(2,n^{(1)})} + \boldsymbol{\beta}_2^* \boldsymbol{z}_j^{(2)} \right)} \right. \\
    &\qquad \qquad \qquad \qquad \qquad \qquad \qquad \qquad  \left. - e^{ \left( \boldsymbol{\beta}_0 + \boldsymbol{\beta}_1 \boldsymbol{a}_j^{(2)} + \boldsymbol{\beta}_2 \boldsymbol{z}_j^{(2)} \right)} e^{ \left( \boldsymbol{\beta}_0^* + \boldsymbol{\beta}_1^* \boldsymbol{a}_j^{(2)} + \boldsymbol{\beta}_2^* \boldsymbol{z}_j^{(2)} \right)} \right] \Biggr\|\\
    &= \sup_{\boldsymbol{\beta}}\Biggl\| \frac{1}{n} \sum_{j=1}^{J^{(2)}}  \sum_{i=1}^{n_{j}^{(2)}} \left[ e^{\left( (\boldsymbol{\beta}_0+\boldsymbol{\beta}_0^*)+(\boldsymbol{\beta}_1+\boldsymbol{\beta}_1^*) \boldsymbol{A}_{j}^{(2,n^{(1)})} + (\boldsymbol{\beta}_2+\boldsymbol{\beta}_2^*)\boldsymbol{z}_j^{(2)}  \right)} \right. \\
    &\qquad \qquad \qquad \qquad \qquad \qquad \qquad \qquad \left. - e^{\left( (\boldsymbol{\beta}_0+\boldsymbol{\beta}_0^*)+(\boldsymbol{\beta}_1+\boldsymbol{\beta}_1^*) \boldsymbol{a}_j^{(2)} + (\boldsymbol{\beta}_2+\boldsymbol{\beta}_2^*)\boldsymbol{z}_j^{(2)}  \right)} \right] \Biggr\| \\ 
    &\leq \sup_{\boldsymbol{\beta}} \left\| \left.\frac{\partial}{\partial a}\right|_{\boldsymbol{\Tilde{a}}_j (\boldsymbol{\beta})} exp\left(  (\boldsymbol{\beta}_0+\boldsymbol{\beta}_0^*)+(\boldsymbol{\beta}_1+\boldsymbol{\beta}_1^*) a + (\boldsymbol{\beta}_2+\boldsymbol{\beta}_2^*)\boldsymbol{z}_j^{(2)}  \right) \right\|\\ 
    &\qquad\qquad\qquad\qquad\qquad\qquad\qquad\qquad\qquad\qquad\qquad\qquad\max_j \left\| \boldsymbol{A}_{j}^{(2,n^{(1)})}-\boldsymbol{a}_j^{(2)} \right\| \\
    & \xrightarrow[]{P} 0.
\end{aligned}
\end{equation*}
The convergence to 0 follows from Assumption \ref{glm_unif_bound_assu} of the main text and Lemma \ref{sup_glm_condition} of the Appendix. 
The same argument can be applied to other entries of $G_{3,n}$ from equation  (\ref{G3_term}), so $\sup _{\boldsymbol{\beta}}\left\|G_{3,n}\right\| \stackrel{P}{\rightarrow} 0$. By applying the same argument to each entry of $G_{4,n}$ from equation (\ref{G4_term}), we conclude that also $\sup _{\boldsymbol{\beta}}\left\|G_{4,n}\right\| \stackrel{P}{\rightarrow} 0$. 

Consider the first term of $G_{5,n}$ (equation (\ref{G5_term})), by Assumption \ref{glm_unif_bound_assu} from the main text and $\alpha_{j k}=\lim _{n \rightarrow \infty} n_j^{(k)} / n$, for the finitely many $j$, it follows that 
\begin{equation*}
\footnotesize{
\begin{aligned}
&\sup _{\boldsymbol{\beta}} \left\| \sum_{j=1}^{J^{(1)}} \left[ \left(\alpha_{j1}-\frac{n_j^{(1)}}{n} \right)
exp \left( \left( \begin{array}{c}
1 \\
\boldsymbol{a}_{j}^{(1)}\\
\boldsymbol{z}_{j}^{(1)} 
\end{array} \right)^T \boldsymbol{\beta}   \right) \left( \begin{array}{c}
1 \\
\boldsymbol{a}_{j}^{(1)}\\
\boldsymbol{z}_{j}^{(1)} 
\end{array} \right) exp \left( \left( \begin{array}{c}
1 \\
\boldsymbol{a}_{j}^{(1)}\\
\boldsymbol{z}_{j}^{(1)} 
\end{array} \right)^T \boldsymbol{\beta}   \right)
\right] \right\| \\
&\leq \sup _{\boldsymbol{\beta}} \max_j \left\|
\left(\alpha_{j1}-\frac{n_j^{(1)}}{n} \right)
exp \left( \left( \begin{array}{c}
1 \\
\boldsymbol{a}_{j}^{(1)}\\
\boldsymbol{z}_{j}^{(1)} 
\end{array} \right)^T \boldsymbol{\beta}   \right) \left( \begin{array}{c}
1 \\
\boldsymbol{a}_{j}^{(1)}\\
\boldsymbol{z}_{j}^{(1)} 
\end{array} \right) exp \left( \left( \begin{array}{c}
1 \\
\boldsymbol{a}_{j}^{(1)}\\
\boldsymbol{z}_{j}^{(1)} 
\end{array} \right)^T \boldsymbol{\beta}   \right)
\right\| \\
&\xrightarrow{P} 0.
\end{aligned}
}
\end{equation*}
By applying the same argument to the other terms of $G_{5,n}$ and the triangle inequality,
$\sup _{\boldsymbol{\beta}}\left\|G_{5,n}\right\| \stackrel{P}{\rightarrow} 0$.

Thus,
\begin{equation*}
\begin{aligned}
    &\sup_{\boldsymbol{\beta}}\left\|\boldsymbol{U}(\boldsymbol{\beta}) - \boldsymbol{u}(\boldsymbol{\beta})\right\| \\
    &\leq \sup_{\boldsymbol{\beta}}\left\|G_{1,n} \right\|+\sup_{\boldsymbol{\beta}}\left\|G_{2,n} \right\|+ \sup_{\boldsymbol{\beta}}\left\|G_{3,n} \right\| + \sup_{\boldsymbol{\beta}}\left\|G_{4,n} \right\| + 
    \sup_{\boldsymbol{\beta}}\left\|G_{5,n} \right\|\\
    & \xrightarrow{P} 0.
\end{aligned}
\end{equation*}

The second condition in Theorem 5.9 of \citet{van2000asymptotic} is 
\begin{equation*}
\inf _{\boldsymbol{\beta}:\left\|\boldsymbol{\beta}-\boldsymbol{\beta}^{\star}\right\|>0}\|\boldsymbol{u}(\boldsymbol{\beta})\|>0=\left\|\boldsymbol{u}\left(\boldsymbol{\beta}^{\star}\right)\right\|.
\end{equation*}
Equation (\ref{little_u_1}) implies that $\left\|\boldsymbol{u}\left(\boldsymbol{\beta}^{\star}\right)\right\| = 0$. Furthermore, $\boldsymbol{u}\left(\boldsymbol{\beta}\right)$ is the same as for a fixed two-stage design with $a_1^{(1)},\cdots, a_{J^{(1)}}^{(1)}, a_1^{(2)},\cdots, a_{J^{(2)}}^{(2)}$ as interventions decided on before the trial, 
so regular GEE theory applies here. In order for LAGO to work properly, we need variations in the intervention components to identify the treatment effect parameter. The uniqueness of $\boldsymbol{\beta}^*$ as a maximizer or zero has been studied by various authors, see e.g. Chapter 2.2 of \cite{fahrmeir2013multivariate}. Thus, provide there is enough variation in the intervention, the second condition in Theorem 5.9 of \citet{van2000asymptotic} is often also satisfied and we conclude that $\hat{\boldsymbol{\beta}}$ is consistent.

\subsection{Proof of Theorem 2 for LAGO GLM with log link function: asymptotic normality of \texorpdfstring{$\hat{\boldsymbol{\beta}}$}.} \label{an_appendix_log_link}
Under Assumptions \ref{glm_model_assumption} \textendash{} \ref{ind_error} from the main text and Lemma \ref{sup_glm_condition} from the Appendix, we show that 
\begin{equation}
    \sqrt{n}\left(\hat{\boldsymbol{\beta}}-\boldsymbol{\beta}^{*}\right) \xrightarrow{D} N(0, J\left(\boldsymbol{\beta}^{*}\right)^{-1} V\left(\boldsymbol{\beta}^*\right) J\left(\boldsymbol{\beta}^{*}\right)^{-1}),
\label{aympstotic_normality}
\end{equation}
where 
the explicit forms of $J\left(\boldsymbol{\beta}^{*}\right)$ and $V\left(\boldsymbol{\beta}^*\right)$ are
\begin{equation}
\footnotesize{
\begin{aligned}
J\left(\boldsymbol{\beta}^{*}\right) &= \sum_{j=1}^{J^{(1)}} \alpha_{j1}    \left\{ \; exp \left( 2 \left( \begin{array}{c}
1 \\
\boldsymbol{a}_{j}^{(1)}\\
\boldsymbol{z}_{j}^{(1)} 
\end{array} \right)^T \boldsymbol{\beta}^*   \right)   \left( \begin{array}{c}
1 \\
\boldsymbol{a}_{j}^{(1)}\\
\boldsymbol{z}_{j}^{(1)} 
\end{array} \right) \left( \begin{array}{c}
1 \\
\boldsymbol{a}_{j}^{(1)}\\
\boldsymbol{z}_{j}^{(1)} 
\end{array} \right)^T  \right\}\\ 
&\qquad \qquad \qquad \qquad \qquad  + \sum_{j=1}^{J^{(2)}} \alpha_{j2} \left\{ \;  exp \left( 2 \left( \begin{array}{c}
1 \\
\boldsymbol{a}_{j}^{(2)} \\
\boldsymbol{z}_{j}^{(2)}  
\end{array} \right)^T \boldsymbol{\beta}^*   \right)  \left( \begin{array}{c}
1 \\
\boldsymbol{a}_{j}^{(2)} \\
\boldsymbol{z}_{j}^{(2)}  
\end{array} \right) \left( \begin{array}{c}
1 \\
\boldsymbol{a}_{j}^{(2)} \\ 
\boldsymbol{z}_{j}^{(2)}  
\end{array} \right)^T  \right\},
\end{aligned}
}
\label{ddbeta_u_beta_tilde_converge}
\end{equation}
\begin{equation}
\footnotesize{
\begin{aligned}
V\left(\boldsymbol{\beta}^*\right) &= \sum_{j=1}^{J^{(1)}} \alpha_{j1} \; \left\{ exp \left(2 \left( \begin{array}{c}
1 \\
\boldsymbol{a}_{j}^{(1)}\\
\boldsymbol{z}_{j}^{(1)} 
\end{array} \right)^T \boldsymbol{\beta}^*   \right) \left( \begin{array}{c}
1 \\
\boldsymbol{a}_{j}^{(1)}\\
\boldsymbol{z}_{j}^{(1)} 
\end{array} \right)\left( \begin{array}{c}
1 \\
\boldsymbol{a}_{j}^{(1)}\\
\boldsymbol{z}_{j}^{(1)} 
\end{array} \right)^T \right\}
\sigma^2(\boldsymbol{z}_{j}^{(1)})\\
&\qquad \qquad \qquad \qquad \qquad+ \sum_{j=1}^{J^{(2)}} \alpha_{j2} \; \left\{ exp \left(2 \left( \begin{array}{c}
1 \\
\boldsymbol{a}_{j}^{(2)}\\
\boldsymbol{z}_{j}^{(2)} 
\end{array} \right)^T \boldsymbol{\beta}^*   \right) \left( \begin{array}{c}
1 \\
\boldsymbol{a}_{j}^{(2)}\\
\boldsymbol{z}_{j}^{(2)} 
\end{array} \right)\left( \begin{array}{c}
1 \\
\boldsymbol{a}_{j}^{(2)}\\
\boldsymbol{z}_{j}^{(2)} 
\end{array} \right)^T \right\}
\sigma^2(\boldsymbol{z}_{j}^{(2)}).
\end{aligned}
}
\label{nominator_variance}
\end{equation}
The corresponding estimators are
\begin{equation*}
\footnotesize{
\begin{aligned}
J\left(\hat{\boldsymbol{\beta}}\right) &= \sum_{j=1}^{J^{(1)}} \frac{n_{j}^{(1)}}{n}    \left\{ \; exp \left( 2 \left( \begin{array}{c}
1 \\
\boldsymbol{a}_{j}^{(1)}\\
\boldsymbol{z}_{j}^{(1)} 
\end{array} \right)^T \hat{\boldsymbol{\beta}}   \right)   \left( \begin{array}{c}
1 \\
\boldsymbol{a}_{j}^{(1)}\\
\boldsymbol{z}_{j}^{(1)} 
\end{array} \right) \left( \begin{array}{c}
1 \\
\boldsymbol{a}_{j}^{(1)}\\
\boldsymbol{z}_{j}^{(1)} 
\end{array} \right)^T  \right\}\\ 
&\qquad \qquad \qquad \qquad   + \sum_{j=1}^{J^{(2)}} \frac{n_{j}^{(2)}}{n} \left\{ \;  exp \left( 2 \left( \begin{array}{c}
1 \\
\boldsymbol{A}_{j}^{(2,n^{(1)})} \\
\boldsymbol{z}_{j}^{(2)}  
\end{array} \right)^T \hat{\boldsymbol{\beta}}   \right)  \left( \begin{array}{c}
1 \\
\boldsymbol{A}_{j}^{(2,n^{(1)})} \\
\boldsymbol{z}_{j}^{(2)}  
\end{array} \right) \left( \begin{array}{c}
1 \\
\boldsymbol{A}_{j}^{(2,n^{(1)})} \\ 
\boldsymbol{z}_{j}^{(2)}  
\end{array} \right)^T  \right\},
\end{aligned}
}
\end{equation*}
\begin{equation*}
\footnotesize{
\begin{aligned}
&V\left(\hat{\boldsymbol{\beta}}\right) \\
&= \frac{1}{n} \sum_{j=1}^{J^{(1)}} \; \left\{ exp \left(2 \left( \begin{array}{c}
1 \\
\boldsymbol{a}_{j}^{(1)}\\
\boldsymbol{z}_{j}^{(1)} 
\end{array} \right)^T \hat{\boldsymbol{\beta}}   \right) \left( \begin{array}{c}
1 \\
\boldsymbol{a}_{j}^{(1)}\\
\boldsymbol{z}_{j}^{(1)} 
\end{array} \right)\left( \begin{array}{c}
1 \\
\boldsymbol{a}_{j}^{(1)}\\
\boldsymbol{z}_{j}^{(1)} 
\end{array} \right)^T 
\sum_{i=1}^{n_j^{(1)}}  \left( Y_{ij}^{(1)} -exp\left( \left( \begin{array}{c}
1 \\
\boldsymbol{a}_{j}^{(1)} \\
\boldsymbol{z}_{j}^{(1)}
\end{array} \right)^T \hat{\boldsymbol{\beta}}  \right)   \right)^2
\right\}\\
&+ \frac{1}{n} \sum_{j=1}^{J^{(2)}}  \; \left\{ exp \left(2 \left( \begin{array}{c}
1 \\
\boldsymbol{A}_{j}^{(2,n^{(1)})}\\
\boldsymbol{z}_{j}^{(2)} 
\end{array} \right)^T \hat{\boldsymbol{\beta}}   \right) \left( \begin{array}{c}
1 \\
\boldsymbol{A}_{j}^{(2,n^{(1)})}\\
\boldsymbol{z}_{j}^{(2)} 
\end{array} \right)\left( \begin{array}{c}
1 \\
\boldsymbol{A}_{j}^{(2,n^{(1)})}\\
\boldsymbol{z}_{j}^{(2)} 
\end{array} \right)^T \right.\\
&\qquad\qquad\qquad\qquad\qquad\qquad\qquad\qquad\qquad\qquad
\left.\sum_{i=1}^{n_j^{(2)}}  \left( Y_{ij}^{(2,n^{(1)})} -exp\left( \left( \begin{array}{c}
1 \\
\boldsymbol{A}_{j}^{(2,n^{(1)})} \\
\boldsymbol{z}_{j}^{(2)}
\end{array} \right)^T \hat{\boldsymbol{\beta}}  \right)   \right) ^2
\right\}.
\end{aligned}
}
\end{equation*}

By applying the mean value theorem to each component of $\boldsymbol{U}(\boldsymbol{\beta})$ from equation (\ref{big_U_1}), we get
\begin{equation*}
    0=\boldsymbol{U}(\hat{\boldsymbol{\beta}}) = \boldsymbol{U}(\boldsymbol{\beta}^*) + \left(\left.\frac{\partial}{\partial \boldsymbol{\beta}}\right|_{\Tilde{\boldsymbol{\beta}}} \boldsymbol{U}({\boldsymbol{\beta}})\right) (\hat{\boldsymbol{\beta}} - {\boldsymbol{\beta}^*})^T ,
\end{equation*}
where for each row of $\left.\frac{\partial}{\partial \boldsymbol{\beta}}\right|_{\Tilde{\boldsymbol{\beta}}} \boldsymbol{U}({\boldsymbol{\beta}})$, $\Tilde{\boldsymbol{\beta}}$ takes a possibly row-dependent value between $\hat{\boldsymbol{\beta}}$ and $\boldsymbol{\beta}^*$.
It follows that 
\begin{equation}
\sqrt{n}\left(\hat{\boldsymbol{\beta}}-\boldsymbol{\beta}^{*}\right) = -\sqrt{n}\left(\left.\frac{\partial}{\partial \boldsymbol{\beta}}\right|_{\tilde{\boldsymbol{\beta}}} \boldsymbol{U}\left({\boldsymbol{\beta}}\right)\right)^{-1} \boldsymbol{U}\left(\boldsymbol{\beta}^{*}\right).
\label{AN_break_down}
\end{equation}
We first show that $\left(-\left.\frac{\partial}{\partial \boldsymbol{\beta}}\right|_{\tilde{\boldsymbol{\beta}}} \boldsymbol{U}\left({\boldsymbol{\beta}}\right)\right)$ converges in probability to $J\left(\boldsymbol{\beta}^{*}\right)$, then we show that $\sqrt{n} \:\: \boldsymbol{U}\left(\boldsymbol{\beta}^{*}\right)$ converges to a normal distribution with mean 0 and variance $V\left(\boldsymbol{\beta}^*\right)$. Equation (\ref{aympstotic_normality}) then follows from Slutsky's theorem.

First, combining equation (\ref{glm_EE}) from the main text and equation (\ref{big_U_1}), we derive
\begin{equation*}
\footnotesize{
\begin{aligned}
&\frac{\partial}{\partial \boldsymbol{\beta}} \boldsymbol{U}({\boldsymbol{\beta}}) \\
&= \frac{1}{n} \sum_{j=1}^{J^{(1)}}  \sum_{i=1}^{n_{j}^{(1)}}  \frac{\partial}{\partial \boldsymbol{\beta}} \left[ exp \left( \left( \begin{array}{c}
1 \\
\boldsymbol{a}_{j}^{(1)}\\
\boldsymbol{z}_{j}^{(1)} 
\end{array} \right)^T \boldsymbol{\boldsymbol{\beta}}   \right) \left( \begin{array}{c}
1 \\
\boldsymbol{a}_{j}^{(1)}\\
\boldsymbol{z}_{j}^{(1)} 
\end{array} \right) \left( Y_{ij}^{(1)} - exp\left( \left( \begin{array}{c}
1 \\
\boldsymbol{a}_{j}^{(1)}\\
\boldsymbol{z}_{j}^{(1)} 
\end{array} \right)^T \boldsymbol{\boldsymbol{\beta}}  \right)   \right) \right]\\
&+ \frac{1}{n} \sum_{j=1}^{J^{(2)}}  \sum_{i=1}^{n_{j}^{(2)}} \frac{\partial}{\partial \boldsymbol{\beta}} \left[ exp \left( \left( \begin{array}{c}
1 \\
\boldsymbol{A}_{j}^{(2,n^{(1)})} \\
\boldsymbol{z}_{j}^{(2)}  
\end{array} \right)^T \boldsymbol{\boldsymbol{\beta}}   \right)\left( \begin{array}{c}
1 \\
\boldsymbol{A}_{j}^{(2,n^{(1)})} \\
\boldsymbol{z}_{j}^{(2)}  
\end{array} \right) \left(Y_{ij}^{(2,n^{(1)})} -exp\left( \left( \begin{array}{c}
1 \\
\boldsymbol{A}_{j}^{(2,n^{(1)})} \\
\boldsymbol{z}_{j}^{(2)}  
\end{array} \right)^T \boldsymbol{\boldsymbol{\beta}}  \right)   \right) \right] \\
&= \frac{1}{n} \sum_{j=1}^{J^{(1)}}  \sum_{i=1}^{n_{j}^{(1)}}  \frac{\partial}{\partial \boldsymbol{\beta}} \left[ exp \left( \left( \begin{array}{c}
1 \\
\boldsymbol{a}_{j}^{(1)}\\
\boldsymbol{z}_{j}^{(1)} 
\end{array} \right)^T \boldsymbol{\boldsymbol{\beta}}   \right) \left( \begin{array}{c}
1 \\
\boldsymbol{a}_{j}^{(1)}\\
\boldsymbol{z}_{j}^{(1)} 
\end{array} \right) Y_{ij}^{(1)} - \left( exp \left( 2 \left( \begin{array}{c}
1 \\
\boldsymbol{a}_{j}^{(1)}\\
\boldsymbol{z}_{j}^{(1)} 
\end{array} \right)^T \boldsymbol{\boldsymbol{\beta}}   \right) \right) \left( \begin{array}{c}
1 \\
\boldsymbol{a}_{j}^{(1)}\\
\boldsymbol{z}_{j}^{(1)} 
\end{array} \right)  \right]\\ 
&+  \frac{1}{n} \sum_{j=1}^{J^{(2)}}  \sum_{i=1}^{n_{j}^{(2)}}  \frac{\partial}{\partial \boldsymbol{\beta}} \left[ exp \left( \left( \begin{array}{c}
1 \\
\boldsymbol{A}_{j}^{(2,n^{(1)})} \\
\boldsymbol{z}_{j}^{(2)}  
\end{array} \right)^T \boldsymbol{\boldsymbol{\beta}}   \right) \left( \begin{array}{c}
1 \\
\boldsymbol{A}_{j}^{(2,n^{(1)})} \\
\boldsymbol{z}_{j}^{(2)}  
\end{array} \right) Y_{ij}^{(2, n^{(1)})} \right.\\
&\qquad \qquad \qquad \qquad \qquad \qquad \qquad \qquad \qquad \left.-  \left( exp \left( 2 \left( \begin{array}{c}
1 \\
\boldsymbol{A}_{j}^{(2,n^{(1)})} \\
\boldsymbol{z}_{j}^{(2)}  
\end{array} \right)^T \boldsymbol{\boldsymbol{\beta}}   \right) \right) \left( \begin{array}{c}
1 \\
\boldsymbol{A}_{j}^{(2,n^{(1)})} \\
\boldsymbol{z}_{j}^{(2)}  
\end{array} \right)  \right]\\
&= G_{6,n}(\boldsymbol{\beta}) + G_{7,n}(\boldsymbol{\beta}),
\end{aligned}
}
\end{equation*}
where 
\begin{equation*}
\footnotesize{
\begin{aligned}
G_{6,n}(\boldsymbol{\beta})&= \frac{1}{n} \sum_{j=1}^{J^{(1)}}  \sum_{i=1}^{n_{j}^{(1)}}  \left[ exp \left( \left( \begin{array}{c}
1 \\
\boldsymbol{a}_{j}^{(1)}\\
\boldsymbol{z}_{j}^{(1)} 
\end{array} \right)^T \boldsymbol{\beta}   \right) \left( \begin{array}{c}
1 \\
\boldsymbol{a}_{j}^{(1)}\\
\boldsymbol{z}_{j}^{(1)} 
\end{array} \right) \left( \begin{array}{c}
1 \\
\boldsymbol{a}_{j}^{(1)}\\
\boldsymbol{z}_{j}^{(1)} 
\end{array} \right)^T  Y_{ij}^{(1)} \right.\\   
&\qquad\qquad\qquad\qquad\qquad\qquad \left. - 2 \; \left( exp \left( 2 \left( \begin{array}{c}
1 \\
\boldsymbol{a}_{j}^{(1)}\\
\boldsymbol{z}_{j}^{(1)} 
\end{array} \right)^T \boldsymbol{\beta}   \right) \right) \left( \begin{array}{c}
1 \\
\boldsymbol{a}_{j}^{(1)}\\
\boldsymbol{z}_{j}^{(1)} 
\end{array} \right) \left( \begin{array}{c}
1 \\
\boldsymbol{a}_{j}^{(1)}\\
\boldsymbol{z}_{j}^{(1)} 
\end{array} \right)^T  \right],\\
G_{7,n}(\boldsymbol{\beta}) &= \frac{1}{n} \sum_{j=1}^{J^{(2)}}  \sum_{i=1}^{n_{j}^{(2)}}  \left[ exp \left( \left( \begin{array}{c}
1 \\
\boldsymbol{A}_{j}^{(2,n^{(1)})} \\
\boldsymbol{z}_{j}^{(2)}  
\end{array} \right)^T \boldsymbol{\beta}   \right) \left( \begin{array}{c}
1 \\
\boldsymbol{A}_{j}^{(2,n^{(1)})} \\
\boldsymbol{z}_{j}^{(2)}  
\end{array} \right) \left( \begin{array}{c}
1 \\
\boldsymbol{A}_{j}^{(2,n^{(1)})} \\
\boldsymbol{z}_{j}^{(2)}  
\end{array} \right)^T  Y_{ij}^{(2, n^{(1)})}\right.\\
&\left. \qquad \qquad \qquad \qquad \qquad - 2 \; \left( exp \left( 2 \left( \begin{array}{c}
1 \\
\boldsymbol{A}_{j}^{(2,n^{(1)})} \\
\boldsymbol{z}_{j}^{(2)}  
\end{array} \right)^T \boldsymbol{\beta}   \right) \right) \left( \begin{array}{c}
1 \\
\boldsymbol{A}_{j}^{(2,n^{(1)})} \\
\boldsymbol{z}_{j}^{(2)}  
\end{array} \right) \left( \begin{array}{c}
1 \\
\boldsymbol{A}_{j}^{(2,n^{(1)})} \\ 
\boldsymbol{z}_{j}^{(2)}  
\end{array} \right)^T  \right].
\end{aligned}
}
\end{equation*}
Let 
\begin{equation*}
\footnotesize{
\begin{aligned}
G_{6,n}^*(\boldsymbol{\beta}) &= \frac{1}{n} \sum_{j=1}^{J^{(1)}}  \sum_{i=1}^{n_{j}^{(1)}}  \left[ exp \left( \left( \begin{array}{c}
1 \\
\boldsymbol{a}_{j}^{(1)}\\
\boldsymbol{z}_{j}^{(1)} 
\end{array} \right)^T \boldsymbol{\beta}   \right) \left( \begin{array}{c}
1 \\
\boldsymbol{a}_{j}^{(1)}\\
\boldsymbol{z}_{j}^{(1)} 
\end{array} \right) \left( \begin{array}{c}
1 \\
\boldsymbol{a}_{j}^{(1)}\\
\boldsymbol{z}_{j}^{(1)} 
\end{array} \right)^T exp \left( \left( \begin{array}{c}
1 \\
\boldsymbol{a}_{j}^{(1)}\\
\boldsymbol{z}_{j}^{(1)} 
\end{array} \right)^T \boldsymbol{\beta}^*   \right) \right.\\   
&\qquad\qquad\qquad\qquad\qquad\qquad \left. - 2 \; \left( exp \left( 2 \left( \begin{array}{c}
1 \\
\boldsymbol{a}_{j}^{(1)}\\
\boldsymbol{z}_{j}^{(1)} 
\end{array} \right)^T \boldsymbol{\beta}   \right) \right) \left( \begin{array}{c}
1 \\
\boldsymbol{a}_{j}^{(1)}\\
\boldsymbol{z}_{j}^{(1)} 
\end{array} \right) \left( \begin{array}{c}
1 \\
\boldsymbol{a}_{j}^{(1)}\\
\boldsymbol{z}_{j}^{(1)} 
\end{array} \right)^T  \right],
\end{aligned}
}
\end{equation*}
\begin{equation*}
\footnotesize{
\begin{aligned}
G_{7,n}^*(\boldsymbol{\beta}) &= \frac{1}{n} \sum_{j=1}^{J^{(2)}}  \sum_{i=1}^{n_{j}^{(2)}}  \left[ exp \left( \left( \begin{array}{c}
1 \\
\boldsymbol{a}_{j}^{(2)} \\
\boldsymbol{z}_{j}^{(2)}  
\end{array} \right)^T \boldsymbol{\beta}   \right) \left( \begin{array}{c}
1 \\
\boldsymbol{a}_{j}^{(2)} \\
\boldsymbol{z}_{j}^{(2)}  
\end{array} \right) \left( \begin{array}{c}
1 \\
\boldsymbol{a}_{j}^{(2)} \\
\boldsymbol{z}_{j}^{(2)}  
\end{array} \right)^T exp \left( \left( \begin{array}{c}
1 \\
\boldsymbol{a}_{j}^{(2)} \\
\boldsymbol{z}_{j}^{(2)}  
\end{array} \right)^T \boldsymbol{\beta}^*   \right) \right.\\
&\left. \qquad \qquad \qquad \qquad \qquad \qquad \qquad- 2 \; \left( exp \left( 2 \left( \begin{array}{c}
1 \\
\boldsymbol{a}_{j}^{(2)} \\
\boldsymbol{z}_{j}^{(2)}  
\end{array} \right)^T \boldsymbol{\beta}   \right) \right) \left( \begin{array}{c}
1 \\
\boldsymbol{a}_{j}^{(2)} \\
\boldsymbol{z}_{j}^{(2)}  
\end{array} \right) \left( \begin{array}{c}
1 \\
\boldsymbol{a}_{j}^{(2)} \\ 
\boldsymbol{z}_{j}^{(2)}  
\end{array} \right)^T  \right].
\end{aligned}
}
\end{equation*}
We show that $\frac{\partial}{\partial \boldsymbol{\beta}} \boldsymbol{U}({\boldsymbol{\tilde{\beta}}}) - \left( G_{6,n}^*(\boldsymbol{\beta}^*)+G_{7,n}^*(\boldsymbol{\beta}^*) \right)$ converges in probability to 0. Notice that
\begin{equation*}
\begin{aligned}
\frac{\partial}{\partial \boldsymbol{\beta}} \boldsymbol{U}({\boldsymbol{\tilde{\beta}}})  - \left( G_{6,n}^*(\boldsymbol{\beta}^*)+G_{7,n}^*(\boldsymbol{\beta}^*) \right)
&= \left(G_{6,n}(\boldsymbol{\tilde{\beta}}) - G_{6,n}^*(\boldsymbol{\beta}^*)\right) + \left(G_{7,n}(\boldsymbol{\tilde{\beta}}) -  G_{7,n}^*(\boldsymbol{\beta}^*)\right).\\
\end{aligned}
\end{equation*}
We will show that both terms converge in probability to 0. For the first term, using 
{\footnotesize
$$Y_{ij}^{(1)} = exp \left( \left( \begin{array}{c}
1 \\
\boldsymbol{a}_{j}^{(1)} \\
\boldsymbol{z}_{j}^{(1)}  
\end{array} \right)^T \boldsymbol{\beta}^*   \right) + \epsilon_{ij}^{(1)} ,$$
}
\begin{equation}
\footnotesize{
\begin{aligned}
&G_{6,n}(\boldsymbol{\tilde{\beta}}) - G_{6,n}^*(\boldsymbol{\beta}^*) \\
&= \frac{1}{n} \sum_{j=1}^{J^{(1)}}  \sum_{i=1}^{n_{j}^{(1)}}  \left[ exp \left( \left( \begin{array}{c}
1 \\
\boldsymbol{a}_{j}^{(1)}\\
\boldsymbol{z}_{j}^{(1)} 
\end{array} \right)^T \boldsymbol{\tilde{\beta}}   \right) \left( \begin{array}{c}
1 \\
\boldsymbol{a}_{j}^{(1)}\\
\boldsymbol{z}_{j}^{(1)} 
\end{array} \right) \left( \begin{array}{c}
1 \\
\boldsymbol{a}_{j}^{(1)}\\
\boldsymbol{z}_{j}^{(1)} 
\end{array} \right)^T \left( exp \left( \left( \begin{array}{c}
1 \\
\boldsymbol{a}_{j}^{(1)} \\
\boldsymbol{z}_{j}^{(1)}  
\end{array} \right)^T \boldsymbol{\beta}^*   \right) + \epsilon_{ij}^{(1)} \right) \right.\\   
&\qquad\qquad\qquad\qquad\qquad\qquad  - 2 \; \left( exp \left( 2 \left( \begin{array}{c}
1 \\
\boldsymbol{a}_{j}^{(1)}\\
\boldsymbol{z}_{j}^{(1)} 
\end{array} \right)^T \boldsymbol{\tilde{\beta}}   \right) \right) \left( \begin{array}{c}
1 \\
\boldsymbol{a}_{j}^{(1)}\\
\boldsymbol{z}_{j}^{(1)} 
\end{array} \right) \left( \begin{array}{c}
1 \\
\boldsymbol{a}_{j}^{(1)}\\
\boldsymbol{z}_{j}^{(1)} 
\end{array} \right)^T \\
&\qquad \qquad \qquad \qquad \qquad - exp \left( \left( \begin{array}{c}
1 \\
\boldsymbol{a}_{j}^{(1)}\\
\boldsymbol{z}_{j}^{(1)} 
\end{array} \right)^T \boldsymbol{\beta}^*   \right) \left( \begin{array}{c}
1 \\
\boldsymbol{a}_{j}^{(1)}\\
\boldsymbol{z}_{j}^{(1)} 
\end{array} \right) \left( \begin{array}{c}
1 \\
\boldsymbol{a}_{j}^{(1)}\\
\boldsymbol{z}_{j}^{(1)} 
\end{array} \right)^T exp \left( \left( \begin{array}{c}
1 \\
\boldsymbol{a}_{j}^{(1)}\\
\boldsymbol{z}_{j}^{(1)} 
\end{array} \right)^T \boldsymbol{\beta}^*   \right) \\   
&\qquad\qquad\qquad\qquad\qquad\qquad \left. + 2 \; \left( exp \left( 2 \left( \begin{array}{c}
1 \\
\boldsymbol{a}_{j}^{(1)}\\
\boldsymbol{z}_{j}^{(1)} 
\end{array} \right)^T \boldsymbol{\beta}^*   \right) \right) \left( \begin{array}{c}
1 \\
\boldsymbol{a}_{j}^{(1)}\\
\boldsymbol{z}_{j}^{(1)} 
\end{array} \right) \left( \begin{array}{c}
1 \\
\boldsymbol{a}_{j}^{(1)}\\
\boldsymbol{z}_{j}^{(1)} 
\end{array} \right)^T  \right]. 
\end{aligned}
}
\label{G5betatilda}
\end{equation}
Notice that the most complicated entry of 
\begin{equation}
\footnotesize{
\begin{aligned}
&\frac{1}{n} \sum_{j=1}^{J^{(1)}}  \sum_{i=1}^{n_{j}^{(1)}} \left[ exp \left( \left( \begin{array}{c}
1 \\
\boldsymbol{a}_{j}^{(1)}\\
\boldsymbol{z}_{j}^{(1)} 
\end{array} \right)^T \boldsymbol{\tilde{\beta}}   \right) \left( \begin{array}{c}
1 \\
\boldsymbol{a}_{j}^{(1)}\\
\boldsymbol{z}_{j}^{(1)} 
\end{array} \right) \left( \begin{array}{c}
1 \\
\boldsymbol{a}_{j}^{(1)}\\
\boldsymbol{z}_{j}^{(1)} 
\end{array} \right)^T \left( exp \left( \left( \begin{array}{c}
1 \\
\boldsymbol{a}_{j}^{(1)} \\
\boldsymbol{z}_{j}^{(1)}  
\end{array} \right)^T \boldsymbol{\beta}^*   \right) \right) \right. \\
&\left.\qquad \qquad \qquad - exp \left( \left( \begin{array}{c}
1 \\
\boldsymbol{a}_{j}^{(1)}\\
\boldsymbol{z}_{j}^{(1)} 
\end{array} \right)^T \boldsymbol{\beta}^*   \right) \left( \begin{array}{c}
1 \\
\boldsymbol{a}_{j}^{(1)}\\
\boldsymbol{z}_{j}^{(1)} 
\end{array} \right) \left( \begin{array}{c}
1 \\
\boldsymbol{a}_{j}^{(1)}\\
\boldsymbol{z}_{j}^{(1)} 
\end{array} \right)^T exp \left( \left( \begin{array}{c}
1 \\
\boldsymbol{a}_{j}^{(1)}\\
\boldsymbol{z}_{j}^{(1)} 
\end{array} \right)^T \boldsymbol{\beta}^*   \right) \right]\\
\label{G5-G5*matrix form}
\end{aligned}
}
\end{equation}
is 
\begin{equation}
\begin{aligned}
&\frac{1}{n} \sum_{j=1}^{J^{(1)}}  \sum_{i=1}^{n_{j}^{(1)}} \left[
e^{\left(\boldsymbol{\tilde{\beta}_0} + \boldsymbol{\tilde{\beta}_1}\boldsymbol{a}_{j}^{(1)} + \boldsymbol{\tilde{\beta}_2}\boldsymbol{z}_{j}^{(1)}\right)}
\left(\boldsymbol{a}_{j}^{(1)}\right)^{\otimes 2} 
e^{\left(\boldsymbol{{\beta}_0^*} + \boldsymbol{{\beta}_1^*}\boldsymbol{a}_{j}^{(1)} + \boldsymbol{{\beta}_2^*}\boldsymbol{z}_{j}^{(1)}\right)} \right. \\
&\left.\qquad \qquad \qquad\qquad \qquad \qquad - e^{\left(\boldsymbol{{\beta}_0^*} + \boldsymbol{{\beta}_1^*}\boldsymbol{a}_{j}^{(1)} + \boldsymbol{{\beta}_2^*}\boldsymbol{z}_{j}^{(1)}\right)}  \left(\boldsymbol{a}_{j}^{(1)}\right)^{\otimes 2}
e^{\left(\boldsymbol{{\beta}_0^*} + \boldsymbol{{\beta}_1^*}\boldsymbol{a}_{j}^{(1)} + \boldsymbol{{\beta}_2^*}\boldsymbol{z}_{j}^{(1)}\right)} \right] \\
&= \frac{1}{n} \sum_{j=1}^{J^{(1)}}  \sum_{i=1}^{n_{j}^{(1)}} \left[ 
\left( e^{\left(\boldsymbol{\tilde{\beta}_0} + \boldsymbol{\tilde{\beta}_1}\boldsymbol{a}_{j}^{(1)} + \boldsymbol{\tilde{\beta}_2}\boldsymbol{z}_{j}^{(1)}\right)} - e^{\left(\boldsymbol{{\beta}_0^*} + \boldsymbol{{\beta}_1^*}\boldsymbol{a}_{j}^{(1)} + \boldsymbol{{\beta}_2^*}\boldsymbol{z}_{j}^{(1)}\right)} \right) \left(\boldsymbol{a}_{j}^{(1)}\right)^{\otimes 2}
e^{\left(\boldsymbol{{\beta}_0^*} + \boldsymbol{{\beta}_1^*}\boldsymbol{a}_{j}^{(1)} + \boldsymbol{{\beta}_2^*}\boldsymbol{z}_{j}^{(1)}\right)} \right]\\
&=  \sum_{j=1}^{J^{(1)}} \frac{n_{j}^{(1)}}{n} \left[ 
\left( e^{\left(\boldsymbol{\tilde{\beta}_0} + \boldsymbol{\tilde{\beta}_1}\boldsymbol{a}_{j}^{(1)} + \boldsymbol{\tilde{\beta}_2}\boldsymbol{z}_{j}^{(1)}\right)} - e^{\left(\boldsymbol{{\beta}_0^*} + \boldsymbol{{\beta}_1^*}\boldsymbol{a}_{j}^{(1)} + \boldsymbol{{\beta}_2^*}\boldsymbol{z}_{j}^{(1)}\right)} \right) \left(\boldsymbol{a}_{j}^{(1)}\right)^{\otimes 2}
e^{\left(\boldsymbol{{\beta}_0^*} + \boldsymbol{{\beta}_1^*}\boldsymbol{a}_{j}^{(1)} + \boldsymbol{{\beta}_2^*}\boldsymbol{z}_{j}^{(1)}\right)} \right].
\end{aligned}
\label{G5-G5* first part}
\end{equation}
Since $\hat{\boldsymbol{\beta}}$ and therefore $\tilde{\boldsymbol{\beta}}$ are consistent, $\tilde{\boldsymbol{\beta}} \xrightarrow{P} \boldsymbol{\beta}^*$. By Assumption \ref{glm_unif_bound_assu} from the main text and the Continuous Mapping Theorem, $e^{\left(\boldsymbol{\tilde{\beta}_0} + \boldsymbol{\tilde{\beta}_1}\boldsymbol{a}_{j}^{(1)} + \boldsymbol{\tilde{\beta}_2}\boldsymbol{z}_{j}^{(1)}\right)} - e^{\left(\boldsymbol{{\beta}_0^*} + \boldsymbol{{\beta}_1^*}\boldsymbol{a}_{j}^{(1)} + \boldsymbol{{\beta}_2^*}\boldsymbol{z}_{j}^{(1)}\right)} \xrightarrow{P} 0$ for each $j$. Since $J^{(1)}$ is fixed, and rest of the terms within the summation are bounded, ${n_{j}^{(1)}}/{n} < 1$, we conclude that equation (\ref{G5-G5* first part}) goes to 0 in probability. This argument can be applied to other entries of equation (\ref{G5-G5*matrix form}), so we conclude that equation (\ref{G5-G5*matrix form}) goes to 0 in probability.
Similarly,
\begin{equation*}
\footnotesize{
\begin{aligned}
&\frac{2}{n} \sum_{j=1}^{J^{(1)}}  \sum_{i=1}^{n_{j}^{(1)}} \left[- \; \left( exp \left( 2 \left( \begin{array}{c}
1 \\
\boldsymbol{a}_{j}^{(1)}\\
\boldsymbol{z}_{j}^{(1)} 
\end{array} \right)^T \boldsymbol{\tilde{\beta}}   \right) \right) \left( \begin{array}{c}
1 \\
\boldsymbol{a}_{j}^{(1)}\\
\boldsymbol{z}_{j}^{(1)} 
\end{array} \right) \left( \begin{array}{c}
1 \\
\boldsymbol{a}_{j}^{(1)}\\
\boldsymbol{z}_{j}^{(1)} 
\end{array} \right)^T \right.  \\
&\left. \qquad \qquad \qquad + \; \left( exp \left( 2 \left( \begin{array}{c}
1 \\
\boldsymbol{a}_{j}^{(1)}\\
\boldsymbol{z}_{j}^{(1)} 
\end{array} \right)^T \boldsymbol{\beta}^*   \right) \right) \left( \begin{array}{c}
1 \\
\boldsymbol{a}_{j}^{(1)}\\
\boldsymbol{z}_{j}^{(1)} 
\end{array} \right) \left( \begin{array}{c}
1 \\
\boldsymbol{a}_{j}^{(1)}\\
\boldsymbol{z}_{j}^{(1)} 
\end{array} \right)^T \right] \xrightarrow{P} 0.
\end{aligned}
}
\end{equation*}
Next, following the same arguments as for $G_{1,n}$ in the proof of consistency (Appendix \ref{consistency_appendix_loglink}),
\begin{equation*}
\footnotesize{
\begin{aligned}
&\frac{1}{n} \sum_{j=1}^{J^{(1)}}  \sum_{i=1}^{n_{j}^{(1)}} \left[exp \left( \left( \begin{array}{c}
1 \\
\boldsymbol{a}_{j}^{(1)}\\
\boldsymbol{z}_{j}^{(1)} 
\end{array} \right)^T \boldsymbol{\tilde{\beta}}   \right) \left( \begin{array}{c}
1 \\
\boldsymbol{a}_{j}^{(1)}\\
\boldsymbol{z}_{j}^{(1)} 
\end{array} \right) \left( \begin{array}{c}
1 \\
\boldsymbol{a}_{j}^{(1)}\\
\boldsymbol{z}_{j}^{(1)} 
\end{array} \right)^T \epsilon_{ij}^{(1)} \right] \xrightarrow{P} 0.
\end{aligned}
}
\end{equation*}
We conclude that indeed $\left(G_{6,n}(\boldsymbol{\tilde{\beta}}) - G_{6,n}^*(\boldsymbol{\beta}^*)\right) \xrightarrow{P} 0$.

For $\left(G_{7,n}(\boldsymbol{\tilde{\beta}}) -  G_{7,n}^*(\boldsymbol{\beta}^*)\right)$, we show $\sup_{\boldsymbol{\beta}} \| G_{7,n}(\boldsymbol{{\beta}}) -  G_{7,n}^*(\boldsymbol{\beta})\| \xrightarrow{P} 0$, from which $\left(G_{7,n}(\boldsymbol{\tilde{\beta}}) -  G_{7,n}^*(\boldsymbol{\beta}^*)\right) \xrightarrow{P} 0.$ 
Using that 
$$
\footnotesize{
Y_{ij}^{(2,n^{(1)})} = exp \left( \left( \begin{array}{c}
1 \\
\boldsymbol{A}_{j}^{(2,n^{(1)})} \\
\boldsymbol{z}_{j}^{(1)}  
\end{array} \right)^T \boldsymbol{\beta}^*   \right) + \epsilon_{ij}^{(2)}, 
}
$$
we derive 
\begin{equation*}
\footnotesize{
\begin{aligned}
&G_{7,n}(\boldsymbol{{\beta}}) -  G_{7,n}^*(\boldsymbol{\beta}) \\
&= \frac{1}{n} \sum_{j=1}^{J^{(2)}}  \sum_{i=1}^{n_{j}^{(2)}} 
\left[ exp \left( \left( \begin{array}{c}
1 \\
\boldsymbol{A}_{j}^{(2, n^{(1)})} \\
\boldsymbol{z}_{j}^{(2)}  
\end{array} \right)^T \boldsymbol{\beta}   \right) \left( \begin{array}{c}
1 \\
\boldsymbol{A}_{j}^{(2, n^{(1)})} \\
\boldsymbol{z}_{j}^{(2)}  
\end{array} \right) \left( \begin{array}{c}
1 \\
\boldsymbol{A}_{j}^{(2, n^{(1)})} \\
\boldsymbol{z}_{j}^{(2)}  
\end{array} \right)^T Y_{ij}^{(2, n^{(1)})} \right.\\
&\qquad \qquad \qquad  \pm exp \left( \left( \begin{array}{c}
1 \\
\boldsymbol{A}_{j}^{(2, n^{(1)})} \\
\boldsymbol{z}_{j}^{(2)}  
\end{array} \right)^T \boldsymbol{\beta}   \right) \left( \begin{array}{c}
1 \\
\boldsymbol{A}_{j}^{(2, n^{(1)})} \\
\boldsymbol{z}_{j}^{(2)}  
\end{array} \right) \left( \begin{array}{c}
1 \\
\boldsymbol{A}_{j}^{(2, n^{(1)})} \\
\boldsymbol{z}_{j}^{(2)}  
\end{array} \right)^T exp \left( \left( \begin{array}{c}
1 \\
\boldsymbol{A}_{j}^{(2, n^{(1)})} \\
\boldsymbol{z}_{j}^{(2)}  
\end{array} \right)^T \boldsymbol{\beta}^*   \right)\\
&\left. \qquad \qquad  \qquad \qquad  \qquad \qquad- 2 \; exp \left(2 \left( \begin{array}{c}
1 \\
\boldsymbol{A}_{j}^{(2, n^{(1)})} \\
\boldsymbol{z}_{j}^{(2)}  
\end{array} \right)^T \boldsymbol{\beta}   \right)  \left( \begin{array}{c}
1 \\
\boldsymbol{A}_{j}^{(2, n^{(1)})} \\
\boldsymbol{z}_{j}^{(2)}  
\end{array} \right) \left( \begin{array}{c}
1 \\
\boldsymbol{A}_{j}^{(2, n^{(1)})} \\
\boldsymbol{z}_{j}^{(2)}  
\end{array} \right)^T  \right.\\
&\qquad \qquad \qquad  \qquad \qquad - exp \left( \left( \begin{array}{c}
1 \\
\boldsymbol{a}_{j}^{(2)} \\
\boldsymbol{z}_{j}^{(2)}  
\end{array} \right)^T \boldsymbol{\beta}   \right) \left( \begin{array}{c}
1 \\
\boldsymbol{a}_{j}^{(2)} \\
\boldsymbol{z}_{j}^{(2)}  
\end{array} \right) \left( \begin{array}{c}
1 \\
\boldsymbol{a}_{j}^{(2)} \\
\boldsymbol{z}_{j}^{(2)}  
\end{array} \right)^T exp \left( \left( \begin{array}{c}
1 \\
\boldsymbol{a}_{j}^{(2)} \\
\boldsymbol{z}_{j}^{(2)}  
\end{array} \right)^T \boldsymbol{\beta}^*   \right) \\
&\left. \qquad \qquad \qquad \qquad \qquad \qquad  \qquad \qquad+ 2 \;  exp \left(2 \left( \begin{array}{c}
1 \\
\boldsymbol{a}_{j}^{(2)} \\
\boldsymbol{z}_{j}^{(2)}  
\end{array} \right)^T \boldsymbol{\beta}   \right)  \left( \begin{array}{c}
1 \\
\boldsymbol{a}_{j}^{(2)} \\
\boldsymbol{z}_{j}^{(2)}  
\end{array} \right) \left( \begin{array}{c}
1 \\
\boldsymbol{a}_{j}^{(2)} \\ 
\boldsymbol{z}_{j}^{(2)}  
\end{array} \right)^T  \right]\\
&= G_{7\_1,n}(\boldsymbol{\beta}) + G_{7\_2,n}(\boldsymbol{\beta}) + G_{7\_3,n}(\boldsymbol{\beta})
\end{aligned}
}
\end{equation*}
where 
\begin{equation*}
\footnotesize{
\begin{aligned}
G_{7\_1,n}(\boldsymbol{\beta}) &= \frac{1}{n} \sum_{j=1}^{J^{(2)}}  \sum_{i=1}^{n_{j}^{(2)}} 
\left[ exp \left( \left( \begin{array}{c}
1 \\
\boldsymbol{A}_{j}^{(2, n^{(1)})} \\
\boldsymbol{z}_{j}^{(2)}  
\end{array} \right)^T \boldsymbol{\beta}   \right) \left( \begin{array}{c}
1 \\
\boldsymbol{A}_{j}^{(2, n^{(1)})} \\
\boldsymbol{z}_{j}^{(2)}  
\end{array} \right) \left( \begin{array}{c}
1 \\
\boldsymbol{A}_{j}^{(2, n^{(1)})} \\
\boldsymbol{z}_{j}^{(2)}  
\end{array} \right)^T  \epsilon_{ij}^{(2)}  \right]
\end{aligned}
}
\end{equation*}
\begin{equation*}
\footnotesize{
\begin{aligned}
&G_{7\_2,n}(\boldsymbol{\beta})\\
&= \frac{1}{n} \sum_{j=1}^{J^{(2)}}  \sum_{i=1}^{n_{j}^{(2)}} 
\left[ exp \left( \left( \begin{array}{c}
1 \\
\boldsymbol{A}_{j}^{(2, n^{(1)})} \\
\boldsymbol{z}_{j}^{(2)}  
\end{array} \right)^T \boldsymbol{\beta}   \right) \left( \begin{array}{c}
1 \\
\boldsymbol{A}_{j}^{(2, n^{(1)})} \\
\boldsymbol{z}_{j}^{(2)}  
\end{array} \right) \left( \begin{array}{c}
1 \\
\boldsymbol{A}_{j}^{(2, n^{(1)})} \\
\boldsymbol{z}_{j}^{(2)}  
\end{array} \right)^T exp \left( \left( \begin{array}{c}
1 \\
\boldsymbol{A}_{j}^{(2, n^{(1)})} \\
\boldsymbol{z}_{j}^{(2)}  
\end{array} \right)^T \boldsymbol{\beta}^*   \right) \right.\\
&\left. \qquad \qquad \qquad \qquad \qquad - exp \left( \left( \begin{array}{c}
1 \\
\boldsymbol{a}_{j}^{(2)} \\
\boldsymbol{z}_{j}^{(2)}  
\end{array} \right)^T \boldsymbol{\beta}   \right) \left( \begin{array}{c}
1 \\
\boldsymbol{a}_{j}^{(2)} \\
\boldsymbol{z}_{j}^{(2)}  
\end{array} \right) \left( \begin{array}{c}
1 \\
\boldsymbol{a}_{j}^{(2)} \\
\boldsymbol{z}_{j}^{(2)}  
\end{array} \right)^T exp \left( \left( \begin{array}{c}
1 \\
\boldsymbol{a}_{j}^{(2)} \\
\boldsymbol{z}_{j}^{(2)}  
\end{array} \right)^T \boldsymbol{\beta}^*   \right) \right] 
\end{aligned}
}
\end{equation*}
\begin{equation*}
\footnotesize{
\begin{aligned}
G_{7\_3,n}(\boldsymbol{\beta}) &= \frac{2}{n} \sum_{j=1}^{J^{(2)}}  \sum_{i=1}^{n_{j}^{(2)}}  \left[ \left( exp \left( 2 \left( \begin{array}{c}
1 \\
\boldsymbol{a}_{j}^{(2)} \\
\boldsymbol{z}_{j}^{(2)}  
\end{array} \right)^T \boldsymbol{\beta}   \right) \right) \left( \begin{array}{c}
1 \\
\boldsymbol{a}_{j}^{(2)} \\
\boldsymbol{z}_{j}^{(2)}  
\end{array} \right) \left( \begin{array}{c}
1 \\
\boldsymbol{a}_{j}^{(2)} \\ 
\boldsymbol{z}_{j}^{(2)}  
\end{array} \right)^T \right.\\
&\left. \qquad \qquad \qquad \qquad \qquad - \left( exp \left( 2 \left( \begin{array}{c}
1 \\
\boldsymbol{A}_{j}^{(2, n^{(1)})} \\
\boldsymbol{z}_{j}^{(2)}  
\end{array} \right)^T \boldsymbol{\beta}   \right) \right) \left( \begin{array}{c}
1 \\
\boldsymbol{A}_{j}^{(2, n^{(1)})} \\
\boldsymbol{z}_{j}^{(2)}  
\end{array} \right) \left( \begin{array}{c}
1 \\
\boldsymbol{A}_{j}^{(2, n^{(1)})} \\
\boldsymbol{z}_{j}^{(2)}  
\end{array} \right)^T   \right].
\end{aligned}
}
\end{equation*}
Following the same arguments as for $G_{2,n}$, $G_{3,n}$, and $G_{4,n}$ in the proof of consistency (Appendix \ref{consistency_appendix_loglink}), respectively, we conclude that suprema over $\boldsymbol{\beta}$ of $G_{7\_1,n}(\boldsymbol{\beta})$, $G_{7\_2,n}(\boldsymbol{\beta})$, and $G_{7\_3,n}(\boldsymbol{\beta})$ converge to 0 in probability. Next, notice that
\begin{equation*}
\begin{aligned}
G_{7,n}(\boldsymbol{\tilde{\beta}}) - G_{7,n}^*(\boldsymbol{\beta}^*) &= G_{7,n}(\boldsymbol{\tilde{\beta}}) \pm G_{7,n}^*(\boldsymbol{\tilde{\beta}}) - G_{7,n}^*(\boldsymbol{\beta}^*) \\
&= \left(G_{7,n}(\boldsymbol{\tilde{\beta}}) - G_{7,n}^*(\boldsymbol{\tilde{\beta}}) \right) + \left( G_{7,n}^*(\boldsymbol{\tilde{\beta}}) - G_{7,n}^*(\boldsymbol{\beta}^*) \right).
\end{aligned}
\end{equation*}
Since $\sup_{\boldsymbol{\beta}} \| G_{7,n}(\boldsymbol{{\beta}}) -  G_{7,n}^*(\boldsymbol{\beta})\| \xrightarrow{P} 0$, it follows that $\left(G_{7,n}(\boldsymbol{\tilde{\beta}}) - G_{7,n}^*(\boldsymbol{\tilde{\beta}}) \right) \xrightarrow{P} 0$. Similar to equation (\ref{G5betatilda}), we apply the Mean Value Theorem 
and conclude that 
$$\left(G_{7,n}^*(\boldsymbol{\tilde{\beta}}) - G_{7,n}^*(\boldsymbol{\beta}^*)\right) \xrightarrow{P} 0.$$
Combining, we conclude $\left(G_{7,n}(\boldsymbol{\tilde{\beta}}) -  G_{7,n}^*(\boldsymbol{\beta}^*)\right) \xrightarrow{P} 0$.

Since $\frac{\partial}{\partial \boldsymbol{\beta}} \boldsymbol{U}({\boldsymbol{\tilde{\beta}}}) - \left( G_{6,n}^*(\boldsymbol{\beta}^*)+G_{7,n}^*(\boldsymbol{\beta}^*) \right) \xrightarrow{P} 0$, calculating the limit of $-\frac{\partial}{\partial \boldsymbol{\beta}} \boldsymbol{U}({\boldsymbol{\tilde{\beta}}})$ is equivalent to calculating the limit of $- \left( G_{6,n}^*(\boldsymbol{\beta}^*)+G_{7,n}^*(\boldsymbol{\beta}^*) \right)$. We therefore show that 
$
\left(-\left.\frac{\partial}{\partial \boldsymbol{\beta}}\right|_{\tilde{\boldsymbol{\beta}}} \boldsymbol{U}\left(\tilde{\boldsymbol{\beta}}\right)\right) 
$
converges in probability to $J\left(\boldsymbol{\beta}^{*}\right)$ from equation (\ref{ddbeta_u_beta_tilde_converge}).

\begin{equation*}
\footnotesize{
\begin{aligned}
&G_{6,n}^*(\boldsymbol{\beta}^*) - \sum_{j=1}^{J^{(1)}} \alpha_{j1}    \left\{  - \; exp \left( 2 \left( \begin{array}{c}
1 \\
\boldsymbol{a}_{j}^{(1)}\\
\boldsymbol{z}_{j}^{(1)} 
\end{array} \right)^T \boldsymbol{\beta}^*   \right)   \left( \begin{array}{c}
1 \\
\boldsymbol{a}_{j}^{(1)}\\
\boldsymbol{z}_{j}^{(1)} 
\end{array} \right) \left( \begin{array}{c}
1 \\
\boldsymbol{a}_{j}^{(1)}\\
\boldsymbol{z}_{j}^{(1)} 
\end{array} \right)^T  \right\}\\
&= \frac{1}{n} \sum_{j=1}^{J^{(1)}}  \sum_{i=1}^{n_{j}^{(1)}}  \left[ exp \left( \left( \begin{array}{c}
1 \\
\boldsymbol{a}_{j}^{(1)}\\
\boldsymbol{z}_{j}^{(1)} 
\end{array} \right)^T \boldsymbol{\beta}^*   \right) \left( \begin{array}{c}
1 \\
\boldsymbol{a}_{j}^{(1)}\\
\boldsymbol{z}_{j}^{(1)} 
\end{array} \right) \left( \begin{array}{c}
1 \\
\boldsymbol{a}_{j}^{(1)}\\
\boldsymbol{z}_{j}^{(1)} 
\end{array} \right)^T exp \left( \left( \begin{array}{c}
1 \\
\boldsymbol{a}_{j}^{(1)}\\
\boldsymbol{z}_{j}^{(1)} 
\end{array} \right)^T \boldsymbol{\beta}^*   \right) \right.\\   
&\qquad\qquad\qquad\qquad\qquad\qquad \left. - 2 \; \left( exp \left( 2 \left( \begin{array}{c}
1 \\
\boldsymbol{a}_{j}^{(1)}\\
\boldsymbol{z}_{j}^{(1)} 
\end{array} \right)^T \boldsymbol{\beta}^*   \right) \right) \left( \begin{array}{c}
1 \\
\boldsymbol{a}_{j}^{(1)}\\
\boldsymbol{z}_{j}^{(1)} 
\end{array} \right) \left( \begin{array}{c}
1 \\
\boldsymbol{a}_{j}^{(1)}\\
\boldsymbol{z}_{j}^{(1)} 
\end{array} \right)^T  \right]\\
&\qquad \qquad \qquad \qquad -\sum_{j=1}^{J^{(1)}} \alpha_{j1}    \left\{  - \; exp \left( 2 \left( \begin{array}{c}
1 \\
\boldsymbol{a}_{j}^{(1)}\\
\boldsymbol{z}_{j}^{(1)} 
\end{array} \right)^T \boldsymbol{\beta}^*   \right)   \left( \begin{array}{c}
1 \\
\boldsymbol{a}_{j}^{(1)}\\
\boldsymbol{z}_{j}^{(1)} 
\end{array} \right) \left( \begin{array}{c}
1 \\
\boldsymbol{a}_{j}^{(1)}\\
\boldsymbol{z}_{j}^{(1)} 
\end{array} \right)^T  \right\}\\
&=\sum_{j=1}^{J^{(1)}} \frac{n_j^{(1)}}{n}    \left\{  - \; exp \left( 2 \left( \begin{array}{c}
1 \\
\boldsymbol{a}_{j}^{(1)}\\
\boldsymbol{z}_{j}^{(1)} 
\end{array} \right)^T \boldsymbol{\beta}^*   \right)   \left( \begin{array}{c}
1 \\
\boldsymbol{a}_{j}^{(1)}\\
\boldsymbol{z}_{j}^{(1)} 
\end{array} \right) \left( \begin{array}{c}
1 \\
\boldsymbol{a}_{j}^{(1)}\\
\boldsymbol{z}_{j}^{(1)} 
\end{array} \right)^T  \right\} \\
&\qquad \qquad \qquad \qquad -\sum_{j=1}^{J^{(1)}} \alpha_{j1}    \left\{  - \; exp \left( 2 \left( \begin{array}{c}
1 \\
\boldsymbol{a}_{j}^{(1)}\\
\boldsymbol{z}_{j}^{(1)} 
\end{array} \right)^T \boldsymbol{\beta}^*   \right)   \left( \begin{array}{c}
1 \\
\boldsymbol{a}_{j}^{(1)}\\
\boldsymbol{z}_{j}^{(1)} 
\end{array} \right) \left( \begin{array}{c}
1 \\
\boldsymbol{a}_{j}^{(1)}\\
\boldsymbol{z}_{j}^{(1)} 
\end{array} \right)^T  \right\}\\
&\xrightarrow{P} 0.
\end{aligned}
}
\end{equation*}
Similarly, 
\begin{equation*} 
\footnotesize{
\begin{aligned}
&G_{7,n}^*(\boldsymbol{\beta}^*) - \sum_{j=1}^{J^{(2)}} \alpha_{j2} \left\{- \;  exp \left( 2 \left( \begin{array}{c}
1 \\
\boldsymbol{a}_{j}^{(2)} \\
\boldsymbol{z}_{j}^{(2)}  
\end{array} \right)^T \boldsymbol{\beta}^*   \right)  \left( \begin{array}{c}
1 \\
\boldsymbol{a}_{j}^{(2)} \\
\boldsymbol{z}_{j}^{(2)}  
\end{array} \right) \left( \begin{array}{c}
1 \\
\boldsymbol{a}_{j}^{(2)} \\ 
\boldsymbol{z}_{j}^{(2)}  
\end{array} \right)^T  \right\} \xrightarrow{P} 0.
\end{aligned}
}
\end{equation*}
Equation (\ref{ddbeta_u_beta_tilde_converge}) follows, and we conclude that 
$$
\left.\frac{\partial}{\partial \boldsymbol{\beta}}\right|_{\tilde{\boldsymbol{\beta}}} \boldsymbol{U}(\boldsymbol{\beta}) \stackrel{P}{\longrightarrow} J\left(\boldsymbol{\beta}^{*}\right) .
$$

To show that $\sqrt{n} \:\: \boldsymbol{U}\left(\boldsymbol{\beta}^{*}\right)$ from equation (\ref{AN_break_down}) converges in distribution to $N\left(0, V\left(\boldsymbol{\beta}^*\right)\right)$, we derive
\begin{equation*}
\footnotesize{
\begin{aligned}
&\sqrt{n} \ \boldsymbol{U}(\boldsymbol{\beta}^*)\\
&=  \frac{1}{\sqrt{n}} \sum_{j=1}^{J^{(1)}}  \sum_{i=1}^{n_{j}^{(1)}}  exp \left( \left( \begin{array}{c}
1 \\
\boldsymbol{a}_{j}^{(1)}\\
\boldsymbol{z}_{j}^{(1)} 
\end{array} \right)^T \boldsymbol{\beta}^*   \right) \left( \begin{array}{c}
1 \\
\boldsymbol{a}_{j}^{(1)}\\
\boldsymbol{z}_{j}^{(1)} 
\end{array} \right) \left( Y_{ij}^{(1)} - exp\left( \left( \begin{array}{c}
1 \\
\boldsymbol{a}_{j}^{(1)}\\
\boldsymbol{z}_{j}^{(1)} 
\end{array} \right)^T \boldsymbol{\beta}^*  \right)   \right) \\
&+  \frac{1}{\sqrt{n}} \sum_{j=1}^{J^{(2)}}  \sum_{i=1}^{n_{j}^{(2)}}  exp \left( \left( \begin{array}{c}
1 \\
\boldsymbol{A}_{j}^{(2,n^{(1)})} \\
\boldsymbol{z}_{j}^{(2)}  
\end{array} \right)^T \boldsymbol{\beta}^*   \right)\left( \begin{array}{c}
1 \\
\boldsymbol{A}_{j}^{(2,n^{(1)})} \\
\boldsymbol{z}_{j}^{(2)}  
\end{array} \right) \left(Y_{ij}^{(2,n^{(1)})} -exp\left( \left( \begin{array}{c}
1 \\
\boldsymbol{A}_{j}^{(2,n^{(1)})} \\
\boldsymbol{z}_{j}^{(2)}  
\end{array} \right)^T \boldsymbol{\beta}^*  \right)   \right) \\
&= \frac{1}{\sqrt{n}} \sum_{j=1}^{J^{(1)}}  \sum_{i=1}^{n_{j}^{(1)}}  exp \left( \left( \begin{array}{c}
1 \\
\boldsymbol{a}_{j}^{(1)}\\
\boldsymbol{z}_{j}^{(1)} 
\end{array} \right)^T \boldsymbol{\beta}^*   \right) \left( \begin{array}{c}
1 \\
\boldsymbol{a}_{j}^{(1)}\\
\boldsymbol{z}_{j}^{(1)} 
\end{array} \right) \epsilon_{ij}^{(1)} \\
&\qquad \qquad + \frac{1}{\sqrt{n}} \sum_{j=1}^{J^{(2)}}  \sum_{i=1}^{n_{j}^{(2)}} exp \left( \left( \begin{array}{c}
1 \\
\boldsymbol{A}_{j}^{(2,n^{(1)})} \\
\boldsymbol{z}_{j}^{(2)}  
\end{array} \right)^T \boldsymbol{\beta}^*   \right)\left( \begin{array}{c}
1 \\
\boldsymbol{A}_{j}^{(2,n^{(1)})} \\
\boldsymbol{z}_{j}^{(2)}  
\end{array} \right) \epsilon_{ij}^{(2,n^{(1)})}\\
&\qquad \qquad \pm 
\frac{1}{\sqrt{n}} \sum_{j=1}^{J^{(2)}}  \sum_{i=1}^{n_{j}^{(2)}}  exp \left( \left( \begin{array}{c}
1 \\
\boldsymbol{a}_{j}^{(2)} \\
\boldsymbol{z}_{j}^{(2)}  
\end{array} \right)^T \boldsymbol{\beta}^*   \right)\left( \begin{array}{c}
1 \\
\boldsymbol{a}_{j}^{(2)} \\
\boldsymbol{z}_{j}^{(2)}  
\end{array} \right) \epsilon_{ij}^{(2,n^{(1)})} \\
&= \sum_{j=1}^{J^{(1)}} \frac{\sqrt{n_j^{(1)}}}{\sqrt{n}} \frac{1}{\sqrt{n_j^{(1)}}} \sum_{i=1}^{n_{j}^{(1)}} exp \left( \left( \begin{array}{c}
1 \\
\boldsymbol{a}_{j}^{(1)}\\
\boldsymbol{z}_{j}^{(1)} 
\end{array} \right)^T \boldsymbol{\beta}^*   \right) \left( \begin{array}{c}
1 \\
\boldsymbol{a}_{j}^{(1)}\\
\boldsymbol{z}_{j}^{(1)} 
\end{array} \right) \epsilon_{ij}^{(1)}\\
&\qquad \qquad  + \frac{1}{\sqrt{n}} \sum_{j=1}^{J^{(2)}}  \sum_{i=1}^{n_{j}^{(2)}} exp \left( \left( \begin{array}{c}
1 \\
\boldsymbol{A}_{j}^{(2,n^{(1)})} \\
\boldsymbol{z}_{j}^{(2)}  
\end{array} \right)^T \boldsymbol{\beta}^*   \right)\left( \begin{array}{c}
1 \\
\boldsymbol{A}_{j}^{(2,n^{(1)})} \\
\boldsymbol{z}_{j}^{(2)}  
\end{array} \right) \epsilon_{ij}^{(2,n^{(1)})}\\ 
&\qquad \qquad -  \frac{1}{\sqrt{n}} \sum_{j=1}^{J^{(2)}}  \sum_{i=1}^{n_{j}^{(2)}}  exp \left( \left( \begin{array}{c}
1 \\
\boldsymbol{a}_{j}^{(2)} \\
\boldsymbol{z}_{j}^{(2)}  
\end{array} \right)^T \boldsymbol{\beta}^*   \right)\left( \begin{array}{c}
1 \\
\boldsymbol{a}_{j}^{(2)} \\
\boldsymbol{z}_{j}^{(2)}  
\end{array} \right) \epsilon_{ij}^{(2,n^{(1)})}\\ 
&\qquad \qquad + \sum_{j=1}^{J^{(2)}} \frac{\sqrt{n_j^{(2)}}}{\sqrt{n}} \frac{1}{\sqrt{n_j^{(2)}}} \sum_{i=1}^{n_{j}^{(2)}}  exp \left( \left( \begin{array}{c}
1 \\
\boldsymbol{a}_{j}^{(2)} \\
\boldsymbol{z}_{j}^{(2)}  
\end{array} \right)^T \boldsymbol{\beta}^*   \right)\left( \begin{array}{c}
1 \\
\boldsymbol{a}_{j}^{(2)} \\
\boldsymbol{z}_{j}^{(2)}  
\end{array} \right) \epsilon_{ij}^{(2,n^{(1)})}\\
&= \boldsymbol{U}_{1,n} + \boldsymbol{U}_{2\_1, n} + \boldsymbol{U}_{2\_2, n},
\label{u_beta_*_log}
\end{aligned}
}
\end{equation*}
where 
\begin{equation*}
\footnotesize{
\begin{aligned}
\boldsymbol{U}_{1,n} &=  \sum_{j=1}^{J^{(1)}} \frac{\sqrt{n_j^{(1)}}}{\sqrt{n}} \frac{1}{\sqrt{n_j^{(1)}}} \sum_{i=1}^{n_{j}^{(1)}} exp \left( \left( \begin{array}{c}
1 \\
\boldsymbol{a}_{j}^{(1)}\\
\boldsymbol{z}_{j}^{(1)} 
\end{array} \right)^T \boldsymbol{\beta}^*   \right) \left( \begin{array}{c}
1 \\
\boldsymbol{a}_{j}^{(1)}\\
\boldsymbol{z}_{j}^{(1)} 
\end{array} \right) \epsilon_{ij}^{(1)},\\ 
    \boldsymbol{U}_{2\_1, n} &=  \sum_{j=1}^{J^{(2)}} \frac{\sqrt{n_j^{(2)}}}{\sqrt{n}} \frac{1}{\sqrt{n_j^{(2)}}} \sum_{i=1}^{n_{j}^{(2)}}  \left[ exp \left( \left( \begin{array}{c}
1 \\
\boldsymbol{A}_{j}^{(2,n^{(1)})} \\
\boldsymbol{z}_{j}^{(2)}  
\end{array} \right)^T \boldsymbol{\beta}^*   \right)\left( \begin{array}{c}
1 \\
\boldsymbol{A}_{j}^{(2,n^{(1)})} \\
\boldsymbol{z}_{j}^{(2)}  
\end{array} \right) \epsilon_{ij}^{(2,n^{(1)})} \right.\\
&\qquad \qquad \qquad \qquad \qquad  \qquad \qquad \qquad  \left.- exp \left( \left( \begin{array}{c}
1 \\
\boldsymbol{a}_{j}^{(2)} \\
\boldsymbol{z}_{j}^{(2)}  
\end{array} \right)^T \boldsymbol{\beta}^*   \right)\left( \begin{array}{c}
1 \\
\boldsymbol{a}_{j}^{(2)} \\
\boldsymbol{z}_{j}^{(2)}  
\end{array} \right) \epsilon_{ij}^{(2,n^{(1)})} \right],\\
\boldsymbol{U}_{2\_2, n} &= \sum_{j=1}^{J^{(2)}} \frac{\sqrt{n_j^{(2)}}}{\sqrt{n}} \frac{1}{\sqrt{n_j^{(2)}}} \sum_{i=1}^{n_{j}^{(2)}}  exp \left( \left( \begin{array}{c}
1 \\
\boldsymbol{a}_{j}^{(2)} \\
\boldsymbol{z}_{j}^{(2)}  
\end{array} \right)^T \boldsymbol{\beta}^*   \right)\left( \begin{array}{c}
1 \\
\boldsymbol{a}_{j}^{(2)} \\
\boldsymbol{z}_{j}^{(2)}  
\end{array} \right) \epsilon_{ij}^{(2,n^{(1)})}.
\end{aligned}
}
\end{equation*}
To show that $\sqrt{n} \ \boldsymbol{U}(\boldsymbol{\beta}^*)$ converges in distribution to a normal distribution, we show that the joint distribution of $\boldsymbol{U}_{1,n}$, $\boldsymbol{U}_{2\_1, n}$, $\boldsymbol{U}_{2\_2, n}$ converges in distribution to a normal distribution. By Assumption \ref{ind_error} of the main text, 
for each fixed value of $j$, the central limit theorem for i.i.d. observations implies that 
\begin{equation*}
\footnotesize{
    \frac{1}{\sqrt{n_j^{(1)}}}  \sum_{i=1}^{n_{j}^{(1)}} exp \left( \left( \begin{array}{c}
1 \\
\boldsymbol{a}_{j}^{(1)}\\
\boldsymbol{z}_{j}^{(1)} 
\end{array} \right)^T \boldsymbol{\beta}^*   \right) \left( \begin{array}{c}
1 \\
\boldsymbol{a}_{j}^{(1)}\\
\boldsymbol{z}_{j}^{(1)} 
\end{array} \right) \epsilon_{ij}^{(1)}
}
\end{equation*}
converges in distribution to a normal distribution with mean 0 and variance 
$$\footnotesize{exp \left(2 \left( \begin{array}{c}
1 \\
\boldsymbol{a}_{j}^{(1)}\\
\boldsymbol{z}_{j}^{(1)} 
\end{array} \right)^T \boldsymbol{\beta}^*   \right) \left( \begin{array}{c}
1 \\
\boldsymbol{a}_{j}^{(1)}\\
\boldsymbol{z}_{j}^{(1)} 
\end{array} \right)\left( \begin{array}{c}
1 \\
\boldsymbol{a}_{j}^{(1)}\\
\boldsymbol{z}_{j}^{(1)} 
\end{array} \right)^T \sigma^2(\boldsymbol{z}_{j}^{(1)}).}$$From e.g. Lévy's Continuity Theorem for characteristic functions \citep{eisenberg1983uniform} and Slutsky's Theorem, 
it follows that $\boldsymbol{U}_{1,n}$ converges to a normal distribution with mean 0 and variance 
$$ \footnotesize{ \sum_{j=1}^{J^{(1)}} \alpha_{j1} \; exp \left(2 \left( \begin{array}{c}
1 \\
\boldsymbol{a}_{j}^{(1)}\\
\boldsymbol{z}_{j}^{(1)} 
\end{array} \right)^T \boldsymbol{\beta}^*   \right) \left( \begin{array}{c}
1 \\
\boldsymbol{a}_{j}^{(1)}\\
\boldsymbol{z}_{j}^{(1)} 
\end{array} \right)\left( \begin{array}{c}
1 \\
\boldsymbol{a}_{j}^{(1)}\\
\boldsymbol{z}_{j}^{(1)} 
\end{array} \right)^T \sigma^2(\boldsymbol{z}_{j}^{(1)}).}$$ 
We show $\boldsymbol{U}_{2\_1, n} \xrightarrow{P} 0$ by showing that $E(\boldsymbol{U}_{2\_1, n}) = 0$ and $E(\boldsymbol{U}_{2\_1, n}^{\otimes2}) \xrightarrow{} 0$, so that $\boldsymbol{U}_{2\_1, n} \xrightarrow{P} 0$ by Chebyshev's Inequality. We show $E(\boldsymbol{U}_{2\_1, n}) = 0$ first. For each $j$, we have
\begin{equation*}
\footnotesize{
\begin{aligned}
&E\left\{\frac{1}{\sqrt{n_j^{(2)}}}  \sum_{i=1}^{n_{j}^{(2)}}  \left[ exp \left( \left( \begin{array}{c}
1 \\
\boldsymbol{A}_{j}^{(2,n^{(1)})} \\
\boldsymbol{z}_{j}^{(2)}  
\end{array} \right)^T \boldsymbol{\beta}^*   \right)\left( \begin{array}{c}
1 \\
\boldsymbol{A}_{j}^{(2,n^{(1)})} \\
\boldsymbol{z}_{j}^{(2)}  
\end{array} \right) \epsilon_{ij}^{(2,n^{(1)})} \right. \right. \\
&\left. \left. \qquad \qquad \qquad \qquad \qquad \qquad \qquad \qquad \qquad - exp \left( \left( \begin{array}{c}
1 \\
\boldsymbol{a}_{j}^{(2)} \\
\boldsymbol{z}_{j}^{(2)}  
\end{array} \right)^T \boldsymbol{\beta}^*   \right)\left( \begin{array}{c}
1 \\
\boldsymbol{a}_{j}^{(2)} \\
\boldsymbol{z}_{j}^{(2)}  
\end{array} \right) \epsilon_{ij}^{(2)} \right]\right\}=0
\end{aligned}
}
\end{equation*}
by conditioning on $\boldsymbol{A}_{j}^{(2,n^{(1)})}$. so $E(\boldsymbol{U}_{2\_1, n}) = 0$. To see $E(\boldsymbol{U}_{2\_1, n}^{\otimes2}) \xrightarrow{} 0$, 
\begin{equation*}
\footnotesize{
\begin{aligned}
&E\left\{\frac{1}{\sqrt{n_{j}^{(2)}}}  \sum_{i=1}^{n_{j}^{(2)}}  \left[ exp \left( \left( \begin{array}{c}
1 \\
\boldsymbol{A}_{j}^{(2,n^{(1)})} \\
\boldsymbol{z}_{j}^{(2)}  
\end{array} \right)^T \boldsymbol{\beta}^*   \right)\left( \begin{array}{c}
1 \\
\boldsymbol{A}_{j}^{(2,n^{(1)})} \\
\boldsymbol{z}_{j}^{(2)}  
\end{array} \right) \epsilon_{ij}^{(2,n^{(1)})} \right. \right.\\
&\qquad \qquad \qquad \qquad \qquad \qquad \qquad \qquad  \left. \left. - exp \left( \left( \begin{array}{c}
1 \\
\boldsymbol{a}_{j}^{(2)} \\
\boldsymbol{z}_{j}^{(2)}  
\end{array} \right)^T \boldsymbol{\beta}^*   \right)\left( \begin{array}{c}
1 \\
\boldsymbol{a}_{j}^{(2)} \\
\boldsymbol{z}_{j}^{(2)}  
\end{array} \right) \epsilon_{ij}^{(2,n^{(1)})} \right]\right\}^{\otimes 2}\\
&= \frac{1}{{n_{j}^{(2)}}}  \sum_{i=1}^{n_{j}^{(2)}}  E\left\{\left[ exp \left( \left( \begin{array}{c}
1 \\
\boldsymbol{A}_{j}^{(2,n^{(1)})} \\
\boldsymbol{z}_{j}^{(2)}  
\end{array} \right)^T \boldsymbol{\beta}^*   \right)\left( \begin{array}{c}
1 \\
\boldsymbol{A}_{j}^{(2,n^{(1)})} \\
\boldsymbol{z}_{j}^{(2)}  
\end{array} \right) \epsilon_{ij}^{(2,n^{(1)})} \right.\right. \\
&\qquad \qquad \qquad \qquad \qquad \qquad \qquad \qquad  \left. \left. - exp \left( \left( \begin{array}{c}
1 \\
\boldsymbol{a}_{j}^{(2)} \\
\boldsymbol{z}_{j}^{(2)}  
\end{array} \right)^T \boldsymbol{\beta}^*   \right)\left( \begin{array}{c}
1 \\
\boldsymbol{a}_{j}^{(2)} \\
\boldsymbol{z}_{j}^{(2)}  
\end{array} \right) \epsilon_{ij}^{(2,n^{(1)})} \right]^{\otimes 2}\right\}\\
&= \frac{1}{n_{j}^{(2)}} \sum_{i=1}^{n_{j}^{(2)}}  E\left\{\left[ exp \left( \left( \begin{array}{c}
1 \\
\boldsymbol{A}_{j}^{(2,n^{(1)})} \\
\boldsymbol{z}_{j}^{(2)}  
\end{array} \right)^T \boldsymbol{\beta}^*   \right)\left( \begin{array}{c}
1 \\
\boldsymbol{A}_{j}^{(2,n^{(1)})} \\
\boldsymbol{z}_{j}^{(2)}  
\end{array} \right) \right.\right.\\
&\qquad \qquad \qquad \qquad \qquad \qquad \qquad \qquad  \left. \left.- exp \left( \left( \begin{array}{c}
1 \\
\boldsymbol{a}_{j}^{(2)} \\
\boldsymbol{z}_{j}^{(2)}  
\end{array} \right)^T \boldsymbol{\beta}^*   \right)\left( \begin{array}{c}
1 \\
\boldsymbol{a}_{j}^{(2)} \\
\boldsymbol{z}_{j}^{(2)}  
\end{array} \right)  \right]^{\otimes 2} \left(\epsilon_{ij}^{(2,n^{(1)})}\right)^2 \right\}.
\end{aligned}
}
\end{equation*}
where the second line follows since all terms belonging to different $i$ are uncorrelated, which can be seen by conditioning on $\boldsymbol{A}_{j}^{(2,n^{(1)})}$.
By Assumption \ref{ind_error} from the main text, we have $E\left(\left( \epsilon_{ij}^{(2,n^{(1)})}\right)^2 | \boldsymbol{A}_{j}^{(2,n^{(1)})} \right)=\sigma^2(\boldsymbol{z}_{j}^{(2)})$. Then, by conditioning on $\boldsymbol{A}_{j}^{(2,n^{(1)})}$,
\begin{equation*}
\footnotesize{
\begin{aligned}
&E(\boldsymbol{U}_{2\_1, n}^{\otimes 2}) \\
&= \sum_{j=1}^{J^{(2)}} \sigma^2(\boldsymbol{z}_{j}^{(2)}) \frac{n_{j}^{(2)}}{n}  E\left(\left[ exp \left( \left( \begin{array}{c}
1 \\
\boldsymbol{A}_{j}^{(2,n^{(1)})} \\
\boldsymbol{z}_{j}^{(2)}  
\end{array} \right)^T \boldsymbol{\beta}^*   \right)\left( \begin{array}{c}
1 \\
\boldsymbol{A}_{j}^{(2,n^{(1)})} \\
\boldsymbol{z}_{j}^{(2)}  
\end{array} \right) \right.\right.\\
&\qquad \qquad \qquad \qquad \qquad \qquad \qquad \qquad \qquad \qquad  \left. \left.- exp \left( \left( \begin{array}{c}
1 \\
\boldsymbol{a}_{j}^{(2)} \\
\boldsymbol{z}_{j}^{(2)}  
\end{array} \right)^T \boldsymbol{\beta}^*   \right)\left( \begin{array}{c}
1 \\
\boldsymbol{a}_{j}^{(2)} \\
\boldsymbol{z}_{j}^{(2)}  
\end{array} \right)  \right]^{\otimes 2}  \right).
\end{aligned}
}
\end{equation*}
By Lemma \ref{sup_glm_condition} from the Appendix, the Continuous Mapping Theorem and Lebesgue's Dominated Convergence Theorem, $E(\boldsymbol{U}_{2\_1, n}^{\otimes 2}) \xrightarrow{P} 0$. By Chebyshev's Inequality, $\boldsymbol{U}_{2\_1, n} \xrightarrow{P} 0$.

For $\boldsymbol{U}_{2\_2, n}$,  
by Assumption \ref{ind_error} from the main text, replacing $\epsilon_{ij}^{(2,n^{(1)})}$ by the error terms $\epsilon_{ij}^{(2)}$ under the intervention $\boldsymbol{a}_{j}^{(2)}$ does not change the distribution of $\boldsymbol{U}_{2\_2, n}$, so it suffices to show that
\begin{equation*}
\footnotesize{
\begin{aligned}
\boldsymbol{U}_{2\_2, n}^* &:=  \sum_{j=1}^{J^{(2)}} \frac{\sqrt{n_j^{(2)}}}{\sqrt{n}} \frac{1}{\sqrt{n_j^{(2)}}} \sum_{i=1}^{n_{j}^{(2)}}  exp \left( \left( \begin{array}{c}
1 \\
\boldsymbol{a}_{j}^{(2)} \\
\boldsymbol{z}_{j}^{(2)}  
\end{array} \right)^T \boldsymbol{\beta}^*   \right)\left( \begin{array}{c}
1 \\
\boldsymbol{a}_{j}^{(2)} \\
\boldsymbol{z}_{j}^{(2)}  
\end{array} \right) \epsilon_{ij}^{(2)}
\end{aligned}
}
\end{equation*}
converges to a normal distribution. 
Following the same argument as for $\boldsymbol{U}_{1,n}$, $\boldsymbol{U}_{2\_2,n}$ converges to a normal distribution with mean 0 and variance 
$$  
\footnotesize{
\sum_{j=1}^{J^{(2)}} \alpha_{j2} \; exp \left(2 \left( \begin{array}{c}
1 \\
\boldsymbol{a}_{j}^{(2)}\\
\boldsymbol{z}_{j}^{(2)} 
\end{array} \right)^T \boldsymbol{\beta}^*   \right) \left( \begin{array}{c}
1 \\
\boldsymbol{a}_{j}^{(2)}\\
\boldsymbol{z}_{j}^{(2)} 
\end{array} \right)\left( \begin{array}{c}
1 \\
\boldsymbol{a}_{j}^{(2)}\\
\boldsymbol{z}_{j}^{(2)} 
\end{array} \right)^T \sigma^2(\boldsymbol{z}_{j}^{(2)}).
}
$$
Hence,  $\sqrt{n} \ \boldsymbol{U}(\boldsymbol{\beta}^*)$ has the same limiting distribution as 
\begin{equation}
\begin{aligned}
&\boldsymbol{U}_{1, n} +  \boldsymbol{U}_{2\_2, n}^* =\frac{1}{\sqrt{n}} \sum_{j=1}^{J^{(1)}}  \sum_{i=1}^{n_{j}^{(1)}}  exp \left( \left( \begin{array}{c}
1 \\
\boldsymbol{a}_{j}^{(1)}\\
\boldsymbol{z}_{j}^{(1)} 
\end{array} \right)^T \boldsymbol{\beta}^*   \right) \left( \begin{array}{c}
1 \\
\boldsymbol{a}_{j}^{(1)}\\
\boldsymbol{z}_{j}^{(1)} 
\end{array} \right) \epsilon_{ij}^{(1)} \\
& \qquad \qquad \qquad \qquad + \frac{1}{\sqrt{n}} \sum_{j=1}^{J^{(2)}}  \sum_{i=1}^{n_{j}^{(2)}}  exp \left( \left( \begin{array}{c}
1 \\
\boldsymbol{a}_{j}^{(2)} \\
\boldsymbol{z}_{j}^{(2)}  
\end{array} \right)^T \boldsymbol{\beta}^*   \right)\left( \begin{array}{c}
1 \\
\boldsymbol{a}_{j}^{(2)} \\
\boldsymbol{z}_{j}^{(2)}  
\end{array} \right) \epsilon_{ij}^{(2)}.
\end{aligned}
\end{equation}
By the definition of $\boldsymbol{a}_{j}^{(2)}$, $\boldsymbol{U}_{1, n}$ and $ \boldsymbol{U}_{2\_2, n}^*$ are independent of one another. Thus, equation (\ref{u_beta_*_log}) has the same asymptotic distribution as a fixed two-stage design with $a_1^{(1)},\cdots, a_{J^{(1)}}^{(1)}$, $a_1^{(2)},\cdots, a_{J^{(2)}}^{(2)}$ as interventions decided on before the trial (\cite{liang1986longitudinal}). The limiting distribution of $\sqrt{n} \; \boldsymbol{U}\left(\boldsymbol{\beta}^{*}\right)$ is $N(0, V\left(\boldsymbol{\beta}^*\right))$.

Combining equations (\ref{nominator_variance}) and (\ref{ddbeta_u_beta_tilde_converge}) implies that equation (\ref{aympstotic_normality}) holds.

\newpage 
\section{Extension to number of stages: \texorpdfstring{$K > 2$}{Lg}} \label{K>2}
In the case where $K > 2$, we first modify the notations from the main text. 
In stage 1, the notations remain the same.
Let $\boldsymbol{X}_{j}^{\left(k, \bar{n}_{k-}\right)}$ be the recommended intervention package for center $j$ in stage $k$. 
The superscript $\left(k, \bar{n}_{k-}\right)$ indicates that $\boldsymbol{X}_{j}^{\left(k, \bar{n}_{k-}\right)}$ depends on the data of patients from all previous k-1 stages.
Similar to Remark \ref{remark} of the main text, $\boldsymbol{X}_{j}^{\left(k, \bar{n}_{k-}\right)}$ is solved using the data from all previous stages based on function $f$. Let $\boldsymbol{A}_{j}^{\left(k, \bar{n}_{k-}\right)}$ be the actual intervention package implemented for center $j$ in stage $k$, where $\boldsymbol{A}_{j}^{\left(k, \bar{n}_{k-}\right)} =  h_j^{(k)} \left(\boldsymbol{X}_{j}^{\left(k, \bar{n}_{k-}\right)}\right)$, and $h_j^{(k)}$ is a continuous deterministic function for each center $j$ in stage $k$. 

Let $\boldsymbol{Y}_{j}^{\left(k, \bar{n}_{k-}\right)}= \left(Y_{1 j}^{\left(k, \bar{n}_{k-}\right)}, \ldots, Y_{n^{(k)}_j j}^{\left(k, \bar{n}_{k-}\right)}\right) $ be the outcomes for center $j$ in stage $k$. 
Let $\overline{\boldsymbol{A}}^{\left(k, \bar{n}_{k-}\right)}=\left(\boldsymbol{A}_{1}^{\left(k, \bar{n}_{k-}\right)}, \ldots, \boldsymbol{A}_{J^{(k)}}^{\left(k, \bar{n}_{k-}\right)}\right)$, $\overline{\boldsymbol{z}}^{(k)}=\left(\boldsymbol{z}_{1}^{(k)}, \ldots, \boldsymbol{z}_{J^{(k)}}^{(k)}\right)$, and \\ $\overline{\boldsymbol{Y}}^{\left(k, \bar{n}_{k-}\right)}=\left(\boldsymbol{Y}_{1}^{\left(k, \bar{n}_{k-}\right)}, \ldots, \boldsymbol{Y}_{J^{(k)}}^{\left(k, \bar{n}_{k-}\right)}\right)$ be the actual interventions, center-specific characteristics and outcomes for stage $k$, respectively. Additionally, let\\ 
$\overline{\hat{\boldsymbol{x}}}^{o p t,\left(k, \bar{n}_{k-}\right)}=\left(\hat{\boldsymbol{x}}_{1}^{o p t,\left(k, \bar{n}_{k-}\right)}, \ldots, \hat{\boldsymbol{x}}_{J^{(k)}}^{o p t,\left(k, \bar{n}_{k-}\right)}\right)$ be the recommended interventions for the $J^{(k)}$ centers at stage $k$.

Let $\widetilde{\boldsymbol{X}}^{\left(k, \bar{n}_{k-}\right)}=\left(\overline{\boldsymbol{x}}^{(1)}, \overline{\boldsymbol{X}}^{\left(2, n^{(1)}\right)}, \ldots, \overline{\boldsymbol{X}}^{\left(k, \bar{n}_{k-}\right)}\right)$, \\
$\widetilde{\boldsymbol{A}}^{\left(k, \bar{n}_{k-}\right)}=\left(\overline{\boldsymbol{a}}^{(1)}, \overline{\boldsymbol{A}}^{\left(2, n^{(1)}\right)}, \ldots, \overline{\boldsymbol{A}}^{\left(k, \bar{n}_{k-}\right)}\right)$, and
$\widetilde{\boldsymbol{Y}}^{\left(k, \bar{n}_{k-}\right)}=\left(\overline{\boldsymbol{Y}}^{(1)}, \ldots, \overline{\boldsymbol{Y}}^{\left(k, \bar{n}_{k-}\right)}\right)$ be the recommended intervention package, actual intervention package and actual outcomes until stage $k$ (including all previous $k-1$ stages), respectively. The following additional assumptions are needed for the case where $K>2$. 

\begin{assu}\label{k>3_glm_model_assumption}
The outcome of interest $Y_{i j}^{\left(k, \bar{n}_{k-}\right)}$ in stage $k$ with center specific characteristics $\boldsymbol{z}_{j}^{(k)}$ under treatment $\boldsymbol{A}_{j}^{\left(k, \bar{n}_{k-}\right)} = \boldsymbol{a}_{j}^{\left(k, \bar{n}_{k-}\right)}$, follows a GLM
\begin{equation}
\begin{aligned}
&g\left(E\left(Y_{i j}^{\left(k, \bar{n}_{k-}\right)} | \boldsymbol{A}_{j}^{\left(k, \bar{n}_{k-}\right)} = \boldsymbol{a}_{j}^{\left(k, \bar{n}_{k-}\right)}, \boldsymbol{X}_{j}^{\left(k, \bar{n}_{k-}\right)} =\boldsymbol{x}_j^{\left(k, \bar{n}_{k-}\right)}, \boldsymbol{z}_{j}^{(k)}; \boldsymbol{\beta} \right)\right) \\
&\qquad \qquad \qquad \qquad = \beta_{0}+\boldsymbol{\beta}_{1}^{T} \boldsymbol{a}_{j}^{\left(k, \bar{n}_{k-}\right)}+\boldsymbol{\beta}_{2}^{T} \boldsymbol{z}_{j}^{(k)},
\end{aligned}
\end{equation}
where $g()$ is a general link function and $\boldsymbol{\beta}^T=\left({\beta}_{0}, \boldsymbol{\beta}_{1}^T,\boldsymbol{\beta}_{2}^T\right)$ are unknown parameters to be estimated from the data.
\end{assu}

\begin{assu} \label{k>3_main_assum_1_1}
Conditionally on $\overline{\hat{\boldsymbol{x}}}^{o p t,\left(k, \bar{n}_{k-}\right)}$, 
$\left(\overline{\boldsymbol{A}}^{\left(k, \bar{n}_{k-}\right)},\overline{\boldsymbol{Y}}^{\left(k, \bar{n}_{k-}\right)}\right)$ are independent of previous stages $\left(\widetilde{\boldsymbol{A}}^{\left(k-1, \bar{n}_{\left(k-1\right)-}\right)},\widetilde{\boldsymbol{Y}}^{\left(k-1, \bar{n}_{\left(k-1\right)-}\right)} \right)$. 
\end{assu}

\begin{assu} For each $j=1,\ldots,J^{(k)}$, the stage $k$ recommended intervention  $\hat{\boldsymbol{x}}^{opt,\left(k, \bar{n}_{k-}\right)}_{j}$ converges in probability to a center-specific limit $\boldsymbol{x}_{j}^{(k)}$.
\label{k>3_intervention_cvg_assumption}
\end{assu}
Assumption \ref{intervention_cvg_assumption} from the main text showed that this Assumption holds for the case where $K=2$. 
If the recommended intervention suggested by the LAGO method is the estimated optimal intervention, then the results of the two-stage LAGO imply that, 
for stage $k=3$,  $\hat{\boldsymbol{x}}^{opt,\left(3, \left(n^{(1)}, n^{(2)}\right)\right)}_{j} \xrightarrow{P} \boldsymbol{x}_j^{opt,(3)}$. 
By forward induction, Assumption \ref{k>3_intervention_cvg_assumption} holds.
Under Assumption \ref{k>3_intervention_cvg_assumption}, the definition of $h_j^{(k)}$, and the Continuous Mapping Theorem, we conclude that $\boldsymbol{A}_{j}^{\left(k, \bar{n}_{k-}\right)} =  h_j^{(k)} \left(\boldsymbol{X}_{j}^{\left(k, \bar{n}_{k-}\right)}\right)$ converges in probability to $\boldsymbol{a}_j^{(k)} = h_j^{(k)}\left(\boldsymbol{x}^{(k)}_{j}  \right)$.

\begin{lem}
Under Assumption \ref{k>3_intervention_cvg_assumption},
there exist $\boldsymbol{a}_{j}^{(k)}$, which is the probability limit of $\boldsymbol{A}_{j}^{\left(k, \bar{n}_{k-}\right)}$ as each of $n^{(1)}, \cdots, n_{k-1} \rightarrow \infty$ separately, such that for stage $k$, 
$$\max_{j = 1, \cdots , J^{(k)}} \left\| \boldsymbol{A}_{j}^{\left(k, \bar{n}_{k-}\right)} - \boldsymbol{a}_{j}^{(k)} \right\| \xrightarrow{P} 0.$$
\label{K>3_sup_glm_condition}
\end{lem}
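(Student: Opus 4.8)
The plan is to mimic the proof of Lemma~\ref{sup_glm_condition} almost verbatim, since the substantive content — that each individual center's stage-$k$ recommended intervention has a probability limit — has already been handed to us as Assumption~\ref{k>3_intervention_cvg_assumption}, which the paragraph preceding the lemma justifies by forward induction on the stage index $k$ using the two-stage result. So the lemma is essentially a two-step deduction: first upgrade per-center convergence of $\hat{\boldsymbol{x}}^{opt,(k,\bar{n}_{k-})}_j$ to per-center convergence of $\boldsymbol{A}_j^{(k,\bar{n}_{k-})}$, then pass from per-center convergence to convergence of the maximum over the finitely many centers.

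First I would fix a center $j\in\{1,\dots,J^{(k)}\}$. By Assumption~\ref{k>3_intervention_cvg_assumption}, $\hat{\boldsymbol{x}}^{opt,(k,\bar{n}_{k-})}_j \xrightarrow{P} \boldsymbol{x}_j^{(k)}$ as each of $n^{(1)},\dots,n_{k-1}\to\infty$ separately. Since $h_j^{(k)}$ is a fixed, continuous, deterministic map (and does not depend on the sample sizes), the Continuous Mapping Theorem gives
\begin{equation*}
\boldsymbol{A}_j^{(k,\bar{n}_{k-})} = h_j^{(k)}\bigl(\hat{\boldsymbol{x}}^{opt,(k,\bar{n}_{k-})}_j\bigr) \xrightarrow{P} h_j^{(k)}\bigl(\boldsymbol{x}_j^{(k)}\bigr) =: \boldsymbol{a}_j^{(k)},
\end{equation*}
so $\bigl\|\boldsymbol{A}_j^{(k,\bar{n}_{k-})}-\boldsymbol{a}_j^{(k)}\bigr\|\xrightarrow{P}0$ for each $j$ separately. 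This is exactly the argument already invoked in the line following Assumption~\ref{k>3_intervention_cvg_assumption} in the text, so I would simply cite it.

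Next I would assemble the finitely many centers. Because $J^{(k)}$ is fixed, marginal convergence in probability of each coordinate $\boldsymbol{A}_j^{(k,\bar{n}_{k-})}-\boldsymbol{a}_j^{(k)}$ to the constant $\boldsymbol{0}$ implies joint convergence in probability of the stacked vector $\bigl(\boldsymbol{A}_1^{(k,\bar{n}_{k-})}-\boldsymbol{a}_1^{(k)},\dots,\boldsymbol{A}_{J^{(k)}}^{(k,\bar{n}_{k-})}-\boldsymbol{a}_{J^{(k)}}^{(k)}\bigr)$ to $\boldsymbol{0}$. The map $(\boldsymbol{v}_1,\dots,\boldsymbol{v}_{J^{(k)}})\mapsto \max_{j}\|\boldsymbol{v}_j\|$ is continuous on this finite-dimensional space, so one more application of the Continuous Mapping Theorem yields $\max_{j=1,\dots,J^{(k)}}\bigl\|\boldsymbol{A}_j^{(k,\bar{n}_{k-})}-\boldsymbol{a}_j^{(k)}\bigr\|\xrightarrow{P}0$, which is the claim.

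There is essentially no genuine obstacle here beyond bookkeeping; the real work — the existence of the center-specific limits $\boldsymbol{x}_j^{(k)}$ and hence $\boldsymbol{a}_j^{(k)}$ — lives in Assumption~\ref{k>3_intervention_cvg_assumption} and its inductive justification via the two-stage theorems, not in the lemma. The only two points I would be careful to state explicitly are: (i) that convergence ``as each $n^{(\ell)}\to\infty$ separately'' is preserved under the finite maximum, which it is because the maximum of a fixed finite collection is a continuous functional; and (ii) that marginal convergence in probability to constants gives joint convergence in probability, so that the Continuous Mapping Theorem can be applied to the stacked vector.
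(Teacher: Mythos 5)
Your proposal is correct and follows essentially the same route as the paper, which simply states that the proof is identical to that of Lemma \ref{sup_glm_condition}: per-center convergence of $\boldsymbol{A}_j^{(k,\bar{n}_{k-})}$ via the Continuous Mapping Theorem applied to $h_j^{(k)}$, followed by a second application of the Continuous Mapping Theorem to the maximum over the fixed finite number of centers. Your extra remark that marginal convergence in probability to constants yields joint convergence is a harmless (and welcome) piece of bookkeeping that the paper leaves implicit.
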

Proof of Lemma \ref{K>3_sup_glm_condition} is the same as the proof of Lemma \ref{sup_glm_condition} from Appendix. 

\begin{thm}\label{K>2 thm}
Under Assumptions \ref{k>3_glm_model_assumption} -- \ref{k>3_intervention_cvg_assumption} and Lemma \ref{K>3_sup_glm_condition},
Theorem \ref{GLM_consistency_general} and Theorem \ref{AN_thm} (from the main text) both hold for the case of $K>2$.
\end{thm}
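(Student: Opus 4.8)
The plan is to prove Theorem \ref{K>2 thm} by forward induction on the number of stages, reducing the $K$-stage problem to repeated application of the two-stage machinery of Appendices \ref{consistency_appendix_generallink} and \ref{an_appendix_generallink}. The starting observation is purely structural: the $K$-stage estimating equations $\boldsymbol{U}^{(g)}(\boldsymbol{\beta})$ and their limiting counterpart $\boldsymbol{u}^{(g)}(\boldsymbol{\beta})$ are again sums of per-stage contributions, now indexed by $k=1,\ldots,K$ instead of $k\in\{1,2\}$, and Lemma \ref{K>3_sup_glm_condition} (already in place, following from Assumption \ref{k>3_intervention_cvg_assumption} and the Continuous Mapping Theorem by the argument of Lemma \ref{sup_glm_condition}) supplies $\max_j\|\boldsymbol{A}_j^{(k,\bar n_{k-})}-\boldsymbol{a}_j^{(k)}\|\xrightarrow{P}0$ for every stage $k$, with the $\boldsymbol{a}_j^{(k)}$ fixed. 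The compactness and error-independence hypotheses (Assumptions \ref{glm_unif_bound_assu} and \ref{ind_error}) carry over verbatim to the $K$-stage setting; so the only genuinely new feature relative to $K=2$ is that the stage-$k$ recommendation for $k\ge 3$ depends on the outcomes of \emph{all} earlier stages rather than only stage~1.

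For consistency, I would decompose $\boldsymbol{U}^{(g)}(\boldsymbol{\beta})-\boldsymbol{u}^{(g)}(\boldsymbol{\beta})$ stage by stage. The stage-$1$ piece is literally $G_{1,n}^{(g)}$ of equation (\ref{G1_term_general}), handled by the Donsker-class argument. For each stage $k\ge 2$ one obtains the analogues of $G_{2,n}^{(g)},\ldots,G_{5,n}^{(g)}$: an error term to which Assumption \ref{k>3_main_assum_1_1} applies, so that replacing $\epsilon_{ij}^{(k,\bar n_{k-})}$ by the counterfactual errors $\epsilon_{ij}^{(k)}$ under the limiting intervention $\boldsymbol{a}_j^{(k)}$ does not change the distribution, after which the Donsker-class bound for the fixed-intervention part together with a Mean Value Theorem bound invoking Lemma \ref{K>3_sup_glm_condition} (exactly as for $\widetilde G_{2\_1,n}^{(g)},\widetilde G_{2\_2,n}^{(g)}$) sends the supremum over $\boldsymbol{\beta}$ to $0$; the $G_{3,n}^{(g)},G_{4,n}^{(g)}$-type terms are treated the same way, and the $G_{5,n}^{(g)}$-type terms vanish because $n_j^{(k)}/n\to\alpha_{jk}$. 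Summing the finitely many ($5(K-1)+1$) pieces and using the triangle inequality gives equation (\ref{general_5.9_cond1}). The second condition of Theorem~5.9 of \citet{van2000asymptotic} is unchanged: $\boldsymbol{u}^{(g)}$ coincides with the population estimating function of a fixed $K$-stage design with interventions $\boldsymbol{a}_j^{(k)}$ decided before the trial, so under enough variation in the intervention components $\boldsymbol{\beta}^*$ is its unique zero (cf.\ Chapter 2.2 of \cite{fahrmeir2013multivariate}), and consistency follows.

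For asymptotic normality, I would run the same Mean Value Theorem expansion (\ref{ubeta*forderivation}). The Jacobian step is identical in form: $-\left.\frac{\partial}{\partial\boldsymbol{\beta}}\right|_{\tilde{\boldsymbol{\beta}}}\boldsymbol{U}^{(g)}(\boldsymbol{\beta})$ equals a sum of per-stage $G_{6,n}^{(g)},G_{7,n}^{(g)}$-type matrices, each of which converges in probability to the corresponding block of $J(\boldsymbol{\beta}^*)$ by combining consistency of $\tilde{\boldsymbol{\beta}}$, Lemma \ref{K>3_sup_glm_condition}, and the error-replacement argument of Appendix \ref{an_appendix_generallink}. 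For the score term I would write $\sqrt n\,\boldsymbol{U}^{(g)}(\boldsymbol{\beta}^*)=\sum_{k=1}^{K}\boldsymbol{U}_{1,n}^{(g),k}+\sum_{k=2}^{K}\boldsymbol{U}_{2\_1,n}^{(g),k}$, where $\boldsymbol{U}_{1,n}^{(g),k}$ is the clean sum over stage-$k$ patients evaluated at the fixed $\boldsymbol{a}_j^{(k)}$ and $\boldsymbol{U}_{2\_1,n}^{(g),k}$ is the cross-stage remainder; each remainder has $E(\boldsymbol{U}_{2\_1,n}^{(g),k})=0$ (conditioning on $\boldsymbol{A}_j^{(k,\bar n_{k-})}$) and $E(\{\boldsymbol{U}_{2\_1,n}^{(g),k}\}^{\otimes 2})\to 0$ by Lemma \ref{K>3_sup_glm_condition} and dominated convergence, hence vanishes by Chebyshev's Inequality. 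After the sweep of error replacements, the clean terms $\boldsymbol{U}_{1,n}^{(g),1},\ldots,\boldsymbol{U}_{1,n}^{(g),K}$ are the score contributions of a fixed $K$-stage design with mutually independent stages, so their joint limit is $N(0,V(\boldsymbol{\beta}^*))$ by standard GEE theory \citep{liang1986longitudinal}, and Slutsky's theorem completes the proof. The main obstacle, and the one point I would treat carefully, is justifying the joint probability limit $\boldsymbol{A}_j^{(k,\bar n_{k-})}\xrightarrow{P}\boldsymbol{a}_j^{(k)}$ as $n^{(1)},\ldots,n^{(k-1)}\to\infty$ \emph{simultaneously} along the single index $n$, together with the resulting mutual independence of the counterfactual error vectors across stages: this is where the forward induction underlying Lemma \ref{K>3_sup_glm_condition} is essential, since Assumption \ref{k>3_intervention_cvg_assumption} delivers convergence of the stage-$k$ recommendation only \emph{given} that the earlier stages behave like a fixed design, so one must verify stage by stage that replacing each earlier stage's data by its limiting counterfactual version (legitimate by Assumption \ref{k>3_main_assum_1_1}) does not disturb the convergence of the later recommendations, allowing all error replacements to be performed in one pass. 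Everything else is a notationally heavier but conceptually verbatim repetition of the two-stage arguments.
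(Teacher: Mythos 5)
Your proposal is correct and takes essentially the same route as the paper: a per-stage decomposition of the $K$-stage estimating equations into $G_{1,n}^{(g)}$--$G_{5,n}^{(g)}$-type pieces, the error-replacement argument (justified by Assumptions \ref{k>3_main_assum_1_1} and \ref{ind_error} together with Lemma \ref{K>3_sup_glm_condition}) applied to each stage $k\ge 2$, and the same Jacobian/score expansion showing $\sqrt{n}\,\boldsymbol{U}^{(g)}_K(\boldsymbol{\beta}^*)$ has the asymptotic distribution of a fixed $K$-stage design. The only cosmetic difference is the direction of the induction over stages --- the paper performs the error replacement backward from stage $K$, conditioning on all earlier data, whereas you organize it forward (with the forward induction reserved, as in the paper, for the convergence of the recommended interventions) --- but the ingredients and conclusions are identical.
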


\begin{proof}
The estimating equations from the main text (equation (\ref{glm_EE})) become 
\begin{equation}
\small{
\begin{aligned}
0&=\boldsymbol{U}^{(g)}_K(\boldsymbol{\beta})=\frac{1}{n}\Biggl\{ 
\sum_{j=1}^{J^{(1)}} \sum_{i=1}^{n_j^{(1)}}
\left(\frac{\partial}{\partial \boldsymbol{\beta}} E\left(Y_{ij}^{(1)} | \boldsymbol{a}_{j}^{(1)}, \boldsymbol{z}_{j}^{(1)} ; \boldsymbol{\beta}\right)\right) \left(Y_{ij}^{(1)}-E\left(Y_{ij}^{(1)} | \boldsymbol{a}_{j}^{(1)}, \boldsymbol{z}_{j}^{(1)} ; \boldsymbol{\beta}\right)\right) \\
& \qquad \qquad \qquad + \left. 
\sum_{k=2}^{K}
\sum_{j=1}^{J^{(k)}} 
\sum_{i=1}^{n_j^{(k)}}
\left(\frac{\partial}{\partial \boldsymbol{\beta}} E\left(Y_{ij}^{\left(k, \bar{n}_{k-}\right)} | \boldsymbol{A}_{j}^{\left(k, \bar{n}_{k-}\right)}, \boldsymbol{z}_{j}^{(k)} ; \boldsymbol{\beta}\right)\right) \right.\\
&\qquad \qquad \qquad \qquad \qquad \qquad  \left(Y_{ij}^{\left(k, \bar{n}_{k-}\right)}-E\left(Y_{ij}^{\left(k, \bar{n}_{k-}\right)} | \boldsymbol{A}_{j}^{\left(k, \bar{n}_{k-}\right)}, \boldsymbol{z}_{j}^{(k)} ; \boldsymbol{\beta}\right)\right)
\Biggr\},
\end{aligned}
}
\label{k>3_glm_EE}
\end{equation}
where the second term now includes a summation from $k=2,\ldots,K$. Similar to equation (\ref{little_u_main_body}) of the main text, we define
\begin{equation}
\small{
\begin{aligned}
\boldsymbol{u}^{(g)}_K(\boldsymbol{\beta})&= \\
&\sum_{k=1}^{K} 
\sum_{j=1}^{J^{(k)}} 
\alpha_{jk}
\left(\frac{\partial}{\partial \boldsymbol{\beta}} g^{-1}\left( \boldsymbol{a}_{j}^{(k)}, \boldsymbol{z}_{j}^{(k)} ; \boldsymbol{\beta}\right)
\right) 
\left(g^{-1}\left( \boldsymbol{a}_{j}^{(k)}, \boldsymbol{z}_{j}^{(k)} ; \boldsymbol{\beta}^*\right) - g^{-1}\left( \boldsymbol{a}_{j}^{(k)}, \boldsymbol{z}_{j}^{(k)} ; \boldsymbol{\beta}\right)
\right).
\end{aligned}
}
\label{k>3_little_u}
\end{equation}
To prove Theorem \ref{GLM_consistency_general} and \ref{AN_thm} (from the main text) for the case where $K>2$, we can use backward inductions. 

For Theorem 1, if we only focus on the last stage $k=K$, then $\boldsymbol{U}^{(g)}_K(\boldsymbol{\beta})-\boldsymbol{u}^{(g)}_K(\boldsymbol{\beta})$ can be separated into five terms similar to equations (\ref{G1_term_general}), (\ref{G2_term_general}), (\ref{G3_term_general}), (\ref{G4_term_general}), and (\ref{G5_term_general}) (see Appendix \ref{consistency_appendix_generallink}). 
The only difficult term to generalize to the case of $K>2$ is the term similar to equation (\ref{G2_term_general}) as the other four terms follow the same proofs as described in Appendix \ref{consistency_appendix_generallink}. Let $Y_{ij}^{(K)}$ be the (counterfactual) outcomes under $\boldsymbol{a}_{j}^{(K)}$ and let $\epsilon_{ij}^{(K)}$ be the corresponding errors that patient $i$ in center $j$ would have experienced under intervention $\boldsymbol{a}_{j}^{(K)}$. 
Given $\boldsymbol{A}_j^{\left(K, \bar{n}_{K-}\right)}$ and all data from previous $K-1$ stages, by Assumption \ref{ind_error} from the main text, replacing $\epsilon_{ij}^{\left(K, \bar{n}_{K-}\right)}$ by the error terms $\epsilon_{ij}^{(K)}$ under the intervention $\boldsymbol{a}_{j}^{(K)}$ does not change the distribution of the term similar to equation (\ref{G2_term_general}). Thus, for the last stage $k=K$, $\sup _{\boldsymbol{\beta}}\left\|\boldsymbol{U}_K^{(g)}(\boldsymbol{\beta})-\boldsymbol{u}_K^{(g)}(\boldsymbol{\beta})\right\| \stackrel{P}{\rightarrow} 0$ is satisfied. Then, we can continue with stage $k=K-1$.
and use backward induction to show a similar strategy of replacing the error term can be achieved for all previous stages. 
Given $\boldsymbol{A}_j^{\left(K-1, \bar{n}_{(K-1)-}\right)}$ and all data from previous $K-2$ stages, by Assumption 5, replacing $\epsilon_{ij}^{\left(K-1, \bar{n}_{(K-1)-}\right)}$ by the error terms $\epsilon_{ij}^{(K-1)}$ under the intervention $\boldsymbol{a}_{j}^{(K-1)}$ does not change the distribution of the term similar to equation (\ref{G2_term_general}). For the first stage $k=1$, since $\boldsymbol{a}_{j}^{(1)}$ is determined before the study starts, it is not necessary to implement the error replacing approach for the first stage.  
In the end, we conclude that Theorem 1 holds for the case of $K>2$. 
The same reasoning for the replacing error approach can be applied to the proof of Theorem 2 of $K>2$. It follows that $\sqrt{n}\; \boldsymbol{U}^{(g)}_K\left(\boldsymbol{\beta}^{*}\right)$ has the same asymptotic distribution as sum of $K$ independent terms, and Theorem 2 holds. Thus, Theorem \ref{K>2 thm} follows.
\end{proof}

\newpage
\section{Additional simulation results}\label{additional simulation}
We present additional simulation results in this section. 
Sections \ref{table1ctn}, \ref{table2ctn}, and \ref{table3ctn} provide additional simulation results for Tables \ref{component_effects_table}, \ref{bias_table}, and \ref{cp table} from the main text, respectively. Section \ref{mimicbbstudy} describes how we mimicked the BetterBirth study as if a c-LAGO design was used, and provides the corresponding simulation results. Section \ref{additional cost} describes the motivations behind using the cubic cost function in healthcare settings and provide additional simulation results corresponding to the material covered in Tables \ref{component_effects_table cubic cost}, \ref{bias_table cubic cost}, and \ref{meanopt table cubic cost} from the main text.

\subsection{Continuation of Table \ref{component_effects_table} from the main text}\label{table1ctn}
\newpage
\begin{table}[ht]
\caption{Simulation study results for individual package component effects with a linear cost function}
\centering
\scriptsize{
\begin{tabular}{llllllllllll}
\tiny$\boldsymbol{\beta}^*=(\beta_{11}^*, \beta_{12}^*)$ & \tiny$n_j^{(1)}$ & \tiny$n_j^{(2)}$ & \tiny$J$ &  &  & \tiny$\hat{\boldsymbol{\beta}}_{11}$ &  &  &  & \tiny$\hat{\boldsymbol{\beta}}_{12}$ &  \\
 &  &  &  &  & $\%$RelBias & \begin{tabular}[c]{@{}l@{}}$\frac{SE}{EMP.SD}$\\ ($\times 100$)\end{tabular}  & CP95 &  & $\%$RelBias & \begin{tabular}[c]{@{}l@{}}$\frac{SE}{EMP.SD}$\\ ($\times 100$)\end{tabular}  & CP95 \\
\multicolumn{4}{l}{Scenario 1 ($J_1 = J_2 = J$)} &  &  &  &  &  &  &  &  \\
(0.1863, 0.15) & 50 & 100 & 6 &  & 1.64 & 84.8 & 96.1 &  & -0.46 & 76.1$^*$ & 95.0 \\
 &  &  & 10 &  & 1.78 & 90.6 & 94.9 &  & -0.55 & 79.3 & 94.9 \\
 &  &  & 20 &  & 1.54 & 97.6 & 95.1 &  & -0.37 & 91.2 & 95.2 \\
 &  & 200 & 6 &  & 1.44 & 77.9 & 95.2 &  & -0.42 & 70.9 & 95.5 \\
 &  &  & 10 &  & 1.91 & 86.1 & 95.4 &  & -0.49 & 73.6 & 95.5 \\
 &  &  & 20 &  & 1.14 & 97.8 & 95.4 &  & -0.26 & 88.2 & 94.8 \\
 & 100 & 100 & 6 &  & 1.10 & 87.4 & 95.4 &  & -0.20 & 81.0 & 95.3 \\
 &  &  & 10 &  & 1.63 & 97.0 & 96.0 &  & -0.39 & 88.7 & 95.1 \\
 &  &  & 20 &  & 1.25 & 93.1 & 94.4 &  & -0.33 & 84.9 & 94.2 \\
 &  & 200 & 6 &  & 0.47 & 84.6 & 95.8 &  & 0.02 & 74.3 & 95.2 \\
 &  &  & 10 &  & 1.10 & 91.4 & 95.2 &  & -0.17 & 81.2 & 94.8 \\
 &  &  & 20 &  & 1.06 & 98.2 & 95.2 &  & -0.29 & 87.1 & 95.3 \\
(0.1, 0.2133) & 50 & 100 & 6 &  & -0.45 & 96.4 & 95.8 &  & 0.17 & 87.9 & 94.4 \\
 &  &  & 10 &  & -0.49 & 99.1 & 95.6 &  & 0.20 & 92.5 & 94.7 \\
 &  &  & 20 &  & 0.90 & 98.1 & 95.0 &  & 0.04 & 93.5 & 95.1 \\
 &  & 200 & 6 &  & 0.18 & 85.1 & 95.6 &  & 0.13 & 73.8 & 96.1 \\
 &  &  & 10 &  & 0.26 & 97.7 & 95.1 &  & 0.14 & 90.6 & 94.3 \\
 &  &  & 20 &  & 1.33 & 99.8 & 94.8 &  & -0.01 & 96.9 & 95.0 \\
 & 100 & 100 & 6 &  & -0.19 & 95.3 & 95.5 &  & 0.12 & 87.9 & 94.7 \\
 &  &  & 10 &  & 0.07 & 100.1 & 95.3 &  & 0.13 & 95.0 & 94.6 \\
 &  &  & 20 &  & -0.13 & 95.7 & 94.8 &  & 0.05 & 97.0 & 94.8 \\
 &  & 200 & 6 &  & -0.67 & 97.5 & 95.8 &  & 0.22 & 87.4 & 95.1 \\
 &  &  & 10 &  & 0.91 & 99.3 & 95.0 &  & 0.07 & 94.8 & 95.4 \\
 &  &  & 20 &  & 0.07 & 98.3 & 95.0 &  & 0.01 & 96.1 & 95.1 \\
\multicolumn{4}{l}{Scenario 2a ($J_1 = 6, J_2 = 12$)} &  &  &  &  &  &  &  &  \\
(0.1863, 0.15) & 50 & 200 &  &  & 0.04 & 85.9 & 95.0 &  & -0.09 & 75.0 & 94.8 \\
(0.1, 0.2133) & 50 & 200 &  &  & 0.56 & 93.7 & 95.8 &  & 0.14 & 84.3 & 95.7 \\
\multicolumn{4}{l}{} &  &  &  &  &  &  &  &  \\
\multicolumn{4}{l}{Scenario 2b ($J_1 = 10, J_2 = 20$)} &  &  &  &  &  &  &  &  \\
(0.1863, 0.15) & 50 & 200 &  &  & 0.44 & 93.4 & 95.9 &  & -0.13 & 86.8 & 95.9 \\
(0.1, 0.2133) & 50 & 200 &  &  & 0.56 & 96.5 & 95.3 &  & 0.09 & 86.8 & 94.9
\end{tabular}
}
\scriptsize	\\
\raggedright{
$n_j^{(1)}$: number of patients in center $j$ at stage 1, 
$n_j^{(2)}$: number of patients in center $j$ at stage 2.\\
$J$: number of centers for each stage. \\
\%RelBias: percent relative bias $100(\hat{\beta}-\beta^\star)/\beta^\star$.\\
SE: mean estimated standard error, 
EMP.SD: empirical standard deviation.\\
CP95: empirical coverage rate of 95\% confidence intervals.\\
$^*:$ The mean SE for covered iterations was found to be higher than that for non-covered iterations. Despite having thoroughly verified the code's accuracy, we cannot explain why coverage remained satisfactory when SE/EMP.SD was small. However, our analysis confirmed that this phenomenon persisted.
}
\end{table}

\begin{table}[ht]
\caption{Simulation study results for individual package component effects with a linear cost function}
\centering
\scriptsize{
\begin{tabular}{llllllllllll}
\tiny$\boldsymbol{\beta}^*=(\beta_{11}^*, \beta_{12}^*)$ & \tiny$n_j^{(1)}$ & \tiny$n_j^{(2)}$ & \tiny$J$ &  &  & \tiny$\hat{\boldsymbol{\beta}}_{11}$ &  &  &  & \tiny$\hat{\boldsymbol{\beta}}_{12}$ &  \\
 &  &  &  &  & $\%$RelBias & \begin{tabular}[c]{@{}l@{}}$\frac{SE}{EMP.SD}$\\ ($\times 100$)\end{tabular}  & CP95 &  & $\%$RelBias & \begin{tabular}[c]{@{}l@{}}$\frac{SE}{EMP.SD}$\\ ($\times 100$)\end{tabular}  & CP95 \\
\multicolumn{4}{l}{Scenario 1 ($J_1 = J_2 = J$)} &  &  &  &  &  &  &  &  \\
(0.0438, 0.17) & 50 & 100 & 6 &  & 30.28 & 72.0 & 96.0 &  & -1.89 & 64.1$^*$ & 95.2 \\
 &  &  & 10 &  & 18.25 & 81.6 & 95.1 &  & -1.16 & 68.2 & 95.3 \\
 &  &  & 20 &  & 9.24 & 95.1 & 94.8 &  & -0.54 & 80.5 & 95.4 \\
 &  & 200 & 6 &  & 24.19 & 66.8 & 96.1 &  & -1.52 & 57.7 & 95.7 \\
 &  &  & 10 &  & 14.54 & 79.3 & 95.7 &  & -0.94 & 62.7 & 94.9 \\
 &  &  & 20 &  & 5.61 & 96.5 & 95.6 &  & -0.37 & 79.2 & 95.2 \\
 & 100 & 100 & 6 &  & 29.95 & 77.2 & 95.7 &  & -1.91 & 68.1 & 94.9 \\
 &  &  & 10 &  & 19.53 & 84.2 & 95.2 &  & -1.12 & 71.5 & 94.4 \\
 &  &  & 20 &  & 8.62 & 86.9 & 93.9 &  & -0.51 & 75.9 & 93.9 \\
 &  & 200 & 6 &  & 24.25 & 68.6 & 95.7 &  & -1.47 & 58.3 & 95.0 \\
 &  &  & 10 &  & 12.89 & 84.4 & 95.2 &  & -0.72 & 68.7 & 94.9 \\
 &  &  & 20 &  & 4.43 & 94.9 & 95.7 &  & -0.27 & 80.0 & 95.7 \\
(0.1062, 0.16) & 50 & 100 & 6 &  & 12.02 & 77.0 & 95.8 &  & -1.86 & 68.2 & 95.0 \\
 &  &  & 10 &  & 4.83 & 84.2 & 95.2 &  & -0.80 & 72.3 & 94.9 \\
 &  &  & 20 &  & 3.21 & 97.9 & 94.7 &  & -0.44 & 85.9 & 95.2 \\
 &  & 200 & 6 &  & 9.00 & 68.1 & 95.8 &  & -1.41 & 58.3 & 95.2 \\
 &  &  & 10 &  & 2.98 & 84.1 & 95.6 &  & -0.47 & 69.7 & 95.1 \\
 &  &  & 20 &  & 2.36 & 96.0 & 95.6 &  & -0.36 & 81.3 & 94.8 \\
 & 100 & 100 & 6 &  & 9.93 & 79.5 & 95.7 &  & -1.57 & 71.0 & 95.2 \\
 &  &  & 10 &  & 6.39 & 93.6 & 95.0 &  & -0.91 & 82.3 & 94.7 \\
 &  &  & 20 &  & 3.53 & 89.7 & 94.1 &  & -0.54 & 80.2 & 94.2 \\
 &  & 200 & 6 &  & 7.03 & 74.9 & 95.6 &  & -1.04 & 64.6 & 94.9 \\
 &  &  & 10 &  & 3.73 & 90.5 & 95.1 &  & -0.45 & 77.5 & 94.9 \\
 &  &  & 20 &  & 2.15 & 96.7 & 95.5 &  & -0.34 & 83.7 & 95.6 \\
\multicolumn{4}{l}{Scenario 2a ($J_1 = 6, J_2 = 12$)} &  &  &  &  &  &  &  &  \\
(0.0438, 0.17) & 50 & 200 &  &  & 8.32 & 67.4 & 95.5 &  & -0.63 & 53.7 & 95.4 \\
(0.1062, 0.16) & 50 & 200 &  &  & 3.66 & 69.4 & 95.9 &  & -0.65 & 55.5 & 95.5 \\
\multicolumn{4}{l}{} &  &  &  &  &  &  &  &  \\
\multicolumn{4}{l}{Scenario 2b ($J_1 = 10, J_2 = 20$)} &  &  &  &  &  &  &  &  \\
(0.0438, 0.17) & 50 & 200 &  &  & 3.49 & 90.1 & 95.2 &  & -0.31 & 71.2 & 95.6 \\
(0.1062, 0.16) & 50 & 200 &  &  & 0.60 & 95.5 & 95.9 &  & -0.19 & 83.6 & 95.3
\end{tabular}
}
\label{component_effects_table_ctn}
\footnotesize	\\
\raggedright{
$n_j^{(1)}$: number of patients in center $j$ at stage 1, 
$n_j^{(2)}$: number of patients in center $j$ at stage 2.\\
$J$: number of centers for each stage. \\
\%RelBias: percent relative bias $100(\hat{\beta}-\beta^\star)/\beta^\star$.\\
SE: mean estimated standard error, 
EMP.SD: empirical standard deviation.\\
CP95: empirical coverage rate of 95\% confidence intervals.\\
$^*:$ $^*:$ The mean SE for covered iterations was found to be higher than that for non-covered iterations. Despite having thoroughly verified the code's accuracy, we cannot explain why coverage remained satisfactory when SE/EMP.SD was small. However, our analysis confirmed that this phenomenon persisted.
}
\end{table}

\newpage
\textcolor{white}{nothing}

\newpage
\subsection{Continuation of Table \ref{bias_table} from the main text}\label{table2ctn}
\textcolor{white}{nothing}
\begin{table}[ht]
\caption{Simulation study results for estimated optimal intervention with a linear cost function}
\centering
\scriptsize{
\begin{tabular}{llllllllllll}
$\boldsymbol{\beta}^*=(\beta_{11}^*,\beta_{12}^*)$ & $\boldsymbol{x}^{opt}$ & $n_j^{(1)}$ & $n_j^{(2)}$ &  & \multicolumn{3}{c}{Stage 1} &  & \multicolumn{3}{c}{Stage 2/LAGO optimized} \\
 &  &  &  &  & \begin{tabular}[c]{@{}l@{}}Bias of\\ $\hat{x}_{1}^{o p t}$\\ ($\times$100)\end{tabular} & \begin{tabular}[c]{@{}l@{}}Bias of\\ $\hat{x}_{2}^{o p t}$\\ ($\times$100)\end{tabular} & \begin{tabular}[c]{@{}l@{}}rMSE\\ ($\times$100)\end{tabular} &  & \begin{tabular}[c]{@{}l@{}}Bias of\\ $\hat{x}_{1}^{o p t}$\\ ($\times$100)\end{tabular} & \begin{tabular}[c]{@{}l@{}}Bias of\\ $\hat{x}_{2}^{o p t}$\\ ($\times$100)\end{tabular} & \begin{tabular}[c]{@{}l@{}}rMSE\\ ($\times$100)\end{tabular} \\
\multicolumn{4}{l}{Scenario 1 ($J_1 = J_2 = 20$)} &  &  &  &  &  &  &  &  \\
(0.1863, 0.15) & (1,8) & 50 & 100 &  & 15.2 & 96.9 & 115.0 &  & -0.27 & 0.01 & 35.3 \\
 &  &  & 500 &  &  &  &  &  & -0.70 & 0.00 & 28.2 \\
 &  & 100 & 100 &  & 14.0 & 51.6 & 94.1 &  & 0.11 & 0.01 & 32.7 \\
 &  &  & 500 &  &  &  &  &  & -0.25 & 0.00 & 24.5 \\
(0.0438, 0.17) & (0.6, 8) & 50 & 100 &  & -19.3 & 83.8 & 117.7 &  & -4.56 & 2.98 & 67.8 \\
 &  &  & 500 &  &  &  &  &  & -6.04 & 0.91 & 58.6 \\
 &  & 100 & 100 &  & -18.9 & 55.7 & 106.2 &  & -3.08 & 2.69 & 64.4 \\
 &  &  & 500 &  &  &  &  &  & -4.34 & 0.69 & 53.1 \\
(0.1, 0.2133) & (0, 6.5) & 50 & 100 &  & -17.7 & 34.9 & 111.3 &  & 0.00 & 0.02 & 30.0 \\
 &  &  & 500 &  &  &  &  &  & 0.00 & -0.19 & 20.9 \\
 &  & 100 & 100 &  & -8.7 & 6.4 & 95.9 &  & 0.00 & 0.13 & 28.5 \\
 &  &  & 500 &  &  &  &  &  & 0.00 & -0.13 & 19.2 \\
\multicolumn{4}{l}{} &  &  &  &  &  &  &  &  \\
\multicolumn{4}{l}{Scenario 2a ($J_1 = 6, J_2 = 12$)} &  &  &  &  &  &  &  &  \\
(0.1863, 0.15) & (1,8) & 50 & 200 &  & 20.7 & 341.5 & 191.1 &  & -1.5 & 1.5 & 52.3 \\
(0.0438, 0.17) & (0.6, 8) & 50 & 200 &  & -18.0 & 297.1 & 182.8 &  & -2.8 & 10.6 & 78.9 \\
(0.1, 0.2133) & (0, 6.5) & 50 & 200 &  & -46.2 & 239.2 & 172.9 &  & -0.1 & 0.1 & 39.2 \\
\multicolumn{4}{l}{} &  &  &  &  &  &  &  &  \\
\multicolumn{4}{l}{Scenario 2b ($J_1 = 10, J_2 = 20$)} &  &  &  &  &  &  &  &  \\
(0.1863, 0.15) & (1,8) & 50 & 200 &  & 13.9 & 214.0 & 156.0 &  & -1.2 & 0.1 & 39.9 \\
(0.0438, 0.17) & (0.6, 8) & 50 & 200 &  & -23.0 & 174.0 & 149.0 &  & -3.7 & 3.5 & 70.4 \\
(0.1, 0.2133) & (0, 6.5) & 50 & 200 &  & -35.5 & 126.7 & 143.8 &  & 0.0 & 0.4 & 29.6 \\
 &  &  &  &  &  &  &  &  &  &  & 
\end{tabular}
}
\scriptsize \\
\raggedright {
$n_j^{(1)}$: number of patients in center $j$ at stage 1,
$n_j^{(2)}$: number of patients in center $j$ at stage 2.\\
Bias of $\hat{x}_{1}^{o p t}$: bias of the first component of the estimated optimal intervention,  
Bias of $\hat{x}_{2}^{o p t}$: bias of the second component of the estimated optimal intervention. \\
rMSE: root of mean squared errors, $\left\{\operatorname{mean}\left(\left\|\hat{\boldsymbol{x}}^{o p t}-\boldsymbol{x}^{o p t}\right\|^{2}\right)\right\}^{1 / 2}$, mean is taken over simulation iterations.
}
\end{table}

\begin{table}[ht]
\caption{Simulation study results for estimated optimal intervention with a linear cost function}
\centering
\scriptsize{
\begin{tabular}{llllllllllll}
$\boldsymbol{\beta}^*=(\beta_{11}^*,\beta_{12}^*)$ & $\boldsymbol{x}^{opt}$ & $n_j^{(1)}$ & $n_j^{(2)}$ &  & \multicolumn{3}{c}{Stage 1} &  & \multicolumn{3}{c}{Stage 2/LAGO optimized} \\
 &  &  &  &  & \begin{tabular}[c]{@{}l@{}}Bias of\\ $\hat{x}_{1}^{o p t}$\\ ($\times$100)\end{tabular} & \begin{tabular}[c]{@{}l@{}}Bias of\\ $\hat{x}_{2}^{o p t}$\\ ($\times$100)\end{tabular} & \begin{tabular}[c]{@{}l@{}}rMSE\\ ($\times$100)\end{tabular} &  & \begin{tabular}[c]{@{}l@{}}Bias of\\ $\hat{x}_{1}^{o p t}$\\ ($\times$100)\end{tabular} & \begin{tabular}[c]{@{}l@{}}Bias of\\ $\hat{x}_{2}^{o p t}$\\ ($\times$100)\end{tabular} & \begin{tabular}[c]{@{}l@{}}rMSE\\ ($\times$100)\end{tabular} \\
\multicolumn{4}{l}{Scenario 1 ($J_1 = J_2 = 20$)} &  &  &  &  &  &  &  &  \\
(0.1062, 0.16) & (1,8) & 50 & 100 &  & 16.3 & 80.5 & 111.9 &  & -0.81 & 0.28 & 48.7 \\
 &  &  & 500 &  &  &  &  &  & -2.75 & 0.02 & 39.7 \\
 &  & 100 & 100 &  & 15.6 & 46.9 & 96.2 &  & -0.28 & 0.18 & 43.8 \\
 &  &  & 500 &  &  &  &  &  & -1.65 & 0.08 & 34.2 \\
\multicolumn{4}{l}{} &  &  &  &  &  &  &  &  \\
\multicolumn{4}{l}{Scenario 2a ($J_1 = 6, J_2 = 12$)} &  &  &  &  &  &  &  &  \\
(0.1062, 0.16) & (1,8) & 50 & 200 &  & 20.0 & 304.4 & 182.8 &  & -2.2 & 5.3 & 66.6 \\
\multicolumn{4}{l}{} &  &  &  &  &  &  &  &  \\
\multicolumn{4}{l}{Scenario 2b ($J_1 = 10, J_2 = 20$)} &  &  &  &  &  &  &  &  \\
(0.1062, 0.16) & (1,8) & 50 & 200 &  & 14.2 & 182.7 & 148.4 &  & -4.0 & 0.5 & 54.0 \\
 &  &  &  &  &  &  &  &  &  &  & 
\end{tabular}
}
\label{bias_table_ctn}
\footnotesize	\\
\raggedright {
$n_j^{(1)}$: number of patients in center $j$ at stage 1,
$n_j^{(2)}$: number of patients in center $j$ at stage 2.\\
Bias of $\hat{x}_{1}^{o p t}$: bias of the first component of the estimated optimal intervention,  
Bias of $\hat{x}_{2}^{o p t}$: bias of the second component of the estimated optimal intervention. \\
rMSE: root of mean squared errors, $\left\{\operatorname{mean}\left(\left\|\hat{\boldsymbol{x}}^{o p t}-\boldsymbol{x}^{o p t}\right\|^{2}\right)\right\}^{1 / 2}$, mean is taken over simulation iterations.
}
\end{table}

\newpage 
\textcolor{white}{nothing}

\subsection{Continuation of Table \ref{cp table} from the main text}\label{table3ctn}
\begin{table}[ht] 
\caption{Simulation study results for estimated optimal intervention, confidence set and confidence band with a linear cost function}
\centering
\scriptsize{
\begin{tabular}{lllllllll}
\tiny$\boldsymbol{\beta}^*=(\beta_{11}^*,\beta_{12}^*)$ & $\boldsymbol{x}^{opt}$ & $n_j^{(1)}$ & $n_j^{(2)}$ &  &  &  &  &  \\
 &  &  &  & \begin{tabular}[c]{@{}l@{}}True Mean\\ under\\ $\hat{x}^{opt,(2,n^{(1)})}$\\ (Q2.5,Q97.5)\end{tabular} & \begin{tabular}[c]{@{}l@{}}True Mean\\ under\\ $\hat{x}^{opt}$\\ (Q2.5,Q97.5)\end{tabular} & \begin{tabular}[c]{@{}l@{}}SetCP95\\ $\%$\}\end{tabular} & \begin{tabular}[c]{@{}l@{}}SetPerc\\ $\%$\end{tabular} & \begin{tabular}[c]{@{}l@{}}BandsCP95\\ $\%$\end{tabular} \\
\multicolumn{4}{l}{Scenario 1 ($J_1 = J_2 = 20$)} &  &  &  &  &  \\
(0.1863, 0.15) & (1,8) & 50 & 100 & (0.655, 0.818) & (0.789, 0.811) & 95.3 & 5.5 & 95.2 \\
 &  &  & 500 &  & (0.794, 0.808) & 95.1 & 3.7 & 95.7 \\
 &  & 100 & 100 & (0.708, 0.816) & (0.791, 0.808) & 94.4 & 4.4 & 95.1 \\
 &  &  & 500 &  & (0.795, 0.805) & 95.2 & 2.8 & 95.8 \\
(0.0438, 0.17) & (0.6, 8) & 50 & 100 & (0.685, 0.810) & (0.787, 0.810) & 95.6 & 6.3 & 95.7 \\
 &  &  & 500 &  & (0.793, 0.810) & 95.1 & 4.5 & 95.5 \\
 &  & 100 & 100 & (0.719, 0.810) & (0.788, 0.808) & 94.2 & 5.3 & 94.7 \\
 &  &  & 500 &  & (0.794, 0.808) & 94.9 & 3.7 & 95.7 \\
(0.1, 0.2133) & (0, 6.5) & 50 & 100 & (0.672, 0.856) & (0.792, 0.808) & 95.2 & 7.1 & 96.0 \\
 &  &  & 500 &  & (0.796, 0.804) & 95.2 & 5.8 & 95.8 \\
 &  & 100 & 100 & (0.719, 0.854) & (0.793, 0.807) & 94.8 & 5.5 & 94.9 \\
 &  &  & 500 &  & (0.797, 0.803) & 95.5 & 4.4 & 95.7 \\
\multicolumn{4}{l}{} &  &  &  & \multicolumn{1}{r}{} &  \\
\multicolumn{4}{l}{Scenario 2a ($J_1 = 6, J_2 = 12$)} &  &  &  & \multicolumn{1}{r}{} &  \\
(0.1863, 0.15) & (1,8) & 50 & 200 & (0.529, 0.826) & (0.773, 0.826) & 94.5 & 9.9 & 95.4 \\
(0.0438, 0.17) & (0.6, 8) & 50 & 200 & (0.508, 0.810) & (0.772, 0.810) & 94.7 & 10.1 & 96.1 \\
(0.1, 0.2133) & (0, 6.5) & 50 & 200 & (0.515, 0.857) & (0.783, 0.818) & 95.7 & 12.6 & 95.7 \\
\multicolumn{4}{l}{} &  &  & \multicolumn{1}{r}{} & \multicolumn{1}{r}{} & \multicolumn{1}{r}{} \\
\multicolumn{4}{l}{Scenario 2b ($J_1 = 10, J_2 = 20$)} &  &  & \multicolumn{1}{r}{} & \multicolumn{1}{r}{} & \multicolumn{1}{r}{} \\
(0.1863, 0.15) & (1,8) & 50 & 200 & (0.558, 0.821) & (0.787, 0.815) & 96.2 & 6.6 & 95.8 \\
(0.0438, 0.17) & (0.6, 8) & 50 & 200 & (0.516, 0.810) & (0.785, 0.810) & 95.2 & 6.9 & 95.8 \\
(0.1, 0.2133) & (0, 6.5) & 50 & 200 & (0.534, 0.858) & (0.790, 0.808) & 94.9 & 8.8 & 95.9 \\
 &  &  &  &  &  &  &  & 
\end{tabular}
}
\scriptsize	\\
\raggedright{
$n_j^{(1)}$: number of patients in center $j$ at stage 1,
$n_j^{(2)}$: number of patients in center $j$ at stage 2. \\
True Mean under $\hat{x}^{opt,(2,n^{(1)})}$: true mean outcome under the stage 2 recommended intervention. \\
True Mean under $\hat{x}^{opt}$: true mean outcome under the final estimated optimal intervention. \\
$Q$2.5 and $Q$97.5: 2.5$\%$ and 97.5$\%$ quantiles.\\ 
SetCP95$\%$: empirical coverage percentage of confidence set for the optimal intervention. \\
SetPerc$\%$: mean percentage of the size of the confidence set as a percent of the total sample space.\\
BandsCP95$\%$: empirical coverage of 95$\%$ confidence band. 
}
\end{table}

\begin{table}[ht] 
\caption{Simulation study results for estimated optimal intervention, confidence set and confidence band with a linear cost function}
\centering
\scriptsize{
\begin{tabular}{lllllllll}
\tiny$\boldsymbol{\beta}^*=(\beta_{11}^*,\beta_{12}^*)$ & $\boldsymbol{x}^{opt}$ & $n_j^{(1)}$ & $n_j^{(2)}$ &  &  &  &  &  \\
 &  &  &  & \begin{tabular}[c]{@{}l@{}}True Mean\\ under\\ $\hat{x}^{opt,(2,n^{(1)})}$\\ (Q2.5,Q97.5)\end{tabular} & \begin{tabular}[c]{@{}l@{}}True Mean\\ under\\ $\hat{x}^{opt,(2)}$\\ (Q2.5,Q97.5)\end{tabular} & \begin{tabular}[c]{@{}l@{}}SetCP95\\ $\%$\end{tabular} & \begin{tabular}[c]{@{}l@{}}SetPerc\\ $\%$\end{tabular} & \begin{tabular}[c]{@{}l@{}}BandsCP95\\ $\%$\end{tabular} \\
\multicolumn{4}{l}{Scenario 1 ($J_1 = J_2 = 20$)} &  &  &  &  &  \\
(0.1062, 0.16) & (1,8) & 50 & 100 & (0.675, 0.815) & (0.785, 0.812) & 95.4 & 5.4 & 95.5 \\
 &  &  & 500 &  & (0.794, 0.811) & 95.2 & 3.6 & 95.5 \\
 &  & 100 & 100 & (0.714, 0.813) & (0.789, 0.809) & 94.5 & 4.3 & 95.3 \\
 &  &  & 500 &  & (0.796, 0.807) & 95.1 & 2.9 & 95.9 \\
\multicolumn{4}{l}{} &  &  &  & \multicolumn{1}{r}{} &  \\
\multicolumn{4}{l}{Scenario 2a ($J_1 = 6, J_2 = 12$)} &  &  &  & \multicolumn{1}{r}{} &  \\
(0.1062, 0.16) & (1,8) & 50 & 200 & (0.517, 0.816) & (0.771, 0.816) & 94.5 & 9.4 & 96.2 \\
\multicolumn{4}{l}{} &  &  & \multicolumn{1}{r}{} & \multicolumn{1}{r}{} & \multicolumn{1}{r}{} \\
\multicolumn{4}{l}{Scenario 2b ($J_1 = 10, J_2 = 20$)} &  &  & \multicolumn{1}{r}{} & \multicolumn{1}{r}{} & \multicolumn{1}{r}{} \\
(0.1062, 0.16) & (1,8) & 50 & 200 & (0.536, 0.816) & (0.786, 0.816) & 95.8 & 6.2 & 95.7 \\
 &  &  &  &  &  &  &  & 
\end{tabular}
}
\label{meanopt table ctn}
\footnotesize	\\
\raggedright{
$n_j^{(1)}$: number of patients in center $j$ at stage 1,
$n_j^{(2)}$: number of patients in center $j$ at stage 2. \\
True Mean under $\hat{x}^{opt,(2,n^{(1)})}$: true mean outcome under the stage 2 recommended intervention. \\
True Mean under $\hat{x}^{opt}$: true mean outcome under the final estimated optimal intervention. \\
$Q$2.5 and $Q$97.5: 2.5$\%$ and 97.5$\%$ quantiles.\\ 
SetCP95$\%$: empirical coverage percentage of confidence set for the optimal intervention. \\
SetPerc$\%$: mean percentage of the size of the confidence set as a percent of the total sample space.\\
BandsCP95$\%$: empirical coverage of 95$\%$ confidence band. 
}
\end{table}

\newpage
\textcolor{white}{nothing}

\newpage
\subsection{Mimic the BetterBirth study}\label{mimicbbstudy}
As described in the main text, we mimicked the BetterBirth study as if a c-LAGO design was used. The true coefficient values for the simulation were based on the analysis of the actual BetterBirth study (see Section \ref{application} of the main text for details of the BetterBirth study). In particular, we used the same value for the number of centers in each stage, the number of patients in each center for each stage, the number of centers in the control/intervention arm for each stage, the values of the center baseline characteristics $z$ for each stage, the cost of the two intervention components, the lower and upper limits of the two intervention components, and the stage 1 interventions as those used in the BetterBirth study. 

Using the last column of Table \ref{estimated effects bb} from the main text, we set the true values of the intercept $\beta_{0}^*=-0.138$, the coefficient of the launch duration $\beta_{11}^*=0.17$, the coefficient of the number of coaching visits $\beta_{12}^*=0.172$, and the coefficient value of the center baseline characteristic: birth volume $\beta_z^*=-0.202$. 
In each simulation iteration,
we first used the stage 1 interventions to generate stage 1 outcomes. Then we calculated the stage 2 interventions based on the stage 1 model and stage 2 center baseline characteristics, adjusting the variance of the two intervention components to match the actual variance of the intervention components from the corresponding stage of the BetterBirth study. We generated stage 2 outcomes using the true model, and calculated stage 3 interventions based on the stage 2 model and stage 3 center baseline characteristics, using data from both stages 1 and 2.  
We followed the same steps to obtain the final model using data from all three stages, and we calculated the final predicted optimal intervention using this final model. 

The percent relative bias for $\beta_{11}$ and $\beta_{12}$ is -0.45 and -1.39, respectively. The ratio between mean estimated standard error and empirical standard deviation, and the 95\% coverage probability for $\beta_{11}$ and $\beta_{12}$ are 95.60 and 94.70, and 32.41 and 95.15, respectively. For an average center with a baseline center characteristic $z=1.75$, the true optimal intervention package is ${\boldsymbol{x}}^{opt}=(5, 29.9)$. In 2000 iterations of this simulation, the mean recommended intervention for the same average center after stage 1 is $(3.3, 29.3)$, and using data from all three stages, the mean recommended intervention is $(4.4, 32.5)$. These results suggest that LAGO would be effective if it was used in the BetterBirth study.

\subsection{Additional results of simulation 1 using cubic cost function}\label{additional cost}
As described in Section \ref{setting} of the main text, we also considered a cubic cost function 
for the LAGO design. 
The cubic cost function we considered has an initial economy of scale, followed by increasing marginal costs when the component levels exceed a certain threshold (as shown in Figure 1 of \cite{greer2010electricity}). 
This type of cubic cost function is particularly applicable in healthcare settings, where the marginal cost of a product may rise prohibitively as local supplies are depleted. For example, if a large hospital in a rural area needs to hire many nurses, the marginal cost for training and monitoring would initially decrease as more nurses are hired, benefiting from economies of scale. However, once all of the local nurses have been hired, the hospital may need to hire people from surrounding cities, which would also incur relocation costs and potentially higher wages to attract talent from afar, causing the marginal cost to eventually increase. 

Complete Tables of \ref{component_effects_table cubic cost} -- \ref{meanopt table cubic cost} from the main text are shown below.

\begin{table}[ht]
\caption{Simulation study results for individual package component effects with a cubic cost function}
\centering
\scriptsize{
\begin{tabular}{llllllllllll}
\tiny$\boldsymbol{\beta}^*=(\beta_{11}^*, \beta_{12}^*)$ & \tiny$n_j^{(1)}$ & \tiny$n_j^{(2)}$ & \tiny$J$ &  &  & \tiny$\hat{\boldsymbol{\beta}}_{11}$ &  &  &  & \tiny$\hat{\boldsymbol{\beta}}_{12}$ &  \\
 &  &  &  &  & $\%$RelBias & \begin{tabular}[c]{@{}l@{}}$\frac{SE}{EMP.SD}$\\ ($\times 100$)\end{tabular}  & CP95 &  & $\%$RelBias & \begin{tabular}[c]{@{}l@{}}$\frac{SE}{EMP.SD}$\\ ($\times 100$)\end{tabular}  & CP95 \\
\multicolumn{4}{l}{Scenario 1 ($J_1 = J_2 = J$)} &  &  &  &  &  &  &  &  \\
(0.1863, 0.15) & 50 & 100 & 6 &  & 2.31 & 81.9 & 95.8 &  & -0.79 & 78.1$^*$ & 95.5 \\
 &  &  & 10 &  & 4.59 & 95.8 & 95.4 &  & -1.63 & 88.3 & 95.0 \\
 &  &  & 20 &  & 4.35 & 101.3 & 95.7 &  & -1.49 & 94.1 & 96.0 \\
 &  & 200 & 6 &  & 1.36 & 83.9 & 95.5 &  & -0.52 & 78.8 & 95.5 \\
 &  &  & 10 &  & 2.68 & 94.2 & 95.9 &  & -0.89 & 85.6 & 96.0 \\
 &  &  & 20 &  & 3.90 & 96.3 & 96.4 &  & -1.29 & 88.1 & 96.1 \\
 & 100 & 100 & 6 &  & 2.67 & 93.1 & 95.9 &  & -0.86 & 88.5 & 95.8 \\
 &  &  & 10 &  & 5.74 & 103.6 & 96.4 &  & -1.92 & 97.5 & 96.4 \\
 &  &  & 20 &  & 6.05 & 103.3 & 95.6 &  & -2.10 & 98.1 & 95.8 \\
 &  & 200 & 6 &  & 1.69 & 89.8 & 96.4 &  & -0.51 & 84.2 & 95.8 \\
 &  &  & 10 &  & 5.48 & 94.2 & 95.6 &  & -1.74 & 86.8 & 95.5 \\
 &  &  & 20 &  & 4.91 & 103.8 & 96.1 &  & -1.72 & 97.5 & 96.3 \\
(0.1, 0.2133) & 50 & 100 & 6 &  & 6.73 & 91.0 & 96.3 &  & -0.83 & 83.4 & 95.1 \\
 &  &  & 10 &  & 7.55 & 88.6 & 95.9 &  & -0.96 & 76.7 & 95.1 \\
 &  &  & 20 &  & 11.54 & 87.8 & 94.4 &  & -1.38 & 72.8 & 94.9 \\
 &  & 200 & 6 &  & 7.99 & 78.1 & 95.6 &  & -0.99 & 67.7 & 95.7 \\
 &  &  & 10 &  & 6.62 & 93.9 & 95.4 &  & -0.82 & 79.2 & 95.0 \\
 &  &  & 20 &  & 8.32 & 89.3 & 94.7 &  & -1.04 & 68.3 & 94.9 \\
 & 100 & 100 & 6 &  & 9.43 & 88.6 & 95.8 &  & -1.19 & 80.8 & 95.0 \\
 &  &  & 10 &  & 13.20 & 85.5 & 95.9 &  & -1.53 & 75.2 & 95.0 \\
 &  &  & 20 &  & 10.39 & 78.0 & 94.6 &  & -1.25 & 65.1 & 94.8 \\
 &  & 200 & 6 &  & 6.58 & 89.5 & 96.4 &  & -0.81 & 79.2 & 95.7 \\
 &  &  & 10 &  & 9.35 & 84.9 & 95.2 &  & -1.11 & 70.1 & 95.2 \\
 &  &  & 20 &  & 7.09 & 84.8 & 95.1 &  & -0.87 & 66.9 & 95.2 \\
\multicolumn{4}{l}{Scenario 2a ($J_1 = 6, J_2 = 12$)} &  &  &  &  &  &  &  &  \\
(0.1863, 0.15) & 50 & 200 &  &  & 0.21 & 83.1 & 95.2 &  & -0.22 & 75.3 & 94.9 \\
(0.1, 0.2133) & 50 & 200 &  &  & 2.54 & 88.1 & 95.6 &  & -0.31 & 74.5 & 95.8 \\
\multicolumn{4}{l}{} &  &  &  &  &  &  &  &  \\
\multicolumn{4}{l}{Scenario 2b ($J_1 = 10, J_2 = 20$)} &  &  &  &  &  &  &  &  \\
(0.1863, 0.15) & 50 & 200 &  &  & 0.88 & 90.1 & 95.9 &  & -0.37 & 80.6 & 95.6 \\
(0.1, 0.2133) & 50 & 200 &  &  & 3.98 & 93.1 & 95.3 &  & -0.52 & 74.3 & 95.8
\end{tabular}
}
\label{component_effects_table_ctn_cubic_part1}
\footnotesize	\\
\raggedright{
$n_j^{(1)}$: number of patients in center $j$ at stage 1, 
$n_j^{(2)}$: number of patients in center $j$ at stage 2.\\
$J$: number of centers for each stage. \\
\%RelBias: percent relative bias $100(\hat{\beta}-\beta^\star)/\beta^\star$.\\
SE: mean estimated standard error, 
EMP.SD: empirical standard deviation.\\
CP95: empirical coverage rate of 95\% confidence intervals.\\
$^*:$ The mean SE for covered iterations was found to be higher than that for non-covered iterations. Despite having thoroughly verified the code's accuracy, we cannot explain why coverage remained satisfactory when SE/EMP.SD was small. However, our analysis confirmed that this phenomenon persisted.
}
\end{table}

\begin{table}[ht]
\caption{Simulation study results for individual package component effects with a cubic cost function}
\centering
\scriptsize{
\begin{tabular}{llllllllllll}
\tiny$\boldsymbol{\beta}^*=(\beta_{11}^*, \beta_{12}^*)$ & \tiny$n_j^{(1)}$ & \tiny$n_j^{(2)}$ & \tiny$J$ &  &  & \tiny$\hat{\boldsymbol{\beta}}_{11}$ &  &  &  & \tiny$\hat{\boldsymbol{\beta}}_{12}$ &  \\
 &  &  &  &  & $\%$RelBias & \begin{tabular}[c]{@{}l@{}}$\frac{SE}{EMP.SD}$\\ ($\times 100$)\end{tabular}  & CP95 &  & $\%$RelBias & \begin{tabular}[c]{@{}l@{}}$\frac{SE}{EMP.SD}$\\ ($\times 100$)\end{tabular}  & CP95 \\
\multicolumn{4}{l}{Scenario 1 ($J_1 = J_2 = J$)} &  &  &  &  &  &  &  &  \\
(0.1062, 0.16) & 50 & 100 & 6 &  & 14.53 & 81.5 & 95.8 &  & -2.43 & 75.4$^*$ & 94.9 \\
 &  &  & 10 &  & 11.46 & 88.6 & 95.4 &  & -2.12 & 79.7 & 94.9 \\
 &  &  & 20 &  & 11.90 & 93.0 & 95.4 &  & -2.10 & 83.6 & 96.1 \\
 &  & 200 & 6 &  & 11.77 & 75.5 & 96.1 &  & -2.00 & 68.4 & 95.5 \\
 &  &  & 10 &  & 7.73 & 89.3 & 95.9 &  & -1.43 & 78.3 & 96.5 \\
 &  &  & 20 &  & 9.75 & 90.2 & 95.5 &  & -1.73 & 78.1 & 95.4 \\
 & 100 & 100 & 6 &  & 13.29 & 86.5 & 96.1 &  & -2.30 & 80.7 & 95.6 \\
 &  &  & 10 &  & 16.15 & 95.0 & 95.7 &  & -2.76 & 86.0 & 95.7 \\
 &  &  & 20 &  & 16.43 & 88.3 & 94.8 &  & -2.86 & 80.5 & 95.3 \\
 &  & 200 & 6 &  & 9.92 & 82.5 & 95.8 &  & -1.67 & 75.2 & 95.2 \\
 &  &  & 10 &  & 11.59 & 89.2 & 95.8 &  & -1.94 & 79.4 & 95.9 \\
 &  &  & 20 &  & 13.16 & 88.3 & 95.4 &  & -2.31 & 77.3 & 95.3 \\
(0.0438, 0.17) & 50 & 100 & 6 &  & 41.46 & 79.5 & 96.1 &  & -2.74 & 72.8 & 95.1 \\
 &  &  & 10 &  & 47.36 & 81.9 & 95.6 &  & -3.23 & 72.2 & 95.5 \\
 &  &  & 20 &  & 35.35 & 86.4 & 94.9 &  & -2.33 & 73.6 & 95.4 \\
 &  & 200 & 6 &  & 37.62 & 76.6 & 96.4 &  & -2.51 & 69.1 & 96.0 \\
 &  &  & 10 &  & 31.61 & 81.9 & 95.8 &  & -2.16 & 69.7 & 95.4 \\
 &  &  & 20 &  & 26.28 & 81.6 & 95.6 &  & -1.79 & 65.4 & 95.1 \\
 & 100 & 100 & 6 &  & 47.42 & 79.9 & 96.0 &  & -3.21 & 73.7 & 95.4 \\
 &  &  & 10 &  & 44.37 & 85.4 & 95.8 &  & -2.91 & 75.9 & 95.0 \\
 &  &  & 20 &  & 40.01 & 74.0 & 93.7 &  & -2.62 & 64.0 & 94.2 \\
 &  & 200 & 6 &  & 35.16 & 76.7 & 96.2 &  & -2.32 & 68.8 & 95.1 \\
 &  &  & 10 &  & 33.17 & 81.6 & 95.7 &  & -2.17 & 69.5 & 95.1 \\
 &  &  & 20 &  & 30.70 & 76.5 & 94.9 &  & -2.03 & 62.5 & 94.8 \\
\multicolumn{4}{l}{Scenario 2a ($J_1 = 6, J_2 = 12$)} &  &  &  &  &  &  &  &  \\
(0.1062, 0.16) & 50 & 200 &  &  & 4.24 & 73.6 & 96 &  & -0.81 & 63.5 & 95.6 \\
(0.0438, 0.17) & 50 & 200 &  &  & 17.94 & 71.1 & 95.8 &  & -1.33 & 60.2 & 95.7 \\
\multicolumn{4}{l}{} &  &  &  &  &  &  &  &  \\
\multicolumn{4}{l}{Scenario 2b ($J_1 = 10, J_2 = 20$)} &  &  &  &  &  &  &  &  \\
(0.1062, 0.16) & 50 & 200 &  &  & 4.69 & 91.8 & 96.1 &  & -0.99 & 78.6 & 95.7 \\
(0.0438, 0.17) & 50 & 200 &  &  & 14.76 & 82.2 & 95.5 &  & -1.12 & 66.5 & 95
\end{tabular}
}
\label{component_effects_table_ctn_cubic_part2}
\footnotesize	\\
\raggedright{
$n_j^{(1)}$: number of patients in center $j$ at stage 1, 
$n_j^{(2)}$: number of patients in center $j$ at stage 2.\\
$J$: number of centers for each stage. \\
\%RelBias: percent relative bias $100(\hat{\beta}-\beta^\star)/\beta^\star$.\\
SE: mean estimated standard error, 
EMP.SD: empirical standard deviation.\\
CP95: empirical coverage rate of 95\% confidence intervals.\\
$^*:$ The mean SE for covered iterations was found to be higher than that for non-covered iterations. Despite having thoroughly verified the code's accuracy, we cannot explain why coverage remained satisfactory when SE/EMP.SD was small. However, our analysis confirmed that this phenomenon persisted.
}
\end{table}

\begin{table}[ht]
\caption{Simulation study results for estimated optimal intervention with a linear cost function}
\centering
\scriptsize{
\begin{tabular}{llllllllllll}
$\boldsymbol{\beta}^*=(\beta_{11}^*,\beta_{12}^*)$ & $\boldsymbol{x}^{opt}$ & $n_j^{(1)}$ & $n_j^{(2)}$ &  & \multicolumn{3}{c}{Stage 1} &  & \multicolumn{3}{c}{Stage 2/LAGO optimized} \\
 &  &  &  &  & \begin{tabular}[c]{@{}l@{}}Bias of\\ $\hat{x}_{1}^{o p t}$\\ ($\times$100)\end{tabular} & \begin{tabular}[c]{@{}l@{}}Bias of\\ $\hat{x}_{2}^{o p t}$\\ ($\times$100)\end{tabular} & \begin{tabular}[c]{@{}l@{}}rMSE\\ ($\times$100)\end{tabular} &  & \begin{tabular}[c]{@{}l@{}}Bias of\\ $\hat{x}_{1}^{o p t}$\\ ($\times$100)\end{tabular} & \begin{tabular}[c]{@{}l@{}}Bias of\\ $\hat{x}_{2}^{o p t}$\\ ($\times$100)\end{tabular} & \begin{tabular}[c]{@{}l@{}}rMSE\\ ($\times$100)\end{tabular} \\
\multicolumn{4}{l}{Scenario 1 ($J_1 = J_2 = 20$)} &  &  &  &  &  &  &  &  \\
(0.1863, 0.15) & (1.5,7.4) & 50 & 100 &  & 40.0 & 81.9 & 118.8 &  & 4.9 & 4.8 & 65.9 \\
 &  &  & 500 &  &  &  &  &  & 4.4 & -0.8 & 56.2 \\
 &  & 100 & 100 &  & 32.7 & 45.8 & 103.4 &  & 2.6 & 5.9 & 63.9 \\
 &  &  & 500 &  &  &  &  &  & 3.8 & 0.0 & 56.2 \\
(0.0438, 0.17) & (0.7, 8) & 50 & 100 &  & 0.0 & 153.2 & 139.9 &  & -3.3 & 53.5 & 99.1 \\
 &  &  & 500 &  &  &  &  &  & 1.1 & 57.5 & 95.1 \\
 &  & 100 & 100 &  & -7.2 & 90.0 & 118.6 &  & -7.0 & 23.3 & 82.3 \\
 &  &  & 500 &  &  &  &  &  & -3.6 & 22.2 & 73.6 \\
(0.1, 0.2133) & (0, 6.5) & 50 & 100 &  & -50.7 & 85.9 & 110.3 &  & -16.4 & 3.6 & 49.6 \\
 &  &  & 500 &  &  &  &  &  & -10.6 & 0.5 & 40.0 \\
 &  & 100 & 100 &  & -42.5 & 53.6 & 95.1 &  & -14.3 & 2.3 & 45.9 \\
 &  &  & 500 &  &  &  &  &  & -9.0 & -0.1 & 36.2 \\
(0.1062, 0.16) & (1.2, 7.9) & 50 & 100 &  & 27.4 & 103.5 & 122.1 &  & 3.3 & 14.0 & 68.7 \\
 &  &  & 500 &  &  &  &  &  & 6.3 & 8.5 & 60.4 \\
 &  & 100 & 100 &  & 23.4 & 57.2 & 103.0 &  & 0.1 & 5.3 & 60.6 \\
 &  &  & 500 &  &  &  &  &  & 3.9 & 0.0 & 51.4 \\
 &  &  &  &  &  &  &  &  &  &  &  \\
\multicolumn{4}{l}{Scenario 2a ($J_1 = 6, J_2 = 12$)} &  &  &  &  &  &  &  &  \\
(0.1863, 0.15) & (1.5,7.4) & 50 & 200 &  & 64.4 & 314.0 & 187.6 &  & 9.7 & 19.9 & 79.8 \\
(0.0438, 0.17) & (0.7, 8) & 50 & 200 &  & 8.7 & 388.6 & 202.5 &  & 13.2 & 142.1 & 134.8 \\
(0.1, 0.2133) & (0, 6.5) & 50 & 200 &  & -64.2 & 261.3 & 170.0 &  & -13.8 & 7.3 & 54.1 \\
(0.1062, 0.16) & (1.2, 7.9) & 50 & 200 &  & 46.6 & 356.9 & 195.5 &  & 19.8 & 87.6 & 110.1 \\
 &  &  &  &  &  &  &  &  &  &  &  \\
\multicolumn{4}{l}{Scenario 2b ($J_1 = 10, J_2 = 20$)} &  &  &  &  &  &  &  &  \\
(0.1863, 0.15) & (1.5,7.4) & 50 & 200 &  & 48.9 & 188.7 & 153.9 &  & 6.2 & 2.3 & 64.6 \\
(0.0438, 0.17) & (0.7, 8) & 50 & 200 &  & 2.3 & 268.1 & 173.1 &  & 6.6 & 103.8 & 119.7 \\
(0.1, 0.2133) & (0, 6.5) & 50 & 200 &  & -62.5 & 166.9 & 141.3 &  & -11.6 & 3.4 & 46.2 \\
(0.1062, 0.16) & (1.2, 7.9) & 50 & 200 &  & 35.8 & 231.1 & 163.0 &  & 9.1 & 37.7 & 84.4
\end{tabular}
}
\label{bias_table_cubic}
\footnotesize	\\
\raggedright {
$n_j^{(1)}$: number of patients in center $j$ at stage 1,
$n_j^{(2)}$: number of patients in center $j$ at stage 2.\\
Bias of $\hat{x}_{1}^{o p t}$: bias of the first component of the estimated optimal intervention. \\
Bias of $\hat{x}_{2}^{o p t}$: bias of the second component of the estimated optimal intervention.\\ 
rMSE: root of mean squared errors, $\left\{\operatorname{mean}\left(\left\|\hat{\boldsymbol{x}}^{o p t}-\boldsymbol{x}^{o p t}\right\|^{2}\right)\right\}^{1 / 2}$, mean is taken over simulation iterations.
}
\end{table}

\begin{table}[ht] 
\caption{Simulation study results for estimated optimal intervention, confidence set and confidence band with a cubic cost function}
\centering
\scriptsize{
\begin{tabular}{lllllllll}
\tiny$\boldsymbol{\beta}^*=(\beta_{11}^*,\beta_{12}^*)$ & $\boldsymbol{x}^{opt}$ & $n_j^{(1)}$ & $n_j^{(2)}$ &  &  &  &  &  \\
 &  &  &  & \begin{tabular}[c]{@{}l@{}}True Mean\\ under\\ $\hat{x}^{opt,(2,n^{(1)})}$\\ (Q2.5,Q97.5)\end{tabular} & \begin{tabular}[c]{@{}l@{}}True Mean\\ under\\ $\hat{x}^{opt}$\\ (Q2.5,Q97.5)\end{tabular} & \begin{tabular}[c]{@{}l@{}}SetCP95\\ $\%$\end{tabular} & \begin{tabular}[c]{@{}l@{}}SetPerc\\ $\%$\end{tabular} & \begin{tabular}[c]{@{}l@{}}BandsCP95\\ $\%$\end{tabular} \\
\multicolumn{4}{l}{Scenario 1 ($J_1 = J_2 = 20$)} &  &  &  &  &  \\
(0.1863, 0.15) & (1.5,7.4) & 50 & 100 & (0.656, 0.812) & (0.769, 0.809) & 95.8 & 8.3 & 96.1 \\
 &  &  & 500 &  & (0.786, 0.809) & 95.3 & 5.3 & 95.6 \\
 &  & 100 & 100 & (0.708, 0.816) & (0.776, 0.807) & 95.9 & 7.0 & 96.3 \\
 &  &  & 500 &  & (0.789, 0.806) & 95.7 & 4.6 & 95.9 \\
(0.0438, 0.17) & (0.7, 8) & 50 & 100 & (0.500, 0.809) & (0.500, 0.808) & 95.9 & 7.4 & 95.5 \\
 &  &  & 500 &  & (0.500, 0.808) & 95.4 & 5.5 & 95.7 \\
 &  & 100 & 100 & (0.500, 0.810) & (0.773, 0.808) & 94.1 & 6.0 & 94.9 \\
 &  &  & 500 &  & (0.500, 0.807) & 94.4 & 4.5 & 95.6 \\
(0.1, 0.2133) & (0, 6.5) & 50 & 100 & (0.688, 0.819) & (0.786, 0.818) & 94.9 & 9.7 & 96.0 \\
 &  &  & 500 &  & (0.790, 0.815) & 95.2 & 7.2 & 95.8 \\
 &  & 100 & 100 & (0.727, 0.818) & (0.791, 0.814) & 94.7 & 7.4 & 95.0 \\
 &  &  & 500 &  & (0.793, 0.813) & 95.0 & 5.4 & 95.4 \\
(0.1062, 0.16) & (1.2, 7.9) & 50 & 100 & (0.550, 0.814) & (0.766, 0.812) & 96.1 & 7.4 & 96.2 \\
 &  &  & 500 &  & (0.777, 0.810) & 95.3 & 5.0 & 96.2 \\
 &  & 100 & 100 & (0.713, 0.812) & (0.780, 0.810) & 95.3 & 5.9 & 95.8 \\
 &  &  & 500 &  & (0.789, 0.808) & 95.4 & 4.1 & 95.9 \\
 &  &  &  &  &  &  & \multicolumn{1}{r}{} &  \\
\multicolumn{4}{l}{Scenario 2a ($J_1 = 6, J_2 = 12$)} &  &  &  & \multicolumn{1}{r}{} &  \\
(0.1863, 0.15) & (1.5,7.4) & 50 & 200 & (0.509, 0.814) & (0.735, 0.821) & 94.8 & 11.5 & 95.8 \\
(0.0438, 0.17) & (0.7, 8) & 50 & 200 & (0.500, 0.807) & (0.500, 0.809) & 94.9 & 11.1 & 96.3 \\
(0.1, 0.2133) & (0, 6.5) & 50 & 200 & (0.515, 0.823) & (0.775, 0.820) & 95.9 & 14.7 & 95.8 \\
(0.1062, 0.16) & (1.2, 7.9) & 50 & 200 & (0.500, 0.812) & (0.500, 0.813) & 94.8 & 10.7 & 96.2 \\
 &  &  &  &  &  & \multicolumn{1}{r}{} & \multicolumn{1}{r}{} & \multicolumn{1}{r}{} \\
\multicolumn{4}{l}{Scenario 2b ($J_1 = 10, J_2 = 20$)} &  &  & \multicolumn{1}{r}{} & \multicolumn{1}{r}{} & \multicolumn{1}{r}{} \\
(0.1863, 0.15) & (1.5,7.4) & 50 & 200 & (0.556, 0.815) & (0.771, 0.814) & 96.2 & 8.3 & 95.7 \\
(0.0438, 0.17) & (0.7, 8) & 50 & 200 & (0.500, 0.808) & (0.500, 0.808) & 95.3 & 8.2 & 95.9 \\
(0.1, 0.2133) & (0, 6.5) & 50 & 200 & (0.535, 0.820) & (0.782, 0.816) & 95.7 & 10.9 & 96.0 \\
(0.1062, 0.16) & (1.2, 7.9) & 50 & 200 & (0.500, 0.813) & (0.500, 0.813) & 95.7 & 7.7 & 96.3
\end{tabular}
}
\label{meanopt table cubic}
\footnotesize	\\
\raggedright{
$n_j^{(1)}$: number of patients in center $j$ at stage 1,
$n_j^{(2)}$: number of patients in center $j$ at stage 2. \\
True Mean under $\hat{x}^{opt,(2,n^{(1)})}$: true mean outcome under the stage 2 recommended intervention.\\
True Mean under $\hat{x}^{opt}$: true mean outcome under the final estimated optimal intervention. \\
$Q$2.5 and $Q$97.5: 2.5$\%$ and 97.5$\%$ quantiles.\\ 
SetCP95$\%$: empirical coverage percentage of confidence set for the optimal intervention. \\
SetPerc$\%$: mean percentage of the size of the confidence set as a percent of the total sample space.\\
BandsCP95$\%$: empirical coverage of 95$\%$ confidence band. 
}
\end{table}

\end{appendix}

\end{document}